\begin{document}

\conferenceinfo{e-Energy'13,} {May 21--24, 2013, Berkeley, California, USA.}
\CopyrightYear{2013}
\crdata{978-1-4503-2052-8/13/05}
\clubpenalty=10000
\widowpenalty = 10000

%
%

\title{Dynamic Provisioning in Next-Generation Data Centers with On-site
Power Production}

\numberofauthors{2}
\author{
\alignauthor
Jinlong Tu, Lian Lu and Minghua Chen\\
       \affaddr{Department of Information Engineering}\\
       \affaddr{The Chinese University of Hong Kong}\\
\alignauthor
Ramesh K. Sitaraman\\
       \affaddr{Department of Computer Science}\\
       \affaddr{University of Massachusetts at Amherst}\\
       \affaddr{Akamai Technologies}\\
}
\newcounter{copyrightbox}
\maketitle

\newtheorem{prop}{Proposition}
\newtheorem{lem}{Lemma}
\newdef{defn}{Definition}
\newtheorem{thm}{Theorem}

\begin{abstract}
The critical need for clean and economical sources of energy is transforming
data centers that are primarily energy consumers to also energy
producers. We focus on minimizing the operating costs of next-generation
data centers that can jointly optimize the energy supply  from on-site
generators and the power grid, and the energy demand  from servers as well as power conditioning and cooling systems.
We formulate the cost minimization problem and present an offline
optimal algorithm. For ``on-grid'' data centers that use only the
grid, we devise a deterministic online algorithm that achieves the
best possible competitive ratio of $2-\alpha_{s}$, where $\alpha_{s}$
is a normalized look-ahead window size. The competitive ratio of an online algorithm is defined as the maximum ratio (over all possible inputs) between the algorithm's cost (with no or limited look-ahead) and the offline optimal assuming complete future information. We remark that the results hold as long as the overall energy demand (including server, cooling, and power conditioning) is a convex and increasing function in the total number of active servers and also in the total server load.
For ``hybrid'' data centers
that have on-site power generation in addition to the grid, we develop
an online algorithm that achieves a competitive ratio of at most \textmd{\normalsize {\small $\frac{P_{\max}\left(2-\alpha_{s}\right)}{c_{o}+c_{m}/L}\left[1+2\frac{P_{\max}-c_{o}}{P_{\max}(1+\alpha_{g})}\right]$},
where }$\alpha_{s}$ and $\alpha_{g}$  are normalized look-ahead
window sizes, $P_{\max}$ is the maximum grid power price, and $L$,
$c_{o}$, and $c_{m}$ are parameters of an on-site generator.

Using extensive workload traces from Akamai
with the corresponding grid power prices, we simulate our offline
and online algorithms in a realistic setting. Our offline
(resp., online) algorithm achieves a cost reduction of 25.8\% (resp.,
20.7\%) for a hybrid data center and 12.3\% (resp., 7.3\%) for an
on-grid data center. The cost reductions are quite significant and
make a strong case for a joint optimization of energy supply and energy
demand in a data center. A hybrid data center provides about 13\%
additional cost reduction over an on-grid data center representing
the additional cost benefits that on-site power generation provides
over using the grid alone.

\end{abstract}

\category{F.1.2}{Modes of Computation}Online computation
\category{G.1.6}{Optimization}Nonlinear programming
\category{I.1.2}{Algorithms}Analysis of algorithms
\category{I.2.8}{Problem Solving, Control Methods, and Search}Scheduling

\terms{Algorithms, Performance}

\keywords{data centers; dynamic provisioning; on-site power production; online algorithm}

\section{Introduction}\label{sec:intro}

Internet-scale cloud services that deploy large distributed systems
of servers around the world are revolutionizing all aspects of human
activity. The rapid growth of such services has lead to a significant
increase in server deployments in data centers around the world. Energy
consumption of data centers account for roughly 1.5\% of the global
energy consumption and is increasing at an alarming rate of about
15\% on an annual basis \cite{Koomey2010}. The  surging global energy
demand  relative to its supply has caused the price of electricity
to rise, even while other operating expenses of a data center such
as network bandwidth have decreased precipitously. Consequently, the
energy costs now represent a large fraction of the operating expenses
of a data center today \cite{barroso2007case},  and decreasing the
energy expenses has become a central concern for data center operators.


The emergence of energy as a central consideration for enterprises
that operate large server farms is drastically altering the traditional
boundary between a data center and a power utility (c.f. Figure \ref{fig:data.center}).
Traditionally, a data center hosts servers but buys electricity from
an utility company through the power grid. However, the criticality
of the energy supply is leading data centers to broaden their role
to also generate much of the required power on-site, decreasing their
dependence on a third-party utility. While data centers have always
had generators as a short-term backup for when the grid fails, on-site
generators for sustained power supply is a newer trend. For instance,
Apple recently announced that it will build a massive data center
for its iCloud services with 60\% of its energy coming from its on-site
generators that use ``clean energy'' sources such as fuel cells
with biogas and solar panels \cite{apple2012}. As another example,
eBay recently announced that it will add a 6 MW facility to its existing
data center in Utah that will be largely powered by on-site fuel cell
generators \cite{ebay2012}. The trend for \emph{hybrid} data centers
that generate electricity on-site (c.f. Figure \ref{fig:data.center})
with reduced reliance on the grid is driven by the confluence of several
factors. This trend is also mirrored in the broader power industry
where the centralized model for power generation with few large power
plants is giving way to a more distributed generation model \cite{borbely2001distributed}
where many smaller on-site generators produce power that is consumed
locally over a ``micro-grid''.

A key factor favoring on-site generation is the potential for cheaper
power than the grid, especially during peak hours. On-site generation
also reduces transmission losses that in turn reduce the effective
cost, because the power is generated close to where it is consumed.
In addition, another factor favoring on-site generation is a requirement
for many enterprises to use cleaner renewable energy sources, such
as Apple's mandate to use 100\% clean energy in its data centers \cite{applemandate2012}.
Such a mandate is more easily achievable with the enterprise generating
all or most of its power on-site, especially since recent advances
such as the fuel cell technology of Bloom Energy \cite{bloomenergy2012}
make on-site generation economical and feasible. Finally, the risk
of service outages caused by the failure of the grid, as happened
recently when thunderstorms brought down the grid causing a denial-of-service
for Amazon's AWS service for several hours \cite{amazonoutage2012},
has provided greater impetus for on-site power generation that can
sustain the data center for extended periods without the grid.

Our work focuses on the key challenges that arise in the emerging
hybrid model for a data center that is able to simultaneously optimize
\emph{both} the generation and consumption of energy (c.f. Figure
\ref{fig:data.center} ). In the traditional scenario, the utility
is responsible for energy provisioning (\textbf{EP}) that has the
goal of supplying energy as economically as possible to meet the energy
demand, albeit the utility has no detailed knowledge and no control
over the server workloads within a data center that drive the consumption
of power. Optimal energy provisioning by the utility in isolation
is characterized by the unit commitment problem \cite{1295033,shiina2004stochastic}
that has been studied over the past decades. The energy provisioning
problem takes as input the demand for electricity from the consumers
and determines which power generators should be used at what time
to satisfy the demand in the most economical fashion. Further, in
a traditional scenario, a data center is responsible for capacity
provisioning (\textbf{CP}) that has the goal of managing its server
capacity to serve the incoming workload from end users while reducing
the total energy demand of servers, as well as power conditioning and
various cooling systems, but without detailed knowledge or control
over the power generation. For instance, dynamic provisioning
of server capacity by turning off some servers during periods of low
workload to reduce the energy demand has been studied
in recent years \cite{lin2011dynamic,MathewSS12,beloglazov2011taxonomy,labpaper}.

The convergence of power generation and consumption within a single
data center entity and the increasing impact of energy costs requires
a new integrated approach to both energy provisioning (\textbf{EP})
and capacity provisioning (\textbf{CP}). A key contribution of our
work is formulating and developing algorithms that simultaneously
manage on-site power generation, grid power consumption, and server
capacity with the goal of minimizing the operating cost of the data
center.

\paragraph*{Online vs. Offline Algorithms}
In designing algorithms for optimizing the operating cost of a hybrid
\begin{figure}[t!]
\centering{}\includegraphics[width=0.95\columnwidth]{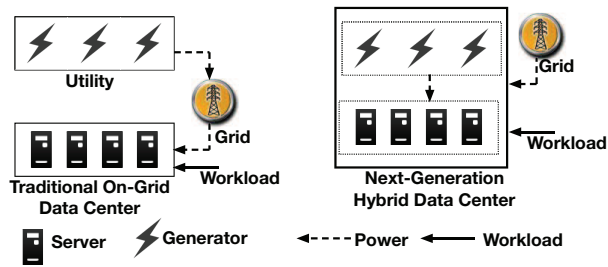}\caption{\label{fig:data.center}While an ``on-grid'' data center derives
all its power from the grid, next-generation ``hybrid'' data centers
have additional on-site power generation. }
\end{figure}
data center, there are three time-varying inputs: the server workload
$a(t)$ generated by service requests from users and the price of
a unit energy from the grid $p(t)$, and the total power consumption
function $g_{t}$ for each time $t$ where $1\leq t\leq T$. We begin
by investigating \emph{offline} algorithms that minimize the operating
cost with perfect knowledge of the entire input sequence $a(t)$,
$p(t)$ and $g_{t}$, for $1\leq t\leq T$. However, in real-life,
the time-varying input sequences are not knowable in advance. In particular,
the optimization must be performed in an \emph{online} fashion where
decisions at time $t$ are made with the knowledge of inputs $a(\tau)$,$p(\tau)$
and $g_{\tau}$, for $1\leq\tau\leq t+w$, where $w\geq0$ is a small
(possibly zero) look-ahead window. Specifically, an online algorithm
has no knowledge of inputs beyond the look-ahead window, \emph{i.e.},
for time $t+w<\tau\leq T$. We assume the inputs within the look-ahead are perfectly known when analyzing the algorithm performance. In practice, short-term demand or grid price can be estimated rather accurately by various techniques including pattern analysis and time series analysis and prediction \cite{gmach2007workload,elecpriceforecast}.
As is typical in the study of online algorithms
\cite{borodin1998online}, we seek theoretical guarantees for our
online algorithms by computing the \emph{competitive ratio} that is
ratio of the cost achieved by the online algorithm for an input to
the optimal cost achieved for the same input by an offline algorithm.
The competitive ratio is computed under a worst case scenario where
an adversary picks the worst possible inputs for the online algorithm.
Thus, a small competitive ratio provides a strong guarantee that the
online algorithm will achieve a cost close to the offline optimal
even for the worst case input.

\paragraph*{Our Contributions}
A key contribution of our work is to formulate and study data center
cost minimization (\textbf{DCM}) that integrates energy procurement
from the grid, energy production using on-site generators, and dynamic
server capacity management. Our work jointly optimizes the two components
of \textbf{DCM}: energy provisioning (\textbf{EP}) from the grid and
generators and capacity provisioning (\textbf{CP}) of the servers.
\begin{itemize}
\item We theoretically evaluate the benefit of joint optimization by showing
that optimizing energy provisioning (\textbf{EP}) and capacity provisioning
(\textbf{CP}) separately results in a factor loss of optimality $\rho=LP_{\max}/\left(Lc_{o}+c_{m}\right)$
 compared to optimizing them jointly, where $P_{\max}$
is the maximum grid power price, and $L,$ $c_{o,}$ and $c_{m}$
are the capacity, incremental cost, and base cost of an on-site generator
respectively. Further, we derive an efficient offline optimal algorithm
for hybrid data centers that jointly optimize \textbf{EP} and \textbf{CP}
to minimize the data center's operating cost.

\item For on-grid data centers, we devise an online deterministic algorithm that achieves
a competitive ratio of $2-\alpha_{s}$, where $\alpha_{s}\in[0,1]$
is the normalized look-ahead window size. Further, we show that our
algorithm has the best competitive ratio of any deterministic online
algorithm for the problem (c.f. Table \ref{tab:Summary-of-algorithmic}).
For the more complex hybrid data centers, we devise an online deterministic algorithm
that achieves a competitive ratio of $\frac{P_{\max}\left(2-\alpha_{s}\right)}{c_{o}+c_{m}/L}\left[1+2\frac{P_{\max}-c_{o}}{P_{\max}(1+\alpha_{g})}\right]$,
where $\alpha_{s}$ and $\alpha_{g}$ are normalized look-ahead window
sizes. Both online algorithms perform better as the look-ahead window
increases, as they are better able to plan their current actions based
on knowledge of future inputs. Interestingly, in the on-grid case,
we show that there exists \emph{fixed} threshold value for the look-ahead
window for which the online algorithm matches the offline optimal
in performance achieving a competitive ratio of 1, \emph{i.e.}, there is
no additional benefit gained by the online algorithm if its look-ahead
is increased beyond the threshold.

\begin{table}[t!]
{\small }%
\begin{tabular}{|c|c|c|}
\hline
\multicolumn{1}{|c|}{\textbf{\footnotesize Competitive }} & \textbf{\footnotesize On-grid} & \textbf{\footnotesize Hybrid}\tabularnewline
\textbf{\footnotesize{} Ratio} &  & \tabularnewline
\hline
{\scriptsize No Look-ahead } & {\small 2} & {\small $\frac{2P_{\max}}{c_{o}+c_{m}/L}\left[1+2\frac{P_{\max}-c_{o}}{P_{\max}}\right]$}\tabularnewline
\hline
{\scriptsize With Look-ahead} & {\small $2-\alpha_{s}$} & {\small $\frac{P_{\max}\left(2-\alpha_{s}\right)}{c_{o}+c_{m}/L}\left[1+2\frac{P_{\max}-c_{o}}{P_{\max}(1+\alpha_{g})}\right]$}\tabularnewline
\hline
\end{tabular}{\small \par}

{\footnotesize \caption{\label{tab:Summary-of-algorithmic} {\small Summary of algorithmic
results. The on-grid results are the best possible for any deterministic
online algorithm.} }
}
\end{table}

\item Using extensive workload traces from Akamai and the
corresponding grid prices, we simulate our offline and online algorithms
in a realistic setting with the goal of empirically evaluating their
performance. Our offline optimal (resp., online) algorithm achieves
a cost reduction of 25.8\% (resp., 20.7\%) for a hybrid data center
and 12.3\% (resp., 7.3\%) for an on-grid data center. The cost reduction
is computed in comparison with the baseline cost achieved by the current
practice of statically provisioning the servers and using only the power
grid. The cost reductions are quite significant and make a strong
case for utilizing our joint cost optimization framework. Furthermore,
our online algorithms obtain almost the same cost reduction as the
offline optimal solution even with a small look-ahead of 6 hours, indicating
the value of short-term prediction of inputs.

\item A hybrid data center provides about 13\% additional cost reduction
over an on-grid data center representing the additional cost benefits
that on-site power generation provides over using the grid alone.
Interestingly, it is sufficient to deploy a partial on-site generation
capacity that provides 60\% of the peak power requirements of the
data center to obtain over 95\% of the additional cost reduction.
This provides strong motivation for a traditional on-grid data center
to deploy at least a partial on-site generation capability to save
costs.

\end{itemize}

\section{The Data Center Cost Minimization Problem}\label{sec:problem.formulation}

We consider the scenario where a data center can jointly optimize
energy production, procurement, and consumption so as to minimize
its operating expenses. We refer to this data center cost minimization
problem as \textbf{DCM}. To study \textbf{DCM}, we model
how energy is produced using on-site power generators, how it can
be procured from the power grid, and how data center capacity can
be provisioned dynamically in response to workload. While some of
these aspects have been studied independently, our work is unique
in optimizing these dimensions simultaneously as next-generation data
centers can. Our algorithms minimize cost by use of techniques such
as: (i) dynamic capacity provisioning of servers -- turning off unnecessary
servers when workload is low to reduce the energy consumption
(ii) opportunistic energy procurement -- opting between the on-site
and grid energy sources to exploit price fluctuation, and (iii) dynamic
provisioning of generators - orchestrating which generators produce
what portion of the energy demand. While prior literature has considered
these techniques in isolation, we show how they can be used in coordination
to manage both the supply and demand of power to achieve substantial
cost reduction.
\begin{table}[tbh]
{\footnotesize }%
\begin{tabular*}{0.95\columnwidth}{@{\extracolsep{\fill}}l|>{\raggedright}p{0.75\columnwidth}}
\hline
\textbf{\footnotesize Notation}{\footnotesize{} } & \textbf{\footnotesize Definition}\tabularnewline
\hline
{\footnotesize $T$ } & {\footnotesize Number of time slots }\tabularnewline
{\footnotesize $N$} & {\footnotesize Number of on-site generators}\tabularnewline
{\footnotesize $\beta_{s}$ } & {\footnotesize Switching cost of a server (\$)}\tabularnewline
{\footnotesize $\beta_{g}$ } & {\footnotesize Startup cost of an on-site generator (\$)}\tabularnewline
{\footnotesize $c_{m}$} & {\footnotesize Sunk cost of maintaining a generator in its active state per slot (\$)}\tabularnewline
{\footnotesize $c_{o}$} & {\footnotesize Incremental cost for an active generator to output
an additional unit of energy (\$/Wh)}\tabularnewline
{\footnotesize $L$} & {\footnotesize The maximum output of a generator (Watt)}\tabularnewline
{\footnotesize $a(t)$} & {\footnotesize Workload at time $t$ }\tabularnewline
{\footnotesize $p(t)$} & {\footnotesize Price per unit energy drawn from the grid at $t$ ($P_{\min}\leq p(t)\leq P_{\max}$)
(\$/Wh)}\tabularnewline
{\footnotesize $x(t)$} & {\footnotesize Number of active servers at $t$}\tabularnewline
{\footnotesize $s(t)$} & {\footnotesize Total server service capability at $t$ }\tabularnewline
{\footnotesize $v(t)$} & {\footnotesize Grid power used at $t$ (Watt)}\tabularnewline
{\footnotesize $y(t)$ } & {\footnotesize Number of active on-site generators at $t$}\tabularnewline
{\footnotesize $u(t)$} & {\footnotesize Total power output from active generators at $t$ (Watt)}\tabularnewline
{\footnotesize $g_{t}(x(t),a(t))$} & {\footnotesize Total power consumption as a function of $x(t)$ and
$a(t)$ at $t$ (Watt)}\tabularnewline
\hline
\end{tabular*}{\footnotesize \par}

{\footnotesize Note: we use bold symbols to denote vectors, \emph{e.g.},
$\boldsymbol{x}=\langle x(t)\rangle$. Brackets indicate the unit.}
\caption{\label{tab:notations} Key notation.}
\end{table}
\subsection{Model Assumptions}

We adopt a discrete-time model whose time slot matches the timescale
at which the scheduling decisions can be updated. Without loss of
generality, we assume there are totally $T$ slots, and each has a
unit length.

\textbf{Workload model}. Similar to existing work \cite{chase2001managing,pinheiro2001load,doyle2003model},
we consider a ``mice'' type of workload for the data center where
each job has a small transaction size and short duration. Jobs arriving in a slot get
served in the same slot. Workload can be split among active servers
at arbitrary granularity like a fluid. These assumptions model a ``request-response''
type of workload that characterizes serving web content or hosted
application services that entail short but real-time interactions
between the user and the server. The workload to be served at time
$t$ is represented by $a(t)$. Note that we do not rely on any specific
stochastic model of $a(t)$.

\textbf{Server model}. We assume that the data center consists of
a sufficient number of homogeneous servers, and each has unit service capacity,
\emph{i.e.}, it can serve at most one unit workload per slot, and the same
power consumption model. Let $x(t)$ be the number of active servers
and $s(t)\in[0,x(t)]$ be the total server service capability at
time $t$. It is clear that $s(t)$ should be larger than $a(t)$ to get the workload served in the same slot.  We model the aggregate server power consumption as $b(t)\triangleq f_{s}\left(x(t),s(t)\right)$,
an increasing and convex function of $x(t)$ and $s(t)$. That is, the first and second order partial derivatives in $x(t)$ and $s(t)$ are all non-negative. Since $f_{s}\left(x(t),s(t)\right)$ is increasing in $s(t)$, it is optimal to  always set $s(t)=a(t)$. Thus, we have $b(t)=f_{s}\left(x(t),a(t)\right)$ and $x(t)\geq a(t)$.

This power consumption model is quite general and captures many common
server models. One example is the commonly adopted standard linear
model \cite{barroso2007case}:
\[
f_{s}\left(x(t),a(t)\right)=c_{idle}x(t)+(c_{peak}-c_{idle})a(t),
\]
where $c_{idle}$ and $c_{peak}$ are the power consumed by an server
at idle and fully utilized state, respectively. Most servers today
consume significant amounts of power even when idle. A holy grail for
server design is to make them {}``power proportional''
by making $c_{idle}$ zero \cite{palasamudram2012using}.

Besides, turning a server on entails switching cost \cite{MathewSS12},
denoted as $\beta_{s}$, including the amortized service interruption
cost, wear-and-tear cost, \emph{e.g.}, component procurement, replacement
cost (hard-disks in particular) and risk associated with server switching.
It is comparable to the energy cost of running a server for several
hours \cite{lin2011dynamic}.

In addition to servers, power conditioning and cooling systems
also consume a significant portion of power. The three%
\footnote{The other two, networking and lighting, consume little
power and have less to do with server utilization. Thus, we do not model
the two in this paper.%
} contribute about 94\% of overall power consumption
and their power draw vary drastically with server utilization
\cite{pelley2009understanding}. Thus, it is important to model
the power consumed by power conditioning and cooling systems.

\textbf{Power conditioning system model}. Power conditioning system
usually includes power distribution units (PDUs) and uninterruptible
power supplies (UPSs). PDUs transform the high voltage power distributed
throughout the data center to voltage levels appropriate for servers.
UPSs provides temporary power during outage. We model the power consumption
of this system as $f_{p}(b(t))$, an increasing and convex function
of the aggregate server power consumption $b(t)$.

This model is general and one example is a quadratic function adopted in a comprehensive study
on the data center power consumption \cite{pelley2009understanding}: $f_{p}(b(t))=C_{1}+\pi_{1}b^{2}(t)$,
where $C_{1}>0$ and $\pi_{1}>0$ are constants depending on specific
PDUs and UPSs.

\textbf{Cooling system model}. We model the power consumed by the
cooling system as $f_{c}^{t}(b(t))$, a time-dependent (\emph{e.g.}, depends
on ambient weather conditions) increasing and convex  function of
$b(t)$.

This cooling model captures many common cooling systems. According
to \cite{liu2012renewable}, the power consumption of an outside air
cooling system can be modelled as a time-dependent cubic function
of $b(t)$: $f_{c}^{t}(b(t))=K_{t}b^{3}(t),$ where $K_{t}>0$ depends
on ambient weather conditions, such as air temperature, at time $t$.
According to \cite{pelley2009understanding}, the power draw of a
water chiller cooling system can be modelled as a time-dependent quadratic
function of $b(t)$: $f_{c}^{t}(b(t))=Q_{t}b^{2}(t)+L_{t}b(t)+C_{t},$
where $Q_{t},L_{t},C_{t}\geq0$ depend on outside air and chilled
water temperature at time $t$. Note that all we need is $f_{c}^{t}(b(t))$
is increasing and convex in $b(t)$.

\textbf{On-site generator model}. We assume that the data center has
$N$ units of homogeneous on-site generators, each having an power
output capacity $L$. Similar to generator models studied in the unit
commitment problem \cite{kazarlis1996genetic}, we define a generator
startup cost $\beta_{g}$, which typically involves heating up
cost, additional maintenance cost due to each startup (\emph{e.g.}, fatigue
and possible permanent damage resulted by stresses during startups),
$c_{m}$ as the sunk cost of maintaining a generator in its active
state for a slot, and $c_{o}$ as the incremental cost for an active
generator to output an additional unit of energy. Thus, the total
cost for $y(t)$ active generators that output $u(t)$ units of energy
at time $t$ is $c_{m}y(t)+c_{o}u(t)$.

\textbf{Grid model}. The grid supplies energy to the data center in
an ``on-demand'' fashion, with time-varying price $p(t)$ per unit
energy at time $t$. Thus, the cost of drawing $v(t)$ units of energy
from the grid at time $t$ is $p(t)v(t)$. Without loss of generality,
we assume $0\leq P_{\min}\leq p(t)\leq P_{\max}$.

To keep the study interesting and practically relevant, we make the
following assumptions: (i) the server and generator turning-on cost
are strictly positive, \emph{i.e.}, $\beta_{s}>0$ and $\beta_{g}>0$. (ii)
$c_{o}+c_{m}/L<P_{\max}.$ This ensures that the minimum on-site energy
price is cheaper than the maximum grid energy price. Otherwise, it
should be clear that it is optimal to always buy energy from the
grid, because in that case the grid energy is cheaper and incurs
no startup costs.

\subsection{Problem Formulation}\label{sec:formulation.DCM}

Based on the above models, the data center total power consumption
is the sum of the server, power conditioning system and the cooling
system power draw, which can be expressed as a time-dependent function
of $b(t)$ ($b(t)=f_{s}(x(t),a(t))$ ):
\[
b(t)+f_{p}(b(t))+f_{c}^{t}(b(t))\triangleq g_{t}(x(t),a(t)).
\]
We remark that $g_{t}(x(t),a(t))$ is increasing and convex in $x(t)$
and $a(t)$. This is because it is the sum of three increasing and
convex functions. \emph{Note that all results we derive in this paper apply to any
$g_{t}(x,a)$ as long as it is increasing and convex in $x$ and $a$.}

Our objective is to minimize the data center total cost in entire
horizon $[1,T]$, which is given by
\begin{align}
 & \mbox{Cost}(x,y,u,v)\triangleq\sum_{t=1}^{T}\left\{ v(t)p(t)+c_{o}u(t)+c_{m}y(t)\right.\label{eq:MP-cost-func}\\
 & \qquad\left.+\beta_{s}[x(t)-x(t-1)]^{+}+\beta_{g}[y(t)-y(t-1)]^{+}\right\} ,\nonumber
\end{align}
which includes the cost of grid electricity, the running cost of
on-site generators, and the switching cost of servers and on-site
generators in the entire horizon $[1,T]$. Throughout this paper,
we set initial condition $x(0)=y(0)=0.$

We formally define the data center cost minimization problem as a
non-linear mixed-integer program, given the workload $a(t)$,
the grid price $p(t)$ and the time-dependent function $g_{t}(x,a),$
for $1\leq t\leq T$, as time-varying inputs.
\begin{eqnarray}
\min_{x,y,u,v} &  & \mbox{Cost}(x,y,u,v)\label{eq:MP-objective}\\
\mbox{s.t.} &  & u(t)+v(t)\geq g_{t}(x(t),a(t)),\label{eq:MP-power-balance-constraint}\\
 &  & u(t)\leq Ly(t),\label{eq:MP-energy-capacity-constraint}\\
 &  & x(t)\geq a(t),\label{eq:MP-server-workload-constraint}\\
 &  & y(t)\leq N,\label{eq:MP-generator-number-constraint}\\
 &  & x(0)=y(0)=0,\label{eq:MP-boundary-conditions}\\
\mbox{var} &  & x(t),y(t)\in\mathbb{N}^{0},u(t),v(t)\in\mathbb{R}_{0}^{+},\, t\in[1,T],\nonumber
\end{eqnarray}
where $\left[\cdot\right]^{+}=\max(0,\cdot)$, $\mathbb{N}^{0}$ and
$\mathbb{R}_{0}^{+}$ represent the set of non-negative integers and
real numbers, respectively.

Constraint \eqref{eq:MP-power-balance-constraint} ensures the total
power consumed by the data center is jointly supplied by the generators
and the grid. Constraint \eqref{eq:MP-energy-capacity-constraint}
captures the maximal output of the on-site generator. Constraint \eqref{eq:MP-server-workload-constraint}
specifies that there are enough active servers to serve
the workload. Constraint \eqref{eq:MP-generator-number-constraint}
is generator number constraint. Constraint \eqref{eq:MP-boundary-conditions}
is the boundary condition.

Note that this problem is challenging to solve. First, it is a non-linear
mixed-integer optimization problem. Further, the objective function
values across different slots are correlated via the switching costs
$\beta_{s}[x(t)-x(t-1)]^{+}$ and $\beta_{g}[y(t)-y(t-1)]^{+}$, and
thus cannot be decomposed. Finally, to obtain an online solution we
do not even know the inputs beyond current slot.

Next, we introduce a proposition to simplify the structure of the
problem. Note that if $\left(x(t)\right)_{t=1}^{T}$ and $\left(y(t)\right)_{t=1}^{T}$
are given, the problem in \eqref{eq:MP-objective}-\eqref{eq:MP-boundary-conditions}
reduces to a linear program and can be solved independently for each
slot. We then obtain the following.
\begin{prop}
Given any $x(t)$ and $y(t)$, the $u(t)$ and $v(t)$ that
minimize the cost in \eqref{eq:MP-objective} with any $g_{t}(x,a)$ that is increasing in x and a, are given
by: $\forall t\in[1,T]$,
\[
u(t)=\begin{cases}
0, & \mbox{if }p(t)\leq c_{o},\\
\min\left(Ly(t),g_{t}(x(t),a(t))\right), & \mbox{otherwise,}
\end{cases}
\]
and
\[
v(t)=g_{t}(x(t),a(t))-u(t).
\]
\end{prop}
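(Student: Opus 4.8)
The plan is to fix an arbitrary time slot $t$ and treat $x(t)$ and $y(t)$ as given constants, so that the only remaining decision variables at that slot are $u(t)$ and $v(t)$. Since the switching-cost terms $\beta_s[x(t)-x(t-1)]^+$ and $\beta_g[y(t)-y(t-1)]^+$ depend only on $x(\cdot)$ and $y(\cdot)$, they are constants once those vectors are fixed, and the objective in \eqref{eq:MP-cost-func} decouples across slots into a sum of per-slot terms $v(t)p(t)+c_o u(t)$ (plus constants). Hence it suffices to minimize $c_o u(t) + p(t) v(t)$ at each slot subject to the constraints that involve $u(t), v(t)$, namely \eqref{eq:MP-power-balance-constraint}, \eqref{eq:MP-energy-capacity-constraint}, and nonnegativity.

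Next I would argue that constraint \eqref{eq:MP-power-balance-constraint} is tight at optimum: since both $c_o \geq 0$ and $p(t) \geq P_{\min} \geq 0$, the objective is nondecreasing in $u(t)$ and $v(t)$, so if $u(t)+v(t) > g_t(x(t),a(t))$ we can decrease $v(t)$ (or $u(t)$) without violating any constraint and without increasing the cost. Therefore we may assume $v(t) = g_t(x(t),a(t)) - u(t)$, which also forces $u(t) \le g_t(x(t),a(t))$ for $v(t)$ to stay nonnegative. Substituting, the per-slot problem becomes: minimize $c_o u(t) + p(t)\big(g_t(x(t),a(t)) - u(t)\big) = p(t) g_t(x(t),a(t)) + (c_o - p(t)) u(t)$ over $u(t) \in [0, \min(Ly(t), g_t(x(t),a(t)))]$. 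This is a linear function of the single variable $u(t)$ on a compact interval, so the optimum is at an endpoint determined by the sign of the coefficient $c_o - p(t)$: if $p(t) \le c_o$ the coefficient is nonnegative and we take $u(t) = 0$; if $p(t) > c_o$ the coefficient is negative and we take $u(t) = \min(Ly(t), g_t(x(t),a(t)))$. Then $v(t) = g_t(x(t),a(t)) - u(t)$ gives the stated formula.

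The main thing to be careful about — rather than a deep obstacle — is the boundary case $p(t) = c_o$, where any $u(t)$ in the feasible interval is optimal; the stated solution picks $u(t)=0$, which is a valid optimal choice, so I would just note that ties are broken toward buying from the grid (consistent with the constraint that generators also incur sunk and startup costs, making all-grid weakly preferable when the marginal prices tie). I should also explicitly observe that the argument only uses that $g_t$ is increasing in $x$ and $a$ insofar as it guarantees $g_t(x(t),a(t)) \ge 0$ and that the substitution $s(t)=a(t)$ is valid; convexity is not needed here, which matches the proposition's hypothesis. Assembling these observations slot by slot yields the claim for all $t \in [1,T]$.
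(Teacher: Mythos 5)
Your proposal is correct and follows essentially the same route as the paper, which simply observes that once $x(t)$ and $y(t)$ are fixed the problem decouples into a per-slot linear program in $u(t),v(t)$; you flesh out the details (tightness of the power-balance constraint, the linear objective $p(t)g_t + (c_o - p(t))u(t)$ on $[0,\min(Ly(t),g_t)]$, and the endpoint/tie analysis at $p(t)=c_o$). The only minor imprecision is the remark that monotonicity of $g_t$ "guarantees" $g_t\geq 0$ — nonnegativity really comes from $g_t$ being a total power consumption (so $g_t(0,0)\geq 0$), not from increasingness alone — but this does not affect the validity of the argument.
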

Note that $u(t),v(t)$ can be computed using \emph{only} $x(t),y(t)$
at current time $t$, thus can be determined in an online fashion.

Intuitively, the above proposition says if the on-site energy price $c_{o}$
is higher than the grid price $p(t)$, we should buy energy from the
grid; otherwise, it is the best to buy the cheap on-site energy up
to its maximum supply $L\cdot y(t)$ and the rest (if any) from the
more expensive grid. With the above proposition, we can reduce the
non-linear mixed-integer program in \eqref{eq:MP-objective}-\eqref{eq:MP-boundary-conditions}
with variables $\boldsymbol{x}$, $\boldsymbol{y}$,
$\boldsymbol{u}$, and $\boldsymbol{v}$ to the following integer
program with only variables $\boldsymbol{x}$ and $\boldsymbol{y}$:
\begin{eqnarray}
&  & \mathbf{DCM}:\nonumber \\
& \min & \sum_{t=1}^{T}\left\{ \psi\left(y(t),p(t),d_{t}(x(t))\right)+\beta_{s}[x(t)-x(t-1)]^{+}\right.\nonumber \\
&  & \;\left.+\beta_{g}[y(t)-y(t-1)]^{+}\right\} \label{eq:DCM-objective}\\
& \mbox{s.t.} & x(t)\geq a(t),\nonumber \\
&  & \eqref{eq:MP-generator-number-constraint},\eqref{eq:MP-boundary-conditions},\nonumber \\
& \mbox{var} & x(t),y(t)\in\mathbb{N}^{0},\, t\in[1,T],\nonumber
\end{eqnarray}
where $d_{t}(x(t))\triangleq g_{t}(x(t),a(t))$, for the ease of presentation in later sections, is increasing and convex in $x(t)$ and $\psi\left(y(t),p(t),d_{t}(x(t))\right)$ replaces
the term $v(t)p(t)+c_{o}u(t)+c_{m}y(t)$ in the original cost function
in \eqref{eq:MP-objective} and is defined as
\begin{eqnarray}
 &  & \psi\left(y(t),p(t),d_{t}(x(t))\right)\label{eq:the original cost function}\\
 & \triangleq & \begin{cases}
c_{m}y(t)+p(t)d_{t}(x(t)), & \mbox{if }p(t)\leq c_{o},\\
c_{m}y(t)+c_{o}Ly(t)+ & \mbox{if }p(t)>c_{o}\mbox{ and }\\
p(t)\left(d_{t}(x(t))-Ly(t)\right), & d_{t}(x(t))>Ly(t),\\
c_{m}y(t)+c_{o}d_{t}(x(t)), & \mbox{else.}
\end{cases}\nonumber
\end{eqnarray}

As a result of the analysis above, it suffices to solve the above
formulation of \textbf{DCM} with only variables $\boldsymbol{x}$
and $\boldsymbol{y}$, in order to minimize the data center operating
cost.

\subsection{An Offline Optimal Algorithm}\label{ssec:optimal.offline.algo}
We present an offline optimal algorithm for solving problem \textbf{DCM}
using Dijkstra's shortest path algorithm \cite{dijkstra1959note}.
We construct a graph $G=(V,E),$ where each vertex denoted by the
tuple $\langle x,y,t\rangle$ represents a state of the data center
where there are $x$ active servers, and $y$ active generators at
time $t$. We draw a directed edge from each vertex $\langle x(t-1),y(t-1),t-1\rangle$
to each possible vertex $\langle x(t),y(t),t\rangle$ to represent
the fact that the data center can transit from the first state to
the second state. Further, we associate the cost of that transition
shown below as the weight of the edge:
\begin{eqnarray*}
 &  & \psi\left(y(t),p(t),d_{t}(x(t))\right)+\beta_{s}[x(t)-x(t-1)]^{+}\\
 &  & +\beta_{g}[y(t)-y(t-1)]^{+}.
\end{eqnarray*}
Next, we find the minimum weighted path from the initial state represented
by vertex $\langle0,0,0\rangle$ to the final state represented by
vertex $\langle0,0,T+1\rangle$ by running Dijkstra's algorithm on
graph $G$. Since the weights represent the transition costs, it is
clear that finding the minimum weighted path in $G$ is equivalent
to minimizing the total transitional costs. Thus, our offline algorithm
provides an optimal solution for problem \textbf{DCM}.

\begin{thm}
\label{thm:offline.algo.performance}The algorithm described
above finds an optimal solution to problem \textbf{DCM} in time
$O\left(M^{2}N^{2}T\log\left(MNT\right)\right)$,
where $T$ is the number of slots, $N$ the number of generators
and $M=\max_{1\leq t\leq T}\left\lceil a(t)\right\rceil $.\end{thm}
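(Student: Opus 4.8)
The plan is to prove two things: (i) the minimum-weight path in $G$ corresponds to an optimal solution of \textbf{DCM}, and (ii) Dijkstra's algorithm on $G$ runs within the stated bound. By the discussion preceding the theorem (Proposition 1 plus the reduction to \eqref{eq:DCM-objective}), it suffices to optimize \textbf{DCM} over $(\boldsymbol{x},\boldsymbol{y})$ and then reconstruct $\boldsymbol{u},\boldsymbol{v}$. The only genuinely non-routine step is showing that the search space can be taken finite, namely $x(t)\le M$; the rest is a direct translation into a shortest-path instance together with a standard running-time count.

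First I would pin down the state space. The claim is that there is an optimal solution of \textbf{DCM} with $\lceil a(t)\rceil\le x(t)\le M$ and $0\le y(t)\le N$ for every $t$. Given any feasible $(\boldsymbol{x},\boldsymbol{y})$, replace $x(t)$ by $x'(t)=\min(x(t),M)$. Feasibility is preserved since $M\ge\lceil a(t)\rceil$. The per-slot term $\psi\left(y(t),p(t),d_t(x(t))\right)$ does not increase, because $d_t(\cdot)=g_t(\cdot,a(t))$ is increasing and $\psi$ is (non-)increasing in its third argument — immediate from inspecting \eqref{eq:the original cost function}. The server switching term does not increase either, since $r\mapsto\min(r,M)$ is non-decreasing and $1$-Lipschitz, which gives $[x'(t)-x'(t-1)]^{+}\le[x(t)-x(t-1)]^{+}$; the generator switching term is unchanged. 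Hence it is enough to search over $x(t)\in\{\lceil a(t)\rceil,\dots,M\}$ and $y(t)\in\{0,\dots,N\}$, i.e.\ $O(MN)$ states per slot, which is exactly the vertex set of $G$.

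Next I would establish the correspondence between paths and solutions. Since every edge of $G$ joins a slot-$(t-1)$ vertex to a slot-$t$ vertex, any path from $\langle 0,0,0\rangle$ to $\langle 0,0,T+1\rangle$ visits exactly one vertex $\langle x(t),y(t),t\rangle$ in each slot $t\in[1,T]$, and by construction of the vertex set this $(\boldsymbol{x},\boldsymbol{y})$ satisfies the constraints of \textbf{DCM} with $x(0)=y(0)=0$; conversely every feasible (capped) solution yields such a path. By the choice of edge weights, the weight of the path equals the \textbf{DCM} objective \eqref{eq:DCM-objective} evaluated at $(\boldsymbol{x},\boldsymbol{y})$, the edge into the sink $\langle 0,0,T+1\rangle$ contributing only the shut-down switching cost $\beta_{s}[0-x(T)]^{+}+\beta_{g}[0-y(T)]^{+}=0$. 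All edge weights are non-negative, so Dijkstra's algorithm returns a minimum-weight such path, hence an optimal solution of \textbf{DCM}; reconstructing $\boldsymbol{u},\boldsymbol{v}$ via Proposition 1 then yields an optimal solution of \eqref{eq:MP-objective}--\eqref{eq:MP-boundary-conditions}.

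Finally, the running time. The graph has $T+2$ layers; layers $0$ and $T+1$ are singletons and each interior layer has $O(MN)$ vertices, so $|V|=O(MNT)$. Edges run only between consecutive layers, and each of the $O(MN)$ vertices of layer $t-1$ is joined to each of the $O(MN)$ vertices of layer $t$, giving $|E|=O(M^{2}N^{2}T)$; each edge weight is computed in $O(1)$ time from $\psi$ and $d_{t}=g_{t}$. Dijkstra's algorithm with a binary heap then runs in $O\!\big((|E|+|V|)\log|V|\big)=O\!\big(M^{2}N^{2}T\log(MNT)\big)$, as claimed. (Because $G$ is a DAG one could drop the logarithmic factor with a layer-by-layer dynamic program, but the stated bound already holds.)
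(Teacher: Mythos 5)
Your proposal is correct and follows essentially the same route as the paper: encode the per-slot states $\langle x,y,t\rangle$ as a layered graph with transition costs as edge weights, observe path weight equals the \textbf{DCM} objective, and run Dijkstra, giving $O(MNT)$ vertices, $O(M^{2}N^{2}T)$ edges, and the stated $O\left(M^{2}N^{2}T\log(MNT)\right)$ bound. The only difference is that you explicitly justify the step the paper leaves implicit, namely that some optimal solution satisfies $x(t)\leq M$ (via the capping argument using monotonicity of $d_{t}$ and $\psi$ and the $1$-Lipschitz truncation), which is a worthwhile but minor elaboration rather than a different approach.
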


\begin{proof}
Since the numbers of active servers and generators are at most $M$
and $N$, respectively, and there are $T+2$ time slots, graph
$G$ has $O(MNT)$ vertices and $O(M^{2}N^{2}T)$ edges. Thus, the
run time of Dijkstra's algorithm on graph $G$ is $O\left(M^{2}N^{2}T\log\left(MNT\right)\right)$.\end{proof}

\textbf{Remark}:
In practice, the time-varying input sequences ($p(t)$, $a(t)$
and $g_{t}$) may not be available in advance and hence it may be
difficult to apply the above offline algorithm. However, an offline
optimal algorithm can serve as a benchmark, using which we can evaluate
the performance of online algorithms.

\section{The Benefit of Joint Optimization}\label{sec:decomposition}

Data center cost minimization (\textbf{DCM}) entails the joint optimization
of both server capacity that determines the energy demand and on-site power
generation that determines the energy supply. Now consider the situation where
the data center optimizes the energy demand and supply separately.

First, the data center dynamically provisions the server capacity
according to the grid power price $p(t)$.
More formally, it solves the \emph{capacity provisioning} problem
which we refer to as \textbf{CP} below.
\begin{eqnarray*}
\textbf{CP}: & \min & \sum_{t=1}^{T}\left\{ p(t)\cdot d_{t}(x(t))+\beta_{s}[x(t)-x(t-1)]^{+}\right\} \\
 & \mbox{s.t.} & x(t)\geq a(t),\\
 &  & x(0)=0,\\
 & \mbox{var} & x(t)\in\mathbb{N}^{0},\, t\in[1,T].
\end{eqnarray*}

Solving problem \textbf{CP} yields $\boldsymbol{\bar{x}}$. Thus,
the total power demand at time $t$ given $\bar{x}(t)$ is $d_{t}(\bar{x}(t))$.
Note that $d_{t}(\bar{x}(t))$ is not just server power consumption,
but also includes consumption of power conditioning and cooling systems,
as described in Sec. \ref{sec:formulation.DCM}.

Second, the data center minimizes the cost of satisfying the power
demand due to $d_{t}(\bar{x}(t))$, using both the grid and the on-site
generators. Specifically, it solves the \emph{energy provisioning}
problem which we refer to as \textbf{EP} below.
\begin{eqnarray*}
 \hspace{-0.3cm} &  & \textbf{EP}:\\
 \hspace{-0.3cm} & \min & \sum_{t=1}^{T}\left\{ \psi\left(y(t),p(t),d_{t}(\bar{x}(t))\right)+\beta_{g}[y(t)-y(t-1)]^{+}\right\} \\
 \hspace{-0.3cm} &  & y(0)=0,\\
 \hspace{-0.3cm} & \mbox{var} & y(t)\in\mathbb{N}^{0},\, t\in[1,T].
\end{eqnarray*}

Let $(\boldsymbol{\bar{x}},\boldsymbol{\bar{y}})$ be the solution
obtained by solving \textbf{CP} and \textbf{EP} separately in sequence
and $\left(\boldsymbol{x}^{*},\boldsymbol{y}^{*}\right)$ be the solution
obtained by solving the joint-optimization \textbf{DCM}. Further,
let ${\rm C_{DCM}}(\boldsymbol{x},\boldsymbol{y})$ be the value of
the data center's total cost for solution $(\boldsymbol{x},\boldsymbol{y})$,
including both generator and server costs as represented by the objective
function \eqref{eq:DCM-objective} of problem \textbf{DCM}.  The
additional benefit of joint optimization over optimizing independently
is simply the relationship between ${\rm C_{DCM}}(\boldsymbol{\bar{x}},\boldsymbol{\bar{y}})$
and ${\rm C_{DCM}}\left(\boldsymbol{x}^{*},\boldsymbol{y}^{*}\right)$.
It is clear that $(\boldsymbol{\bar{x}},\boldsymbol{\bar{y}})$ obeys
all the constraints of \textbf{DCM} and hence is a feasible solution
of \textbf{DCM}. Thus, ${\rm C_{DCM}}\left(\boldsymbol{x}^{*},\boldsymbol{y}^{*}\right)\leq{\rm C_{DCM}}(\boldsymbol{\bar{x}},\boldsymbol{\bar{y}}).$
We can measure the factor loss in optimality $\rho$ due to optimizing
separately as opposed to optimizing jointly on the worst-case input
as follows:
\begin{eqnarray*}
\rho\triangleq\max_{\mbox{all inputs}}\frac{{\rm C_{DCM}}(\boldsymbol{\bar{x}},\boldsymbol{\bar{y}})}{{\rm C_{DCM}}\left(\boldsymbol{x}^{*},\boldsymbol{y}^{*}\right)}.
\end{eqnarray*}
The following theorem characterizes the benefit of joint optimization
over optimizing independently.

\begin{thm}
\label{thm:offline decomposition} The factor loss
in optimality $\rho$ by solving the problem \textbf{CP} and \textbf{EP}
in sequence as opposed to optimizing jointly is given by $\rho=LP_{\max}/\left(Lc_{o}+c_{m}\right)$
and it is tight.
\end{thm}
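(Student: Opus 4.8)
\emph{Upper bound.} Write ${\rm C_{CP}}(\boldsymbol{x})\triangleq\sum_{t=1}^{T}\bigl(p(t)d_{t}(x(t))+\beta_{s}[x(t)-x(t-1)]^{+}\bigr)$ for the objective of \textbf{CP} evaluated at a server profile $\boldsymbol{x}$. The plan is to prove the chain ${\rm C_{DCM}}(\boldsymbol{\bar{x}},\boldsymbol{\bar{y}})\le{\rm C_{CP}}(\boldsymbol{\bar{x}})\le{\rm C_{CP}}(\boldsymbol{x}^{*})\le\rho\,{\rm C_{DCM}}(\boldsymbol{x}^{*},\boldsymbol{y}^{*})$ with $\rho=LP_{\max}/(Lc_{o}+c_{m})$. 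For the first inequality: $\boldsymbol{\bar{y}}$ is \textbf{EP}-optimal given $\boldsymbol{\bar{x}}$ while $\boldsymbol{y}\equiv\boldsymbol{0}$ is \textbf{EP}-feasible, and every branch of \eqref{eq:the original cost function} gives $\psi(0,p(t),d)=p(t)d$; hence the \textbf{EP}-cost of $\boldsymbol{\bar{y}}$ is at most $\sum_{t}p(t)d_{t}(\bar{x}(t))$, and adding the (common) server switching terms yields ${\rm C_{DCM}}(\boldsymbol{\bar{x}},\boldsymbol{\bar{y}})\le{\rm C_{CP}}(\boldsymbol{\bar{x}})$. The second inequality is immediate, since $\boldsymbol{\bar{x}}$ minimizes ${\rm C_{CP}}$ and $\boldsymbol{x}^{*}$ is \textbf{CP}-feasible. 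The third is the heart of the matter and I would prove it slot-by-slot.

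\emph{The per-slot estimate.} Fix $t$ and abbreviate $p=p(t)$, $d=d_{t}(x^{*}(t))\ge0$, $y=y^{*}(t)\ge0$. The claim is $p\,d\le\rho\,\psi(y,p,d)$; note $\rho\ge1$ because $c_{o}+c_{m}/L<P_{\max}$. Case (a), $p\le c_{o}$: then $\psi=c_{m}y+pd\ge pd$. Case (b), $p>c_{o}$ and $d>Ly$: here $\psi=pd-y(Lp-Lc_{o}-c_{m})$; if $Lp\le Lc_{o}+c_{m}$ the claim is trivial, and otherwise a one-line manipulation reduces it to $p\le P_{\max}$ after using $d\ge Ly$. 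Case (c), $p>c_{o}$ and $d\le Ly$: then $y\ge d/L$, so $\psi=c_{m}y+c_{o}d\ge(c_{m}/L+c_{o})d=(Lc_{o}+c_{m})d/L$, whence $\rho\psi\ge P_{\max}d\ge pd$. Summing over $t$, using $\rho\ge1$ to dominate the server switching terms and discarding the nonnegative generator switching terms on the right, gives ${\rm C_{CP}}(\boldsymbol{x}^{*})\le\rho\,{\rm C_{DCM}}(\boldsymbol{x}^{*},\boldsymbol{y}^{*})$, so the loss is at most $LP_{\max}/(Lc_{o}+c_{m})$.

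\emph{Tightness.} I would exhibit a family of instances approaching the bound. Take an even horizon $T$, a power function that ignores the generators, say $g_{t}(x,a)=x$, a constant grid price $p(t)\equiv P_{\max}$, and a workload $a(t)=L$ for odd $t$, $a(t)=0$ for even $t$; pick $\beta_{s}$ slightly below $P_{\max}$ and $\beta_{g}$ large, with parameters satisfying $c_{o}+c_{m}/L<P_{\max}<c_{o}+2c_{m}/L$. Then: \textbf{CP} prefers to power servers down in the idle slots, so $\boldsymbol{\bar{x}}$ oscillates between $L$ and $0$ and ${\rm C_{CP}}(\boldsymbol{\bar{x}})=\tfrac{T}{2}L(P_{\max}+\beta_{s})$; on this spiky profile no generator startup pays off, so \textbf{EP} returns $\boldsymbol{\bar{y}}\equiv\boldsymbol{0}$ and ${\rm C_{DCM}}(\boldsymbol{\bar{x}},\boldsymbol{\bar{y}})={\rm C_{CP}}(\boldsymbol{\bar{x}})$; the joint optimum instead keeps all servers and one generator on throughout (cheap to run continuously, and it avoids repeated server restarts), so ${\rm C_{DCM}}(\boldsymbol{x}^{*},\boldsymbol{y}^{*})=T(Lc_{o}+c_{m})+O(1)$. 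The loss ratio thus tends to $L(P_{\max}+\beta_{s})/\bigl(2(Lc_{o}+c_{m})\bigr)$, which converges to $LP_{\max}/(Lc_{o}+c_{m})=\rho$ as $\beta_{s}\uparrow P_{\max}$ and $T\to\infty$.

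\emph{Main obstacle.} The upper bound is mechanical once the three-step decomposition and the branchwise estimate are in place. The delicate part is the tightness construction: one must certify that the claimed $\boldsymbol{\bar{x}},\boldsymbol{\bar{y}},\boldsymbol{x}^{*},\boldsymbol{y}^{*}$ really are optimal for \textbf{CP}, \textbf{EP}, and \textbf{DCM} — ruling out competitors such as a flat server profile in \textbf{CP}, a continuously running generator in \textbf{EP}, or an oscillating server profile in \textbf{DCM} — and verify that the several parameter inequalities (especially the one preventing \textbf{EP} from running a generator on the spiky profile) can hold simultaneously; these same constraints are what force the bound to be approached in the limit, so ``tight'' should be read as ``no smaller constant is possible''.
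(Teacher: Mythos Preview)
Your upper bound is essentially the paper's: both chain ${\rm C_{DCM}}(\boldsymbol{\bar{x}},\boldsymbol{\bar{y}})\le{\rm C_{CP}}(\boldsymbol{\bar{x}})\le{\rm C_{CP}}(\boldsymbol{x}^{*})={\rm C_{DCM}}(\boldsymbol{x}^{*},\boldsymbol{0})$ and then bound ${\rm C_{DCM}}(\boldsymbol{x}^{*},\boldsymbol{0})/{\rm C_{DCM}}(\boldsymbol{x}^{*},\boldsymbol{y}^{*})$ by the slot-wise estimate $p\,d\le\rho\,\psi(y,p,d)$. Your three-case check is correct and matches the paper's Lemma.

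The tightness argument has a genuine gap. You ``pick $\beta_{s}$ slightly below $P_{\max}$'' and later send $\beta_{s}\uparrow P_{\max}$; but $\beta_{s}$ is a fixed model parameter, not part of the input --- the maximum defining $\rho$ ranges only over $(\boldsymbol{a},\boldsymbol{p},g_{t})$. For a fixed $\beta_{s}<P_{\max}$ your period-$2$ instance yields ratio $L(P_{\max}+\beta_{s})/\bigl(2(Lc_{o}+c_{m})\bigr)$, strictly below $\rho$. (You also impose $P_{\max}<c_{o}+2c_{m}/L$, another constraint on model parameters rather than inputs.) The paper closes this gap by using the freedom in the \emph{input} $g_{t}$: it takes $d_{t}(x)=e_{m}x$ with a free scale $e_{m}>0$, spikes the workload to $L/e_{m}$ once per period, and sets the idle-interval length to $\beta_{s}/(e_{m}P_{\max})$ so that each server's idling cost over that interval is \emph{exactly} $\beta_{s}$. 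Then the oscillating $\boldsymbol{\bar{x}}$ (with $\boldsymbol{\bar{y}}\equiv0$) costs $LP_{\max}+(L/e_{m})\beta_{s}=LP_{\max}\bigl(1+\beta_{s}/(e_{m}P_{\max})\bigr)$ per period, while the flat-server/one-generator joint optimum costs $(Lc_{o}+c_{m})\bigl(1+\beta_{s}/(e_{m}P_{\max})\bigr)$, and the common factor cancels to give $\rho$ exactly --- independently of $\beta_{s}$. Your structural picture (spiky workload at constant max price; \textbf{CP} oscillates while \textbf{DCM} runs flat behind one generator) is the right one; the missing idea is that the degree of freedom absorbing $\beta_{s}$ must come from $g_{t}$, not from $\beta_{s}$ itself.
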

\begin{proof}
Refer to Appendix \ref{sub:Proof-of-Theorem 2}.
\end{proof}

The above theorem guarantees that for \emph{any} time duration $T$, \emph{any }workload $\boldsymbol{a}$,
\emph{any} grid price $\boldsymbol{p}$ and \emph{any} function $g_{t}(x,a)$
as long as it is increasing and convex  in $x$ and $a$, solving
problem \textbf{DCM} by first solving \textbf{CP} then solving \textbf{EP}
in sequence yields a solution that is within a factor $LP_{\max}/\left(Lc_{o}+c_{m}\right)$
of solving \textbf{DCM} directly. Further, the ratio is tight in that
there exists an input to \textbf{DCM} where the ratio ${\rm C_{DCM}}(\boldsymbol{\bar{x}},\boldsymbol{\bar{y}})/{\rm C_{DCM}}\left(\boldsymbol{x}^{*},\boldsymbol{y}^{*}\right)$
equals $LP_{\max}/\left(Lc_{o}+c_{m}\right).$

The theorem shows in a quantitative way that a larger price discrepancy
between the maximum grid price and the on-site power yields a larger
gain by optimizing the energy provisioning and capacity provisioning
jointly. Over the past decade, utilities have been exposing a greater
level of grid price variation to their customers with mechanisms such
as time-of-use pricing where grid prices are much more expensive during
peak hours than during the off-peak periods. This likely leads to
larger price discrepancy between the grid and the on-site power. In
that case, our result implies that a joint optimization of power and
server resources is likely to yield more benefits to a hybrid data
center.

Besides characterizing the benefit of jointly optimizing power and
server resources, the decomposition of problem \textbf{DCM} into problems
\textbf{CP} and \textbf{EP} provides a key approach for our online
algorithm design. Problem \textbf{DCM} has an objective function with
mutually-dependent coupled variables $\boldsymbol{x}$ and $\boldsymbol{y}$
 indicating the server and generator states, respectively. This coupling
(specifically through the function $\psi\left(y(t),p(t),d_{t}(x(t))\right)$
) makes it difficult to design provably good online algorithms. However,
instead of solving problem \textbf{DCM} directly, we devise online
algorithms to solve problems \textbf{CP} that involves only server
variable $\boldsymbol{x}$ and \textbf{EP} that involves only the
generator variables $\boldsymbol{y}$. Combining the online algorithms
for \textbf{CP} and \textbf{EP} respectively yields the desired online
algorithm for \textbf{DCM}.

\section{Online Algorithms for On-Grid Data Centers}\label{sec:GCSR}

We first develop an online algorithm for \textbf{DCM} for an \emph{on-grid}
data center, where there is no on-site power generation, a scenario
that captures most data centers today. Since \emph{on-grid} data center
has no on-site power generation, solving \textbf{DCM} for it reduces
to solving problem \textbf{CP} described in Sec. \ref{sec:decomposition}.

Problems of this kind have been studied in the literature (see \emph{e.g.},
\cite{lin2011dynamic,labpaper}). The difference of our work from
\cite{lin2011dynamic,labpaper} is as follows (also summarized in
Table \ref{tab:Summary-of-differences}). From the modelling aspect,
we explicitly take into account power consumption of both cooling
and power conditioning systems, in addition to servers. From the formulation
aspect, we are solving a different optimization problem, \emph{i.e.}, an
integer program with convex and increasing objective function. From
the theoretical result aspects, we achieve a small competitive ratio
of $2-\alpha_{s}$, which quickly decreases to $1$ as look-ahead
window $w$ increase.

\begin{table}
{\footnotesize }%
\begin{tabular}{|c|c|c|c|}
\hline
 & \textbf{\footnotesize Cooling \& } & \textbf{\footnotesize Optimization} & \textbf{\footnotesize Competitive}\tabularnewline
 & \textbf{\footnotesize Power} & \textbf{\footnotesize Type}{\footnotesize{} } & \textbf{\footnotesize Ratio}\tabularnewline
 & \textbf{\footnotesize Conditioning} &  & \tabularnewline
\hline
{\footnotesize LCP} & {\footnotesize No} & obj: convex & {\footnotesize 3}\tabularnewline
{\footnotesize \cite{lin2011dynamic}} &  & var: continuous & \tabularnewline
\hline
{\footnotesize CSR} & {\footnotesize No} & obj: linear & {\footnotesize $2-\alpha_{s}$}\tabularnewline
{\footnotesize \cite{labpaper}} &  & var: integer & \tabularnewline
\hline
{\footnotesize GCSR } &  & obj: convex & \tabularnewline
{\footnotesize this} & {\footnotesize Yes} & and increasing & {\footnotesize $2-\alpha_{s}$}\tabularnewline
{\footnotesize work} &  & var: integer & \tabularnewline
\hline
\end{tabular}{\footnotesize \par}

{\footnotesize Note that $\alpha_{s}$ is the normalized look-ahead
window size, whose representations are different under the different settings of \cite{labpaper}
and our work.}{\footnotesize \par}

\caption{\footnotesize\label{tab:Summary-of-differences} Comparison of the algorithm GCSR proposed in this paper, CSR in \cite{labpaper}, and LCP in \cite{lin2011dynamic}.}
\end{table}

Recall that \textbf{CP} takes as input the workload $\boldsymbol{a}$,
the grid price $\boldsymbol{p}$ and the time-dependent function $g_{t},\ \forall t$
and outputs the number of active servers $\boldsymbol{x}$. We construct
solutions to \textbf{CP} in a divide-and-conquer fashion. We will
first decompose the demand $\boldsymbol{a}$ into sub-demands and
define corresponding sub-problem for each server, and then solve
capacity provisioning \emph{separately }for each sub-problem. Note
that the key is to correctly decompose the demand and define the subproblems
so that the combined solution is still optimal. More specifically,
we slice the demand as follows: for $1\leq i\leq M=\max_{1\leq t\leq T}\left\lceil a(t)\right\rceil $,
$1\leq t\leq T,$
\[
a_{i}(t)\triangleq\min\left\{ 1,\max\left\{ 0,a(t)-(i-1)\right\} \right\} .
\]
And the corresponding sub-problem $\textbf{CP}_{{\rm i}}$ is defined
as follows.
\begin{eqnarray*}
\textbf{CP}_{{\rm i}}: & \min & \sum_{t=1}^{T}\left\{ p(t)\cdot d_{t}^{i}\cdot x_{i}(t)+\beta_{s}[x_{i}(t)-x_{i}(t-1)]^{+}\right\} \\
 & \mbox{s.t.} & x_{i}(t)\geq a_{i}(t),\\
 &  & x_{i}(0)=0,\\
 & \mbox{var} & x_{i}(t)\in\{0,1\},\, t\in[1,T],
\end{eqnarray*}
where $x_{i}(t)$ indicates whether the $i$-th server is on at time
$t$ and $d_{t}^{i}\triangleq d_{t}(i)-d_{t}(i-1).$ $d_{t}^{i}$
can be interpreted as the power consumption due to the $i$-th server
at $t$.

Problem $\textbf{CP}_{{\rm i}}$ solves the capacity provisioning
problem with inputs workload $\boldsymbol{a_{i}}$, grid price $\boldsymbol{p}$
and $d_{t}^{i}$. The key reason for our decomposition is that $\textbf{CP}_{{\rm i}}$
is easier to solve, since $\boldsymbol{a_{i}}$
take values in $[0,1]$ and exactly one server is required to serve
each $\boldsymbol{a_{i}}$. Generally speaking, a divide-and-conquer
manner may suffer from optimality loss. Surprisingly, as the following
theorem states, the individual optimal solutions for problems $\textbf{CP}_{{\rm i}}$
can be put together to form an optimal solution to the original problem
\textbf{CP}. Denote ${\rm C_{CP_{i}}}(\boldsymbol{x_{i}})$ as the
cost of solution $\boldsymbol{x_{i}}$ for problem $\textbf{CP}_{{\rm i}}$
and ${\rm C_{CP}}(\boldsymbol{x})$ the cost of solution $\boldsymbol{x}$
for problem $\textbf{CP}$.

\begin{thm}
\label{thm:sub-ocp optimal} Consider problem $\mathbf{CP}$ with any $d_{t}(x(t))=g_{t}(x(t),a(t)))$ that is convex in x(t). Let $\bar{\boldsymbol{x}}_{i}$ be an
optimal solution and $\boldsymbol{x}_{i}^{on}$ an online solution
for problem $\mathbf{CP_{i}}$ with workload \textup{$\boldsymbol{a}_{i}$}\textup{\emph{,}}\emph{
}then $\sum_{i=1}^{M}\bar{\boldsymbol{x}}_{i}$ is an optimal solution
for $\mathbf{CP}$ with workload \textup{$\boldsymbol{a}$}. Furthermore,
if \textup{$\forall\boldsymbol{a}_{i},i$}\textup{\emph{, we have
}}\textup{${\rm C_{CP_{i}}}(\boldsymbol{x}_{i}^{on})\\\leq\gamma\cdot{\rm C_{CP_{i}}}(\bar{\boldsymbol{x}}_{i})$}\textup{\emph{
for a constant $\gamma\geq 1$, then}}\textup{ ${\rm C_{CP}}(\sum_{i=1}^{M}\boldsymbol{x}_{i}^{on})\leq\gamma\cdot{\rm C_{CP}}(\sum_{i=1}^{M}\bar{\boldsymbol{x}}_{i}),\:\forall\boldsymbol{a}$.}
\end{thm}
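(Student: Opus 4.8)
The plan is to establish the two claims separately. For the first claim --- that $\sum_{i=1}^{M}\bar{\boldsymbol{x}}_i$ is optimal for $\mathbf{CP}$ --- I would argue by a two-sided inequality. First I would show that $\sum_i \bar{\boldsymbol{x}}_i$ is \emph{feasible} for $\mathbf{CP}$: since $\bar x_i(t)\ge a_i(t)$ and $\sum_i a_i(t)=\lceil a(t)\rceil \ge a(t)$ (by the telescoping definition of the slices), we get $\sum_i \bar x_i(t)\ge a(t)$, and integrality and the boundary condition are immediate. Next I would compare costs. The subtle point is that the objective of $\mathbf{CP}$ uses $d_t(x(t))$, which is \emph{convex} in $x(t)$, whereas $\mathbf{CP_i}$ uses the linear increments $d_t^i = d_t(i)-d_t(i-1)$. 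The key observation is that for \emph{any} feasible $\boldsymbol{x}$ of $\mathbf{CP}$, one can canonically decompose it into slices $x_i(t)\in\{0,1\}$ with $\sum_i x_i(t)=x(t)$ by ``filling from the bottom'' --- i.e. $x_i(t)=1$ iff $i\le x(t)$ --- and then $\sum_i d_t^i x_i(t) = \sum_{i=1}^{x(t)} d_t^i = d_t(x(t))$ (telescoping, using $d_t(0)=0$), while the switching term also decomposes because $[x(t)-x(t-1)]^+ = \sum_i [x_i(t)-x_i(t-1)]^+$ for this monotone-fill decomposition. Hence ${\rm C_{CP}}(\boldsymbol{x}) = \sum_i {\rm C_{CP_i}}(\boldsymbol{x}_i) \ge \sum_i {\rm C_{CP_i}}(\bar{\boldsymbol{x}}_i) = {\rm C_{CP}}(\sum_i \bar{\boldsymbol{x}}_i)$ --- the last equality needing that the $\bar{\boldsymbol{x}}_i$ themselves are in ``filled'' form, which I would argue holds without loss of generality because the sub-demands $\boldsymbol{a}_i$ are themselves nested ($a_1(t)\ge a_2(t)\ge\cdots$), so an optimal $\bar{\boldsymbol{x}}_i$ can be chosen nested too, or alternatively because the cost $\sum_i {\rm C_{CP_i}}$ depends on $(\boldsymbol{x}_i)$ only through the sum when evaluated on the filled form. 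This chain shows $\sum_i \bar{\boldsymbol{x}}_i$ attains the minimum.

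For the second claim --- the competitive-ratio transfer --- I would combine the two facts just established. We have ${\rm C_{CP}}(\sum_i \boldsymbol{x}_i^{on}) \le \sum_i {\rm C_{CP_i}}(\boldsymbol{x}_i^{on})$, where the inequality (rather than equality) appears because $\sum_i \boldsymbol{x}_i^{on}$ need not be in filled form, but \emph{subadditivity} still holds: for any nonnegative integers $x(t)=\sum_i x_i(t)$, we have $d_t(x(t)) = \sum_{j=1}^{x(t)} d_t^j \le \sum_i d_t^{(\text{the increments actually used})}$... here I need to be careful --- the cleanest route is to note $d_t$ convex with $d_t(0)=0$ implies $d_t$ is superadditive\emph{ly bounded} in the right direction: actually since the increments $d_t^i$ are \emph{nondecreasing} in $i$ (convexity), summing the smallest $x(t)$ of them, i.e. $d_t^1,\dots,d_t^{x(t)}$, is a lower bound on any sum of $x(t)$ distinct increments, hence $d_t(x(t)) = \sum_{j\le x(t)} d_t^j \le \sum_i d_t^i x_i(t)$. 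Similarly $[x(t)-x(t-1)]^+ \le \sum_i [x_i(t)-x_i(t-1)]^+$ by subadditivity of $[\cdot]^+$. Thus ${\rm C_{CP}}(\sum_i \boldsymbol{x}_i^{on}) \le \sum_i {\rm C_{CP_i}}(\boldsymbol{x}_i^{on}) \le \gamma\sum_i {\rm C_{CP_i}}(\bar{\boldsymbol{x}}_i) = \gamma\, {\rm C_{CP}}(\sum_i \bar{\boldsymbol{x}}_i)$, using the per-subproblem guarantee for the middle step and the first claim for the last equality. This gives exactly the stated bound.

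The main obstacle is the bookkeeping around the two different ``directions'' of the identity $d_t(x(t)) = \sum_i d_t^i x_i(t)$: it is an \emph{equality} when $\boldsymbol{x}$ is the canonical bottom-filled decomposition, but for an arbitrary slicing (such as the one produced by independently-run online algorithms on the $\boldsymbol{a}_i$) one only gets the \emph{inequality} $d_t(x(t)) \le \sum_i d_t^i x_i(t)$, and getting the inequality to point the right way is precisely where convexity of $d_t$ (monotonicity of the increments $d_t^i$) is essential. I would state this as a short lemma up front --- ``for $\boldsymbol{x}$ feasible with slicing $(\boldsymbol{x}_i)$, ${\rm C_{CP}}(\boldsymbol{x}) \le \sum_i {\rm C_{CP_i}}(\boldsymbol{x}_i)$, with equality for the canonical slicing'' --- and then the theorem follows in a few lines. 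A secondary, minor point to check is that the online solutions $\boldsymbol{x}_i^{on}$ on the binary-valued demands $\boldsymbol{a}_i$ are indeed $\{0,1\}$-valued (so the slicing interpretation is valid), which is part of the premise that such online algorithms for $\mathbf{CP_i}$ exist.
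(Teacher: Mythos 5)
Your plan follows essentially the same route as the paper's proof: slice any feasible $\mathbf{CP}$ solution bottom-up ($\hat{x}_i(t)=1$ iff $i\le x(t)$), use a telescoping identity to equate costs for this canonical slicing, use convexity of $d_t$ (nondecreasing increments $d_t^i$) together with subadditivity of $[\cdot]^+$ to get the reverse inequality for arbitrary slicings, and then transfer the per-subproblem ratio. Two steps need repair, however. First, you invoke $d_t(0)=0$, which is not assumed in the model and is generally false here: $d_t(0)=g_t(0,a(t))$ contains, e.g., the constant term of the power-conditioning overhead, so it can be strictly positive. The correct identities carry an additive constant: ${\rm C_{CP}}(\tilde{\boldsymbol{x}})=\sum_{i}{\rm C_{CP_i}}(\hat{\boldsymbol{x}}_i)+\sum_{t}p(t)d_t(0)$ for the canonical slicing, and ${\rm C_{CP}}(\sum_i\boldsymbol{x}_i)\le\sum_i{\rm C_{CP_i}}(\boldsymbol{x}_i)+\sum_t p(t)d_t(0)$ for arbitrary feasible slices. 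In particular, your stated inequality ${\rm C_{CP}}(\sum_i\boldsymbol{x}_i^{on})\le\sum_i{\rm C_{CP_i}}(\boldsymbol{x}_i^{on})$ can fail without the constant. The paper carries $\sum_t p(t)d_t(0)$ throughout and uses $\gamma\ge 1$ to absorb it in the ratio step; the same mechanical fix applies to your chains.

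Second, your terminal equality ${\rm C_{CP}}(\sum_i\bar{\boldsymbol{x}}_i)=\sum_i{\rm C_{CP_i}}(\bar{\boldsymbol{x}}_i)$, justified by a ``WLOG the $\bar{\boldsymbol{x}}_i$ are nested'' claim, is left unproved and is in fact unnecessary. The paper avoids it: for the optimality claim only the subadditivity direction is needed, giving ${\rm C_{CP}}(\tilde{\boldsymbol{x}})=\sum_i{\rm C_{CP_i}}(\hat{\boldsymbol{x}}_i)+\sum_t p(t)d_t(0)\ge\sum_i{\rm C_{CP_i}}(\bar{\boldsymbol{x}}_i)+\sum_t p(t)d_t(0)\ge{\rm C_{CP}}(\sum_i\bar{\boldsymbol{x}}_i)$; for the ratio claim one compares against the $\mathbf{CP}$ optimum $\tilde{\boldsymbol{x}}$, which by the first claim has the same cost as $\sum_i\bar{\boldsymbol{x}}_i$, so that $\gamma\,{\rm C_{CP}}(\sum_i\bar{\boldsymbol{x}}_i)=\gamma\,{\rm C_{CP}}(\tilde{\boldsymbol{x}})\ge\sum_i{\rm C_{CP_i}}(\boldsymbol{x}_i^{on})+\sum_t p(t)d_t(0)\ge{\rm C_{CP}}(\sum_i\boldsymbol{x}_i^{on})$. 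With these two adjustments your argument coincides with the paper's proof; as written, the $d_t(0)$ omission and the unsubstantiated nestedness step are genuine (if repairable) gaps.
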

\begin{proof}
Refer to Appendix \ref{sub:Proof-of-Theorem 3}.
\end{proof}

\begin{figure}[t!]
\begin{minipage}[t]{0.48\linewidth}
\centering
\includegraphics[width=0.99\columnwidth]{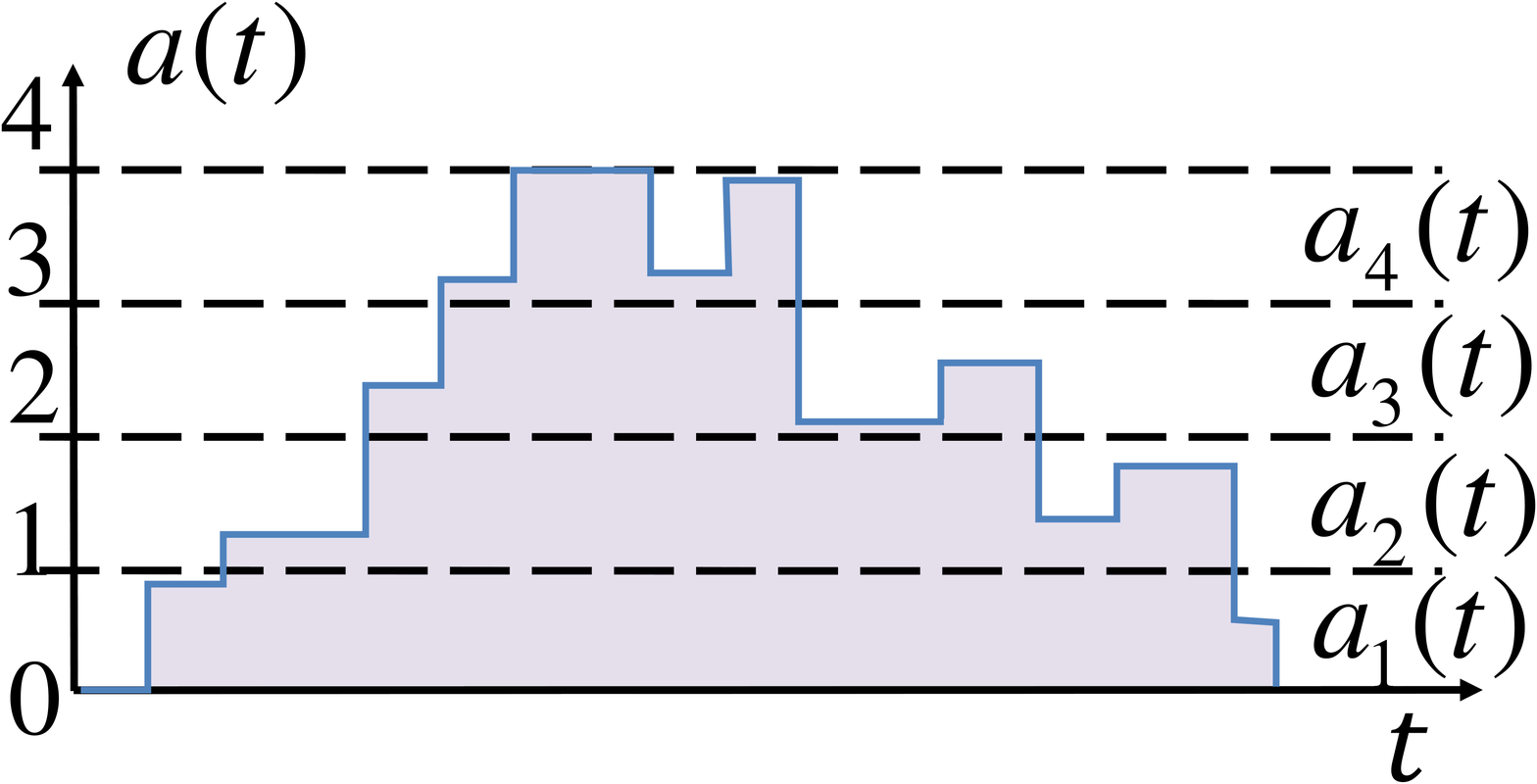}
\caption{\footnotesize\label{fig:example-of-sub-demand-a} An example of how workload $\boldsymbol{a}$ is decomposed into $4$ sub-demands.}
\end{minipage}
\hfill
\begin{minipage}[t]{0.48\linewidth}
\centering
\includegraphics[width=0.99\columnwidth]{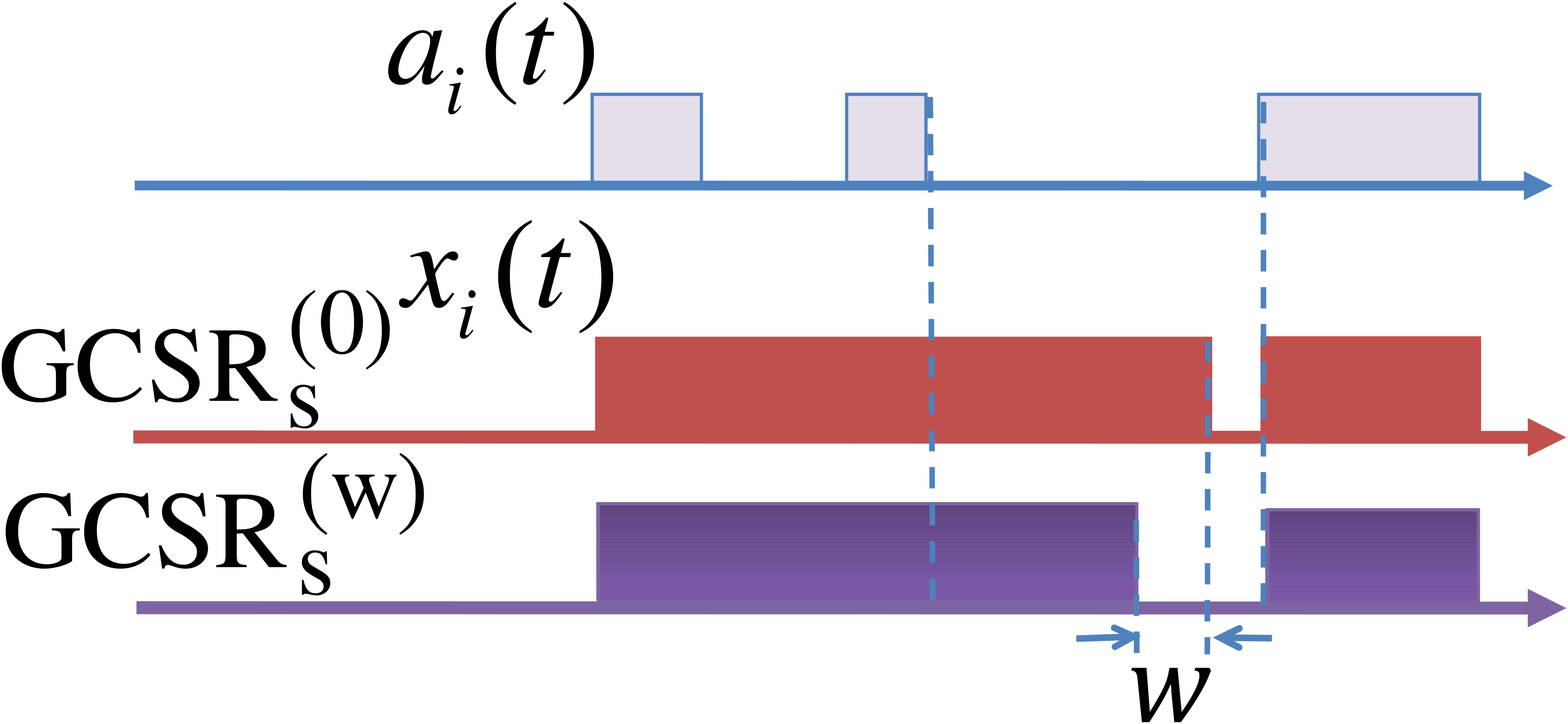}
\caption{\footnotesize\label{fig:An-example-of-OnGridOnline} An example of $a_{i}(t)$ and corresponding solution obtained by $\mathbf{GCSR_{s}^{(w)}}$.}
\end{minipage}
\end{figure}

Thus, it remains to design algorithms for each $\textbf{CP}_{{\rm i}}$.
To solve $\textbf{CP}_{{\rm i}}$ in an online fashion one need only
orchestrate one server to satisfy the workload $\boldsymbol{a}_{i}$
and minimize the total cost. When $a_{i}(t)>0$, we must keep the
server active to satisfy the workload. The challenging part is what
we should do if the server is already active but $a_{i}(t)=0$. Should
we turn off the server immediately or keep it idling for some time?
Should we distinguish the scenarios when the grid price is high versus
low?

Inspired by ``ski-rental'' \cite{borodin1998online} and \cite{labpaper}, we solve $\textbf{CP}_{{\rm i}}$
by the following ``break-even'' idea.  During the idle period,
\emph{i.e.}, $a_{i}(t)=0$, we accumulate an ``idling cost'' and when it
reaches $\beta_{s}$, we turn off the server; otherwise, we keep the
server idling. Specifically, our online algorithm $\mathbf{GCSR_{s}^{(w)}}$ (Generalized Collective Server Rental)
for $\textbf{CP}_{{\rm i}}$ has a look-ahead window $w$. At time
$t$, if there exist $\tau'\in[t,t+w]$ such that the idling cost
till $\tau'$ is at least $\beta_{s}$, we turn off the server; otherwise,
we keep it idling. More formally, we have Algorithm \ref{alg:GCSR(w)}
and its competitive analysis in Theorem \ref{thm:cr.GCSR(w)}. A simple
example of $\mathbf{GCSR_{s}^{(w)}}$ is shown in Fig. \ref{fig:An-example-of-OnGridOnline}.

Our online algorithm for \textbf{CP}, denoted as $\mathbf{GCSR^{(w)}}$,
first employs $\mathbf{GCSR_{s}^{(w)}}$ to solve each $\textbf{CP}_{{\rm i}}$
on workload $\boldsymbol{a}_{i}$, $1\leq i\leq M$, in an online
fashion to produce output $\boldsymbol{x}_{i}^{on}$ and then simply
outputs $\sum_{i=1}^{M}\boldsymbol{x}_{i}^{on}=\boldsymbol{x}^{on}$
as the output for the original problem \textbf{CP}.
\begin{algorithm}[t!]
{
\caption{\label{alg:GCSR(w)} $\mathbf{GCSR_{s}^{(w)}}$ for problem $\textbf{CP}_{{\rm i}}$}
\begin{algorithmic}[1]
\STATE $C_{i}=0$,$x_{i}(0)=0$
\STATE \textbf{at} current time $t,$ \textbf{do}
\STATE {Set $\tau'\leftarrow\min\{t'\in[t,t+w]\:|\: C_{i}+\sum_{\tau=t}^{t'}p(\tau)d_{\tau}^{i}\geq\beta_{s}\}$}
\IF{$a_{i}(t)>0$}
\STATE {$x_{i}(t)=1$ and $C_{i}=0$}
\ELSIF {$\tau'=\mbox{NULL}$ or $\exists\tau\in[t,\tau']$, $a_{i}(\tau)>0$}
\STATE {$x_{i}(t)=x_{i}(t-1)$ and $C_{i}=C_{i}+p(t)d_{t}^{i}x_{i}(t)$}
\ELSE
\STATE {$x_{i}(t)=0$ and $C_{i}=0$}
\ENDIF
\end{algorithmic}
}
\end{algorithm}
\begin{thm}
\label{thm:cr.GCSR(w)} \textup{$\mathbf{GCSR_{s}^{(w)}}$} achieves
a competitive ratio of $2-\alpha_{s}$ for $\mathbf{CP_{i}}$, where
$\alpha_{s}\triangleq\min\left(1,wd_{\min}P_{\min}/\beta_{s}\right)\in[0,1]$
is a ``normalized'' look-ahead window size and $d_{\min}\triangleq\min_{t}\{d_{t}(1)\\-d_{t}(0)\}$.
Hence, according to Theorem \ref{thm:sub-ocp optimal}, \emph{$\mathbf{GCSR^{(w)}}$
}achieves the same competitive ratio for $\mathbf{CP}$. Further, no deterministic online algorithm with a look-ahead window w can achieve a smaller competitive ratio.
\end{thm}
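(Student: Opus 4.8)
The plan is to establish the three claims in the statement separately: (a) $\mathbf{GCSR_s^{(w)}}$ is $(2-\alpha_s)$-competitive for $\mathbf{CP_i}$; (b) hence, via Theorem~\ref{thm:sub-ocp optimal}, $\mathbf{GCSR^{(w)}}$ is $(2-\alpha_s)$-competitive for $\mathbf{CP}$; and (c) no deterministic online algorithm with look-ahead $w$ does better. For (a), the first thing I would record is the structural fact that makes a \emph{single} constant $\alpha_s$ work for all sub-problems: since $d_t(\cdot)$ is convex its increments are nondecreasing, so $d_t^i=d_t(i)-d_t(i-1)\ge d_t(1)-d_t(0)\ge d_{\min}$ for every $i\ge 1$ and $t$; together with $p(t)\ge P_{\min}$, every slot in which the $i$-th server is on but idle adds at least $d_{\min}P_{\min}$ to the accumulator $C_i$. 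Next I would split the horizon of $\mathbf{CP_i}$ into maximal \emph{busy} blocks (where $a_i(t)>0$, so $x_i(t)=1$ and the term $p(t)d_t^i$ is forced on every feasible solution) and maximal \emph{idle} intervals; the cost of any solution then equals the common forced serving cost plus, over each idle interval, the idling cost incurred on that interval plus the single reactivation cost $\beta_s$ paid iff the following busy block starts from the off state. On a fixed internal idle interval with total idling cost $\ell$, the offline optimum pays exactly $\min(\ell,\beta_s)$: idle through when $\ell\le\beta_s$, otherwise shut down at the interval's start.

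I would then bound $\mathbf{GCSR_s^{(w)}}$ interval by interval. If $\ell<\beta_s$, the accumulated idling from the interval's start never reaches $\beta_s$ before the next busy block, so each computed $\tau'$ is either $\mbox{NULL}$ or lands on a later busy slot; line~6 keeps the server idling throughout, matching the optimum. If $\ell\ge\beta_s$, let $s^\ast$ be the first slot at which the cumulative idling cost from the interval's start reaches $\beta_s$; a short check of the update rule shows the algorithm shuts the server down precisely at $\max(\text{start},\,s^\ast-w)$, so the idling it actually pays is the cumulative cost up to $s^\ast-1$ (which is $<\beta_s$) minus the last $w$ slots before $s^\ast$ (each costing at least $d_{\min}P_{\min}$), i.e.\ at most $\max(0,\,\beta_s-w\,d_{\min}P_{\min})$, plus $\beta_s$ for reactivation, for an interval cost of at most $2\beta_s-w\,d_{\min}P_{\min}\le(2-\alpha_s)\beta_s$ against the optimum's $\beta_s$ (the $\min(1,\cdot)$ in $\alpha_s$ absorbs the case $w\,d_{\min}P_{\min}\ge\beta_s$, where the algorithm shuts down at the interval's start and matches the optimum exactly). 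Summing over all intervals, plus the common forced cost and a short boundary argument for the leading and trailing intervals, gives claim (a). For (b), $\mathbf{GCSR^{(w)}}$ simply runs $\mathbf{GCSR_s^{(w)}}$ on each sub-demand $\boldsymbol{a}_i$, and by (a) every such run is $(2-\alpha_s)$-competitive for $\mathbf{CP_i}$ with the \emph{same} $\gamma=2-\alpha_s$ (this is exactly where the uniform bound $d_t^i\ge d_{\min}$ is needed); Theorem~\ref{thm:sub-ocp optimal} then delivers $(2-\alpha_s)$-competitiveness for $\mathbf{CP}$.

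For (c) I would design an adversary against an arbitrary deterministic look-ahead-$w$ online algorithm $\mathcal A$, on a single-server instance ($a(t)\in\{0,1\}$, so $M=1$ and $\mathbf{CP}=\mathbf{CP_1}$) with uniform prices $P_{\min}$ and a linear $g_t$ chosen so the per-slot idling cost is exactly $\epsilon:=d_{\min}P_{\min}$. Starting from an active state followed by a long idle tail, $\mathcal A$ either keeps the server on indefinitely (then a sufficiently long tail makes $\mathcal A$'s cost unbounded while the optimum shuts down, so its ratio is unbounded), or it shuts the server down after accumulating some idling cost $c$. I would then feed $\mathcal A$ a many-round instance, where each round begins with a demand slot and continues with an idle stretch tailored to $\mathcal A$'s behaviour in that round: if $c+w\epsilon\le\beta_s$ the stretch is made just long enough that demand resumes $w{+}1$ slots after the slot at which $\mathcal A$ shuts down (so $\mathcal A$ does not see it in its window and still shuts down, paying $c+\beta_s$, while the optimum idles through, paying $\approx c+w\epsilon$); otherwise the stretch is made long enough that the optimum prefers to shut down, paying $\beta_s$, while $\mathcal A$ still pays $c+\beta_s$. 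In either case the per-round cost ratio is at least $(c+\beta_s)/\min(c+w\epsilon,\beta_s)$, which is minimized over $c$ at $c=\beta_s-w\epsilon$ and equals $2-w\epsilon/\beta_s=2-\alpha_s$ there; since this holds on every round, and since $\frac{\sum a_i}{\sum b_i}\ge\min_i\frac{a_i}{b_i}$, taking the number of rounds to infinity (so the one-time first-activation and serving terms vanish) drives $\mathcal A$'s overall ratio to at least $2-\alpha_s$. Hence no deterministic online algorithm with look-ahead $w$ can be $(2-\alpha_s-\delta)$-competitive for any $\delta>0$.

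The two places that need care are the following. The first is minor: the boundary idle intervals in (a). Before the first busy block both $\mathbf{GCSR_s^{(w)}}$ and the optimum keep the server off, contributing ratio~$1$; the trailing idle interval after the last busy block has to be handled through the final (server-off) boundary condition, so that near $t=T$ the whole remaining empty horizon is visible and the server is shut down rather than left idling to $T$. The second, and the real crux, is the lower bound (c): the adversary must punish simultaneously an algorithm that waits too long (via a long stretch where the optimum shuts down) and one that shuts down too early (via a short stretch where the optimum idles through but $\mathcal A$, having committed, cannot, because demand is reinstated just outside $\mathcal A$'s look-ahead window), and the two penalties must cross \emph{exactly} at the normalized value $\alpha_s$; getting this crossover precise, and verifying that the integer-slot rounding of the stretch lengths does not erode the $2-\alpha_s$ bound, is the main technical obstacle.
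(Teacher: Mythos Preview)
Your proposal is correct and follows essentially the same route as the paper: for the upper bound you compare $\mathbf{GCSR_s^{(w)}}$ interval-by-interval against the offline optimum that either idles through or shuts down on each idle block (using the convexity-derived bound $d_t^i\ge d_{\min}$ exactly as the paper does), and for the lower bound you have the adversary reinstate demand just beyond the algorithm's look-ahead window. The only cosmetic difference is that you amortize the first-activation and serving costs over many rounds whereas the paper uses a single idle interval and a parameter limit ($d_{\min}\to 0$ or $\beta_s\to\infty$); both leave the same $(w{+}1)$-versus-$w$ discretization slack, which you rightly identify as the residual technical wrinkle.
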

\begin{proof}
Refer to Appendix \ref{sub:Proof-of-Theorem 4}.
\end{proof}

A consequence of Theorem \ref{thm:cr.GCSR(w)} is that when the look-ahead
window size $w$ reaches a break-even interval $\Delta_{s}\triangleq\beta_{s}/(d_{\min}P_{\min})$,
our online algorithm has a competitive ratio of $1.$ That is, having a look-ahead
window larger than $\Delta_{s}$ will not decrease the cost any further.

\section{Online Algorithms for Hybrid Data Centers}\label{sec:Online-Algorithms-Hybrid}

Unlike on-grid data centers, hybrid data centers have on-site power
generation and therefore have to solve both capacity provisioning
(\textbf{CP}) and energy provisioning (\textbf{EP}) to solve the data
center cost minimization (\textbf{DCM}) problem. We design an online
algorithm that we call $\mathbf{DCMON}$ solving \textbf{DCM} as follows.
\begin{enumerate}

\item Run algorithm $\mathbf{GCSR}$ from Sec. \ref{sec:GCSR} to solve
\textbf{CP} that takes workload $\boldsymbol{a}$, grid price
$\boldsymbol{p}$ and time-dependent function $g_{t},\ \forall t$
as input and produces the number of active servers $\boldsymbol{x}^{on}$.

\item Run algorithm $\mathbf{CHASE}$ described in Section \ref{sub:chase}
below to solve \textbf{EP} that takes the energy demand $d_{t}(x^{on}(t))=g_{t}(x^{on}(t),a(t))$
and grid price $p(t),\ \forall t$ as input and decides when to turn
on/off on-site generators and how much power to draw from the generators
and the grid. Note that a similar problem has been studied in the
microgrid scenarios for energy generation scheduling in our previous
work \cite{chase}. In this paper, we adapt
algorithm $\mathbf{CHASE}$ developed in \cite{chase}
to our data center scenarios to solve \textbf{EP} in an online fashion.
\end{enumerate}

For the sake of completeness, we first briefly present the design
behind $\mathbf{CHASE}$ in Sec. \ref{sub:EP offline} and the algorithm
and its intuitions in Sec. \ref{sub:chase}. Then we present the combined
algorithm $\mathbf{DCMON}$ in Sec. \ref{sub:DCMON}.

\subsection{A useful structure of an offline optimal solution
of EP}\label{sub:EP offline}
We first reveal an elegant structure of an offline optimal solution
and then exploit this structure in the design of our online algorithm
$\mathbf{CHASE}$.

\subsubsection{Decompose EP into sub-problems ${EP}_{i}$s}

For the ease of presentation, we denote $e(t)=d_{t}(x^{on}(t))$.
Similar as the decomposition of workload when solving \textbf{CP}, we
decompose the energy demand $\boldsymbol{e}$ into $N$ sub-demands
and define sub-problem for each generator, then solve energy provisioning
\emph{separately} for each sub-problem, where $N$ is the number of
on-site generators. Specifically, for $1\leq i\leq N,\ 1\leq t\leq T$,
\[
e_{i}(t)\triangleq\min\left\{ L,\max\left\{ 0,e(t)-(i-1)L\right\} \right\} .
\]
The corresponding sub-problem $\textbf{EP}_{{\rm i}}$ is in the
same form as $\textbf{EP}$ except that $d_{t}(\bar{x}(t))$ is replaced
by $e_{i}(t)$ and $y(t)$ is replaced by $y_{i}(t)\in\{0,1\}$.
Using this decomposition, we can solve \textbf{EP} on input $\boldsymbol{e}$
by simultaneously solving simpler problems $\textbf{EP}_{{\rm i}}$
on input $\boldsymbol{e}_{i}$ that only involve a single generator.
Theorem \ref{thm:sub-ep optimal} shows that the decomposition incurs
no optimality loss. Denote ${\rm C_{EP_{i}}}(\boldsymbol{y_{i}})$
as the cost of solution $\boldsymbol{y_{i}}$ for problem $\textbf{EP}_{{\rm i}}$
and ${\rm C_{EP}}(\boldsymbol{y})$ the cost of solution $\boldsymbol{y}$
for problem $\textbf{EP}$.
\begin{thm}
\label{thm:sub-ep optimal} Let $\boldsymbol{\bar{y}}_{i}$ be an
optimal solution and $\boldsymbol{y}_{i}^{on}$ an online solution
for $\mathbf{EP_{i}}$ with energy demand \textup{$\boldsymbol{e}_{i}$}\textup{\emph{,}}\emph{
}then $\sum_{i=1}^{N}\boldsymbol{\bar{y}}_{i}$ is an optimal solution
for $\mathbf{EP}$ with energy demand \textup{$\boldsymbol{e}$}.
Furthermore, if \textup{$\forall\boldsymbol{e}_{i},i$}\textup{\emph{,
we have }}\textup{${\rm C_{EP_{i}}}(\boldsymbol{y}_{i}^{on})\leq\gamma\cdot{\rm C_{EP_{i}}}(\boldsymbol{\bar{y}}_{i})$}\textup{\emph{
for a constant $\gamma\geq 1$, then}}\textup{ ${\rm C_{EP}}(\sum_{i=1}^{N}\boldsymbol{y}_{i}^{on})\leq\gamma\cdot{\rm C_{EP}}(\sum_{i=1}^{N}\boldsymbol{\bar{y}}_{i}),\:\forall\boldsymbol{e}$.}
\end{thm}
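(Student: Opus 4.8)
The plan is to follow the same divide-and-conquer template used for Theorem~\ref{thm:sub-ocp optimal}, with the linear per-slot server term there replaced by the piecewise term $\psi$ and the workload slices $a_i$ replaced by the energy slices $e_i$. At the outset I would record the elementary telescoping identity $\sum_{i=1}^{k}e_i(t)=\min\{kL,\,e(t)\}$ for every $k\ge1$, immediate from the definition of $e_i$; in particular $\sum_{i=1}^{N}e_i(t)=\min\{NL,e(t)\}$. When $e(t)>NL$ the overflow $e(t)-NL$ must be drawn from the grid in any solution of \textbf{EP} that uses at most $N$ generators, and in fact $\psi(y(t),p(t),e(t))=\psi(y(t),p(t),\min\{e(t),NL\})+p(t)(e(t)-NL)^{+}$ whenever $y(t)\le N$. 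Since $\sum_i\boldsymbol{\bar{y}}_i$ and $\sum_i\boldsymbol{y}_i^{on}$ each use at most $N$ generators and since $\gamma\ge1$ makes a common additive constant harmless for the inequalities we need, I may assume without loss of generality that $e(t)\le NL$, so that $\sum_i e_i(t)=e(t)$.

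I would then establish two structural facts. Fact~(B), ``aggregation never hurts'': for any feasible solutions $\boldsymbol{y}_i$ of $\mathbf{EP_i}$ ($1\le i\le N$), writing $\tilde{y}(t)=\sum_{i=1}^{N}y_i(t)$, one has ${\rm C_{EP}}(\tilde{\boldsymbol{y}})\le\sum_{i=1}^{N}{\rm C_{EP_{i}}}(\boldsymbol{y}_i)$. Fact~(A), ``stacking is tight'': for any feasible $\boldsymbol{y}$ of \textbf{EP} with $y(t)\le N$, the stacked profiles $y_i(t)\triangleq\mathbf{1}\{i\le y(t)\}$ are feasible for $\mathbf{EP_i}$ and $\sum_{i=1}^{N}{\rm C_{EP_{i}}}(\boldsymbol{y}_i)={\rm C_{EP}}(\boldsymbol{y})$. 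For the startup terms, (B) uses $\big[\sum_i(y_i(t)-y_i(t-1))\big]^{+}\le\sum_i[y_i(t)-y_i(t-1)]^{+}$, which is an equality for the stacked profiles since raising the count from $y(t-1)$ to $y(t)$ switches on exactly $[y(t)-y(t-1)]^{+}$ fresh generators. For the $\psi$ terms, the device is to rewrite $\psi$, using \eqref{eq:the original cost function}, as $\psi(y,p,d)=c_m y+\min\{c_o u+p(d-u):0\le u\le\min(Ly,d)\}$; summing over $i$ and setting $U=\sum_i u_i$ gives $\sum_i\psi(y_i(t),p(t),e_i(t))=c_m\tilde{y}(t)+\min\{c_oU+p(t)(e(t)-U):0\le U\le\sum_i\min(Ly_i(t),e_i(t))\}$. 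Because $\sum_i\min(Ly_i(t),e_i(t))\le\min(L\tilde{y}(t),e(t))$ and $U\mapsto c_oU+p(t)(e(t)-U)$ is affine hence monotone, this minimum is at least $\psi(\tilde{y}(t),p(t),e(t))-c_m\tilde{y}(t)$, giving Fact~(B); for the stacked profiles $\sum_i\min(Ly_i(t),e_i(t))=\sum_{i\le y(t)}e_i(t)=\min(Ly(t),e(t))$ exactly (using $e_i\le L$ and the telescoping identity), so the two ranges of $U$ coincide and (B) becomes an equality, which with the startup-term equality yields Fact~(A).

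Given (A) and (B), the theorem follows by a short chain. Let $\boldsymbol{y}^{*}$ be an optimal solution of \textbf{EP}, taken with $y^{*}(t)\le N$, and let $\boldsymbol{y}_i^{*}$ be its stacked decomposition; optimality of each $\boldsymbol{\bar{y}}_i$ for $\mathbf{EP_i}$ and Fact~(A) give $\sum_i{\rm C_{EP_{i}}}(\boldsymbol{\bar{y}}_i)\le\sum_i{\rm C_{EP_{i}}}(\boldsymbol{y}_i^{*})={\rm C_{EP}}(\boldsymbol{y}^{*})$, while Fact~(B) applied to the $\boldsymbol{\bar{y}}_i$ gives ${\rm C_{EP}}(\sum_i\boldsymbol{\bar{y}}_i)\le\sum_i{\rm C_{EP_{i}}}(\boldsymbol{\bar{y}}_i)$. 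Since $\sum_i\boldsymbol{\bar{y}}_i$ is feasible for \textbf{EP} we also have ${\rm C_{EP}}(\sum_i\boldsymbol{\bar{y}}_i)\ge{\rm C_{EP}}(\boldsymbol{y}^{*})$, so these quantities are all equal; hence $\sum_i\boldsymbol{\bar{y}}_i$ is optimal for \textbf{EP} and ${\rm C_{EP}}(\sum_i\boldsymbol{\bar{y}}_i)=\sum_i{\rm C_{EP_{i}}}(\boldsymbol{\bar{y}}_i)$. For the competitive claim, Fact~(B) applied to the $\boldsymbol{y}_i^{on}$ together with the hypothesis ${\rm C_{EP_{i}}}(\boldsymbol{y}_i^{on})\le\gamma\,{\rm C_{EP_{i}}}(\boldsymbol{\bar{y}}_i)$ yields ${\rm C_{EP}}(\sum_i\boldsymbol{y}_i^{on})\le\sum_i{\rm C_{EP_{i}}}(\boldsymbol{y}_i^{on})\le\gamma\sum_i{\rm C_{EP_{i}}}(\boldsymbol{\bar{y}}_i)=\gamma\,{\rm C_{EP}}(\sum_i\boldsymbol{\bar{y}}_i)$.

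The real work is Fact~(B), the ``superadditivity'' of $\psi$ across the slices, because $\psi$ branches on both $p(t)$ versus $c_o$ and $e(t)$ versus $L\tilde{y}(t)$; the reformulation of $\psi$ from \eqref{eq:the original cost function} as a one-dimensional minimization over the grid/generator split collapses that case analysis, after which only monotonicity of an affine function is used. The one genuine bookkeeping nuisance is the regime $e(t)>NL$, which is absorbed by the additive-constant reduction noted above, together with the easy observation that an optimal \textbf{EP} solution may be taken with $y(t)\le N$ so that its stacked decomposition is admissible.
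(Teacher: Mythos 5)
Your proof is correct and follows essentially the same route as the paper's Appendix~\ref{sub:Proof-of-Theorem 5}: your Fact~(A) is the paper's Lemma~\ref{lem:thm6-lem1} (stacking an optimal $\mathbf{EP}$ solution into the slices is cost-preserving) and your Fact~(B) is Lemma~\ref{lem:thm6-lem2} (aggregation is superadditive), combined by the same chain of inequalities. The only differences are cosmetic: you absorb the overflow term $\sum_{t}p(t)\left[e(t)-NL\right]^{+}$ by a WLOG reduction up front instead of carrying it through both lemmas, and you verify the $\psi$ comparison via the variational rewriting $\psi(y,p,d)=c_{m}y+\min\{c_{o}u+p(d-u):0\le u\le\min(Ly,d)\}$ rather than the paper's explicit piecewise form \eqref{eq:psi-rewritten}.
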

\begin{proof}
Refer to Appendix \ref{sub:Proof-of-Theorem 5}.
\end{proof}

\subsubsection{Solve each sub-problem ${EP}_{i}$}

Based on Theorem \ref{thm:sub-ep optimal}, it remains to design
 algorithms for each $\textbf{EP}_{{\rm i}}$. Define
\begin{equation}
r_{i}(t)=\psi\left(0,p(t),e_{i}(t)\right)-\psi\left(1,p(t),e_{i}(t)\right).\label{eq:ri(t)}
\end{equation}
$r_{i}(t)$ can be interpreted as the one-slot cost difference between
not using and using on-site generation. Intuitively, if $r_{i}(t)>0$
(resp. $r_{i}(t)<0$), it will be desirable to turn on (resp. off)
the generator. However, due to the startup cost, we should not turn
on and off the generator too frequently. Instead, we should evaluate
whether the \emph{cumulative} gain or loss in the future can offset
the startup cost. This intuition motivates us to define the following
cumulative cost difference $R_{i}(t)$. We set initial values as $R_{i}(0)=-\beta_{g}$
and define $R_{i}(t)$ inductively:
\begin{equation}
R_{i}(t)\triangleq\min\left\{ 0,\max\left\{ -\beta_{g},R_{i}(t-1)+r_{i}(t)\right\} \right\} ,\label{eq:Regret.Definition}
\end{equation}
Note that $R_{i}(t)$ is only within the range $[-\beta_{g},0]$.
An important feature of $R_{i}(t)$ useful later in online algorithm
design is that it can be computed given the past and current inputs.
An illustrating example of $R_{i}(t)$ is shown in Fig. \ref{fig:An-example-of-CHASE}.
\begin{figure}[t!]
\begin{minipage}[t]{0.48\linewidth}
\centering
\includegraphics[width=0.99\columnwidth]{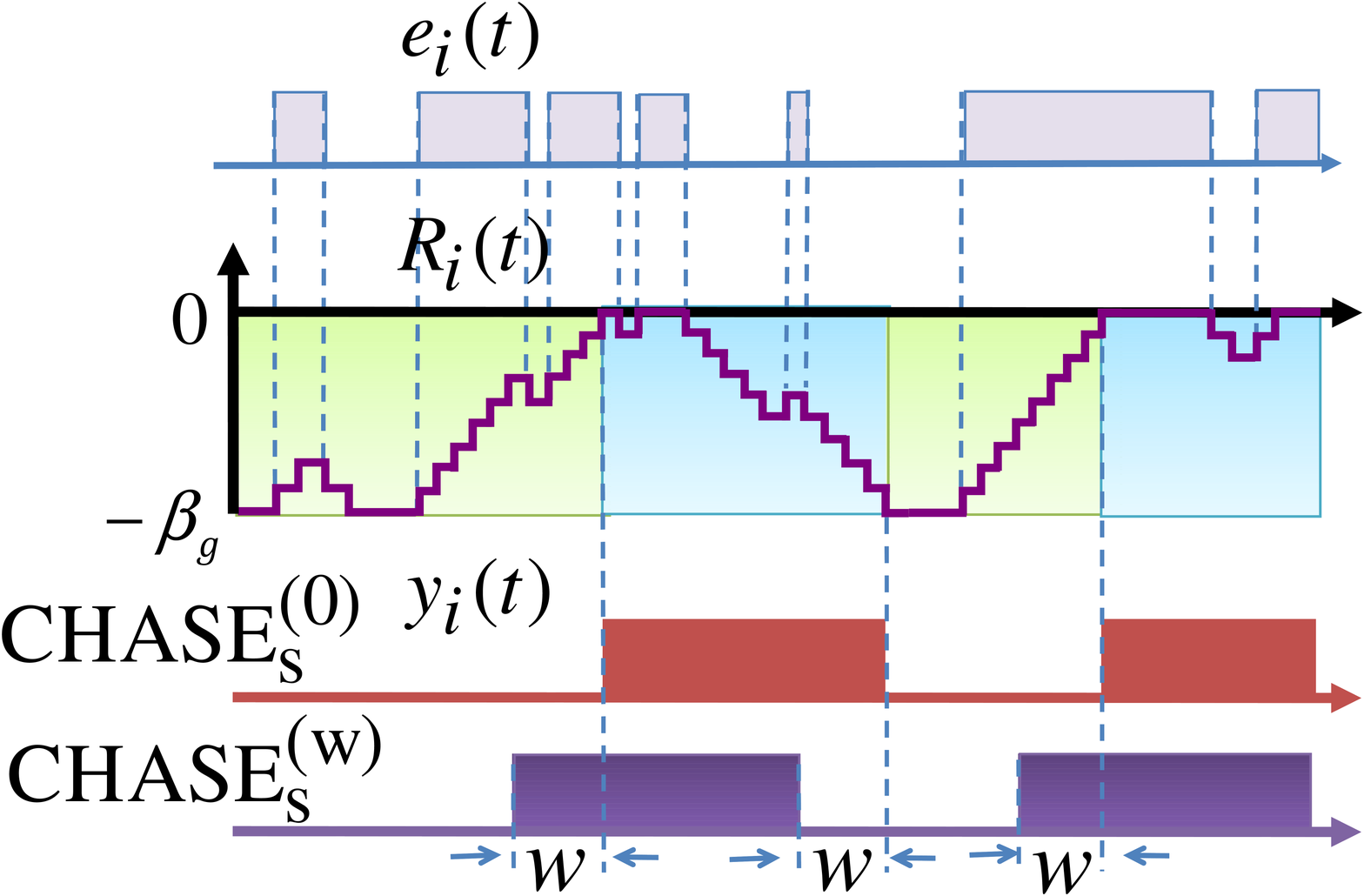}
\caption{\footnotesize\label{fig:An-example-of-CHASE} An example of $e_{i}(t)$, $R_{i}(t)$ and the corresponding solution obtained by $\mathbf{CHASE_{s}^{(w)}}$ for $\textbf{EP}_{{\rm i}}$.}
\end{minipage}
\hfill
\begin{minipage}[t]{0.48\linewidth}
\centering
\includegraphics[width=0.99\columnwidth]{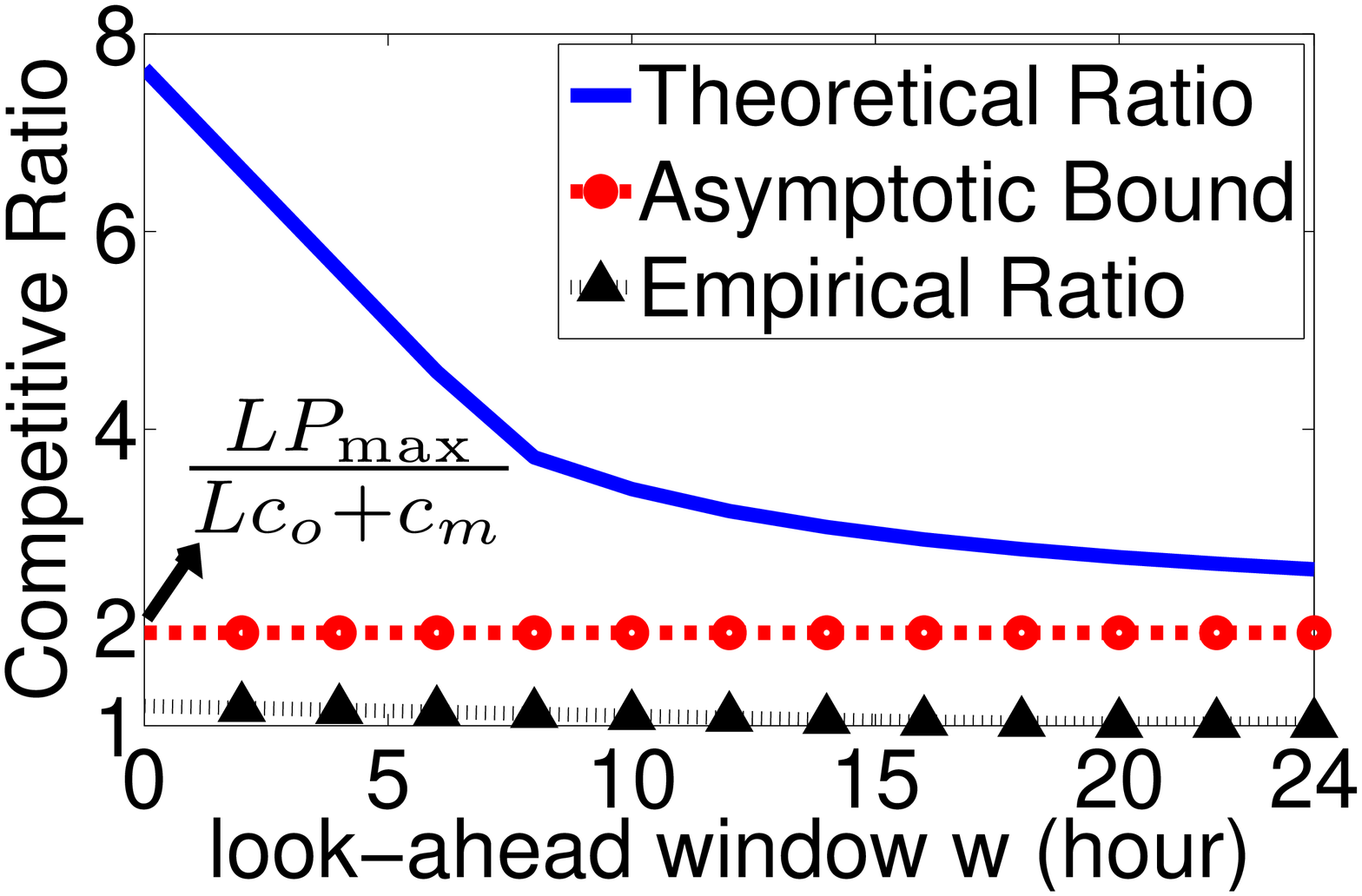}
\caption{\footnotesize\label{fig:ratio-dcmon} Theoretical and empirical ratios of algorithm $\mathbf{DCMON^{(w)}}$ vs. look-ahead window size w.}
\end{minipage}
\end{figure}


Intuitively, when $R_{i}(t)$ hits its boundary $0$, the cost difference
between not using and using on-site generation within a certain period
is at least $\beta_{g}$, which can offset the startup cost. Thus,
it makes sense to turn on the generator. Similarly, when $R_{i}(t)$
hits $-\beta_{g}$, it may be better to turn off the generator and
use the grid. The following theorem formalizes this intuition, and
shows an optimal solution $\bar{y}_{i}(t)$ for problem $\textbf{EP}_{{\rm i}}$
at the time epoch when $R_{i}(t)$ hits its boundary values $-\beta_{g}$
or $0$.

\begin{thm}
\label{thm:EP-offline optimal } There exists an offline optimal solution
for problem $\mathbf{EP_{i}}$ , denoted by $\bar{y}_{i}(t)$,
$1\leq t\leq T$, so that:
\begin{itemize}
\item if $R_{i}(t)=-\beta_{g}$, then $\bar{y}_{i}(t)=0;$
\item if $R_{i}(t)=0$, then $\bar{y}_{i}(t)=1.$
\end{itemize}
\end{thm}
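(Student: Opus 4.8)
The plan is to recast $\mathbf{EP_i}$ as a clean $\{0,1\}$-state switching problem, record two sign facts about the cumulative quantity $R_i$ of \eqref{eq:Regret.Definition}, and then massage an arbitrary offline optimum into the claimed shape by an interchange argument. First I would reduce the objective: since $y_i(t)\in\{0,1\}$ we have $\psi(y_i(t),p(t),e_i(t))=\psi(0,p(t),e_i(t))-r_i(t)\,y_i(t)$, so by \eqref{eq:ri(t)} the $\mathbf{EP_i}$ cost equals the ($\boldsymbol{y}_i$-independent) constant $\sum_t\psi(0,p(t),e_i(t))$ plus $J(\boldsymbol{y}_i)\triangleq\beta_g\cdot(\text{number of startups of }\boldsymbol{y}_i)-\sum_t r_i(t)\,y_i(t)$. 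Hence it suffices to produce a minimizer of $J$ among feasible $\boldsymbol{y}_i$ obeying the two boundary rules.

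Next I would record the monotonicity facts. Fix a slot $t_0$. (i) If $R_i(t_0)=0$, let $s<t_0$ be the largest slot with $R_i(s)=-\beta_g$ (it exists since $R_i(0)=-\beta_g$); then on $(s,t_0]$ the clipping against $-\beta_g$ in \eqref{eq:Regret.Definition} is inactive, so $R_i(\tau)\le R_i(\tau-1)+r_i(\tau)$ there, and telescoping yields $\sum_{u=c}^{t_0}r_i(u)\ge -R_i(c-1)\ge 0$ for every $c\in(s,t_0]$, the lower bound being exactly $\beta_g$ when $c-1=s$. (ii) Symmetrically, if $R_i(t_0)=-\beta_g$ and $s'$ is the largest slot $<t_0$ with $R_i(s')=0$ (set $s'=0$ if none), then the clipping against $0$ is inactive on $(s',t_0]$ and $\sum_{u=c}^{t_0}r_i(u)\le -\beta_g-R_i(c-1)\le 0$ for every $c\in(s',t_0]$, the bound being $-\beta_g$ when $c-1=s'$.

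Then comes the interchange. Take an optimal $\boldsymbol{y}^*$; if it violates a boundary rule let $t_0$ be the earliest offending slot. Consider the case $R_i(t_0)=0$ but $y^*(t_0)=0$ (the case $R_i(t_0)=-\beta_g,\ y^*(t_0)=1$ is symmetric, using fact (ii) and switching the generator off). Let $[a,b]\ni t_0$ be the maximal block on which $y^*\equiv 0$, let $s$ be as in (i), set $c=\max(a,s+1)\in(s,t_0]$, and form $\boldsymbol{y}'$ by flipping $\boldsymbol{y}^*$ to $1$ on $[c,t_0]$. The term $-\sum_t r_i(t)y(t)$ changes by $-\sum_{u=c}^{t_0}r_i(u)\le R_i(c-1)\le 0$, while the startup count can increase only through a fresh startup at $c$, which requires $y^*(c-1)=0$; but then $c-1\in\{0,s\}$, so $R_i(c-1)=-\beta_g$ and fact (i) gives $\sum_{u=c}^{t_0}r_i(u)\ge\beta_g$, whence $J(\boldsymbol{y}')\le J(\boldsymbol{y}^*)$ anyway. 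So $\boldsymbol{y}'$ is again optimal, has $y'(t_0)=1$, and since $R_i>-\beta_g$ on $[c,t_0]$ and we set $y'\equiv 1$ there, no violation is created at any slot $\le t_0$. Iterating — the earliest offending slot strictly advances and $T<\infty$ — yields an optimal solution obeying both boundary rules.

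I expect the startup bookkeeping in the last step to be the main obstacle: for each way $[c,t_0]$ can abut its neighbouring blocks and the horizon ends $t=1,T$, one must confirm that an extra startup is incurred only when facts (i)/(ii) simultaneously furnish a compensating gain of $\beta_g$, and that the flip never manufactures a violation at a slot $\le t_0$. Everything else is routine once the reduction to $J$ and the two monotonicity facts are established.
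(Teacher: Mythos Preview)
Your proposal is correct, and the approach is genuinely different from the paper's. The paper does not massage an arbitrary optimum into shape; instead it first partitions $[1,T]$ into \emph{critical segments} (maximal stretches on which $R_i$ travels between one boundary and the other), explicitly defines a candidate $y_{\mathrm{OFA}}$ that is constant on each segment ($1$ on ``type-1'' segments, $0$ otherwise), and then proves the stronger Theorem~\ref{thm:OFA-optimal}: $y_{\mathrm{OFA}}$ is optimal. The key lemma there shows $y_{\mathrm{OFA}}$ is optimal for every segment subproblem \emph{regardless of the boundary values}, and the segments are stitched together by a hybrid argument. Theorem~\ref{thm:EP-offline optimal } then falls out as a corollary. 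Your monotonicity facts (i)/(ii) are exactly the paper's Lemma~\ref{lem:Delta-function}, so the low-level tool is shared; what differs is the organizing structure: construct-and-verify via segments versus iterative local exchange. The paper's route buys more --- the explicit form of $y_{\mathrm{OFA}}$ on the interior (where $-\beta_g<R_i(t)<0$) is essential for the competitive analysis of $\mathbf{CHASE}$ in Theorem~\ref{thm:cr.CHASE(w)} --- whereas yours is leaner if one only wants Theorem~\ref{thm:EP-offline optimal }.

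One cautionary remark: your ``symmetric'' case ($R_i(t_0)=-\beta_g$, $y^*(t_0)=1$) is not literally the mirror image, because the potential extra startup arises at $t_0+1$ rather than at $c$. With your choice $c=\max(a,s'+1)$ the bookkeeping still closes --- when $c>a$ one has $c-1=s'$ with $R_i(s')=0$, so fact~(ii) gives $\sum_{u=c}^{t_0}r_i(u)\le-\beta_g$, compensating a fresh startup at $t_0+1$; when $c=a$ the original startup at $a$ is destroyed, offsetting any new one --- but this is the case split you should actually write out rather than leave implicit.
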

\begin{proof}
Refer to Appendix \ref{sub:Proof-of-Theorem 6}.
\end{proof}

\subsection{Online algorithm CHASE}\label{sub:chase}

Our online algorithm $\mathbf{CHASE_{s}^{(w)}}$ with look-ahead window $w$
exploits the insights revealed in Theorem \ref{thm:EP-offline optimal }
to solve $\textbf{EP}_{{\rm i}}$. The idea behind $\mathbf{CHASE_{s}^{(w)}}$
is to track the offline optimal in an online fashion. In particular,
at time $0$, $R_{i}(0)=-\beta_{g}$ and we set $y_{i}(t)=0$. We
keep tracking the value of $R_{i}(t)$ at every time slot within the
look-ahead window. Once we observe that $R_{i}(t)$ hits values $-\beta_{g}$
or $0$, we set the $y_{i}(t)$ to the optimal solution as Theorem
\ref{thm:EP-offline optimal } reveals; otherwise, keep $y_{i}(t)=y_{i}(t-1)$
unchanged. More formally, we have Algorithm \ref{alg:CHASE(w)} and
its competitive analysis in Theorem \ref{thm:cr.CHASE(w)}. An example
of $\mathbf{CHASE_{s}^{(w)}}$\emph{ }is shown in Fig. \ref{fig:An-example-of-CHASE}.

The online algorithm for \textbf{EP}, denoted as $\mathbf{CHASE^{(w)}}$,
first employs $\mathbf{CHASE_{s}^{(w)}}$ to solve each $\textbf{EP}_{{\rm i}}$
on energy demand $\boldsymbol{e}_{i}$, $1\leq i\leq N$, in an online
fashion to produce output $\boldsymbol{y}_{i}^{on}$ and then simply
outputs $\sum_{i=1}^{N}\boldsymbol{y}_{i}^{on}$ as the output for
the original problem \textbf{EP}.
\begin{algorithm}[htb!]
{
\caption{\label{alg:CHASE(w)}$\mathbf{CHASE_{s}^{(w)}}$ for problem $\textbf{EP}_{{\rm i}}$}
\begin{algorithmic}[1]
\STATE \textbf{at} current time $t,$ \textbf{do}
\STATE Obtain $\left(R_{i}(\tau)\right)_{\tau=t}^{t+w}$
\STATE Set $\tau'\leftarrow\min\{\tau\in[t,t+w]\:|\: R_{i}(\tau)=0\ \mbox{or}\ -\beta_{g}\}$
\IF{$\tau'=\mbox{NULL}$}
\STATE {$y_{i}(t)=y_{i}(t-1)$}
\ELSIF {$R_{i}(\tau')=0$}
\STATE {$y_{i}(t)=1$}
\ELSE
\STATE {$y_{i}(t)=0$}
\ENDIF
\end{algorithmic}
}
\end{algorithm}

\begin{thm}
\textbf{\label{thm:cr.CHASE(w)}}$\mathbf{CHASE_{s}^{(w)}}$ for problem
$\mathbf{EP_{i}}$ with a look-ahead window $w$ has a competitive
ratio of
\[
1+\frac{2\beta_{g}\left(LP{}_{\max}-Lc_{o}-c_{m}\right)}{\beta_{g}LP{}_{\max}+wc_{m}P{}_{\max}\left(L-\frac{c_{m}}{P{}_{\max}-c_{o}}\right)}.
\]
Hence, according to Theorem \ref{thm:sub-ep optimal},
$\mathbf{CHASE^{(w)}}$ achieves the same competitive ratio
for problem $\mathbf{EP}$.
\end{thm}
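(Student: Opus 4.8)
The plan is to adapt the competitive analysis of $\mathbf{CHASE}$ from \cite{chase}. By the last sentence of Theorem~\ref{thm:sub-ep optimal} it suffices to bound the competitive ratio of $\mathbf{CHASE_s^{(w)}}$ on a single sub-problem $\mathbf{EP_i}$, so fix an arbitrary demand $\boldsymbol{e}_i$ (recall $e_i(t)\in[0,L]$, $y_i(t)\in\{0,1\}$). First I would record a few per-slot facts that follow directly from the definition of $\psi$ and assumption~(ii). Write $\Delta\triangleq(P_{\max}-c_o)L-c_m=LP_{\max}-Lc_o-c_m>0$ and $\kappa\triangleq\frac{c_mP_{\max}}{P_{\max}-c_o}$ (so the bound's $wc_mP_{\max}(L-\frac{c_m}{P_{\max}-c_o})$ is exactly $w\kappa\Delta$). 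Then: (i) $-c_m\le r_i(t)\le\Delta$; (ii) $\psi(1,p,e)\ge\frac{Lc_o+c_m}{LP_{\max}}\psi(0,p,e)$ in every slot (check the cases $p\le c_o$ and $p>c_o$, using $e\le L$), equivalently $\psi(1,p,e)\ge\frac{Lc_o+c_m}{\Delta}r_i(t)$; (iii) $P_{\max}\psi(1,p,e)-c_o\psi(0,p,e)\ge c_mP_{\max}$, equivalently $\psi(0,p,e)\ge\kappa+\frac{P_{\max}}{P_{\max}-c_o}r_i(t)$; and (iv) on any slot with $r_i(t)>0$ one is in the case $p>c_o$ with $e>\frac{c_m}{P_{\max}-c_o}$, hence $\psi(1,p,e)>\kappa$. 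Finally, invoke Theorem~\ref{thm:EP-offline optimal } and a small strengthening of it (on/off are forced at every boundary hit of $R_i$, and it is never worse to be off while $R_i$ moves down or is pinned at $-\beta_g$ nor worse to be on while $R_i$ moves up or is pinned at $0$) to fix a canonical offline optimal $\bar y_i$ that switches only at the times where $R_i$ transitions from one boundary to the other. This partitions $[1,T]$ into alternating maximal \emph{on-frames} (on which $\bar y_i=1$; each begins with a \emph{rising part} on which $R_i$ goes from $-\beta_g$ to $0$ without hitting either boundary in between — so it telescopes with no interior clamping) and \emph{off-frames} (on which $\bar y_i=0$; each begins with a \emph{falling part} from $0$ to $-\beta_g$).

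Next I would compare $y_i^{on}$ produced by $\mathbf{CHASE_s^{(w)}}$ with $\bar y_i$. The rule $\tau'\leftarrow\min\{\tau\in[t,t+w]:R_i(\tau)\in\{0,-\beta_g\}\}$ makes $y_i^{on}$ execute exactly the same ordered list of on/off switches as $\bar y_i$, and hence incur the same aggregate startup cost, but it anticipates each switch by up to $w$ slots: on a rising part of length $\ell$ it turns on at most $(\ell-1-w)^+$ slots after $\bar y_i$ (and exactly with $\bar y_i$ when $\ell\le w+1$), and symmetrically for turning off on a falling part. Thus $y_i^{on}$ and $\bar y_i$ disagree only on two intervals per \emph{cycle} (one on-frame followed by one off-frame): a prefix of the rising part on which $y_i^{on}=0$ but $\bar y_i=1$, and a prefix of the falling part on which $y_i^{on}=1$ but $\bar y_i=0$.

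Then comes the accounting. On the first disagreement interval the excess cost of $y_i^{on}$ over $\bar y_i$ is $\sum r_i(\tau)$ over that prefix; since the rising part has no interior clamping this telescopes to $R_i$ at the end of the prefix plus $\beta_g$, i.e. to $\beta_g-c^{*}$, where $c^{*}\in[0,w\Delta]$ is the climb of $R_i$ over the \emph{last} $w$ slots of the rising part (those online skips). Symmetrically the second interval contributes $\beta_g-d^{*}$ with $d^{*}\in[0,wc_m]$ the drop over the last $w$ slots of the falling part. So $\mathbf{CHASE_s^{(w)}}$'s cost exceeds $\bar y_i$'s by at most $(\beta_g-c^{*})+(\beta_g-d^{*})$ per cycle. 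For the denominator I would lower-bound $\bar y_i$'s cost per cycle: the one startup $\beta_g$, plus $\sum_{\text{rising part}}\psi(1,\cdot)\ge\frac{Lc_o+c_m}{\Delta}\sum r_i=\frac{(Lc_o+c_m)\beta_g}{\Delta}$ (by (ii) and $\sum r_i=\beta_g$), which already gives $\ge\frac{\beta_g LP_{\max}}{\Delta}$ since $\Delta+Lc_o+c_m=LP_{\max}$; plus the \emph{extra} that the last-$w$ slots of either part are forced to cost beyond what these climb/drop bounds account for: by (iv) each of the last $w$ rising slots costs $>\kappa$, contributing at least $(\,w\kappa-\frac{Lc_o+c_m}{\Delta}c^{*}\,)^{+}$ on top, and by (iii) the last $w$ falling slots cost at least $w\kappa-\frac{P_{\max}}{P_{\max}-c_o}d^{*}\ge(\,w\kappa-\frac{P_{\max}}{P_{\max}-c_o}d^{*}\,)^{+}$. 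The point is precisely that whenever the look-ahead fails to shave the full $\beta_g$ off an excess interval (small $c^{*}$ or $d^{*}$) it is because those $w$ slots barely move $R_i$ — and then $\bar y_i$ is forced to spend $\approx w\kappa$ over them. Plugging the two per-cycle bounds into the ratio and maximizing over $c^{*}\in[0,w\Delta]$, $d^{*}\in[0,wc_m]$ (the worst case is $c^{*},d^{*}$ small, where both "extra" terms are near $w\kappa$) gives per-cycle excess-over-cost at most $\frac{2\beta_g}{\beta_g LP_{\max}/\Delta+w\kappa}=\frac{2\beta_g\Delta}{\beta_g LP_{\max}+w\kappa\Delta}$; summing over cycles yields $\mathrm{ALG}\le(1+\tfrac{2\beta_g\Delta}{\beta_g LP_{\max}+wc_mP_{\max}(L-c_m/(P_{\max}-c_o))})\,\mathrm{OPT}$, and Theorem~\ref{thm:sub-ep optimal} lifts it from $\mathbf{EP_i}$ to $\mathbf{EP}$.

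The step I expect to be the main obstacle is this last per-cycle accounting: the crude bounds "excess $\le2\beta_g$ per cycle" and "$\bar y_i$'s cost $\ge\frac{\beta_g LP_{\max}}{\Delta}$ per cycle" separately reproduce only the $w=0$ ratio $1+\frac{2\Delta}{LP_{\max}}$, so the entire $w$-dependence must come from the coupling between how much the look-ahead reduces the excess on a cycle and how much $\bar y_i$ is forced to pay over the very slots that are shaved off — and those are bounds on overlapping quantities (the climb $c^{*}$ is counted both in the excess and inside the $\frac{(Lc_o+c_m)\beta_g}{\Delta}$ term), so they must be combined by a careful split of the rising/falling part rather than simply added, and both lag directions have to be tracked simultaneously. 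Secondary nuisances are the standard ones: partial cycles truncated by the horizon endpoints (using $y(0)=0$ and termination at $t=T+1$), cycles with empty plateaus, and verifying the "no interior clamping on a part" claim used for the telescoping. All of this closely parallels the competitive analysis of $\mathbf{CHASE}$ in \cite{chase}, from which the algorithm is taken.
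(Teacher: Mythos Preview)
Your approach is essentially the paper's: the same critical-segment decomposition via $R_i$ (your ``cycles'' are the paper's alternating type-1/type-2 segments), the same per-segment excess bound (the paper's $q_k^1+\beta_g$ is your $\beta_g-c^*$), and the same key per-slot inequality --- your fact (iii) is exactly the paper's Lemma~\ref{lem:segment-1 minimum cost}. Two small points. First, fact (iv) cannot be applied slot-by-slot on the last $w$ rising slots, since $r_i(t)>0$ need not hold on each of them individually ($R_i$ may dip within the rising part); but summing the $\psi(1)$-form of (iii) gives $\sum\psi(1)\ge w\kappa+\tfrac{c_o}{P_{\max}-c_o}c^*\ge w\kappa$, which rescues the claim. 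Second, the paper's accounting is simpler than yours: it lower-bounds \emph{only} the type-1 (on-frame) optimal cost and bounds the type-2/3 excess crudely by $\beta_g$, so your two-variable maximization over $(c^*,d^*)$ collapses to a one-parameter one over $q_k^1\approx -c^*$ and the stated ratio drops out directly --- your extra falling-part term can only tighten the bound, but it is not needed and it complicates the last optimization step you flagged as the main obstacle.
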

\begin{proof}
Refer to Appendix \ref{sub:Proof-of-Theorem 7}.
\end{proof}

\subsection{Combining GCSR and CHASE}\label{sub:DCMON}

Our algorithm $\mathbf{DCMON^{(w)}}$ for solving problem \textbf{DCM}
with a look-ahead window of $w\geq0$, \emph{i.e.}, knowing grid prices $p(\tau)$,
workload $a(\tau)$ and the function $g_{\tau},1\leq\tau\leq t+w,$
at time $t$, first uses $\mathbf{GCSR}$ from Sec. \ref{sec:GCSR}
to solve problem \textbf{CP} and then uses $\mathbf{CHASE}$ in Sec.
\ref{sub:chase} to solve problem \textbf{EP}. An important observation
is that the available look-ahead window size for $\mathbf{GCSR}$ to
solve \textbf{CP} is $w$, \emph{i.e.}, knows $p(\tau)$, $a(\tau)$ and $g_{\tau}$, $1\leq\tau\leq t+w,$ at time $t$; however, the available look-ahead window size for $\mathbf{CHASE}$
to solve \textbf{EP} is only $\left[w-\Delta_{s}\right]^{+}$, \emph{i.e.}, knows $p(\tau)$ and $e(\tau)=d_{\tau}(x^{on}(\tau))$, $1\leq\tau\leq t+\left[w-\Delta_{s}\right]^{+},$ at time $t$ ($\Delta_{s}$ is the break-even interval defined in Sec. \ref{sec:GCSR}).

This is because at time $t,$ $\mathbf{CHASE^{(w)}}$ knows grid prices
$p(\tau)$, workload $a(\tau)$ and the function $g_{\tau},1\leq\tau\leq t+w.$
However, not all the energy demands $\left(e(\tau)\right)_{\tau=1}^{t+w}$
are known by $\mathbf{CHASE^{(w)}}$. Because we derive the server
state $\boldsymbol{x^{on}}$ by solving problem \textbf{CP} using
our online algorithm $\mathbf{GCSR^{(w)}}$ using $p(\tau),\ a(\tau),\ g_{\tau},\ 1\leq\tau\leq t+w.$
A key observation is that at time $t$ it is not possible to compute
$\boldsymbol{x^{on}}$ for the full look-ahead window of $t+w,$ since
$x^{on}(t+1),\ldots,x^{on}(t+w)$ may depend on inputs $p(\tau),\ a(\tau),\ g_{\tau},\tau>t+w$
that our algorithm does not yet know. Fortunately, for $w\geq\Delta_{s}$
we can determine all $x^{on}(\tau),1\leq\tau\leq t+\left[w-\Delta_{s}\right]^{+}$
given inputs within the full look-ahead window. That is, while we
knows the grid prices \textbf{$p$}, the workload $a$ and the function
$g_{t}$ for the full look-ahead window $w$, the server state \textbf{$x^{on}$
}is known only for a smaller window of $\left[w-\Delta_{s}\right]^{+}$.
Thus, the energy demand $e(\tau)=d_{\tau}(x^{on}(\tau))=g_{\tau}(x^{on}(\tau),a(\tau)),\ 1\leq\tau\leq t+\left[w-\Delta_{s}\right]^{+}$
is available for $\mathbf{CHASE^{(w)}}$ at time $t$.

Thus, a bound on the competitive ratio of $\mathbf{DCMON^{(w)}}$
 is the product of competitive ratios for $\mathbf{GCSR^{(w)}}$ and
\\$\mathbf{CHASE^{(\left[w-\Delta_{s}\right]^{+})}}$ from Theorems
\ref{thm:cr.GCSR(w)} and \ref{thm:cr.CHASE(w)}, respectively, and
the optimality loss ratio $LP_{\max}/\left(Lc_{o}+c_{m}\right)$ due
to the offline-decomposition stated in Sec. \ref{sec:decomposition},
which is given in the following Theorem.

\begin{thm}
\textbf{\label{thm:cr.DCMON(w)}}$\mathbf{DCMON^{(w)}}$ for problem
\textbf{DCM} has a competitive ratio of\\
{
\begin{equation}
\frac{P_{\max}\left(2-\alpha_{s}\right)}{c_{o}+c_{m}/L}\left[1+\frac{2\left(LP{}_{\max}-Lc_{o}-c_{m}\right)}{LP{}_{\max}+\alpha_{g}P{}_{\max}\left(L-\frac{c_{m}}{P{}_{\max}-c_{o}}\right)}\right].\label{eq:cr.JSGR(w)}
\end{equation}
}The ratio is also upper-bounded by {
\[
\frac{P_{\max}\left(2-\alpha_{s}\right)}{c_{o}+c_{m}/L}\left[1+2\frac{P_{\max}-c_{o}}{P_{\max}}\cdot\frac{1}{1+\alpha_{g}}\right],
\]
}where $\alpha_{s}=\min\left(1,w/\Delta_{s}\right)\in[0,1]$
and $\alpha_{g}\triangleq\frac{c_{m}}{\beta_{g}}\left[w-\Delta_{s}\right]^{+}\\
\in[0,+\infty)$ are {}``normalized'' look-ahead window sizes.
\end{thm}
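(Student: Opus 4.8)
The plan is to show that the competitive ratio of $\mathbf{DCMON^{(w)}}$ is at most the product of three factors already at our disposal: the decomposition penalty $\rho=LP_{\max}/(Lc_o+c_m)=P_{\max}/(c_o+c_m/L)$ from Theorem~\ref{thm:offline decomposition}, the ratio $2-\alpha_s$ of $\mathbf{GCSR^{(w)}}$ for \textbf{CP} from Theorem~\ref{thm:cr.GCSR(w)}, and the ratio $r_g$ of $\mathbf{CHASE}$ for \textbf{EP} from Theorem~\ref{thm:cr.CHASE(w)} — the last one evaluated not at window $w$ but at the \emph{effective} window $w'=[w-\Delta_s]^+$ actually available to $\mathbf{CHASE}$ inside $\mathbf{DCMON}$, as argued just above. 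Throughout, write $S(\boldsymbol{x})=\sum_{t}\beta_s[x(t)-x(t-1)]^+$, let ${\rm C_{EP}}(\boldsymbol{y};\boldsymbol{e})$ denote the \textbf{EP} objective under energy demand $\boldsymbol{e}$, let $\boldsymbol{d}(\boldsymbol{x})=(d_t(x(t)))_t$ with $d_t(x(t))=g_t(x(t),a(t))$, and recall the identities ${\rm C_{DCM}}(\boldsymbol{x},\boldsymbol{y})=S(\boldsymbol{x})+{\rm C_{EP}}(\boldsymbol{y};\boldsymbol{d}(\boldsymbol{x}))$ and ${\rm C_{CP}}(\boldsymbol{x})=S(\boldsymbol{x})+\sum_{t}p(t)d_t(x(t))=S(\boldsymbol{x})+{\rm C_{EP}}(\boldsymbol{0};\boldsymbol{d}(\boldsymbol{x}))$, the last step using $\psi(0,p(t),e)=p(t)e$.

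Let $\boldsymbol{x}^{on}$ be the output of $\mathbf{GCSR^{(w)}}$, put $\boldsymbol{e}^{on}=\boldsymbol{d}(\boldsymbol{x}^{on})$, and let $\boldsymbol{y}^{on}$ be the output of $\mathbf{CHASE^{(w')}}$ on demand $\boldsymbol{e}^{on}$; the cost incurred by $\mathbf{DCMON^{(w)}}$ is ${\rm C_{DCM}}(\boldsymbol{x}^{on},\boldsymbol{y}^{on})$. By Theorems~\ref{thm:sub-ep optimal} and~\ref{thm:cr.CHASE(w)}, ${\rm C_{EP}}(\boldsymbol{y}^{on};\boldsymbol{e}^{on})\le r_g\,{\rm C_{EP}}(\bar{\boldsymbol{y}}^{on};\boldsymbol{e}^{on})$ for the \textbf{EP}-optimal $\bar{\boldsymbol{y}}^{on}$, and since $\boldsymbol{y}\equiv\boldsymbol{0}$ is feasible, ${\rm C_{EP}}(\bar{\boldsymbol{y}}^{on};\boldsymbol{e}^{on})\le {\rm C_{EP}}(\boldsymbol{0};\boldsymbol{e}^{on})=\sum_{t}p(t)e^{on}(t)$. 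Folding in the common server-switching term with $r_g\ge1$,
\[
{\rm C_{DCM}}(\boldsymbol{x}^{on},\boldsymbol{y}^{on})=S(\boldsymbol{x}^{on})+{\rm C_{EP}}(\boldsymbol{y}^{on};\boldsymbol{e}^{on})\le r_g\Bigl(S(\boldsymbol{x}^{on})+\sum_{t}p(t)e^{on}(t)\Bigr)=r_g\,{\rm C_{CP}}(\boldsymbol{x}^{on}),
\]
and then Theorems~\ref{thm:sub-ocp optimal} and~\ref{thm:cr.GCSR(w)} give ${\rm C_{CP}}(\boldsymbol{x}^{on})\le(2-\alpha_s)\,{\rm C_{CP}}(\bar{\boldsymbol{x}})$ with $\bar{\boldsymbol{x}}$ the \textbf{CP}-optimum.

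It remains to bound ${\rm C_{CP}}(\bar{\boldsymbol{x}})$ by $\rho\,{\rm C_{DCM}}(\boldsymbol{x}^*,\boldsymbol{y}^*)$. Since $\boldsymbol{x}^*$ is feasible for \textbf{CP}, ${\rm C_{CP}}(\bar{\boldsymbol{x}})\le{\rm C_{CP}}(\boldsymbol{x}^*)=S(\boldsymbol{x}^*)+\sum_{t}p(t)d_t(x^*(t))$. The one genuinely new ingredient is a slot-wise lower bound: from the definition~\eqref{eq:the original cost function} of $\psi$, a short case analysis (using $Ly(t)\ge e(t)$ in the two cases where the generator covers the demand, together with $c_o+c_m/L=P_{\max}/\rho$ and $p(t)\le P_{\max}$) shows $\psi(y(t),p(t),e(t))\ge\frac{1}{\rho}p(t)e(t)$ for every $t$; hence for any demand $\boldsymbol{e}$ the \textbf{EP} optimum satisfies ${\rm C_{EP}}(\bar{\boldsymbol{y}};\boldsymbol{e})\ge\frac{1}{\rho}\sum_{t}p(t)e(t)$. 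Applying this with $\boldsymbol{e}=\boldsymbol{d}(\boldsymbol{x}^*)$ and using ${\rm C_{EP}}(\boldsymbol{y}^*;\boldsymbol{d}(\boldsymbol{x}^*))\ge{\rm C_{EP}}(\bar{\boldsymbol{y}};\boldsymbol{d}(\boldsymbol{x}^*))$ and $\rho\ge1$ gives ${\rm C_{CP}}(\boldsymbol{x}^*)\le\rho\bigl(S(\boldsymbol{x}^*)+{\rm C_{EP}}(\boldsymbol{y}^*;\boldsymbol{d}(\boldsymbol{x}^*))\bigr)=\rho\,{\rm C_{DCM}}(\boldsymbol{x}^*,\boldsymbol{y}^*)$. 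Chaining the three bounds yields ${\rm C_{DCM}}(\boldsymbol{x}^{on},\boldsymbol{y}^{on})\le\rho(2-\alpha_s)r_g\,{\rm C_{DCM}}(\boldsymbol{x}^*,\boldsymbol{y}^*)$.

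To reach the stated forms, substitute $\rho=P_{\max}/(c_o+c_m/L)$ and expand $r_g$: with $w'c_m=\alpha_g\beta_g$ (the definition of $\alpha_g$), the Theorem~\ref{thm:cr.CHASE(w)} expression at window $w'$ becomes $r_g=1+\frac{2(LP_{\max}-Lc_o-c_m)}{LP_{\max}+\alpha_g P_{\max}(L-\frac{c_m}{P_{\max}-c_o})}$, so $\rho(2-\alpha_s)r_g$ equals~\eqref{eq:cr.JSGR(w)}. The simpler upper bound follows from $r_g\le1+2\frac{P_{\max}-c_o}{P_{\max}(1+\alpha_g)}$, which (writing $A=P_{\max}-c_o>0$ and using $c_m<LA$, so that every denominator involved is positive) reduces after cross-multiplication to $(LA-c_m)(1+\alpha_g)\le LA(1+\alpha_g)-\alpha_g c_m$, i.e. to $c_m\ge0$. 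I expect the main obstacle to be not any single calculation but the bookkeeping around the coupled inputs and the switching cost: one must justify that $[w-\Delta_s]^+$ is the correct window at which to invoke Theorem~\ref{thm:cr.CHASE(w)}, carry $S(\boldsymbol{x})$ through the chain without double-counting (handled above by the $r_g\ge1$ and $\rho\ge1$ folding steps), and establish the slot-wise estimate $\psi(y(t),p(t),e(t))\ge\frac1\rho p(t)e(t)$, which is precisely where the constant $\rho$ re-enters the argument.
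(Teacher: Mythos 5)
Your chain is correct and yields exactly the stated bound, but it is a genuinely different factorization from the paper's. The paper writes the ratio as $\frac{{\rm C_{DCM}}(\boldsymbol{x}^{on},\boldsymbol{y}^{on})}{{\rm C_{DCM}}(\boldsymbol{x}^{on},\boldsymbol{y}^{off})}\cdot\frac{{\rm C_{DCM}}(\boldsymbol{x}^{on},\boldsymbol{y}^{off})}{{\rm C_{DCM}}(\bar{\boldsymbol{x}},\bar{\boldsymbol{y}})}\cdot\frac{{\rm C_{DCM}}(\bar{\boldsymbol{x}},\bar{\boldsymbol{y}})}{{\rm C_{DCM}}(\boldsymbol{x}^{*},\boldsymbol{y}^{*})}$, where the middle factor is a separate lemma comparing the DCM cost of the online server schedule (paired with $\bar{\boldsymbol{y}}$) against that of the sequential offline solution; its proof dips into the internals of the GCSR analysis (equal switching costs, pointwise domination $\boldsymbol{x}^{on}\geq\bar{\boldsymbol{x}}$, the idle-overhead bound, a Lipschitz-in-demand property of $\psi$, and a monotone-decomposition lemma for the $\boldsymbol{x}_{i}$'s that is argued only informally), while the last factor is Theorem~\ref{thm:offline decomposition} used as a black box. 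You instead fold the $\mathbf{CHASE}$ guarantee against the all-grid schedule $\boldsymbol{y}\equiv\boldsymbol{0}$ to land on $r_{g}\,{\rm C_{CP}}(\boldsymbol{x}^{on})$, apply the $(2-\alpha_{s})$ guarantee at the CP-cost level where Theorem~\ref{thm:cr.GCSR(w)} literally lives, and reach the joint optimum via feasibility of $\boldsymbol{x}^{*}$ for \textbf{CP} plus the per-slot estimate $\psi(y,p,e)\geq p\,e/\rho$ --- which is essentially the computation inside the paper's Lemma~\ref{lem:factor loss bound}, so you are inlining the proof of Theorem~\ref{thm:offline decomposition} rather than its statement. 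What each buys: your route is more modular and entirely sidesteps the paper's two technical lemmas in this proof (including the informally argued monotonicity one), needing only $r_{g}\geq1$, $\rho\geq1$ and the effective window $[w-\Delta_{s}]^{+}$ for $\mathbf{CHASE}$ (same as the paper); the paper's route keeps every intermediate quantity a DCM cost of a concrete solution pair and reuses Theorem~\ref{thm:offline decomposition} wholesale. One small imprecision in your sketch: of the three branches of $\psi$ in \eqref{eq:the original cost function}, only the last has $Ly(t)\geq e(t)$; in the partial-coverage branch the bound still holds, since the on-site energy costs $c_{o}+c_{m}/L=P_{\max}/\rho\geq p(t)/\rho$ per unit and the grid portion costs $p(t)\geq p(t)/\rho$ per unit, so the inequality $\psi\geq p\,e/\rho$ is valid in all three cases and your conclusion stands.
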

\begin{proof}
Refer to Appendix \ref{sub:Proof-of-Theorem 8}.
\end{proof}

As the look-ahead window size $w$ increases, the competitive ratio
in Theorem \ref{thm:cr.DCMON(w)} decreases to $LP_{\max}/\left(Lc_{o}+c_{m}\right)$
(c.f. Fig. \ref{fig:ratio-dcmon}), the inherent approximation ratio
introduced by our offline decomposition approach discussed in Section
\ref{sec:decomposition}. However, the real trace based empirical
performance of $\mathbf{DCMON^{(w)}}$ without look-ahead
is already close to the offline optimal, \emph{i.e.}, ratio close to 1 (c.f. Fig. \ref{fig:ratio-dcmon}).

\section{Empirical Evaluation}\label{sec:Numerical-Experiments}

We evaluate the performance of our algorithms by simulations based
on real-world traces with the aim of (i) corroborating the empirical performance
of our online algorithms under various realistic settings and the
impact of having look-ahead information, (ii) understanding the benefit
of opportunistically procuring energy from both on-site generators
and the grid, as compared to the current practice of purchasing from
the grid alone, (iii) studying how much on-site energy is needed for
substantial cost benefits.

\subsection{Parameters and Settings}

\emph{Workload trace}: We use the workload traces from the Akamai
network \cite{Akamai,NygrenSS10} that is the currently the world's
largest content delivery network. 
 The traces
measure the workload of Akamai servers serving web content to actual
end-users. Note that our workload is of the ``request-and-response''
type that we model in our paper. We use traces from the Akamai servers
deployed in the New York and San Jose data centers that record the
hourly average load served by each deployed server over 22 days from Dec. 21, 2008 to Jan. 11, 2009. The New
York trace represents 2.5K servers that served about $1.4\times10^{10}$
requests and $1.7\times10^{13}$ bytes of content to end-users during
our measurement period. The San Jose trace represents 1.5K servers
that served about $5.5\times10^{9}$ requests and $8\times10^{12}$
bytes of content. We show the workload in Fig. \ref{fig:trace}, in
which we normalize the load by the server's service capacity. The
workload is quite characteristic in that it shows daily variations
(peak versus off-peak) and weekly variations (weekday versus weekend).

\emph{Grid} \emph{price}: We use traces of hourly grid power prices
in New York \cite{nationalgridus} and San Jose \cite{PG&E} for the
same time period, so that it can be matched up with the workload traces
(c.f. Fig. \ref{fig:trace}).
Both workload and grid price traces show strong diurnal properties: in the daytime, the workload
and the grid price are relatively high; at night, on the contrary,
both are low. This indicates
the feasibility of reducing the data center cost by using the energy
from the on-site generators during the daytime and use the grid at night.

\emph{Server model}: As mentioned in Sec. \ref{sec:problem.formulation},
we assume the data center has a sufficient number of homogeneous servers
to serve the incoming workload at any given time. Similar to a typical
setting in \cite{palasamudram2012using}, we use the standard linear
server power consumption model. We assume that each server consumes
0.25KWh power per hour at full capacity and has a power proportional
factor (PPF=$(c_{peak}-c_{idle})/c_{peak}$) of 0.6, which gives us
$c_{idle}=0.1KW$, $c_{peak}=0.25KW$. In addition, we assume the server switching cost equals the energy
cost of running a server for 3 hours. If we assume an average
grid price as the price of energy, we get about $\beta_{s}=\$0.08$.

\emph{Cooling and power conditioning system model}: We consider a water chiller cooling system.
According to \cite{weather}, during this 22-day winter period the average high and low temperatures of New York are $41^{\circ}F$ and $29^{\circ}F$, respectively. Those of San Jose are $58^{\circ}F$ and $41^{\circ}F$, respectively. Without loss of generality, we take the high temperature as the daytime temperature and the low temperature as the nighttime temperature. Thus, according to \cite{pelley2009understanding}, the power consumed by water chiller cooling systems of the New York and San Jose data centers are about
\[
f_{c,NY}^{t}(b)=\begin{cases}
(0.041b^{2}+0.144b+0.047)b_{\max}, & \mbox{at daytime},\\
(0.03b^{2}+0.136b+0.042)b_{\max}, & \mbox{at nighttime},
\end{cases}
\]
and
\[
f_{c,SJ}^{t}(b)=\begin{cases}
(0.06b^{2}+0.16b+0.054)b_{\max}, & \mbox{at daytime},\\
(0.041b^{2}+0.144b+0.047)b_{\max}, & \mbox{at nighttime},
\end{cases}
\]
where $b_{\max}$ is the maximum server power consumption and $b$
is the server power consumption normalized by $b_{\max}$. The maximum
server power consumption of the New York and San Jose data centers
are $b_{\max}^{NY}=2500\times0.25=625KW$ and $b_{\max}^{SJ}=1500\times0.25=375KW$.
Besides, the power consumed by the power conditioning system, including
PDUs and UPSs, is $f_{p}(b)=(0.012b^{2}+0.046b+0.056)b_{\max}$ \cite{pelley2009understanding}.

\emph{Generator model}: We adopt generators with specifications the
same as the one in \cite{tecogen}. The maximum output of the generator
is 60KW, \emph{i.e.}, $L=60KW$. The incremental cost to generate
an additional unit of energy $c_{o}$ is set to be \$0.08/KWh, which
is calculated according to the gas price \cite{nationalgridus} and the generator efficiency
\cite{tecogen}. Similar to \cite{stadlerdistributed}, we set the sunk cost of running the generator
for unit time $c_{m}=\$1.2$ and the startup cost $\beta_{g}$
equivalent to the amortized capital cost, which gives $\beta_{g}=\$24$.
Besides, we assume the number of generators $N=10$, which is enough
to satisfy all the energy demand for this trace and model we use.

\emph{Cost benchmark}: Current data centers usually do not use dynamic
capacity provisioning and on-site generators. Thus, we use the cost
incurred by static capacity provisioning with grid power as the benchmark
using which we evaluate the cost reduction due to our algorithms.
Static capacity provisioning runs a fixed number of servers at all
times to serve the workload, without dynamically turning on/off the
servers. For our benchmark, we assume that the data center has complete
workload information ahead of time and provisions exactly to satisfy
the peak workload and uses only grid power. Using such a benchmark
gives us a conservative evaluation of the cost saving from our algorithms.

\emph{Comparisons of Algorithms}: We compare four algorithms:
our online and offline optimal algorithms in on-grid scenarios, \emph{i.e.}, $\mathbf{GCSR}$ and $\mathbf{CPOFF}$,
and hybrid scenarios, \emph{i.e.}, $\mathbf{DCMON}$ and $\mathbf{DCMOFF}$.
\begin{figure}[t!]
\centering{}\subfloat[\label{fig:trace-ny}New York]{\begin{centering}
\includegraphics[width=0.485\columnwidth]{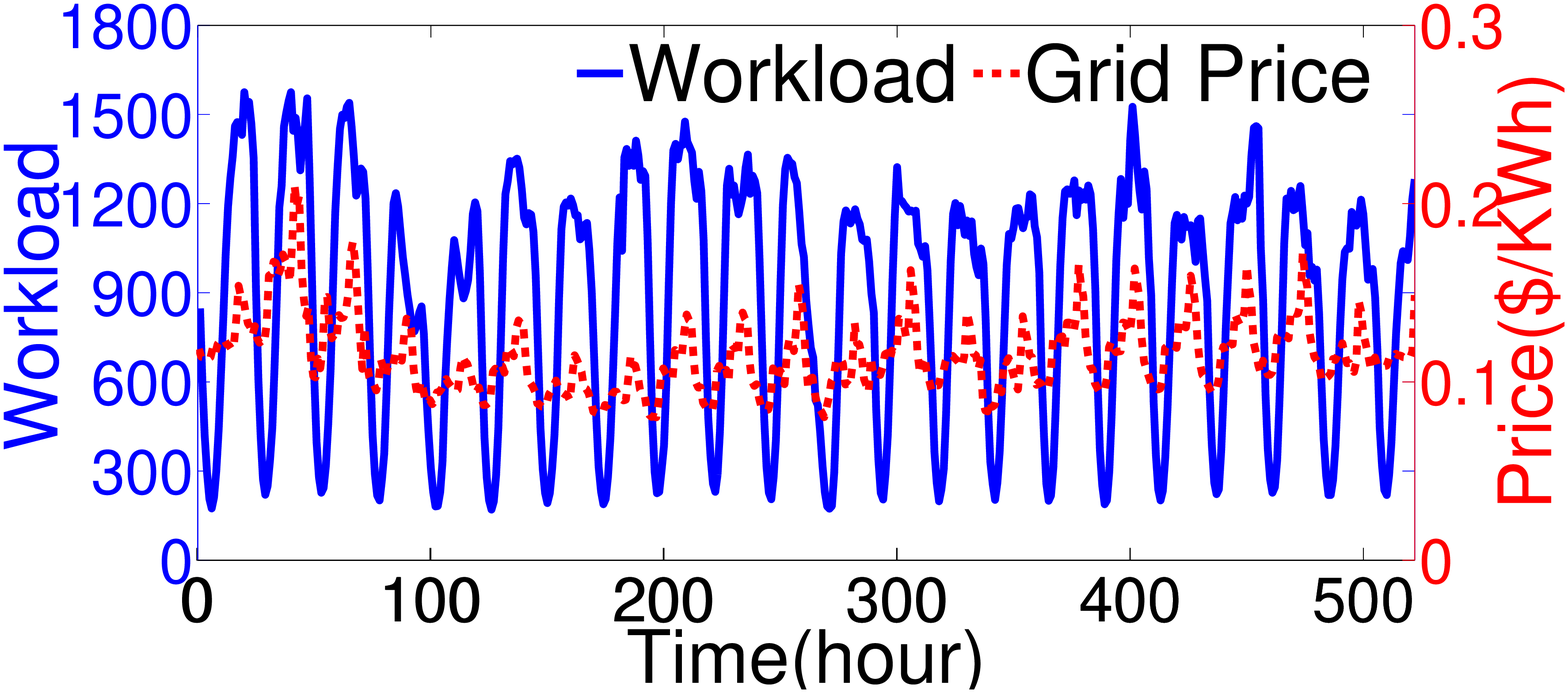}
\par\end{centering}
}\subfloat[\label{fig:trace-sj}San Jose]{\begin{centering}
\includegraphics[width=0.485\columnwidth]{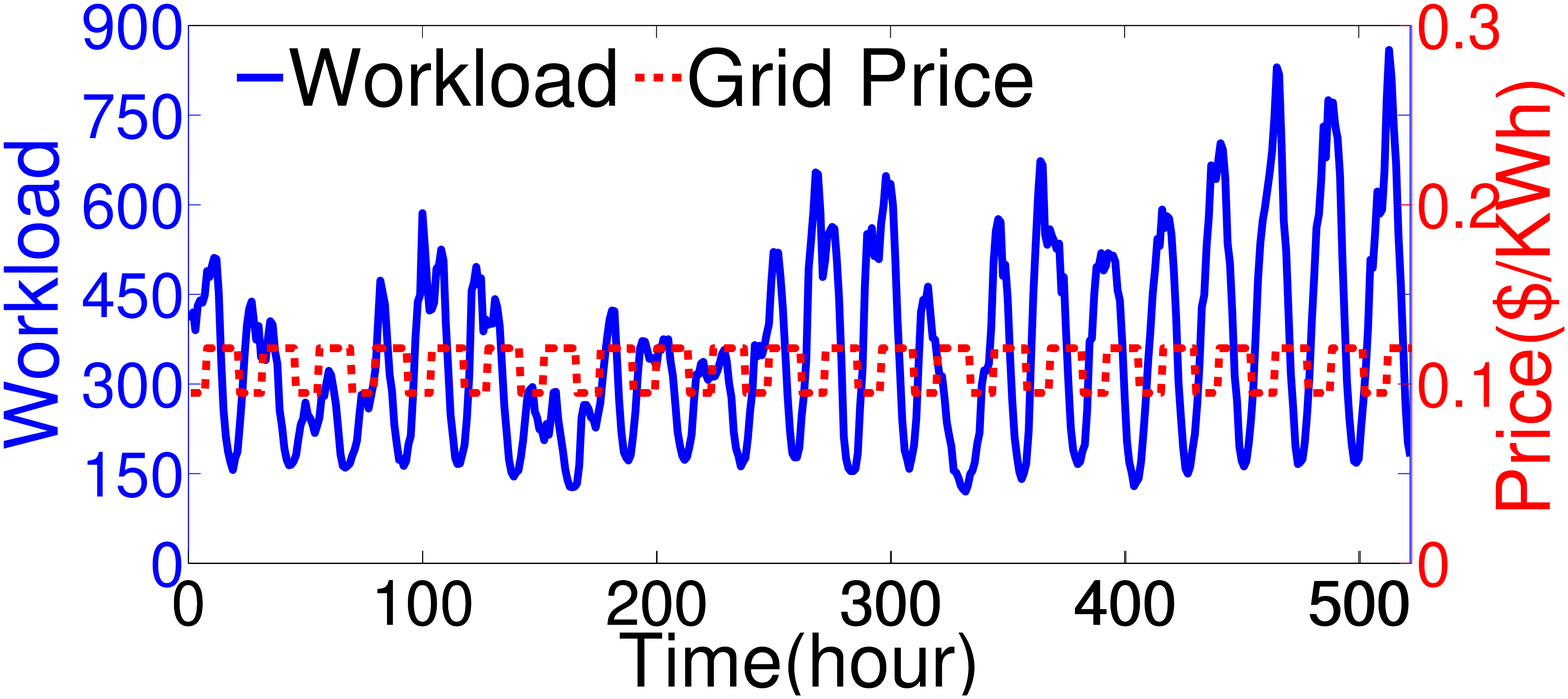}
\par\end{centering}
}$\;\;$\caption{\label{fig:trace}Real-world workload from Akamai and the grid power
price.}
\end{figure}

\subsection{Impact of Model Parameters on Cost Reduction}

We study the cost reduction provided by our offline and online algorithms
for both on-grid and hybrid data centers using the New York trace
unless specified otherwise. We assume no look-ahead information is
available when running the online algorithms. We compute the cost
reduction (in percentage) as compared to the cost benchmark which
we described earlier. When all parameters take their default values,
our offline (resp. online) algorithms provide up to 12.3\% (resp.,
7.3\%) cost reduction for on-grid and 25.8\% (resp., 20.7\%) cost
reduction for hybrid data centers (c.f. Fig. \ref{fig:cost.reduction.as.a.func.of.parameters}. The default value of $c_{o}$ is \$0.08/KWh.).
Note that the online algorithms provide cost reduction that are 5\%
smaller than offline algorithms on account of their lack of knowledge
of future inputs. Further, note that cost reduction of a hybrid data
center is larger than that of a on-grid data center, since hybrid
data center has the ability to generate energy on-site to avoid higher
grid prices. Nevertheless, the extent of cost reduction in all cases
is high providing strong evidence for the need to perform energy and
server capacity optimizations.

Data centers may deploy different types of servers and generators
with different model parameters. It is then important to understand
the impact on cost reduction due to these parameters. We first study
the impact of varying $c_{o}$ (c.f. Fig. \ref{fig:cost.reduction.as.a.func.of.parameters}).
For a hybrid data center, as $c_{o}$ increases the cost of on-site
generation increases making it less effective for cost reduction (c.f
Fig. \ref{fig:cost reduction co}). For the same reason, the cost
reduction of a hybrid data center tends to that of the on-grid data
center with increasing $c_{o}$ as on-site generation becomes less
economical.

We then study the impact of power proportional factor (PPF). More
specifically, we fix $c_{peak}=0.25KW$, and vary PPF from 0 to 1
(c.f. Fig. \ref{fig:cost reduction ppf}). As PPF increases, the server idle power decreases, thus dynamic provisioning has lesser impact on the cost reduction. This explains why \textbf{CP} achieves
no cost reduction when PPF=1. Since \textbf{DCM} also solves \textbf{CP} problem, its
performance degrades with increasing PPF as well.
\begin{figure}[t£¡]
\subfloat[\label{fig:cost reduction co} Cost Reduction vs. $c_{o}$]{\begin{raggedright}
\includegraphics[width=0.5\columnwidth]{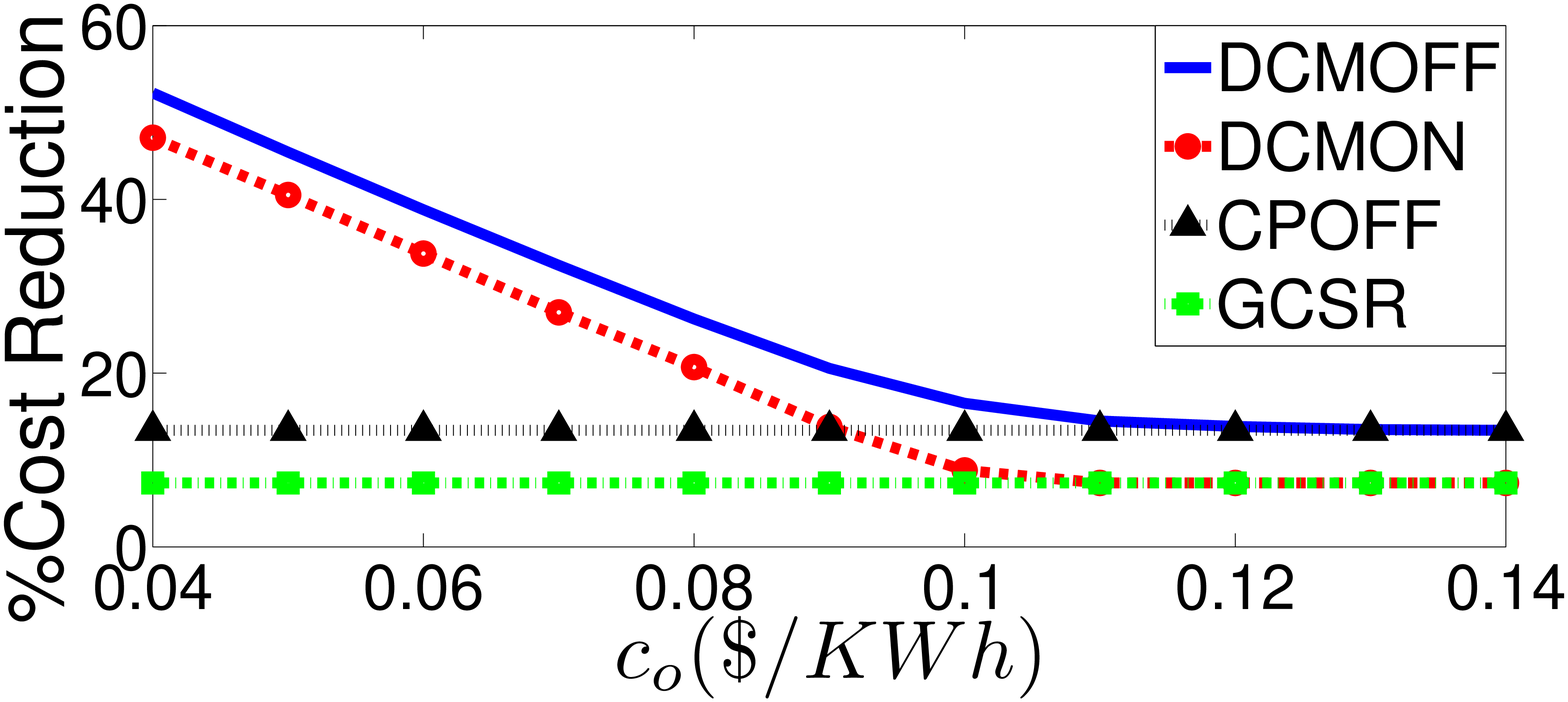}
\par\end{raggedright}
}\subfloat[\label{fig:cost reduction ppf} Cost Reduction vs. PPF ]{\includegraphics[width=0.5\columnwidth]{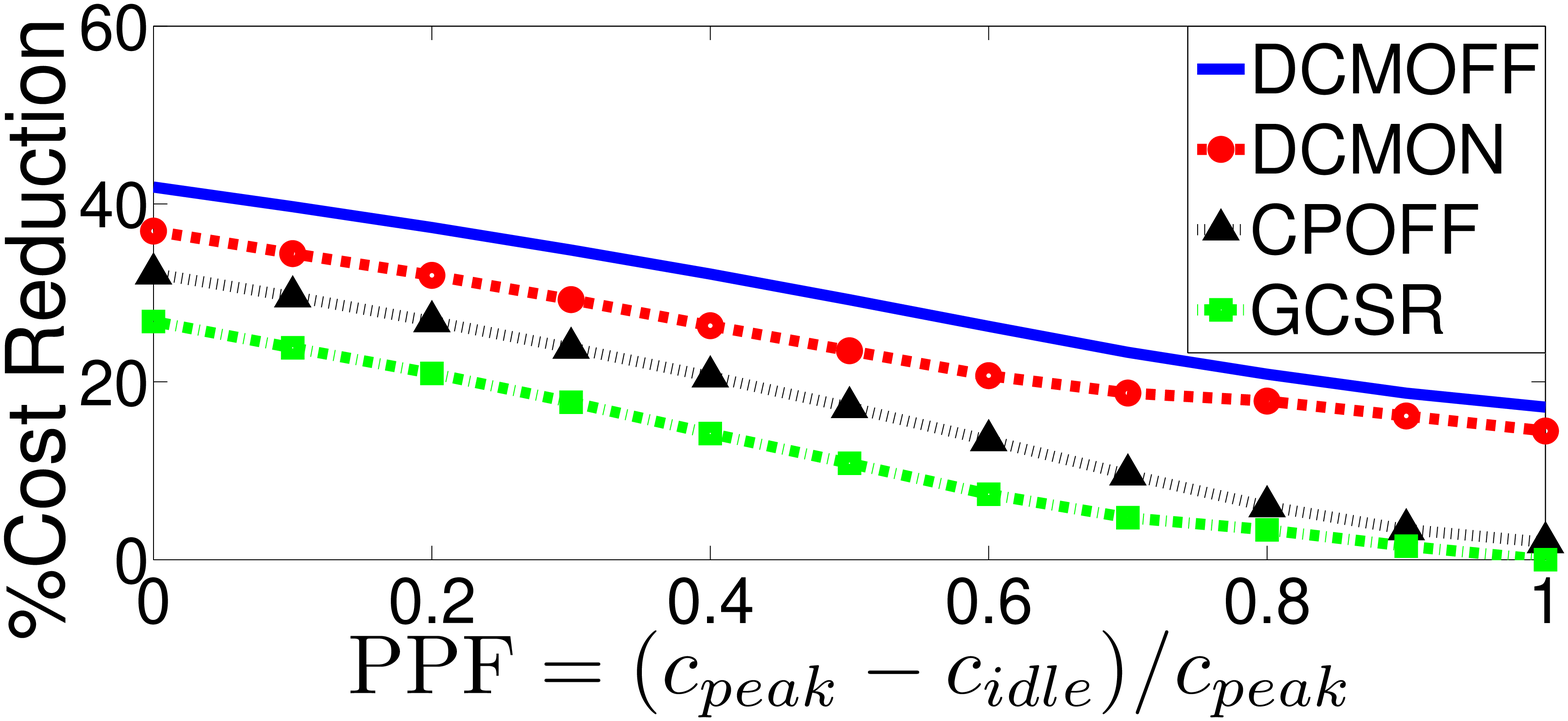}}\caption{\label{fig:cost.reduction.as.a.func.of.parameters}Variation of cost reduction with model parameters. }
\end{figure}

\subsection{The Relative Value of Energy versus Capacity Provisioning}

In this subsection, we use both New York and San Jose traces. For
a hybrid data center, we ask which optimization provides a larger
cost reduction: energy provisioning (\textbf{EP}) or server capacity
provisioning (\textbf{CP}) in comparison with the joint optimization
of doing both (\textbf{DCM}). The cost reductions of different optimization
are shown in Fig. \ref{fig:Relative-values}.

\begin{figure}[t£¡]
\subfloat[\label{fig:Relative-values-ny}New York]{\begin{centering}
\includegraphics[width=0.48\columnwidth]{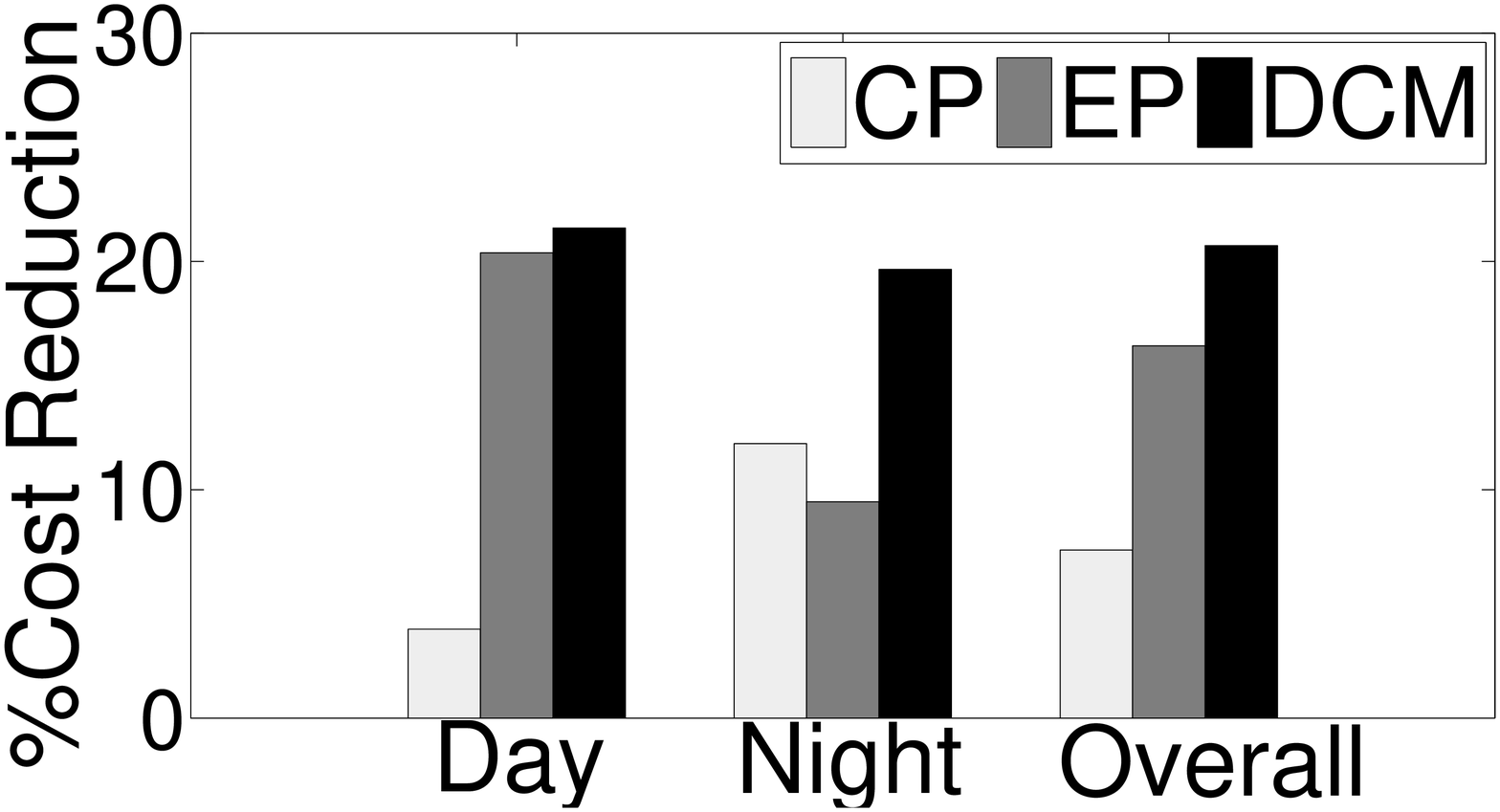}
\par\end{centering}
}\subfloat[\label{fig:Relative-values-sj}San Jose]{\begin{centering}
\includegraphics[width=0.48\columnwidth]{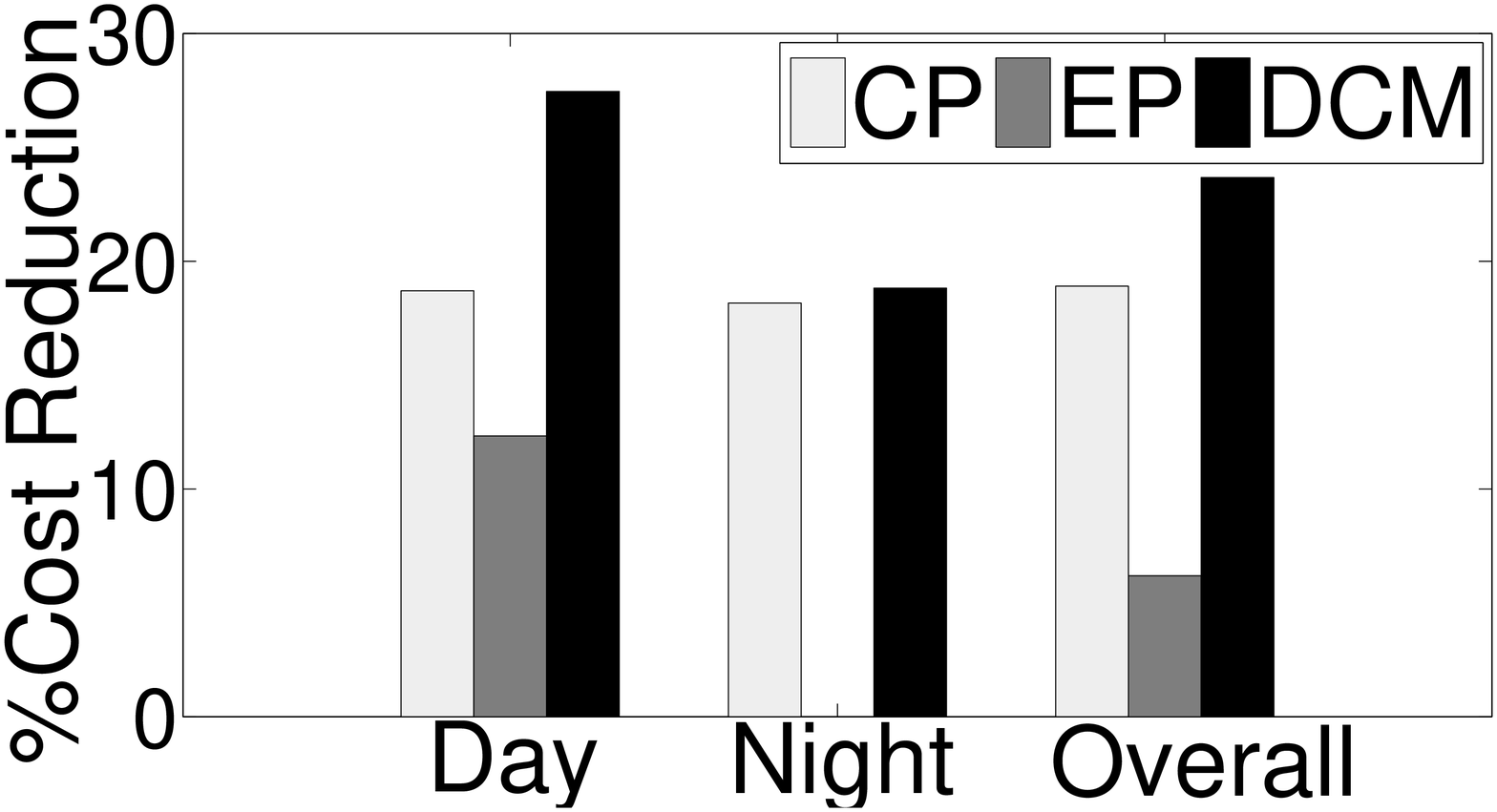}
\par\end{centering}

}\caption{\label{fig:Relative-values}Relative values of CP, EP, and DCM.}
\end{figure}

For the New York scenario in Fig. \ref{fig:Relative-values-ny}, overall,
we see that \textbf{EP}, \textbf{CP}, and \textbf{DCM} provide cost
reductions of 16.3\%, 7.3\%, and 20.7\%, respectively. However, note
that during the day doing \textbf{EP} alone provides almost as much
cost reduction as the joint optimization \textbf{DCM}. The reason
is that during the high traffic hours in the day, solving \textbf{EP}
to avoid higher grid prices provides a larger benefit than optimizing
the energy consumption by server shutdown. The opposite is true during
the night where \textbf{CP} is more critical than \textbf{EP}, since
minimizing the energy consumption by shutting down idle servers yields
more benefit.

For the San Jose scenario in Fig. \ref{fig:Relative-values-sj}, overall,
\textbf{EP}, \textbf{CP}, and \textbf{DCM} provide cost reductions
of 6.1\%, 19\%, and 23.7\%, respectively. Compared to the New York
scenario, the reason why \textbf{EP} achieves so little cost reduction
is that the grid power is cheaper and thus on-site generation is not
that economical. Meanwhile, \textbf{CP} performs closer to \textbf{DCM},
which is because the workload curve is highly skew (shown in Fig.
\ref{fig:trace-sj}) and dynamic provisioning for the server capacity
saves a lot of server idling cost as well as cooling and power conditioning
cost.

In a nutshell, \textbf{EP} favors high grid power price while workload
with less regular pattern makes \textbf{CP} more competitive.

\subsection{Benefit of Looking Ahead}

We evaluate the cost reduction benefit of increasing the look-ahead
window. From Fig. \ref{fig:Cost-Reduction-look-ahead}, we observe
that while the performance of our online algorithms are already good
when there is no look-ahead information, they quickly
improve to the offline optimal when a small amount of look-ahead, \emph{e.g.,} 6 hours, is available,
indicating the value of short-term prediction of inputs. Note that while the competitive ratio analysis in Theorem \ref{thm:cr.DCMON(w)}
is for the worst case inputs, our online algorithms perform much closer
to the offline optimal for realistic inputs.

\subsection{How Much On-site Power Production is Enough}

Thus far, in our experiments, we assumed that a hybrid data center
had the ability to supply all its energy from on-site power generation
($N=10$). However, an important question is how much investment should
a data center operator make in installing on-site generator capacity
to obtain largest cost reduction.

\begin{figure}[t!]
\subfloat[\label{fig:Cost-Reduction-look-ahead} Cost Reduction vs. look-ahead window size w]{\includegraphics[width=0.5\columnwidth]{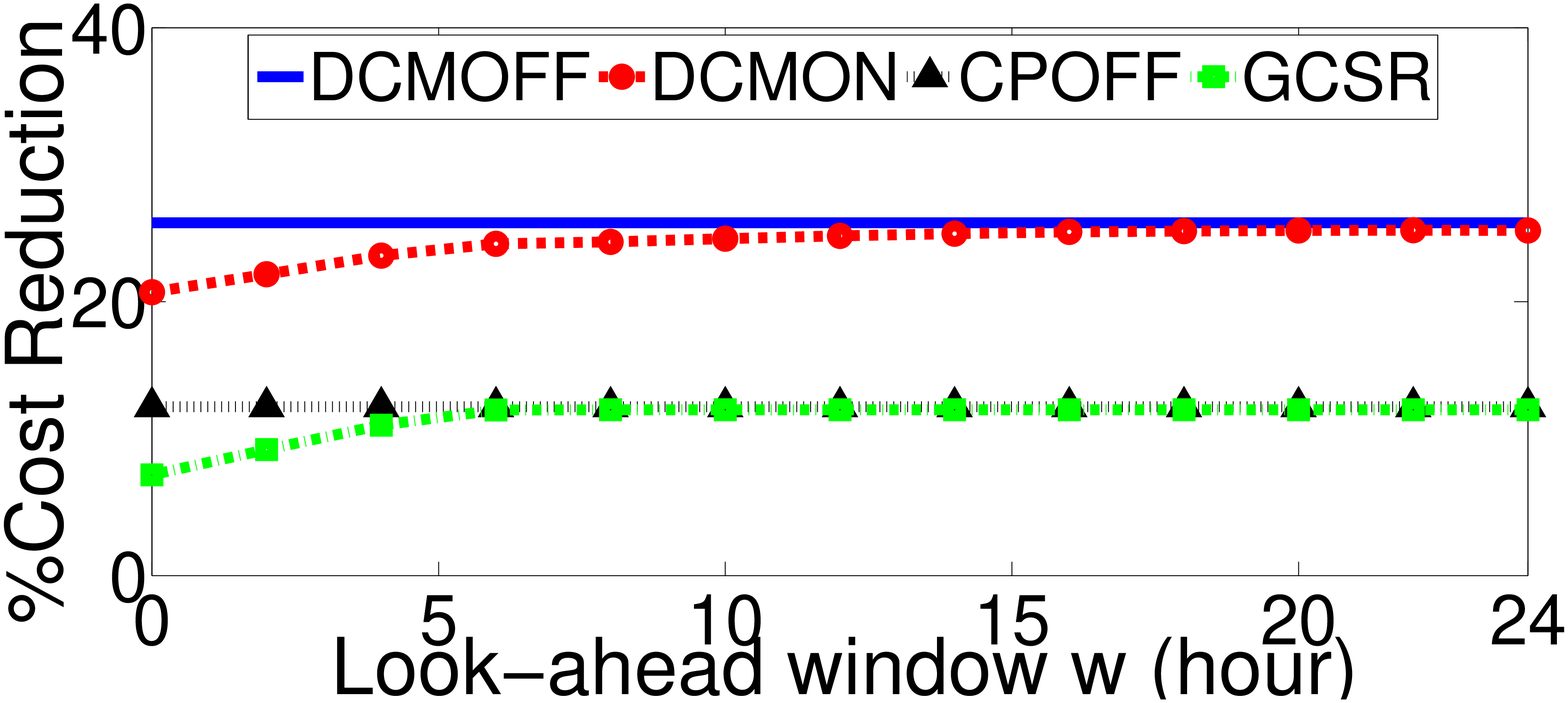}}
\subfloat[\label{fig:insufficient} Cost Reduction vs. percentage of on-site power production capacity]{\includegraphics[width=0.5\columnwidth]{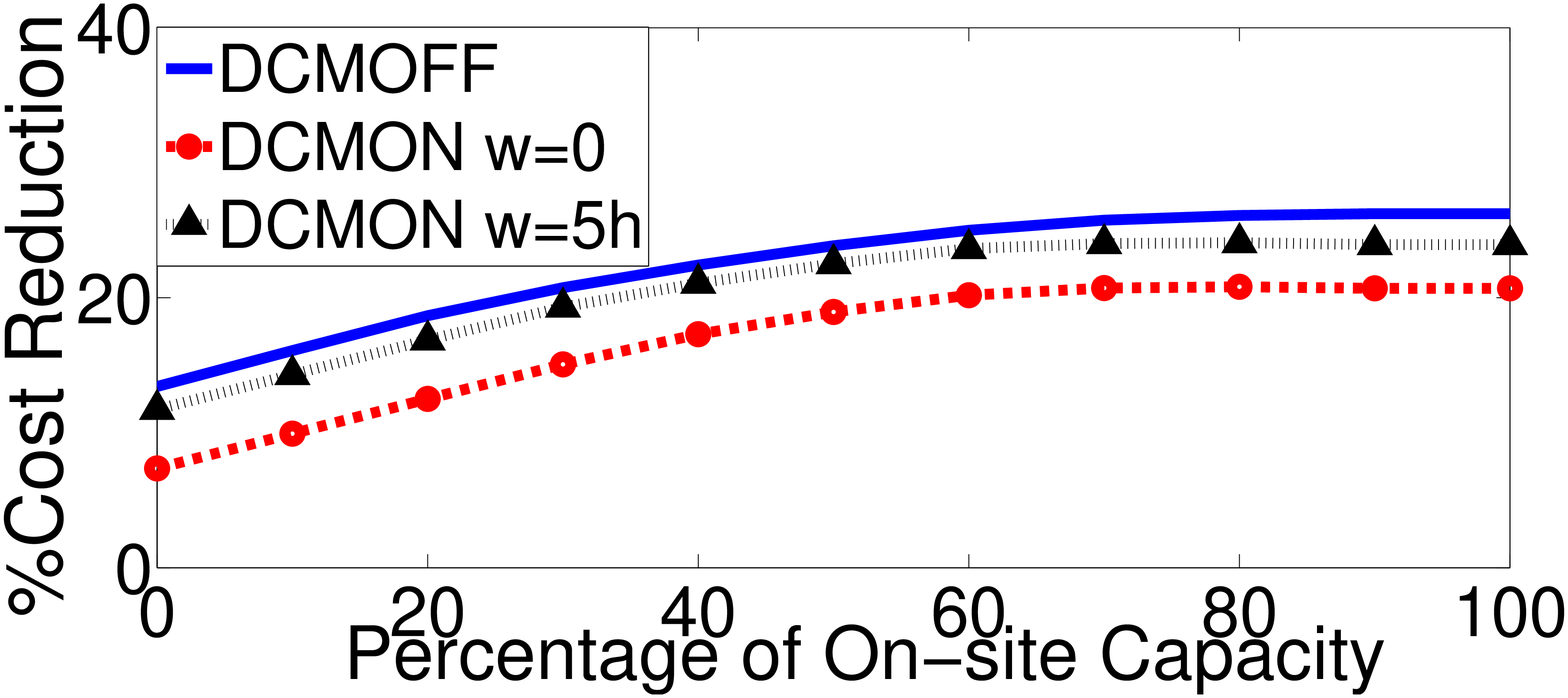}}
\caption{ Variation of cost reduction with look-ahead and on-site capacity.}
\end{figure}

More specifically, we vary the number of on-site generators $N$ from 0 to 10
and show the corresponding performances of our algorithms. Interestingly,
in Fig. \ref{fig:insufficient}, our results show that provisioning
on-site generators to produce 80\% of the peak power demand of the
data center is sufficient to obtain all of the cost reduction benefits.
Further, with just 60\% on-site power generation capacity we can achieve
95\% of the maximum cost reduction. The intuitive reason is that
most of time the demands of the data center are significantly lower
than their peaks.

\section{Related Work }

Our study is among a series of work on dynamic provisioning in data
centers and power systems \cite{stanojevic2010distributed,linonline2012,qureshi2009cutting}.

In particular,
for the capacity provisioning problem, \cite{lin2011dynamic} and \cite{labpaper} propose online algorithms with performance guarantee to reduce servers operating cost under convex and linear mixed integer optimization scenarios, respectively. Different from these two, our work designs online algorithm under non-linear mixed integer optimization scenario and we take into account the operating cost of servers as well as power conditioning and cooling systems. \cite{liu2012renewable,xu2013temperature} also model cooling systems, but focus on offline optimization of the operating cost.

Energy provisioning for power systems is characterized by unit-commitment
problem (UC) \cite{baldwin1959study,1295033}, including a mixed-integer
programming approach \cite{muckstadt1968application} approach and
a stochastic control approach \cite{shiina2004stochastic}. All these
approaches assume the demand (or its distribution) in the entire horizon
is known \emph{a priori}, thus they are applicable only when future
input information can be predicted with certain level of accuracy. In contrast, in this paper we consider an online setting where the algorithms may utilize only information in the current time slot.

In addition to the difference of our work and existing works in the two problems (\emph{i.e.}, capacity provisioning and energy provisioning), our work is also unique in that we jointly optimize both problems while existing works focus on only one of them.

\section{Conclusions\label{sec:conclusion} }

Our work focuses on the cost minimization of data centers achieved
by jointly optimizing \emph{both} the supply of energy from on-site
power generators and the grid, and the demand for energy from its deployed
servers as well as power conditioning and cooling systems.
We show that such an integrated approach is not only possible
in next-generation data centers but also desirable for achieving significant
cost reductions. Our offline optimal algorithm and our online algorithms
with provably good competitive ratios provide key ideas on how to
coordinate energy procurement and production with the energy consumption.
Our empirical work answers several of the important
questions relevant to data center operators focusing on minimizing
their operating costs. We show that a hybrid (resp., on-grid) data
center can achieve a cost reduction between 20.7\% to 25.8\% (resp., 7.3\%
to 12.3\%) by employing our joint optimization framework. We also show
that on-site power generation can provide an additional cost reduction
of about 13\%, and that most of the additional benefit is obtained
by a partial on-site generation capacity of 60\% of the peak power
requirement of the data center.

This work can be extended in several directions.
First, it is interesting to study how energy storage devices can be used to further
reduce the data center operating cost. Second, another interesting direction is to generalize our analysis to take into account deferable workloads.
Third, extension from homogeneous servers and generators to heterogeneous setting is also of great interest.

\section{Acknowledgments}
The work described in this paper was partially supported
by China National 973 projects (No. 2012CB315904 and
2013CB336700), several grants from the University Grants
Committee of the Hong Kong Special Administrative Region,
China (Area of Excellence Project No. AoE/E-02/08
and General Research Fund Project No. 411010 and 411011), and
two gift grants from Microsoft and Cisco.

\bibliographystyle{plain}
{
\bibliography{ref}
}

\appendix
\section{Proof of Theorem 3}\label{sub:Proof-of-Theorem 3}
First, we show that the combined solution $\sum_{i=1}^{M}\bar{\boldsymbol{x}}_{i}$
is optimal to $\textbf{CP}$.

Denote ${\rm C_{CP}}(\boldsymbol{x})$ to be cost of \textbf{CP }of
solution $\boldsymbol{x}$. Suppose that $\tilde{\boldsymbol{x}}$
is an optimal solution for \textbf{CP}. We will show that we can construct
a new feasible solution $\sum_{i=1}^{M}\boldsymbol{\hat{x}}_{i}$
for $\textbf{CP}$, and a new feasible solution $\boldsymbol{\hat{x}}_{i}$
for each $\textbf{CP}_{{\rm i}}$, such that
\begin{equation}
{\rm C_{CP}}(\tilde{\boldsymbol{x}})={\rm C_{CP}}(\sum_{i=1}^{M}\boldsymbol{\hat{x}}_{i})=\sum_{i=1}^{M}{\rm C_{CP_{i}}}(\boldsymbol{\hat{x}}_{i})+\sum_{t=1}^{T}p(t)d_{t}(0).\label{eqn:thm4-eq1}
\end{equation}

$\bar{\boldsymbol{x}}_{i}$ is an optimal solution for each $\textbf{CP}_{{\rm i}}$.
Hence, ${\rm C_{CP_{i}}}(\boldsymbol{\hat{x}}_{i})\ge{\rm C_{CP_{i}}}(\bar{\boldsymbol{x}}_{i})$
for each $i$. Thus,
\begin{eqnarray}
{\rm C_{CP}}(\tilde{\boldsymbol{x}}) & = & \sum_{i=1}^{M}{\rm C_{CP_{i}}}(\boldsymbol{\hat{x}}_{i})+\sum_{t=1}^{T}p(t)d_{t}(0)\nonumber \\
 & \ge & \sum_{i=1}^{M}{\rm {\rm C_{CP_{i}}}}(\bar{\boldsymbol{x}}_{i})+\sum_{t=1}^{T}p(t)d_{t}(0).\label{eq:thm4-eq1-1}
\end{eqnarray}

Besides, we also can prove that
\begin{eqnarray}
\sum_{i=1}^{M}{\rm {\rm C_{CP_{i}}}}(\bar{\boldsymbol{x}}_{i})+\sum_{t=1}^{T}p(t)d_{t}(0) & \ge & {\rm C_{CP}}(\sum_{i=1}^{M}\bar{\boldsymbol{x}}_{i}).\label{eq:thm4-eq2}
\end{eqnarray}

Hence, ${\rm C_{CP}}(\tilde{\boldsymbol{x}})={\rm C_{CP}}(\sum_{i=1}^{M}\bar{\boldsymbol{x}}_{i})$,
\emph{i.e.}, $\sum_{i=1}^{M}\bar{\boldsymbol{x}}_{i}$ is an optimal
solution for \textbf{CP}.

Then, we show ${\rm C_{CP}}(\sum_{i=1}^{M}\boldsymbol{x}_{i}^{on})\leq\gamma\cdot{\rm C_{CP}}(\sum_{i=1}^{M}\bar{\boldsymbol{x}}_{i})$.

Because ${\rm C_{CP_{i}}}(\boldsymbol{x}_{i}^{on})\leq\gamma\cdot{\rm C_{CP_{i}}}(\bar{\boldsymbol{x}}_{i})$
and $\bar{\boldsymbol{x}}_{i}$ is for $\textbf{CP}_{{\rm i}}$, we
have $\gamma\geq1$. According to Eqn. (\ref{eq:thm4-eq1-1}),
we obtain
\begin{eqnarray*}
\gamma\cdot{\rm C_{CP}}(\tilde{\boldsymbol{x}}) & \geq & \sum_{i=1}^{M}{\rm C_{CP_{i}}}(\boldsymbol{x}_{i}^{on})+\sum_{t=1}^{T}p(t)d_{t}(0).
\end{eqnarray*}

Besides, we also can prove that
\begin{eqnarray}
\sum_{i=1}^{M}{\rm {\rm C_{CP_{i}}}}(\boldsymbol{x}_{i}^{on})+\sum_{t=1}^{T}p(t)d_{t}(0) & \ge & {\rm C_{CP}}(\sum_{i=1}^{M}\boldsymbol{x}_{i}^{on}).\label{eq:thm4-eq2-1}
\end{eqnarray}

Hence, ${\rm C_{CP}}(\sum_{i=1}^{M}\boldsymbol{x}_{i}^{on})\leq\gamma\cdot{\rm C_{CP}}(\sum_{i=1}^{M}\bar{\boldsymbol{x}}_{i})$.

It remains to prove Eqns. (\ref{eqn:thm4-eq1}), (\ref{eq:thm4-eq2})
and (\ref{eq:thm4-eq2-1}), which we show in Lemmas~\ref{lem:thm4-lem1}
and \ref{lem:thm4-lem2}.
\begin{lem}
\textup{\label{lem:thm4-lem1}${\rm C_{CP}}(\tilde{\boldsymbol{x}})={\rm C_{CP}}(\sum_{i=1}^{M}\boldsymbol{\hat{x}}_{i})=\sum_{i=1}^{M}{\rm C_{CP_{i}}}(\boldsymbol{\hat{x}}_{i})+\sum_{t=1}^{T}p(t)d_{t}(0).$}\end{lem}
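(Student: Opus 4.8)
The plan is to prove Lemma~\ref{lem:thm4-lem1} by explicitly constructing the solutions $\boldsymbol{\hat{x}}_i$ from the optimal solution $\tilde{\boldsymbol{x}}$ of \textbf{CP} via the same ``slicing'' operation used to define the sub-demands $\boldsymbol{a}_i$. Specifically, given the integer-valued $\tilde{x}(t)$, I would set $\hat{x}_i(t) \triangleq \min\{1, \max\{0, \tilde{x}(t) - (i-1)\}\}$, so that $\hat{x}_i(t) \in \{0,1\}$ and $\sum_{i=1}^M \hat{x}_i(t) = \tilde{x}(t)$ (recall $\tilde{x}(t) \le M$). The first thing to check is feasibility: since $\tilde{x}(t) \ge a(t)$ and slicing is monotone, $\hat{x}_i(t) \ge a_i(t)$ for each $i$; and $\hat{x}_i(0) = 0$ follows from $\tilde{x}(0) = 0$. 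So each $\boldsymbol{\hat{x}}_i$ is feasible for $\textbf{CP}_{\rm i}$ and their sum is feasible for \textbf{CP} and in fact equals $\tilde{\boldsymbol{x}}$, giving the first equality ${\rm C_{CP}}(\tilde{\boldsymbol{x}}) = {\rm C_{CP}}(\sum_i \boldsymbol{\hat{x}}_i)$ trivially.

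The substance is the second equality, which is really a cost-decomposition identity. I would expand ${\rm C_{CP}}(\tilde{\boldsymbol{x}})$ term by term. The switching term is easy: because slicing at a fixed time is order-preserving and the slices ``fill up'' from the bottom, one has the telescoping-type identity $[\tilde{x}(t) - \tilde{x}(t-1)]^+ = \sum_{i=1}^M [\hat{x}_i(t) - \hat{x}_i(t-1)]^+$ — intuitively, increasing the server count by $k$ turns on exactly $k$ of the unit ``slice-servers,'' and each contributes one switch-on. This requires a short argument that no slice both turns on and turns off in the same step when the aggregate only moves in one direction, which holds because the slices are nested. Multiplying by $\beta_s$ and summing over $t$ gives the switching part. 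For the energy term, the key is the convexity/telescoping of $d_t$: writing $d_t(\tilde{x}(t)) = d_t(0) + \sum_{i=1}^{\tilde{x}(t)} (d_t(i) - d_t(i-1)) = d_t(0) + \sum_{i=1}^M d_t^i \, \hat{x}_i(t)$, since $\hat{x}_i(t) = 1$ exactly for $i \le \tilde{x}(t)$ and $d_t^i \triangleq d_t(i) - d_t(i-1)$. Multiplying by $p(t)$ and summing over $t$ separates the energy cost into $\sum_{t} p(t) d_t(0)$ plus $\sum_i \sum_t p(t) d_t^i \hat{x}_i(t)$, and the latter is precisely $\sum_i$ of the energy part of ${\rm C_{CP_i}}(\boldsymbol{\hat{x}}_i)$. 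Adding the two parts yields $\sum_{i=1}^M {\rm C_{CP_i}}(\boldsymbol{\hat{x}}_i) + \sum_{t=1}^T p(t) d_t(0)$, as claimed.

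The step I expect to be the main obstacle — or at least the one needing the most care — is the switching-cost identity $\sum_t [\tilde{x}(t)-\tilde{x}(t-1)]^+ = \sum_i \sum_t [\hat{x}_i(t)-\hat{x}_i(t-1)]^+$. It is tempting to argue it slot-by-slot, but that is actually where it is cleanest: at a fixed $t$, if $\tilde{x}(t) \ge \tilde{x}(t-1)$ then for each $i$ the slice $\hat{x}_i$ is nondecreasing at $t$, so $[\hat{x}_i(t)-\hat{x}_i(t-1)]^+ = \hat{x}_i(t)-\hat{x}_i(t-1)$ and the sum over $i$ telescopes to $\tilde{x}(t)-\tilde{x}(t-1) = [\tilde{x}(t)-\tilde{x}(t-1)]^+$; symmetrically if $\tilde{x}(t) < \tilde{x}(t-1)$ both sides are zero. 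So the identity is really pointwise in $t$, which dispels the apparent difficulty. I note that convexity of $d_t$ is not needed for this lemma at all (only the definition of $d_t^i$ matters); convexity enters the companion inequalities \eqref{eq:thm4-eq2} and \eqref{eq:thm4-eq2-1}, which are deferred to Lemma~\ref{lem:thm4-lem2}. Finally, I would remark that the same construction and bookkeeping will be reused verbatim (with $L$ in place of $1$ and $N$ in place of $M$) to prove the analogous decomposition for \textbf{EP} in Theorem~\ref{thm:sub-ep optimal}.
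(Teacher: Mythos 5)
Your proposal is correct and follows essentially the same route as the paper: the same slicing construction $\hat{x}_i(t)=\mathbf{1}[i\le\tilde{x}(t)]$, the same pointwise-in-$t$ switching-cost identity (using that the slices are nested $0$--$1$ sequences), and the same telescoping of $d_t^i$ to split the energy cost into $\sum_t p(t)d_t(0)$ plus the per-slice terms. Your side remark that convexity of $d_t$ is not needed here but only in the companion inequality (the paper's Lemma~\ref{lem:thm4-lem2}) is also accurate.
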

\begin{proof}
Define $\boldsymbol{\hat{x}}_{i}$ based on $\tilde{\boldsymbol{x}}$
by:
\[
\widehat{x}_{i}(t)=\begin{cases}
1, & \mbox{if\ }i\le\tilde{x}(t)\\
0, & \mbox{otherwise.}
\end{cases}
\]

It is straightforward to see that
\[
\tilde{x}(t)=\sum_{i=1}^{M}\hat{x}_{i}(t)
\]
and $\boldsymbol{\hat{x}}_{i}$ is a feasible solution for $\textbf{CP}_{{\rm i}}$,
i.e., $\boldsymbol{\hat{x}}_{i}\geq\boldsymbol{a}_{i}$.

So we have ${\rm C_{CP}}(\tilde{\boldsymbol{x}})={\rm C_{CP}}(\sum_{i=1}^{M}\boldsymbol{\hat{x}}_{i}).$

Note that $\widehat{x}_{1}(t)\ge...\ge\widehat{x}_{M}(t)$ is a decreasing
sequence. Because $\widehat{x}_{i}(t)\in\{0,1\},\ \forall i,t$, we
obtain
\begin{eqnarray}
 &  & \sum_{i=1}^{M}[\hat{x}_{i}(t)-\hat{x}_{i}(t-1)]^{+}\nonumber \\
 & = & \begin{cases}
0,\mbox{\qquad\quad if\ }\sum_{i=1}^{M}\hat{x}_{i}(t)\leq\sum_{i=1}^{M}\hat{x}_{i}(t-1)\\
\sum_{i=1}^{M}\hat{x}_{i}(t)-\sum_{i=1}^{M}\hat{x}_{i}(t-1),\mbox{\ \ \ otherwise }
\end{cases}\nonumber \\
 & = & \Big[\sum_{i=1}^{M}\hat{x}_{i}(t)-\sum_{i=1}^{M}\hat{x}_{i}(t-1)\Big]^{+},\label{eq:thm4-eq3}
\end{eqnarray}
and
\begin{eqnarray}
\sum_{i=1}^{M}d_{t}^{i}\cdot\hat{x}_{i}(t)+d_{t}(0) & = & \sum_{i=1}^{\tilde{x}(t)}d_{t}^{i}\cdot1+d_{t}(0)\nonumber \\
 & = & \sum_{i=1}^{\tilde{x}(t)}\left[d_{t}(i)-d_{t}(i-1)\right]+d_{t}(0)\nonumber \\
 & = & d_{t}(\tilde{x}(t))-d_{t}(0)+d_{t}(0)\nonumber \\
 & = & d_{t}(\tilde{x}(t))=d_{t}(\sum_{i=1}^{M}\hat{x}_{i}(t)).\label{eq:thm4-eq4}
\end{eqnarray}

By Eqns. (\ref{eq:thm4-eq3}) and (\ref{eq:thm4-eq4}),
\[
{\rm C_{CP}}(\sum_{i=1}^{M}\boldsymbol{\hat{x}}_{i})=\sum_{i=1}^{M}{\rm C_{CP_{i}}}(\boldsymbol{\hat{x}}_{i})+\sum_{t=1}^{T}p(t)d_{t}(0).
\]

This completes the proof of this lemma.\end{proof}
\begin{lem}
\textup{\label{lem:thm4-lem2}$\sum_{i=1}^{M}{\rm {\rm C_{CP_{i}}}}(\boldsymbol{x}_{i})+\sum_{t=1}^{T}p(t)d_{t}(0)\ge{\rm C_{CP}}(\sum_{i=1}^{M}\boldsymbol{x}_{i})$,
where $\boldsymbol{x}_{i}$ is any feasible solution for problem }$\textbf{CP}_{{\rm i}}$.\end{lem}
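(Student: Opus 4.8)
The plan is to reduce the lemma to two elementary slot-wise inequalities. Set $x(t) \triangleq \sum_{i=1}^{M} x_i(t)$ and expand both sides using the definitions of ${\rm C_{CP_i}}$ and ${\rm C_{CP}}$ together with $d_t^i = d_t(i) - d_t(i-1)$. The left-hand side becomes
\[
\sum_{t=1}^{T}\Big[\, p(t)\Big(\sum_{i=1}^{M} d_t^i\, x_i(t) + d_t(0)\Big) + \beta_s \sum_{i=1}^{M}\big[x_i(t)-x_i(t-1)\big]^{+}\Big],
\]
while ${\rm C_{CP}}(\sum_{i=1}^{M} \boldsymbol{x}_i) = \sum_{t=1}^{T}\big[\,p(t)\,d_t(x(t)) + \beta_s[x(t)-x(t-1)]^{+}\big]$. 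Since $p(t)\ge 0$ and $\beta_s>0$, it suffices to prove, for each $t$: (i) $\sum_{i=1}^{M}[x_i(t)-x_i(t-1)]^{+} \ge [x(t)-x(t-1)]^{+}$, and (ii) $\sum_{i=1}^{M} d_t^i\, x_i(t) + d_t(0) \ge d_t(x(t))$; summing these over $t$ with the weights $p(t)$ and $\beta_s$ yields the claim. The same derivation covers Eqns.~(\ref{eq:thm4-eq2}) and (\ref{eq:thm4-eq2-1}) as the special cases $\boldsymbol{x}_i = \bar{\boldsymbol{x}}_i$ and $\boldsymbol{x}_i = \boldsymbol{x}_i^{on}$, so this one lemma discharges all three outstanding inequalities.

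Inequality (i) follows from the subadditivity of $[\cdot]^{+}=\max(0,\cdot)$, namely $\max(0,\alpha+\beta)\le\max(0,\alpha)+\max(0,\beta)$, applied inductively to the terms $\alpha_i = x_i(t)-x_i(t-1)$. Feasibility of $\boldsymbol{x}_i$ is used here only through $x_i(\cdot)\in\{0,1\}$; the constraint $x_i(t)\ge a_i(t)$ plays no role in the inequality itself.

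Inequality (ii) is the crux and the only place convexity of $d_t$ enters. Convexity makes the unit increments non-decreasing, $d_t^1 \le d_t^2 \le \cdots \le d_t^M$. Let $k = x(t) = \sum_i x_i(t)$, which is exactly the number of indices with $x_i(t)=1$ since $x_i(t)\in\{0,1\}$. Then $\sum_i d_t^i x_i(t)$ is a sum of $k$ of the numbers $\{d_t^1,\dots,d_t^M\}$, hence is at least the sum of the $k$ smallest of them, which — the list being sorted — are $d_t^1,\dots,d_t^k$; telescoping, $\sum_{j=1}^{k} d_t^j = d_t(k)-d_t(0)$, giving $\sum_i d_t^i x_i(t) \ge d_t(x(t)) - d_t(0)$, i.e.\ (ii). The main obstacle is recognizing this ``cheapest-$k$-servers'' argument: the decomposition $d_t(x)=d_t(0)+\sum_{j=1}^{x}d_t^j$ together with the monotone increments means any $0/1$ assignment that switches on $k$ servers pays at least what the index-ordered assignment with $k$ servers pays; everything else is bookkeeping that mirrors the construction in Lemma~\ref{lem:thm4-lem1}.
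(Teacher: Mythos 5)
Your proof is correct and follows essentially the same route as the paper's: the paper also reduces the lemma to the two slot-wise inequalities $\sum_{i=1}^{M}[x_{i}(t)-x_{i}(t-1)]^{+}\geq\big[\sum_{i=1}^{M}x_{i}(t)-\sum_{i=1}^{M}x_{i}(t-1)\big]^{+}$ and $\sum_{i=1}^{M}d_{t}^{i}x_{i}(t)+d_{t}(0)\geq d_{t}\big(\sum_{i=1}^{M}x_{i}(t)\big)$, with the latter justified exactly by $x_{i}(t)\in\{0,1\}$ and the convexity-induced ordering $d_{t}^{1}\leq\cdots\leq d_{t}^{M}$. Your ``cheapest-$k$-servers'' phrasing just spells out the same argument in more detail.
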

\begin{proof}
First, it is straightforward that
\begin{equation}
\sum_{i=1}^{M}[x_{i}(t)-x_{i}(t-1)]^{+}\geq\Big[\sum_{i=1}^{M}x_{i}(t)-\sum_{i=1}^{M}x_{i}(t-1)\Big]^{+}.\label{eq:thm4-eq5}
\end{equation}

Denote $x(t)=\sum_{i=1}^{M}x_{i}(t)$. Then, $\forall t$,
\begin{eqnarray}
\sum_{i=1}^{M}d_{t}^{i}\cdot x_{i}(t)+d_{t}(0) & \geq & \sum_{i=1}^{x(t)}d_{t}^{i}+d_{t}(0)\nonumber \\
 & = & d_{t}(x(t))-d_{t}(0)+d_{t}(0)\nonumber \\
 & = & d_{t}(x(t))=d_{t}(\sum_{i=1}^{M}x_{i}(t)),\label{eq:thm4-eq6-1}
\end{eqnarray}
where the first inequality comes from $x_{i}(t)\in\{0,1\}$ and $d_{t}^{1}\leq d_{t}^{2}\leq\cdots\leq d_{t}^{M}$.
This is because $d_{t}^{i}=d_{t}(i)-d_{t}(i-1)$ and $d_{t}(x)$ is convex in $x$.

This lemma follows from Eqns. (\ref{eq:thm4-eq5}) and (\ref{eq:thm4-eq6-1}).
\end{proof}

\section{Proof of Theorem 5}\label{sub:Proof-of-Theorem 5}

First, we show that the combined solution $\sum_{i=1}^{N}\bar{\boldsymbol{y}}_{i}$
is optimal to $\textbf{EP}$.

Denote ${\rm C_{EP}}(\boldsymbol{y})$ to be cost of \textbf{EP} of
solution $\boldsymbol{y}$. Suppose that $\tilde{\boldsymbol{y}}$
is an optimal solution for \textbf{EP}. We will show that we can construct
a new feasible solution $\sum_{i=1}^{N}\boldsymbol{\hat{y}}_{i}$
for $\textbf{EP}$, and a new feasible solution $\boldsymbol{\hat{y}}_{i}$
for each $\textbf{EP}_{{\rm i}}$, such that
\begin{equation}
{\rm C_{EP}}(\tilde{\boldsymbol{y}})={\rm C_{EP}}(\sum_{i=1}^{N}\boldsymbol{\hat{y}}_{i})=\sum_{i=1}^{N}{\rm C_{EP_{i}}}(\boldsymbol{\hat{y}}_{i})+\sum_{t=1}^{T}p(t)\left[e(t)-NL\right]^{+}.\label{eqn:thm6-eq1}
\end{equation}

$\boldsymbol{\bar{y}}_{i}$ is an optimal solution for each $\textbf{EP}_{{\rm i}}$.
Hence, ${\rm C_{EP_{i}}}(\boldsymbol{\hat{y}}_{i})\ge{\rm C_{EP_{i}}}(\boldsymbol{\bar{y}}_{i})$
for each $i$. Thus,
\begin{eqnarray}
{\rm C_{EP}}(\tilde{\boldsymbol{y}}) & = & \sum_{i=1}^{N}{\rm C_{EP_{i}}}(\boldsymbol{\hat{y}}_{i})+\sum_{t=1}^{T}p(t)\left[e(t)-NL\right]^{+}\nonumber \\
 & \ge & \sum_{i=1}^{N}{\rm {\rm C_{EP_{i}}}}(\boldsymbol{\bar{y}}_{i})+\sum_{t=1}^{T}p(t)\left[e(t)-NL\right]^{+}.\label{eq:thm6-eq1-1}
\end{eqnarray}

Besides, we also can prove that
\begin{equation}
\sum_{i=1}^{N}{\rm {\rm C_{EP_{i}}}}(\boldsymbol{\bar{y}}_{i})+\sum_{t=1}^{T}p(t)\left[e(t)-NL\right]^{+}\ge{\rm C_{EP}}(\sum_{i=1}^{N}\boldsymbol{\bar{y}}_{i}).\label{eq:thm6-eq2}
\end{equation}

Hence, ${\rm C_{EP}}(\tilde{\boldsymbol{y}})={\rm C_{EP}}(\sum_{i=1}^{N}\boldsymbol{\bar{y}}_{i})$,
\emph{i.e.}, $\sum_{i=1}^{N}\boldsymbol{\bar{y}}_{i}$ is an optimal
solution for $\textbf{EP}$.

Then, we show ${\rm C_{EP}}(\sum_{i=1}^{N}\boldsymbol{y}_{i}^{on})\leq\gamma\cdot{\rm C_{EP}}(\sum_{i=1}^{N}\bar{\boldsymbol{y}}_{i})$.

Because ${\rm C_{EP_{i}}}(\boldsymbol{y}_{i}^{on})\leq\gamma\cdot{\rm C_{EP_{i}}}(\bar{\boldsymbol{y}}_{i})$
and $\bar{\boldsymbol{y}}_{i}$ is optimal for $\textbf{EP}_{{\rm i}}$,
we have $\gamma\geq1$. According to Eqn. (\ref{eq:thm6-eq1-1}),
we have
\begin{eqnarray*}
\gamma\cdot{\rm C_{EP}}(\tilde{\boldsymbol{y}}) & \geq & \sum_{i=1}^{N}{\rm {\rm C_{EP_{i}}}}(\boldsymbol{y}_{i}^{on})+\sum_{t=1}^{T}p(t)\left[e(t)-NL\right]^{+}.
\end{eqnarray*}

Besides, we also can prove that
\begin{equation}
\sum_{i=1}^{N}{\rm {\rm C_{EP_{i}}}}(\boldsymbol{y}_{i}^{on})+\sum_{t=1}^{T}p(t)\left[e(t)-NL\right]^{+}\ge{\rm C_{EP}}(\sum_{i=1}^{N}\boldsymbol{y}_{i}^{on}).\label{eq:thm6-eq2-1}
\end{equation}

Hence, ${\rm C_{EP}}(\sum_{i=1}^{N}\boldsymbol{y}_{i}^{on})\leq\gamma\cdot{\rm C_{EP}}(\sum_{i=1}^{N}\bar{\boldsymbol{y}}_{i})$.

It remains to prove Eqn. (\ref{eqn:thm6-eq1}), (\ref{eq:thm6-eq2})
and (\ref{eq:thm6-eq2-1}), which we show in Lemmas~\ref{lem:thm6-lem1}
and \ref{lem:thm6-lem2}.
\begin{lem}
\textup{\label{lem:thm6-lem1}${\rm C_{EP}}(\tilde{\boldsymbol{y}})={\rm C_{EP}}(\sum_{i=1}^{N}\boldsymbol{\hat{y}}_{i})=\sum_{i=1}^{N}{\rm C_{EP_{i}}}(\boldsymbol{\hat{y}}_{i})+\sum_{t=1}^{T}p(t)\left[e(t)-NL\right]^{+}$.}\end{lem}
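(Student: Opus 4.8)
The statement to prove is Lemma~\ref{lem:thm6-lem1}, which is the ``energy-provisioning'' analog of Lemma~\ref{lem:thm4-lem1} and whose proof should mirror it almost verbatim, with generators playing the role of servers and the slicing $e_i(t)=\min\{L,\max\{0,e(t)-(i-1)L\}\}$ playing the role of the demand slices $a_i(t)$. So the plan is to follow the same three-step template: (i) construct the explicit feasible solution $\boldsymbol{\hat y}_i$ from an optimal $\tilde{\boldsymbol{y}}$ for \textbf{EP}; (ii) show the switching-cost term decomposes exactly; (iii) show the running-cost term ($\psi$) decomposes exactly up to the residual grid-only term $\sum_t p(t)[e(t)-NL]^+$.

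First I would define $\boldsymbol{\hat y}_i$ by $\hat y_i(t)=1$ if $i\le \tilde y(t)$ and $\hat y_i(t)=0$ otherwise, so that $\tilde y(t)=\sum_{i=1}^N \hat y_i(t)$ and each $\boldsymbol{\hat y}_i$ is feasible for $\textbf{EP}_{\rm i}$ (it satisfies $y_i(t)\in\{0,1\}$ and the boundary condition; there is no demand-type constraint on $y_i$ beyond these, since in \textbf{EP} any number of generators is allowed as long as $y(t)\le N$, which holds because $\tilde y(t)\le N$). Since $\boldsymbol{\hat y}_i$ has the same total $\sum_i \hat y_i(t)=\tilde y(t)$ at every $t$, and \textbf{EP}'s cost depends only on $\boldsymbol{y}$ through $\psi(y(t),p(t),e(t))$ and $\beta_g[y(t)-y(t-1)]^+$, we immediately get ${\rm C_{EP}}(\tilde{\boldsymbol{y}})={\rm C_{EP}}(\sum_{i=1}^N\boldsymbol{\hat y}_i)$. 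Because $\hat y_1(t)\ge\hat y_2(t)\ge\cdots\ge\hat y_N(t)$ is monotone and $\{0,1\}$-valued, $\sum_i[\hat y_i(t)-\hat y_i(t-1)]^+=[\sum_i\hat y_i(t)-\sum_i\hat y_i(t-1)]^+=[\tilde y(t)-\tilde y(t-1)]^+$, handling the startup-cost term exactly as in Eqn.~\eqref{eq:thm4-eq3}.

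The one genuinely new piece of bookkeeping is the running-cost term, because $\psi$ is not simply linear in the demand; but it is \emph{piecewise linear and additive across the $L$-wide bands}. The key identity I would establish is
\[
\sum_{i=1}^{N}\psi\bigl(\hat y_i(t),p(t),e_i(t)\bigr)+p(t)[e(t)-NL]^+=\psi\Bigl(\tilde y(t),p(t),e(t)\Bigr),
\]
which follows by checking the two price regimes separately. When $p(t)\le c_o$ the generators are never used for output, $\psi(\hat y_i(t),p(t),e_i(t))=c_m\hat y_i(t)+p(t)e_i(t)$, and since $\sum_i e_i(t)=\min\{e(t),NL\}$ we get $\sum_i e_i(t)+[e(t)-NL]^+=e(t)$, giving $\sum_i\psi(\cdot)+p(t)[e(t)-NL]^+=c_m\tilde y(t)+p(t)e(t)=\psi(\tilde y(t),p(t),e(t))$. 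When $p(t)>c_o$, note that an active band $i\le\tilde y(t)$ contributes $c_m+c_oL$ if $e_i(t)=L$ (i.e. $e(t)\ge iL$) and $c_m+c_oe_i(t)$ if $e(t)<iL$, and the topmost band plus the $[e(t)-NL]^+$ grid term recombine to $c_m\tilde y(t)+c_o L\tilde y(t)+p(t)(e(t)-L\tilde y(t))$ when $e(t)>L\tilde y(t)$, or $c_m\tilde y(t)+c_o e(t)$ otherwise --- precisely the second and third cases of the definition of $\psi$. The main obstacle is just being careful that the inactive bands ($i>\tilde y(t)$) receive demand $e_i(t)>0$ only when $e(t)>L\tilde y(t)$, in which case $\psi(0,p(t),e_i(t))=p(t)e_i(t)$ contributes exactly the grid cost for that portion of demand, so no demand is lost or double-counted. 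Combining the switching-cost identity and this running-cost identity, summing over $t$, and recalling ${\rm C_{EP}}(\tilde{\boldsymbol{y}})={\rm C_{EP}}(\sum_i\boldsymbol{\hat y}_i)$ yields the claimed chain of equalities; the argument is otherwise identical in structure to the proof of Lemma~\ref{lem:thm4-lem1}.
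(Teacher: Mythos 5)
Your proposal is correct and follows essentially the same route as the paper's proof: the same construction $\hat y_i(t)=\mathbf{1}[i\le\tilde y(t)]$, the same monotone-$\{0,1\}$ argument for the startup-cost term, and the same per-slot identity $\sum_{i=1}^{N}\psi\bigl(\hat y_i(t),p(t),e_i(t)\bigr)+p(t)\left[e(t)-NL\right]^{+}=\psi\bigl(\tilde y(t),p(t),e(t)\bigr)$, with only a cosmetic difference in how the casework is organized (you split on the price regimes $p(t)\le c_o$ versus $p(t)>c_o$, while the paper rewrites $\psi$ via $\left[c_{o}-p(t)\right]\min\{e(t),Ly(t)\}$ and splits on $e(t)<NL$ versus $e(t)\geq NL$).
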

\begin{proof}
Define $\boldsymbol{\hat{y}}_{i}$ based on $\tilde{\boldsymbol{y}}$
by:
\begin{equation}
\widehat{y}_{i}(t)=\begin{cases}
1, & \mbox{if\ }i\le\tilde{y}(t)\\
0, & \mbox{otherwise.}
\end{cases}
\end{equation}

It is straightforward to see that
\begin{equation}
\tilde{y}(t)=\sum_{i=1}^{N}\hat{y}_{i}(t).
\end{equation}

So we have ${\rm C_{EP}}(\tilde{\boldsymbol{y}})={\rm C_{EP}}(\sum_{i=1}^{N}\boldsymbol{\hat{y}}_{i}).$

According to $\textbf{EP}$,
\begin{eqnarray*}
{\rm C_{EP}}(\sum_{i=1}^{N}\boldsymbol{\hat{y}}_{i}) & = & \sum_{t=1}^{T}\left\{ \psi\left(\sum_{i=1}^{N}\hat{y}_{i}(t),p(t),e(t)\right)\right.\\
 &  & \;\ \left.+\beta_{g}[\sum_{i=1}^{N}\hat{y}_{i}(t)-\sum_{i=1}^{N}\hat{y}_{i}(t-1)]^{+}\right\} ,
\end{eqnarray*}
and
\begin{eqnarray*}
\sum_{i=1}^{N}{\rm C_{EP_{i}}}(\boldsymbol{\hat{y}}_{i}) & = & \sum_{t=1}^{T}\left\{ \sum_{i=1}^{N}\psi\left(\hat{y}_{i}(t),p(t),e_{i}(t)\right)\right.\\
 &  & \;\ \left.+\beta_{g}\sum_{i=1}^{N}[\hat{y}_{i}(t)-\hat{y}_{i}(t-1)]^{+}\right\} .
\end{eqnarray*}

Note that $\widehat{y}_{1}(t)\ge...\ge\widehat{y}_{N}(t)$ is a decreasing
sequence. Because $\widehat{y}_{i}(t)\in\{0,1\},\ \forall i,t$, we
obtain
\begin{eqnarray}
 &  & \sum_{i=1}^{N}[\hat{y}_{i}(t)-\hat{y}_{i}(t-1)]^{+}\nonumber \\
 & = & \begin{cases}
0,\mbox{\qquad\quad if\ }\sum_{i=1}^{N}\hat{y}_{i}(t)\leq\sum_{i=1}^{N}\hat{y}_{i}(t-1)\\
\sum_{i=1}^{N}\hat{y}_{i}(t)-\sum_{i=1}^{N}\hat{y}_{i}(t-1),\mbox{\ \ \ otherwise }
\end{cases}\nonumber \\
 & = & \Big[\sum_{i=1}^{N}\hat{y}_{i}(t)-\sum_{i=1}^{N}\hat{y}_{i}(t-1)\Big]^{+}.\label{eq:thm6-eq3}
\end{eqnarray}

Also, according to Eqn. (\ref{eq:the original cost function}),
$\psi\left(y(t),p(t),e(t)\right)$ can be rewritten as:
\begin{eqnarray}
 &  & \psi\left(y(t),p(t),e(t)\right)\nonumber \\
 & \triangleq & \begin{cases}
c_{m}y(t)+p(t)e(t), & \mbox{if }p(t)\leq c_{o},\\
c_{m}y(t)+p(t)e(t)+ & \mbox{else.}\\
\left[c_{o}-p(t)\right]\min\{e(t),Ly(t)\}
\end{cases}\label{eq:psi-rewritten}
\end{eqnarray}

Next, we distinguish two cases:

{\em Case 1}: $e(t)<NL$. In this case, $\sum_{i=1}^{N}e_{i}(t)=e(t)$ and\\
$\left[e(t)-NL\right]^{+}=0$. According to the definition of $e_{i}(t)$,
denoting $\bar{N}=\left\lfloor e(t)/L\right\rfloor <N$, we have
\[
e_{i}(t)=\begin{cases}
L, & \mbox{if }i\leq\bar{N},\\
e(t)-\bar{N}L, & \mbox{if }i=\bar{N}+1,\\
0, & \mbox{else. }
\end{cases}
\]

Because $\widehat{y}_{1}(t)\ge...\ge\widehat{y}_{N}(t)$ is a decreasing
sequence and $\widehat{y}_{i}(t)\in\{0,1\},\ \forall t$, we have
\begin{eqnarray*}
\sum_{i=1}^{N}\min\{e_{i}(t),L\hat{y}_{i}(t)\} & = & \begin{cases}
L\sum_{i=1}^{N}\hat{y}_{i}(t), & \mbox{if }\sum_{i=1}^{N}\hat{y}_{i}(t)\leq\bar{N},\\
e(t) & \mbox{else. }
\end{cases}\\
 & = & \min\{e(t),L\sum_{i=1}^{N}\hat{y}_{i}(t)\}.
\end{eqnarray*}

Thus, by Eqn. (\ref{eq:psi-rewritten}), we have
\begin{eqnarray}
\psi\left(\sum_{i=1}^{N}\hat{y}_{i}(t),p(t),e(t)\right) & = & \sum_{i=1}^{N}\psi\left(\hat{y}_{i}(t),p(t),e_{i}(t)\right)\nonumber \\
 &  & +p(t)\left[e(t)-NL\right]^{+}.\label{eq:thm6-eq4}
\end{eqnarray}

{\em Case 2}: $e(t)\geq NL$. In this case, $e_{i}(t)=L,\:\forall i\in[1,N]$, we
have
\[
\sum_{i=1}^{N}\min\{e_{i}(t),L\hat{y}_{i}(t)\}=L\sum_{i=1}^{N}\hat{y}_{i}(t)=\min\{e(t),L\sum_{i=1}^{N}\hat{y}_{i}(t)\}.
\]

Thus, by Eqn. (\ref{eq:psi-rewritten}), we have
\begin{eqnarray}
\psi\left(\sum_{i=1}^{N}\hat{y}_{i}(t),p(t),e(t)\right) & = & \sum_{i=1}^{N}\psi\left(\hat{y}_{i}(t),p(t),e_{i}(t)\right)\nonumber \\
 &  & +p(t)\left[e(t)-NL\right]^{+}.\label{eq:thm6-eq4-1}
\end{eqnarray}

By Eqns. (\ref{eq:thm6-eq3}), (\ref{eq:thm6-eq4}) and (\ref{eq:thm6-eq4-1}),
we have ${\rm C_{EP}}(\sum_{i=1}^{N}\boldsymbol{\hat{y}}_{i})=\sum_{i=1}^{N}{\rm C_{EP_{i}}}(\boldsymbol{\hat{y}}_{i})+\sum_{t=1}^{T}p(t)\left[e(t)-NL\right]^{+}.$

This completes the proof of this lemma.\end{proof}
\begin{lem}
\textup{\label{lem:thm6-lem2}$\sum_{i=1}^{N}{\rm {\rm C_{EP_{i}}}}(\boldsymbol{y}_{i})+\sum_{t=1}^{T}p(t)\left[e(t)-NL\right]^{+}\ge\\{\rm C_{EP}}(\sum_{i=1}^{N}\boldsymbol{y}_{i}),$
where $\boldsymbol{y}_{i}$ is any feasible solution for problem }$\textbf{EP}_{{\rm i}}$\end{lem}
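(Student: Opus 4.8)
The plan is to mirror the argument used for Lemma~\ref{lem:thm4-lem2} in the \textbf{CP} decomposition, treating the startup term and the running term $\psi$ of the per-slot cost of \textbf{EP} separately and then summing over $t$. Concretely, for each $t$ I would establish the two slot-wise inequalities
\begin{align}
\sum_{i=1}^{N}[y_{i}(t)-y_{i}(t-1)]^{+} &\ge \Big[\sum_{i=1}^{N}y_{i}(t)-\sum_{i=1}^{N}y_{i}(t-1)\Big]^{+},\label{eq:plan-startup}\\
\sum_{i=1}^{N}\psi\left(y_{i}(t),p(t),e_{i}(t)\right)+p(t)\left[e(t)-NL\right]^{+} &\ge \psi\left(\sum_{i=1}^{N}y_{i}(t),p(t),e(t)\right),\label{eq:plan-run}
\end{align}
then multiply \eqref{eq:plan-startup} by $\beta_{g}>0$, add it to \eqref{eq:plan-run}, and sum over $t\in[1,T]$: the left-hand side collapses to $\sum_{i=1}^{N}{\rm C_{EP_{i}}}(\boldsymbol{y}_{i})+\sum_{t=1}^{T}p(t)[e(t)-NL]^{+}$ and the right-hand side to ${\rm C_{EP}}(\sum_{i=1}^{N}\boldsymbol{y}_{i})$, which is exactly the claim. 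Here $\sum_{i=1}^{N}\boldsymbol{y}_{i}$ is feasible for \textbf{EP}, since each $y_{i}(t)\in\{0,1\}$ forces $0\le\sum_{i=1}^{N}y_{i}(t)\le N$.

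Inequality \eqref{eq:plan-startup} is immediate from the subadditivity of $[\cdot]^{+}$ (apply $[a+b]^{+}\le[a]^{+}+[b]^{+}$ inductively over $i$), exactly as in the derivation of \eqref{eq:thm6-eq3}. For \eqref{eq:plan-run} I would invoke the rewritten form \eqref{eq:psi-rewritten} of $\psi$ together with the slicing identity $\sum_{i=1}^{N}e_{i}(t)=\min\{e(t),NL\}$, equivalently $\sum_{i=1}^{N}e_{i}(t)+[e(t)-NL]^{+}=e(t)$ (verified by the two routine sub-cases $e(t)<NL$ and $e(t)\ge NL$ straight from the definition of $e_{i}(t)$). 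When $p(t)\le c_{o}$, both sides of \eqref{eq:plan-run} equal $c_{m}\sum_{i=1}^{N}y_{i}(t)+p(t)e(t)$, so equality holds. When $p(t)>c_{o}$, after cancelling the common term $c_{m}\sum_{i=1}^{N}y_{i}(t)+p(t)e(t)$ the desired inequality reduces to
\[
[c_{o}-p(t)]\sum_{i=1}^{N}\min\{e_{i}(t),Ly_{i}(t)\}\ \ge\ [c_{o}-p(t)]\min\left\{e(t),L\sum_{i=1}^{N}y_{i}(t)\right\}.
\]

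The delicate step is this last one: since the factor $c_{o}-p(t)$ is \emph{negative} here, it suffices to prove the \emph{reversed} scalar inequality $\sum_{i=1}^{N}\min\{e_{i}(t),Ly_{i}(t)\}\le\min\{e(t),L\sum_{i=1}^{N}y_{i}(t)\}$. This holds because the left-hand side is simultaneously at most $\sum_{i=1}^{N}e_{i}(t)\le e(t)$ and at most $\sum_{i=1}^{N}Ly_{i}(t)=L\sum_{i=1}^{N}y_{i}(t)$, hence at most the smaller of the two bounds; multiplying through by the negative quantity $c_{o}-p(t)$ flips the direction back to the required one. This establishes \eqref{eq:plan-run}, and combining it with \eqref{eq:plan-startup} gives the lemma. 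The only obstacle worth flagging is keeping the inequality directions straight across the sign change of $c_{o}-p(t)$; the rest is bookkeeping on the piecewise definitions of $\psi$ and $e_{i}(t)$.
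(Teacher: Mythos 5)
Your proposal is correct and takes essentially the same route as the paper's proof: subadditivity of $[\cdot]^{+}$ for the startup terms, the rewritten form of $\psi$ in \eqref{eq:psi-rewritten}, the identity $\sum_{i=1}^{N}e_{i}(t)=\min\{e(t),NL\}$, and the bound $\sum_{i=1}^{N}\min\{e_{i}(t),Ly_{i}(t)\}\le\min\bigl\{e(t),L\sum_{i=1}^{N}y_{i}(t)\bigr\}$ are exactly the ingredients the paper uses to obtain its per-slot inequality on $\psi$. You simply spell out the case split on whether $p(t)\le c_{o}$ and the direction reversal caused by the negative factor $c_{o}-p(t)$, which the paper leaves implicit.
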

\begin{proof}
First, it is straightforward that
\begin{equation}
\sum_{i=1}^{N}[y_{i}(t)-y_{i}(t-1)]^{+}\geq\Big[\sum_{i=1}^{N}y_{i}(t)-\sum_{i=1}^{N}y_{i}(t-1)\Big]^{+}.\label{eq:thm6-eq5}
\end{equation}

Then by Eqn. (\ref{eq:psi-rewritten}) and the fact that $\sum_{i=1}^{N}e_{i}(t)=\min\{e(t),NL\}$
and
\begin{eqnarray*}
\sum_{i=1}^{N}\min\{e_{i}(t),Ly_{i}(t)\} & \leq & \min\{\sum_{i=1}^{N}e_{i}(t),L\sum_{i=1}^{N}y_{i}(t)\}\\
 & \leq & \min\{e(t),L\sum_{i=1}^{N}y_{i}(t)\},
\end{eqnarray*}

we have
\begin{eqnarray}
\psi\left(\sum_{i=1}^{N}\bar{y}_{i}(t),p(t),e(t)\right) & \leq & \sum_{i=1}^{N}\psi\left(\bar{y}_{i}(t),p(t),e_{i}(t)\right)\nonumber \\
 &  & +p(t)\left[e(t)-NL\right]^{+}.\label{eq:thm6-eq6}
\end{eqnarray}

This lemma follows from Eqns. (\ref{eq:thm6-eq5}) and (\ref{eq:thm6-eq6}).\end{proof}

\section{Proof of Theorem 4}\label{sub:Proof-of-Theorem 4}
First, we will characterize an offline optimal algorithm for $\textbf{CP}_{{\rm i}}$.

Then, based on the optimal algorithm, we prove the competitive ratio
of our future-aware online algorithm $\mathbf{GCSR_{s}^{(w)}}$.

Finally, we prove the lower bound of competitive ratio of any deterministic
online algorithm.

In $\textbf{CP}_{{\rm i}}$, the workload input $\boldsymbol{a_{i}}$
takes value in $[0,1]$ and exactly one server is required to serve
each $\boldsymbol{a_{i}}$. When $a_{i}(t)>0$, we must keep $x_{i}(t)=1$
to satisfy the feasibility condition. The problem is what we should
do if the server is already active but there is no workload, i.e.,
$a_{i}(t)=0$.

To illustrate the problem better, we define \emph{idling interval
$I_{1}$} as follows: $I_{1}\triangleq[t_{1},t_{2}]$, such that (i)
$a_{i}(t_{1}-1)>0$; (ii) $a_{i}(t_{2}+1)>0$; (iii) $\forall\tau\in[t_{1},t_{2}]$,
$a_{i}(\tau)=0$. Similarly, define the \emph{working interval $I_{2}$:}
$I_{2}\triangleq[t_{1},t_{2}]$, such that (i) $a_{i}(t_{1}-1)=0$;
(ii) $a_{i}(t_{2}+1)=0$; (iii) $\forall\tau\in[t_{1},t_{2}]$, $a_{i}(\tau)>0$.
Define the \emph{starting interval $Is$:} $I_{s}\triangleq[0,t_{2}]$,
such that (i) $a_{i}(t_{2}+1)>0$; (ii) $\forall\tau\in[0,t_{2}]$,
$a_{i}(\tau)=0$. Define the \emph{ending interval $I_{e}$:} $I_{e}\triangleq[t_{1},T+1]$,
such that (i) $a_{i}(t_{1}-1)>0$; (ii) $\forall\tau\in[t_{1},T+1]$,
$a_{i}(\tau)=0$.

Based on the above definitions, we have the following offline optimal
algorithm $\mathbf{CPOFF_{s}}$\emph{ }for problem $\textbf{CP}_{{\rm i}}$.
\begin{algorithm}[htb!]
{
\caption{\label{alg:CPOFF} An offline optimal Algorithm $\mathbf{CPOFF_{s}}$
for $\textbf{CP}_{{\rm i}}$}
\begin{algorithmic}[1]
\STATE According to $\boldsymbol{a_{i}}$, find $I_{s}$, $I_{e}$ and all the $I_{1}$ and $I_{2}$.
\STATE During $I_{s}$ and $I_{e}$ , set $x_{i}=0$.
\STATE During each $I_{2}$, set $x_{i}=1$.
\STATE During each $I_{1}$,
\IF {$\sum_{t\in I_{1}}p(t)d_{t}^{i}\geq\beta_{s}$}
\STATE {set $x_{i}(\tau)=0,\forall\tau\in I_{1}$.}
\ELSE
\STATE {set $x_{i}(\tau)=1,\forall\tau\in I_{1}$.}
\ENDIF
\end{algorithmic}
}
\end{algorithm}

\begin{lem}
\label{lem:CPOFF}\textup{$\mathbf{CPOFF_{s}}$} is an offline optimal
algorithm to problem $\textbf{CP}_{{\rm i}}$. \end{lem}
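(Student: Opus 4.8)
The plan is to establish optimality of $\mathbf{CPOFF_{s}}$ by a structural reduction of an arbitrary optimal solution, followed by an interval-wise comparison that $\mathbf{CPOFF_{s}}$ performs exactly. First I would note that feasibility alone forces $x_{i}(t)=1$ on every working interval $I_{2}$ (since $x_{i}(t)\geq a_{i}(t)>0$ there), so the energy cost $\sum_{t\in I_{2}}p(t)d_{t}^{i}$ is a constant common to all feasible solutions and no switching occurs strictly inside an $I_{2}$. Thus the only decisions left to any algorithm are the behaviour on $I_{s}$, on $I_{e}$, and on each idling interval $I_{1}$, together with the placement of the $0\to 1$ transitions at the left endpoints of the working intervals.

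Second, I would normalize an optimal solution by exchange arguments. On $I_{s}=[0,t_{2}]$ we have $x_{i}(0)=0$ while $x_{i}(t_{2}+1)=1$ is forced, so a solution that is on during any part of $I_{s}$ already pays one $\beta_{s}$ for its first $0\to 1$ transition plus nonnegative energy on $I_{s}$; deferring that turn-on to $t_{2}+1$ (i.e.\ staying off throughout $I_{s}$) keeps the switching cost at exactly one $\beta_{s}$ while weakly lowering the energy, so all-off is optimal on $I_{s}$. The interval $I_{e}$ is similar, with all-off costing nothing. On an idling interval $I_{1}=[t_{1},t_{2}]$, where $x_{i}(t_{1}-1)=x_{i}(t_{2}+1)=1$, I would first show that any off-episode can be extended to begin at $t_{1}$ and to end only at the next time the server turns on, without changing the switching cost (the deleted $1\to 0$ is free and the surviving $0\to 1$ merely shifts) and without raising the energy; and that a solution turning back on strictly inside $I_{1}$ is dominated by one that is off throughout $I_{1}$ and turns on at $t_{2}+1$, since both pay a single $\beta_{s}$ but the latter pays no energy on $I_{1}$. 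Hence an optimal solution may be taken to be either all-on on $I_{1}$, at cost $\sum_{t\in I_{1}}p(t)d_{t}^{i}$, or all-off on $I_{1}$, at cost $\beta_{s}$ --- which is exactly the test in $\mathbf{CPOFF_{s}}$.

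Finally I would assemble these pieces into a cost decomposition: after the normalization, the total cost equals $\sum_{\text{all }I_{2}}\sum_{t\in I_{2}}p(t)d_{t}^{i}$ (unavoidable) plus one $\beta_{s}$ for the very first turn-on (unavoidable whenever $\boldsymbol{a_{i}}$ is ever positive) plus an independent term $\min\{\sum_{t\in I_{1}}p(t)d_{t}^{i},\ \beta_{s}\}$ for each idling interval, with $I_{s}$ and $I_{e}$ contributing nothing more. Since $\mathbf{CPOFF_{s}}$ by construction attains exactly this quantity term by term --- off on $I_{s}$ and $I_{e}$, on on every $I_{2}$, the cheaper of the two options on every $I_{1}$ --- it is offline optimal; the degenerate cases ($\boldsymbol{a_{i}}\equiv 0$, where both the optimum and $\mathbf{CPOFF_{s}}$ keep $x_{i}\equiv 0$, or absent $I_{s}$/$I_{e}$) fall out of the same formula. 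The step I expect to be the main obstacle is the switching-cost bookkeeping: because $\beta_{s}[x_{i}(t)-x_{i}(t-1)]^{+}$ couples adjacent slots, I must verify that each $0\to 1$ transition is charged to exactly one preceding off-block ($I_{s}$ or some $I_{1}$), that the $1\to 0$ transitions at the $I_{2}$/$I_{1}$ boundaries are correctly counted as free, and that the exchange operations never create or destroy a net turn-on. Making that accounting airtight is the real content of the argument; everything else is routine.
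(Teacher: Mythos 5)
Your proposal is correct and follows essentially the same route as the paper's proof: reduce to a per-interval decision, argue that an optimal solution is all-on or all-off on each idling interval $I_{1}$ (anything else incurs unnecessary switching), and compare the idle energy $\sum_{t\in I_{1}}p(t)d_{t}^{i}$ against the single turn-on charge $\beta_{s}$, with $I_{s}$, $I_{e}$, and the working intervals handled as forced or trivially optimal. The only difference is that you spell out the exchange arguments and switching-cost bookkeeping that the paper dismisses as easy to see.
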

\begin{proof}
It is easy to see that it is optimal to set $x_{i}=0$ during $I_{s}$
and $I_{e}$ and set $x_{i}=1$ during each $I_{2}$.

During an $I_{1}$, an offline optimal solution must set either $x_{i}(\tau)=0$
or $x_{i}(\tau)=1,\forall\tau\in I_{1}$; otherwise, it will incur
unnecessary switching cost and can not be optimal. The cost of setting
$x_{i}=1$ during an $I_{1}$ is $\sum_{t\in I_{1}}d_{t}^{i}p(t)$.
The cost of setting $x_{i}=0$ during $I_{1}$ is $\beta_{s}$, because
we must pay a turn-on cost $\beta_{s}$ after this $I_{1}$. Thus
the above algorithm $\mathbf{CPOFF_{s}}$\emph{ }is an offline optimal
algorithm to $\textbf{CP}_{{\rm i}}$.\end{proof}
\begin{lem}
\label{lem:GCSR ratio}\textup{$\mathbf{GCSR_{s}^{(w)}}$} is $\left(2-a_{s}\right)$-competitive
for problem $\textbf{CP}_{{\rm i}}$, where $\alpha_{s}\triangleq\min\left(1,wd_{\min}P_{\min}/\beta_{s}\right)\in[0,1]$
and $d_{\min}\triangleq\min_{t}\{d_{t}(1)-d_{t}(0)\}\geq0$.\end{lem}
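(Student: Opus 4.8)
The plan is to establish the competitive ratio of $\mathbf{GCSR_{s}^{(w)}}$ by comparing it to the offline optimal algorithm $\mathbf{CPOFF_{s}}$ characterized in Lemma~\ref{lem:CPOFF}, working \emph{interval by interval} over the idling intervals $I_1$ (and noting that on the starting, ending, and working intervals $I_s$, $I_e$, $I_2$ both algorithms behave identically, so they contribute the same cost). The key insight is that the two algorithms can only differ in how they treat an idling interval $I_1 = [t_1, t_2]$, and on any such interval the online algorithm pays either (a) the idling cost $\sum_{t \in I_1} p(t) d_t^i$ if it keeps the server on throughout, or (b) some partial idling cost plus the switch-on cost $\beta_s$ if it turns the server off partway through. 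The offline optimum pays $\min\{\sum_{t \in I_1} p(t) d_t^i,\ \beta_s\}$ on that interval. So the whole argument reduces to bounding, for each $I_1$ in isolation, the ratio of the online cost to $\min\{\sum_{t \in I_1} p(t) d_t^i,\ \beta_s\}$, and then arguing that a per-interval bound of $2-\alpha_s$ implies the same overall bound (this last step is immediate since the common-cost contributions from $I_s, I_e, I_2$ only help, by a mediant-type inequality $\frac{A+B}{C+D} \le \max\{\frac{A}{C}, \frac{B}{D}\}$).

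The heart of the proof is the per-interval case analysis. Fix an idling interval $I_1$. \emph{Case 1:} the accumulated idling cost over all of $I_1$ stays below $\beta_s$, i.e.\ $\sum_{t \in I_1} p(t) d_t^i < \beta_s$. Then the online algorithm never triggers a shutdown (the threshold $\tau'$ is never reached even looking ahead), so it keeps the server on and pays exactly $\sum_{t\in I_1} p(t) d_t^i$, which also equals the offline cost on this interval — ratio $1$. \emph{Case 2:} the total idling cost reaches $\beta_s$ at some point. Here the offline optimum pays exactly $\beta_s$ on this interval (turning off immediately). The online algorithm keeps idling until, with its look-ahead window $w$, it detects that the accumulated cost from the current slot onward will hit $\beta_s$ within $w$ slots; at that moment it turns the server off. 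So the online algorithm pays (idling cost incurred before the shutdown) $+ \beta_s$. The idling cost incurred before shutdown is at most $\beta_s$ (by the break-even rule, $C_i$ never exceeds $\beta_s$ before a reset) \emph{minus} the cost of the final $w$-slot look-ahead window that the algorithm "saw coming'' — and each of those $w$ slots has cost at least $d_{\min} P_{\min}$. Hence the pre-shutdown idling cost is at most $\beta_s - w\, d_{\min} P_{\min}$ (or $0$ if that is negative), giving online cost $\le \beta_s + [\beta_s - w d_{\min} P_{\min}]^+ \le \beta_s(2 - \alpha_s)$ by the definition $\alpha_s = \min(1, w d_{\min} P_{\min}/\beta_s)$. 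Dividing by the offline cost $\beta_s$ gives the ratio $2 - \alpha_s$.

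The main obstacle I anticipate is getting the look-ahead bookkeeping exactly right: precisely which slots' costs are "saved'' by the window, handling the boundary case where the look-ahead window $[t, t+w]$ extends past the end $t_2$ of the idling interval (so that a future $a_i(\tau) > 0$ inside the window forces the algorithm to keep idling — this is the second disjunct in line 6 of Algorithm~\ref{alg:GCSR(w)}, and in that subcase the online cost on $I_1$ is just the idling cost $\le \beta_s$, again within the bound), and making sure the per-slot lower bound $d_t^i \ge d_{\min}$ is legitimate (this needs $d_t^i = d_t(i) - d_t(i-1) \ge d_t(1) - d_t(0) \ge d_{\min}$ by convexity of $d_t$ in the server count, which is exactly why the decomposition and the convexity hypothesis were set up). Once these edge cases are dispatched, the aggregation over all idling intervals via the mediant inequality, and then the lift to problem $\mathbf{CP}$ via Theorem~\ref{thm:sub-ocp optimal}, are routine. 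For the tightness / lower-bound half (no deterministic online algorithm with look-ahead $w$ beats $2-\alpha_s$), the plan is a standard adversary argument: present the algorithm with an idling interval in which cost accumulates at the minimum rate $d_{\min} P_{\min}$ per slot; whatever the deterministic algorithm does, the adversary either ends the idle period just after the algorithm commits to turning off (wasting the near-$\beta_s$ idling spend plus the forced restart) or just before (wasting $\beta_s$), forcing the ratio to $2 - \alpha_s$ in the limit.
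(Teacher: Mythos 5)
Your proposal is correct and follows essentially the same route as the paper's proof: compare $\mathbf{GCSR_{s}^{(w)}}$ against the offline optimum $\mathbf{CPOFF_{s}}$ of Lemma~\ref{lem:CPOFF} interval by interval, use convexity to get $d_t^i\geq d_{\min}$, and in the break-even case bound the online cost on an idling interval by (pre-shutdown idling cost) $+\,\beta_s \leq \beta_s - w\,d_{\min}P_{\min} + \beta_s$, exactly the paper's Case-2 computation, with the sub-$\beta_s$ intervals giving ratio $1$. The one inaccuracy is your claim that both algorithms behave identically on the ending interval $I_e$: there $\mathbf{CPOFF_{s}}$ shuts the server off immediately while $\mathbf{GCSR_{s}^{(w)}}$ may idle until its accumulated cost (within the window) approaches $\beta_s$; the paper instead simply discounts the $I_e$ cost (it is at most $\beta_s(1-\alpha_s)$ and can be absorbed against the common turn-on cost of the preceding working interval), a small repair that does not affect the verdict.
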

\begin{proof}
We compare our online algorithm $\mathbf{GCSR_{s}^{(w)}}$ and the
offline optimal algorithm $\mathbf{CPOFF_{s}}$\emph{ }described above
for problem $\textbf{CP}_{{\rm i}}$ and prove the competitive ratio.
Let $\boldsymbol{x_{i}^{on}}$ and $\bar{\boldsymbol{x}}_{i}$ be
the solutions obtained by $\mathbf{GCSR_{s}^{(w)}}$\emph{ }and $\mathbf{CPOFF_{s}}$
for problem $\textbf{CP}_{{\rm i}}$, respectively.

Since $d_{t}(x(t))$ is increasing and convex in $x(t)$ , we have
\begin{eqnarray}
d_{t}^{i} & = & d_{t}(i)-d_{t}(i-1)\nonumber \\
 & \geq & d_{t}(i-1)-d_{t}(i-2)\nonumber \\
 & \vdots\nonumber \\
 & \geq & d_{t}(1)-d_{t}(0)\nonumber \\
 & \geq & \min_{t}\{d_{t}(1)-d_{t}(0)\}=d_{\min}\ge 0.\label{eq:dmin}
\end{eqnarray}

It is easy to see that during $I_{s}$ and $I_{2}$, $\mathbf{GCSR_{s}^{(w)}}$\emph{
}and $\mathbf{CPOFF_{s}}$\emph{ }have the same actions.\emph{ }Since
the adversary can choose the $T$ to be large enough, we can omit
the cost incurred during $I_{e}$ when doing competitive analysis.
Thus, we only need to consider the cost incurred by the $\mathbf{GCSR_{s}^{(w)}}$\emph{
}and $\mathbf{CPOFF_{s}}$ during each $I_{1}$. Notice that at the
beginning of an $I_{2}$, both algorithm may incur switching cost.
However, there must be an $I_{1}$ before an $I_{2}$. So this switching
cost will be taken into account when we analyze the cost incurred
during $I_{1}$. More formally, for a certain $I_{1}$,denoted as
$[t_{1},t_{2}]$,
\begin{eqnarray}
 &  & Cost_{I_{1}}(\boldsymbol{x_{i}})\nonumber \\
 & = & \sum_{t=t_{1}}^{t_{2}}p(t)d_{t}^{i}\left(x_{i}(t)-\left\lceil a_{i}(t)\right\rceil \right)+\beta_{s}\sum_{t=t_{1}}^{t_{2}+1}\left[x_{i}(t)-x_{i}(t-1)\right]^{+}\nonumber \\
 & = & \sum_{t=t_{1}}^{t_{2}}p(t)d_{t}^{i}x_{i}(t)+\beta_{s}\sum_{t=t_{1}}^{t_{2}+1}\left[x_{i}(t)-x_{i}(t-1)\right]^{+}.\label{eq:thm5-eq1}
\end{eqnarray}

$\mathbf{GCSR_{s}^{(w)}}$\emph{ }performs as follows: it accumulates
an {}``idling cost'' and when it reaches $\beta_{s}$, it turns
off the server; otherwise, it keeps the server idle. Specifically,
at time $t$, if there exists $\tau\in[t,t+w]$ such that the idling
cost till $\tau$ is at least $\beta_{s}$, it turns off the server;
otherwise, it keeps it idle. We distinguish two cases:

{\em Case 1}: $w\geq\beta_{s}/(d_{\min}P_{\min})$. In this case,
$\mathbf{GCSR_{s}^{(w)}}$\emph{ }performs the same as $\mathbf{CPOFF_{s}}$
\emph{. }Because

If $\sum_{t\in I_{1}}d_{t}^{i}p(t)\geq\beta_{s}$, $\mathbf{CPOFF_{s}}$\emph{
}turns off the server at the beginning of the $I_{1}$, i.e., at $t_{1}$.
Since $w\geq\beta_{s}/(d_{\min}P_{\min})$ and $d_{t}^{i}\geq d_{\min}$
according to Eqn. (\ref{eq:dmin}), at $t_{1}$ $\mathbf{GCSR_{s}^{(w)}}$\emph{
}can find a $\tau\in[t_{1},t_{1}+w]$ such that the idling cost till
$\tau$ is at least $\beta_{s}$, as a consequence of which it also
turns off the server at the beginning of the $I_{1}$. Both algorithms
turn on the server at the beginning of the following $I_{2}$. Thus,
we obtain
\begin{equation}
Cost_{I_{1}}(\boldsymbol{x_{i}^{on}})=Cost_{I_{1}}(\bar{\boldsymbol{x}}_{i})=\beta_{s}.\label{eq:thm5-eq2}
\end{equation}

If $\sum_{t\in I_{1}}d_{t}^{i}p(t)<\beta_{s}$,$\mathbf{CPOFF_{s}}$
keeps the server idling during the whole $I_{1}$. $\mathbf{GCSR_{s}^{(w)}}$
finds that the accumulate idling cost till the end of the $I_{1}$
will not reach $\beta_{s}$, so it also keeps the server idling during
the whole $I_{1}$. Thus, we have

\[
Cost_{I_{1}}(\boldsymbol{x_{i}^{on}})=Cost_{I_{1}}(\bar{\boldsymbol{x}}_{i})=\sum_{t\in I_{1}}d_{t}^{i}p(t).
\]

{\em Case 2}: $w<\beta_{s}/(d_{\min}P_{\min})$. In this case,
to beat $\mathbf{GCSR_{s}^{(w)}}$, the adversary will choose $p(t)$,
$a_{i}(t)$ and $d_{t}^{i}$ so that $\mathbf{GCSR_{s}^{(w)}}$ will
keep the server idling for some time and then turn it off, but $\mathbf{CPOFF_{s}}$
will turn off the server at the beginning of the $I_{1}$. Suppose
$\mathbf{GCSR_{s}^{(w)}}$ keeps the server idling for $\delta$ slots
given no workload within the look-ahead window and then turn it off.
Then according to Algorithm \ref{alg:GCSR(w)}, we must have $\sum_{\delta+w}d_{t}^{i}p(t)<\beta_{s}$
and $\sum_{\delta+w+1}d_{t}^{i}p(t)\geq\beta_{s}$. In this case,
$Cost_{I_{1}}(\bar{\boldsymbol{x}}_{i})=\beta_{s}$ and
\begin{eqnarray*}
Cost_{I_{1}}(\boldsymbol{x_{i}^{on}}) & = & \sum_{\delta}d_{t}^{i}p(t)+\beta_{s}\\
 & = & \sum_{\delta+w}d_{t}^{i}p(t)-\sum_{w}d_{t}^{i}p(t)+\beta_{s}\\
 & \leq & \beta_{s}-d_{\min}P_{\min}w+\beta_{s}\\
 & = & \beta_{s}(2-\frac{d_{\min}P_{\min}}{\beta_{s}}w).
\end{eqnarray*}

So
\begin{eqnarray*}
\frac{{\rm C_{CP_{i}}}(\boldsymbol{x_{i}^{on}})}{{\rm C_{CP_{i}}}(\bar{\boldsymbol{x}}_{i})} & \leq & \frac{Cost_{I_{1}}(\boldsymbol{x_{i}^{on}})}{Cost_{I_{1}}(\bar{\boldsymbol{x}}_{i})}\\
 & \leq & 2-\frac{d_{\min}P_{\min}}{\beta_{s}}w.
\end{eqnarray*}

Combining the above two cases establishes this lemma.

Furthermore, we have some important observations on $\boldsymbol{x_{i}^{on}}$
and $\bar{\boldsymbol{x}}_{i}$, which will be used in later proofs.
\begin{equation}
\sum_{t=1}^{T}\left[x_{i}^{on}(t)-x_{i}^{on}(t-1)\right]^{+}=\sum_{t=1}^{T}\left[\bar{x}_{i}(t)-\bar{x}_{i}(t-1)\right]^{+}.\label{eq:cp-ob1}
\end{equation}
This is because during an $I_{1}$ with $\sum_{t\in I_{1}}d_{t}^{i}p(t)\geq\beta_{s}$,
$\boldsymbol{x_{i}^{on}}$ keeps the server idling for some time and
then turn it off. $\bar{\boldsymbol{x}}_{i}$ turns off the server
at the beginning of the $I_{1}$. Both $\boldsymbol{x_{i}^{on}}$
and $\bar{\boldsymbol{x}}_{i}$ turn on the server at the beginning
of the following $I_{2}$. During an $I_{1}$ with $\sum_{t\in I_{1}}d_{t}^{i}p(t)<\beta_{s}$,
both $\boldsymbol{x_{i}^{on}}$ and $\bar{\boldsymbol{x}}_{i}$ keep
the server idling till the following $I_{2}$. Thus, $\boldsymbol{x_{i}^{on}}$
and $\bar{\boldsymbol{x}}_{i}$ incur the same server switching cost.
Besides, in both above cases, $x_{i}^{on}(t)$ is no less than $\bar{x}_{i}(t)$,
we have
\begin{equation}
\boldsymbol{x_{i}^{on}}\geq\bar{\boldsymbol{x}}_{i}.\label{eq:cp-ob2}
\end{equation}

We also observe that
\begin{eqnarray}
 &  & \sum_{t=1}^{T}d_{t}^{i}p(t)\left(x_{i}^{on}(t)-\left\lceil a_{i}(t)\right\rceil \right)\nonumber \\
 & \leq & \sum_{t=1}^{T}d_{t}^{i}p(t)\left(\bar{x}_{i}(t)-\left\lceil a_{i}(t)\right\rceil \right)+\nonumber \\
 &  & (1-\alpha_{s})\sum_{t=1}^{T}\left[\bar{x}_{i}(t)-\bar{x}_{i}(t-1)\right]^{+}.\label{eq:cp-ob3}
\end{eqnarray}
By rearranging the terms, we obtain
\begin{equation}
\sum_{t=1}^{T}d_{t}^{i}p(t)\left(x_{i}^{on}(t)-\bar{x}_{i}(t)\right)\leq(1-\alpha_{s})\sum_{t=1}^{T}\left[\bar{x}_{i}(t)-\bar{x}_{i}(t-1)\right]^{+}.\label{eq:cp-ob4}
\end{equation}
Notice that $\sum_{t=1}^{T}d_{t}^{i}p(t)\left(x_{i}(t)-\left\lceil a_{i}(t)\right\rceil \right)$
can be seen as the total server idling cost incurred by solution $\boldsymbol{x}_{i}$.
Since idling only happens in $I_{1}$, Eqn. (\ref{eq:cp-ob3})
follows from the cases discussed above.\end{proof}
\begin{lem}
\label{lem:cr lower bound}$\left(2-a_{s}\right)$ is the lower bound
of competitive ratio of any deterministic online algorithm for problem
$\textbf{CP}_{{\rm i}}$ and also $\textbf{CP}$, where $\alpha_{s}\triangleq\min\left(1,wd_{\min}P_{\min}/\beta_{s}\right)\in[0,1]$.\end{lem}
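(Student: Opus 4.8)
The plan is to extend the classical ``ski-rental'' lower bound to a look-ahead window of size $w$. In $\textbf{CP}_{i}$, keeping an idle server running for one slot costs $d^i_t p(t)\ge d_{\min}P_{\min}\triangleq c$ (``renting''), whereas turning the server off (``buying'') costs nothing now but forces us to later pay the reactivation cost $\beta_s$ when the workload reappears. It suffices to construct, for \emph{every} deterministic online algorithm $A$ with look-ahead $w$, an instance of $\textbf{CP}_{i}$ on which $A$'s cost is at least $(2-\alpha_s)$ times the offline optimum. The same instance is a valid instance of $\textbf{CP}$ with $M=1$; since any online algorithm for $\textbf{CP}$ may be assumed without loss to keep $x(t)\in\{0,1\}$ on it (clamping to $\{0,1\}$ never increases the idling or switching cost when $a(t)\le 1$), the bound carries over to $\textbf{CP}$. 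We may assume $\alpha_s<1$, i.e. $wc<\beta_s$, since otherwise $2-\alpha_s=1$ and the bound is the trivial statement that no online algorithm beats the offline optimum.

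I would then fix an adversarial gadget: a single idling interval $I_1=[t_1,t_2]$ flanked by slots of positive workload, on which $p(\tau)=P_{\min}$ and the $i$-th marginal consumption $d^i_\tau=d_{\min}$ for all $\tau\in I_1$, so that each idle slot of $I_1$ costs exactly $c$. The adversary releases the workload one slot at a time, always disclosing $w$ slots into the future, presenting only zeros until the first slot at which $A$ shuts the server off. If $A$ never does so, the adversary pushes $t_2$ arbitrarily far away, so $A$ pays an unbounded idling cost on $I_1$ while the offline optimum pays at most $\beta_s$ (shut off at $t_1$), giving an unbounded ratio. Otherwise $A$ shuts off at some slot after having paid idling cost $d\ge 0$ on $I_1$; at that instant $A$ has been shown that $I_1$ continues for at least $w$ more slots, hence that the as-yet-unpaid idling cost of $I_1$ is at least $wc$, and the adversary terminates $I_1$ right there. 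Consequently $A$'s cost on the gadget is $d+\beta_s$ (idling cost plus the later reactivation at $t_2+1$), while the offline optimum is $\min\{d+wc,\ \beta_s\}$ (idle through all of $I_1$, or shut off at $t_1$ and pay $\beta_s$).

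The remaining step is a two-case estimate. If $d+wc\ge\beta_s$, the offline cost on the gadget is $\beta_s$ and the ratio is $1+d/\beta_s\ge 1+(\beta_s-wc)/\beta_s=2-\alpha_s$, using $d\ge\beta_s-wc$. If $d+wc<\beta_s$, the offline cost is $d+wc$ and the ratio is $(d+\beta_s)/(d+wc)$; comparing this with $2-\alpha_s=(2\beta_s-wc)/\beta_s$ and clearing the (positive) denominators, the desired inequality reduces to $(\beta_s-wc)(\beta_s-wc-d)\ge 0$, which holds because $wc<\beta_s$ and $d<\beta_s-wc$. Thus on the gadget the cost ratio is always at least $2-\alpha_s$. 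To turn this into a bound on the overall competitive ratio I would concatenate many independent copies of the gadget, taking the idling intervals long, so that the common additive costs contributed by the interleaving working slots and the initial activation become negligible and the global ratio tends to (hence is no less than) $2-\alpha_s$. The delicate point is the middle step: one must argue that every deterministic $A$ --- including ``irrational'' ones that shut the server off while the look-ahead window still shows positive workload, against which the adversary only does better --- is forced into exactly one of the two cases with $I_1$ terminated precisely at the frontier of what $A$ has seen, and keep careful track of the single boundary slot in the discrete model; the algebraic case check itself is routine.
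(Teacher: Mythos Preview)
Your approach is essentially the same as the paper's: fix $p(t)=P_{\min}$ and $d_t^i=d_{\min}$ on an idling interval, let the adversary feed zeros until the algorithm shuts off, then end the interval at the boundary of the look-ahead window; split into the two cases $d+wc\ge\beta_s$ versus $d+wc<\beta_s$ (equivalently the paper's $\delta+w\ge\beta_s/c$ versus $\delta+w<\beta_s/c$) and do the same algebraic comparison. The only cosmetic differences are that you parametrize by the idling cost $d$ rather than the slot count $\delta$, and you isolate the gadget cost first and amortize the working-slot overhead afterward, whereas the paper carries the extra $+d_{\min}P_{\min}$ term throughout and disposes of it by letting $d_{\min}\to 0$ or $\beta_s\to\infty$; both treatments face the same residual additive constant, and your acknowledgment of the boundary-slot bookkeeping matches the paper's level of rigor. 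For the extension to $\mathbf{CP}$ you should also state $d_t(0)=0$, as the paper does, so that the common baseline cost does not dilute the ratio.
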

\begin{proof}
First, we show this lemma holds for problem $\textbf{CP}_{{\rm i}}$.
We distinguish two cases:

{\em Case 1}: $w\geq\beta_{s}/(d_{\min}P_{\min})$. In this case,$\left(2-a_{s}\right)=1$,
which is clearly the lower bound of competitive ratio of any online
algorithm.

{\em Case 2}: $w<\beta_{s}/(d_{\min}P_{\min})$. Similar as the
proof of Lemma \ref{lem:GCSR ratio}, we only need to analyze behaviors of online and offline algorithms
during an \emph{idle interval} $I_{1}$.

Consider the input: $d_{t}^{i}=d_{\min}$ and $p(t)=P_{\min}$,$\forall t\in[1,T]$.
Under this input, during an $I_{1}$, we only need to consider a set
of deterministic online algorithms with the following behavior: either
keep the server idling for the whole $I_{1}$ or keep it idling for
some slots and then turn if off until the end of the $I_{1}$. The
reason is that any deterministic online algorithm not belonging to
this set will turn off the server at some time and turn on the server
before the end of $I_{1}$, and thus there must be an online algorithm
incurring less cost by turning off the server at the same time but
turning on the server at the end of $I_{1}$.

We characterize an algorithm $\mathbf{ALG}$ belonging to this set by
a parameter $\delta$, denoting the time it keeps the server idling
for given $a_{i}\equiv0$ within the lookahead window. Denote the
solutions of algorithms $\mathbf{ALG}$\emph{ }and $\mathbf{CPOFF_{s}}$
for problem $\textbf{CP}_{{\rm i}}$ to be $\boldsymbol{x_{i}^{alg}}$
and $\bar{\boldsymbol{x}}_{i}$, respectively.

If $\delta$ is infinite, the competitive ratio is apparently infinite
due to the fact that the adversary can construct an $I_{1}$ whose
duration is infinite. Thus we only consider those algorithms with
finite $\delta$. The adversary will construct inputs as follows:

If $\delta+w\geq\beta_{s}/(d_{\min}P_{\min})$, the adversary will
construct an $I_{1}$ whose duration is longer than $\delta+w$. In
this case, $\mathbf{ALG}$ will keep server idling for $\delta$ slots
and then turn if off while $\mathbf{CPOFF_{s}}$ turns off the server
at the beginning of the $I_{1}$ (c.f. Fig. \ref{fig:thm4-example1}).
Then the ratio is
\begin{eqnarray*}
\frac{{\rm C_{CP_{i}}}(\boldsymbol{x_{i}^{alg}})}{{\rm C_{CP_{i}}}(\bar{\boldsymbol{x}}_{i})} & = & \frac{\sum_{\delta}d_{\min}P_{\min}+\beta_{s}+d_{\min}P_{\min}}{\beta_{s}+d_{\min}P_{\min}}\\
 & > & 1+\frac{\left[\beta_{s}/(d_{\min}P_{\min})-w\right]d_{\min}P_{\min}}{\beta_{s}+d_{\min}P_{\min}}\\
 & = & 2-\frac{d_{\min}P_{\min}(w+1)}{\beta_{s}+d_{\min}P_{\min}}.
\end{eqnarray*}

If $\delta+w<\beta_{s}/(d_{\min}P_{\min})$, the adversary will construct
an $I_{1}$ whose duration is exactly $\delta+w$. In this case, $\mathbf{ALG}$
will keep server idling for $\delta$ slots and then turn if off while
$\mathbf{CPOFF_{s}}$ keeps the server idling during the whole $I_{1}$
(c.f. Fig. \ref{fig:thm4-example2}). Then the ratio is
\begin{eqnarray*}
\frac{{\rm C_{CP_{i}}}(\boldsymbol{x_{i}^{alg}})}{{\rm C_{CP_{i}}}(\bar{\boldsymbol{x}}_{i})} & = & \frac{\sum_{\delta}d_{\min}P_{\min}+\beta_{s}+d_{\min}P_{\min}}{d_{\min}P_{\min}(\delta+w)+d_{\min}P_{\min}}\\
 & = & \frac{d_{\min}P_{\min}(\delta+w+1)+\beta_{s}-wd_{\min}P_{\min}}{d_{\min}P_{\min}(\delta+w+1)}\\
 & \geq & 1+\frac{\beta_{s}-wd_{\min}P_{\min}}{\beta_{s}+d_{\min}P_{\min}}\\
 & = & 2-\frac{d_{\min}P_{\min}(w+1)}{\beta_{s}+d_{\min}P_{\min}}.
\end{eqnarray*}

When $d_{\min}\rightarrow0$ or $\beta_{s}\rightarrow\infty$, we
have
\[
2-\frac{d_{\min}P_{\min}(w+1)}{\beta_{s}+d_{\min}P_{\min}}\rightarrow2-\frac{d_{\min}P_{\min}w}{\beta_{s}}.
\]

Combining the above two cases establishes the lower bound for problem
$\textbf{CP}_{{\rm i}}$.
\begin{figure}[h]

\subfloat[\label{fig:thm4-example1}$\delta+w\geq\beta_{s}/(d_{\min}P_{\min})$]{\includegraphics[width=0.48\columnwidth]{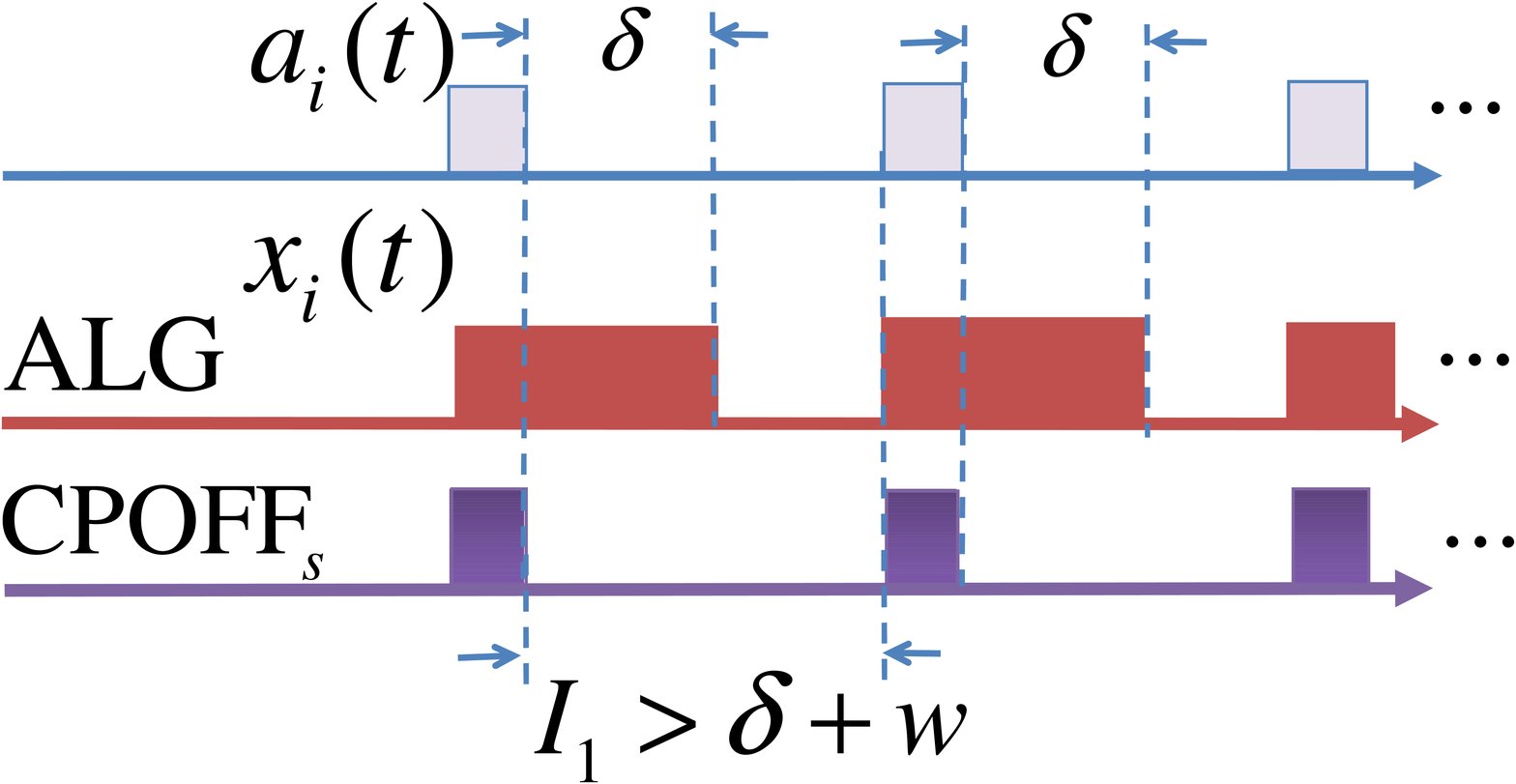}

}\subfloat[\label{fig:thm4-example2}$\delta+w<\beta_{s}/(d_{\min}P_{\min})$]{\includegraphics[width=0.48\columnwidth]{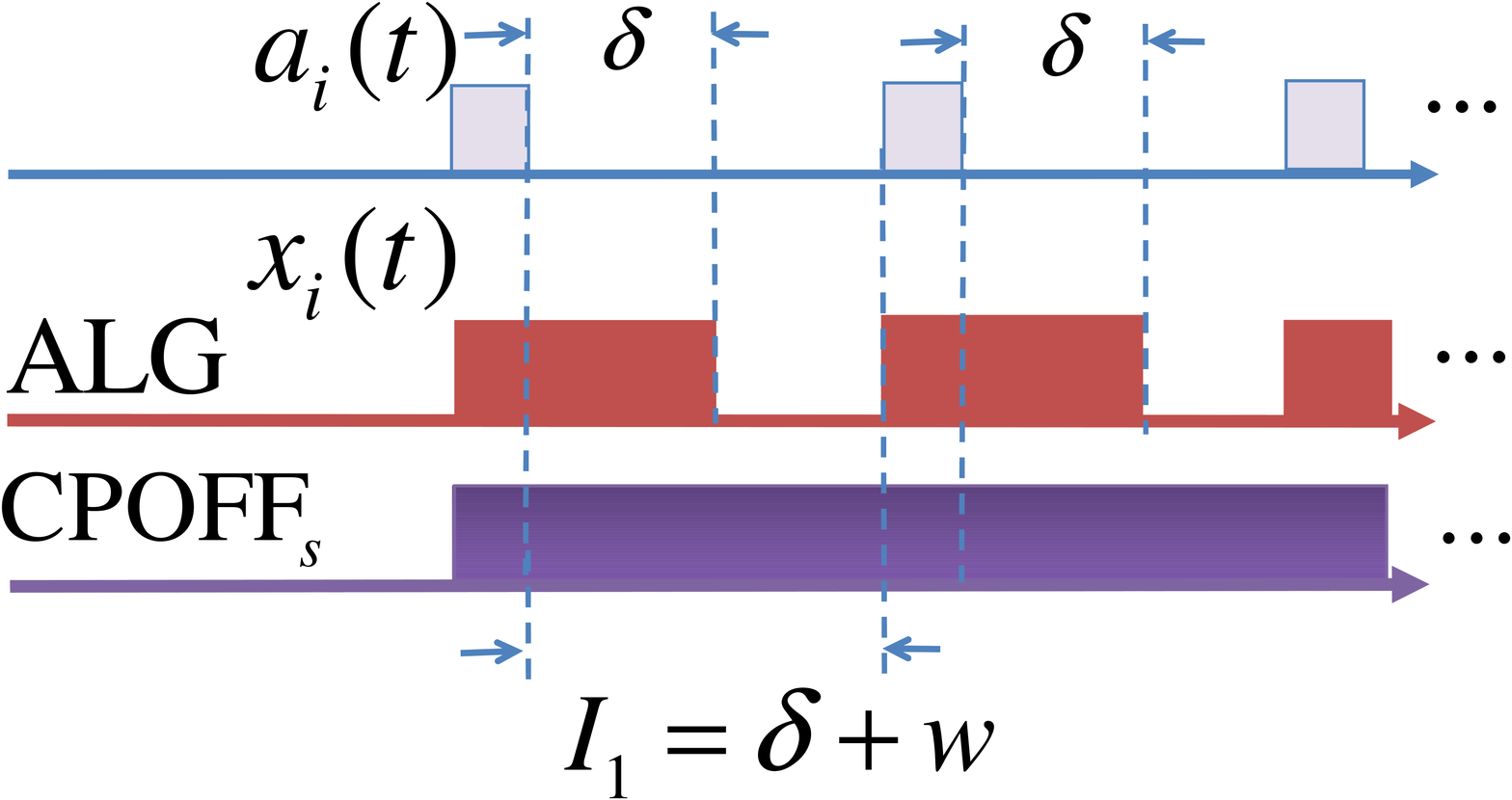}

}\caption{Worst case examples.}

\end{figure}

For problem $\textbf{CP}$, consider the case that $d_{t}(0)=0$ and
$a(t)\in[0,1],\ \forall t$. In this case, it is straightforward that
$\textbf{CP}_{{\rm 1}}$ is equivalent to $\textbf{CP}$. Thus, the
lower bound for $\textbf{CP}_{{\rm i}}$ is also a lower bound for
$\textbf{CP}$.
\end{proof}
Theorem \ref{thm:cr.GCSR(w)} follows from lemmas \ref{lem:GCSR ratio}
and \ref{lem:cr lower bound}.

\section{Proof of Theorem 6}\label{sub:Proof-of-Theorem 6}
Instead of proving this theorem directly, we prove a stronger theorem
that fully characterizes an offline optimal solution. Then Theorem
\ref{thm:EP-offline optimal } follows naturally. An very important
structure of an offline optimal solution is ``critical segments'',
which are constructed according to $R_{i}(t)$.
\begin{defn}
\label{def:critical-seg} We divide all time intervals in $[1,T]$
into disjoint parts called {\em critical segments}:
\[
[1,T_{1}^{c}],[T_{1}^{c}+1,T_{2}^{c}],[T_{2}^{c}+1,T_{3}^{c}],...,[T_{k}^{c}+1,T]
\]
The critical segments are characterized by a set of {\em critical
points}: $T_{1}^{c}<T_{2}^{c}<...<T_{k}^{c}$. We define each critical
point $T_{j}^{c}$ along with an auxiliary point $\tilde{T_{j}^{c}}$,
such that the pair $(T_{j}^{c},\tilde{T_{j}^{c}})$ satisfy the following
conditions:

\smallskip{}

(Boundary): Either \big($R_{i}(T_{j}^{c})=0$ and $R_{i}(\tilde{T_{j}^{c}})=-\beta_{g}$\big)\\
 or \big($R_{i}(T_{j}^{c})=-\beta_{g}$ and $R_{i}(\tilde{T_{j}^{c}})=0$\big).

(Interior): $-\beta<R_{i}(\tau)<0$ for all $T_{j}^{c}<\tau<\tilde{T_{j}^{c}}$.

\smallskip{}

In other words, each pair of $(T_{j}^{c},\tilde{T_{j}^{c}})$ corresponds
to an interval where $R_{i}(t)$ goes from -$\beta_{g}$ to $0$ or
$0$ to -$\beta_{g}$, without reaching the two extreme values inside
the interval. For example, $(T_{1}^{c},\tilde{T_{1}^{c}})$ and $(T_{2}^{c},\tilde{T_{2}^{c}})$
in Fig.~\ref{fig:An-example-of critical segments} are two such pairs,
while the corresponding critical segments are $(T_{1}^{c},T_{2}^{c})$
and $(T_{2}^{c},T_{3}^{c})$. It is straightforward to see that all
$(T_{j}^{c},\tilde{T_{j}^{c}})$ are uniquely defined, and hence critical
segments are well-defined. See Fig.~\ref{fig:An-example-of critical segments}
for an example.
\end{defn}
\begin{figure}[h]
\centering{}\includegraphics[width=0.9\columnwidth]{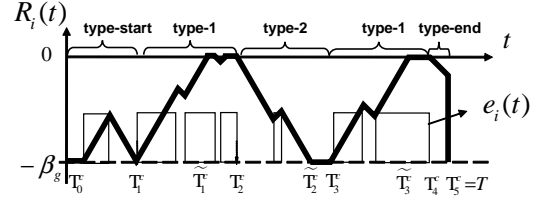}\caption{\label{fig:An-example-of critical segments}An example of critical
segments.}
\end{figure}

Once the time horizon $[1,T]$ is divided into critical segments,
we can now characterize the optimal solution.
\begin{defn}
\label{def:critical-type} We classify the {\em type} of a critical
segment by:

{\em Type-start} (also call {\em type-0}): $[1,T_{1}^{c}]$

{\em Type-1}: $[T_{j}^{c}+1,T_{j+1}^{c}]$, if $R_{i}(T_{j}^{c})=-\beta_{g}$
and $R_{i}(T_{j+1}^{c})=0$

{\em Type-2}: $[T_{j}^{c}+1,T_{j+1}^{c}]$, if $R_{i}(T_{j}^{c})=0$
and $R_{i}(T_{j+1}^{c})=-\beta_{g}$

{\em Type-end} (also call {\em type-3}): $[T_{k}^{c}+1,T]$

For completeness, we also let $T_{0}^{c}=0$ and $T_{k+1}^{c}=T$.

Then the following theorem characterizes an offline optimal solution.\end{defn}
\begin{thm}
\label{thm:OFA-optimal} An optimal solution for $\boldsymbol{EP_{i}}$
is given by
\begin{equation}
y_{{\rm OFA}}(t)\triangleq\begin{cases}
0, & \mbox{if\ }t\in[T_{j}^{c}+1,T_{j+1}^{c}]\mbox{\ is type-start/-2/-end},\\
1, & \mbox{if\ }t\in[T_{j}^{c}+1,T_{j+1}^{c}]\mbox{\ is type-1}.
\end{cases}\label{eq:y_OFA}
\end{equation}
\end{thm}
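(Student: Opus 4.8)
The plan is to show that $y_{\mathrm{OFA}}$ is a feasible solution of $\textbf{EP}_{\mathrm i}$ whose cost is at most that of an arbitrary feasible $\boldsymbol{y}$, by arguing segment-by-segment over the critical segments of Definition~\ref{def:critical-seg}. First I would record the basic accounting identity: for any feasible $\boldsymbol{y}_i$, the total cost splits as $\sum_{t=1}^T\psi(y_i(t),p(t),e_i(t))+\beta_g\sum_t[y_i(t)-y_i(t-1)]^+$, and using $r_i(t)=\psi(0,p(t),e_i(t))-\psi(1,p(t),e_i(t))$ from \eqref{eq:ri(t)} one can write $\psi(y_i(t),p(t),e_i(t))=\psi(0,p(t),e_i(t))-y_i(t)\,r_i(t)$ since $y_i(t)\in\{0,1\}$. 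Summing the constant term $\psi(0,\cdot)$ over all $t$ is independent of the policy, so minimizing cost is equivalent to \emph{maximizing} $\sum_t y_i(t)r_i(t)-\beta_g\sum_t[y_i(t)-y_i(t-1)]^+$. This reformulation is the key reduction, and the rest is a telescoping/exchange argument on this reduced objective.

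Next I would localize the comparison to a single critical segment $[T_j^c+1,T_{j+1}^c]$. The main structural fact to establish is: over a type-1 segment the cumulative reward $\sum r_i(t)$ "earned" by keeping the generator on exceeds the startup cost $\beta_g$ (this is exactly what $R_i$ hitting $0$ after being at $-\beta_g$ encodes, via \eqref{eq:Regret.Definition} and the clipping), while over a type-2 segment (and the type-start and type-end segments) the net reward of running is non-positive, so $y=0$ is locally optimal there. More precisely, from the definition of $R_i(t)=\min\{0,\max\{-\beta_g,R_i(t-1)+r_i(t)\}\}$, between two consecutive critical points the partial sums of $r_i$ stay strictly inside $(-\beta_g,0)$ relative to the entry level, which bounds how much any policy can gain by deviating from $y_{\mathrm{OFA}}$ inside the segment. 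I would then perform an exchange argument: given any feasible $\boldsymbol{y}$, modify it on one critical segment at a time to agree with $y_{\mathrm{OFA}}$ there, showing each modification does not increase cost; the boundary values $R_i(T_j^c)\in\{0,-\beta_g\}$ are chosen precisely so that the switching costs at segment endpoints are correctly absorbed and the modifications do not interact across segments.

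The delicate point — and what I expect to be the main obstacle — is handling the switching cost $\beta_g[y_i(t)-y_i(t-1)]^+$ correctly at the \emph{boundaries} between critical segments, since a deviation inside one segment can shift where a startup is paid and thereby appear to "borrow" savings from a neighbouring segment. The resolution is to attribute each startup cost to the type-1 segment it belongs to and to use the auxiliary points $\tilde T_j^c$: the interval $(T_j^c,\tilde T_j^c)$ is where $R_i$ transits monotonically between the extremes, and one shows the break-even inequality $\sum_{\tau=T_j^c+1}^{\tilde T_j^c} r_i(\tau)\le -\beta_g$ or $\ge \beta_g$ holds on exactly these transit windows, which is what makes turning on at the start of a type-1 segment and off at the start of a type-2 segment globally (not just locally) optimal. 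Once this boundary bookkeeping is set up, feasibility of $y_{\mathrm{OFA}}$ is immediate (it is a $\{0,1\}$ vector and $\textbf{EP}_{\mathrm i}$ has no demand-coverage constraint forcing $y_i=1$, since the grid can always supply $e_i(t)$), and Theorem~\ref{thm:EP-offline optimal } follows by reading off the values of $y_{\mathrm{OFA}}$ at the critical points where $R_i(t)\in\{0,-\beta_g\}$.
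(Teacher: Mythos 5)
Your plan is correct and is essentially the paper's own argument: the paper likewise proves that $y_{{\rm OFA}}$ is optimal on each critical segment regardless of the boundary states (with the startup cost at a segment's right boundary charged to that segment's subproblem), using the inequality that inside a type-1 (resp.\ type-2) segment the clipped quantity satisfies $R_{i}(\tau_{2})-R_{i}(\tau_{1})\le\sum_{t=\tau_{1}+1}^{\tau_{2}}r_{i}(t)$ (resp.\ $\ge$), and then replaces an optimal solution by $y_{{\rm OFA}}$ one segment at a time via exactly the hybridization you describe. Your rewriting of the cost as a constant minus $\sum_{t}y_{i}(t)r_{i}(t)-\beta_{g}\sum_{t}[y_{i}(t)-y_{i}(t-1)]^{+}$ is only a cosmetic repackaging of the same segment-by-segment comparison.
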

Theorem \ref{thm:EP-offline optimal } follows from Theorem \ref{thm:OFA-optimal}
and Definition \ref{def:critical-type}. Thus, it remains to prove
Theorem \ref{thm:OFA-optimal}.

\subsection{Proof of Theorem \ref{thm:OFA-optimal}}

Before we prove the theorem, we introduce a lemma.

We define the cost with regard to a segment $j$ by:
\[
\begin{array}{@{}r@{}l@{\ }l}
 &  & {\rm C_{EP_{i}^{sg-j}}}(y)\\
 & \triangleq & {\displaystyle \sum_{t=T_{j}^{c}+1}^{T_{j+1}^{c}}\psi\left(y(t),p(t),e_{i}(t)\right)}+{\displaystyle \sum_{t=T_{j}^{c}+1}^{T_{j+1}^{c}+1}\beta_{g}\cdot[y(t)-y(t-1)]^{+}}
\end{array}
\]

and define a subproblem for critical segment $j$ by:
\begin{align*}
\mathbf{EP_{i}}^{{\rm sg\mbox{-}j}}({\rm y}_{j}^{l},{\rm y}_{j}^{r}):\min\; & {\rm C_{EP_{i}^{sg-j}}}(y)\\
\mbox{s.t.}\; & y(T_{j}^{c})={\rm y}_{j}^{l},\; y(T_{j+1}^{c}+1)={\rm y}_{j}^{r},\\
\mbox{var}\; & y(t)\in\{0,1\},t\in[T_{j}^{c}+1,T_{j+1}^{c}].
\end{align*}

Note that due to the startup cost across segment boundaries, in general
${\rm C_{EP_{i}}}\neq\sum{\rm C_{EP_{i}^{sg-j}}}(y)$. In other words,
we should not expect that putting together the solutions to each segment
will lead to an overall offline optimal solution. However, the following
lemma shows an important structure property that one optimal solution
of $\mathbf{EP_{i}^{sg-j}}(y_{j}^{l},y_{j}^{r})$ is independent of
boundary conditions $(y_{j}^{l},y_{j}^{r})$ although the optimal
value depends on boundary conditions.
\begin{lem}
\label{lem:-is-theindep.with.bound.con} $(y_{{\rm OFA}}(t))_{t=T_{j}^{c}+1}^{T_{j+1}^{c}}$
in (\ref{eq:y_OFA}) is an optimal solution for $\mathbf{EP_{i}^{sg-j}}(y_{j}^{l},y_{j}^{r})$,
despite any boundary conditions $({\rm y}_{j}^{l},{\rm y}_{j}^{r})$.
\end{lem}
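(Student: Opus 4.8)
The plan is to prove Lemma~\ref{lem:-is-theindep.with.bound.con} by exhibiting, for an arbitrary feasible solution $y$ of $\mathbf{EP_i}^{{\rm sg\mbox{-}j}}(y_j^l,y_j^r)$, a chain of cost-non-increasing transformations that ends at $(y_{{\rm OFA}}(t))$ restricted to the segment, so that ${\rm C_{EP_i^{sg-j}}}(y_{{\rm OFA}})\le {\rm C_{EP_i^{sg-j}}}(y)$ regardless of the boundary values $y_j^l,y_j^r$. First I would fix a critical segment $[T_j^c+1,T_{j+1}^c]$ and record the two facts that make the argument local: on the interior of the segment $-\beta_g<R_i(\tau)<0$, so by \eqref{eq:Regret.Definition} the truncations in the definition of $R_i$ are inactive there, i.e. $R_i(\tau)=R_i(\tau-1)+r_i(\tau)$ for all interior $\tau$; and at the two endpoints $R_i$ takes the extreme values $0$ or $-\beta_g$ according to the type of the segment. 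Summing the interior recursion gives clean telescoped identities of the form $\sum_{\tau} r_i(\tau) = R_i(\text{right end}) - R_i(\text{left end}) \pm(\text{a correction at the boundary slot})$, which tie the accumulated one-slot gains $r_i(t)=\psi(0,p(t),e_i(t))-\psi(1,p(t),e_i(t))$ to the net $\beta_g$ drop or rise across the segment.

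Next I would reduce to ``at most one switch per segment.'' Since any feasible $y$ that toggles the generator on and then off (or off then on) inside one segment pays at least $\beta_g$ in startup cost for coming back on, and the interior condition on $R_i$ bounds the cumulative one-slot savings strictly below $\beta_g$, such a round trip cannot pay for itself; formally, replacing the toggled block by the constant value that avoids the extra startup does not increase ${\rm C_{EP_i^{sg-j}}}$. After this reduction, a segment-optimal $y$ is constant on the segment, or switches exactly once. For a type-1 segment ($R_i$ goes $-\beta_g \to 0$), I would show the choice $y\equiv 1$ beats $y\equiv 0$ and beats any single switch, by comparing $\sum(\psi(1,\cdot)-\psi(0,\cdot)) = -\sum r_i(t)$ against the boundary-$\beta_g$ terms and invoking the telescoped identity; for a type-2 segment ($0\to -\beta_g$) the symmetric computation makes $y\equiv 0$ optimal; type-start and type-end are handled the same way using $R_i(0)=-\beta_g$ and the fact that $R_i$ never reaches $0$ in the type-end segment, forcing $y\equiv 0$. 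In every case the winning constant value is exactly $y_{{\rm OFA}}$ on that segment, and — crucially — the comparison inequalities involve only $R_i$ at the segment's own endpoints and the within-segment $r_i(t)$'s, never the externally imposed $y_j^l,y_j^r$; the boundary startup terms $\beta_g[y(T_j^c+1)-y_j^l]^+$ and $\beta_g[y_j^r - y(T_{j+1}^c)]^+$ only shift the \emph{optimal value}, not the \emph{optimal argument}. That is precisely the independence claimed.

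I expect the main obstacle to be the bookkeeping at the boundary slots: $R_i$ can be clamped at $t=T_j^c+1$ (the first interior-adjacent slot) so the clean recursion $R_i(\tau)=R_i(\tau-1)+r_i(\tau)$ may fail at one endpoint, and one must carefully separate the ``idling/ramp'' cost absorbed into $R_i$ from the genuine startup cost $\beta_g$ that sits on the segment edges, so that the telescoping identity is stated with the right correction term and the case comparisons (constant-$0$ vs constant-$1$ vs single-switch) come out with the correct sign. A secondary subtlety is verifying that the single-switch candidates are dominated \emph{uniformly} over where the switch occurs; here I would use that once $y=1$ the marginal slot cost is $\psi(1,\cdot)$ and the partial sums of $r_i(t)$ are squeezed between $-\beta_g$ and $0$ on the interior, so sliding the switch point only trades one nonpositive-or-nonnegative increment for another in the right direction. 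Once the single-segment lemma is in hand, Theorem~\ref{thm:OFA-optimal} follows by concatenation (the boundary values of adjacent segments match $y_{{\rm OFA}}$'s own values, so no extra startup is incurred at the joins), and Theorem~\ref{thm:EP-offline optimal } is then immediate from Definition~\ref{def:critical-type} since $R_i(t)=0$ lies only in type-1 segments (where $y_{{\rm OFA}}=1$) and $R_i(t)=-\beta_g$ only in type-start/-2/-end segments (where $y_{{\rm OFA}}=0$).
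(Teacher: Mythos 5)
Your overall architecture (per-type case analysis on a single critical segment via the cumulative quantity $R_i$, then concatenation across segments) is the same as the paper's, but two of your key steps are stated in a form that would fail. First, the ``clean telescoped identities'' do not hold: by Eqn.~(\ref{eq:Regret.Definition}) the clamp at the non-extreme value can be active at many slots \emph{inside} the segment, not just at one boundary slot --- on a type-1 segment, after the auxiliary point $\tilde{T}_j^c$ the quantity $R_i$ may sit at $0$ while $r_i(t)>0$ --- so all you can extract is the one-sided inequality $R_i(\tau_2)-R_i(\tau_1)\le\sum_{t=\tau_1+1}^{\tau_2}r_i(t)$ (reversed on type-2), which is exactly the paper's Lemma~\ref{lem:Delta-function}. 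Relatedly, your claim that ``the partial sums of $r_i(t)$ are squeezed between $-\beta_g$ and $0$'' is false: it is $R_i$ that lies in $[-\beta_g,0]$, while raw sums of $r_i$ over sub-intervals of a type-1 segment can be arbitrarily large and positive.

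Second, and more seriously, your reduction to ``constant or a single switch'' rests on a symmetric bound (``the cumulative one-slot savings are strictly below $\beta_g$ for any round trip'') that is wrong in one direction. On a type-1 segment an on-block surrounded by off can have $\sum_{\mathrm{block}}r_i(t)\gg\beta_g$ (precisely because the clamp hides the excess), so that round trip \emph{does} pay for itself, and your prescribed transformation --- replace the toggled block by the constant value that avoids the restart --- strictly increases cost there. The available inequality only licenses filling off-blocks with $1$ on type-1 segments (and dually collapsing on-blocks on type-2/start/end). This is why the paper never reduces the competitor to a single switch: it compares $y_{\rm OFA}$ directly with an arbitrary $\widehat{y}$, partitions the slots where they differ into maximal blocks (on a type-1 segment these are automatically off-blocks), and lower-bounds each block's contribution, with explicit indicator bookkeeping for blocks touching $T_j^c+1$ or $T_{j+1}^c$ where ${\rm y}_j^l,{\rm y}_j^r$ enter, as in Eqns.~(\ref{eqn:delta-def21})--(\ref{eqn:delta-def23}). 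Finally, the boundary startup terms are not constant shifts of the objective (they depend on the candidate's first and last values), so ``they only shift the optimal value, not the optimal argument'' is the conclusion of the lemma rather than a usable premise; the per-block inequalities must be checked for every boundary configuration, which is what the paper's case analysis does.
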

We first use this lemma to prove Theorem \ref{thm:OFA-optimal} and
then we prove this lemma. Suppose $(y^{\ast}(t))_{t=1}^{T}$ is an
optimal solution for $\textbf{EP}_{{\rm i}}$. For completeness, we
let $y^{\ast}(0)=0$ and $y^{\ast}(T+1)=0$. We define a sequence
$(y_{0}(t))_{t=1}^{T},(y_{1}(t))_{t=1}^{T}$, $...,(y_{k+1}(t))_{t=1}^{T}$
as follows:
\begin{enumerate}
\item $y_{0}(t)=y^{\ast}(t)$ for all $t\in[1,T]$.
\item For all $t\in[1,T]$ and $j=1,...,k$
\begin{equation}
y_{j}(t)=\begin{cases}
y_{{\rm OFA}}(t), & \mbox{if\ }t\in[1,T_{j}^{c}]\\
y^{\ast}(t), & \mbox{otherwise\ }
\end{cases}
\end{equation}

\item $y_{k+1}(t)=y_{{\rm OFA}}(t)$ for all $t\in[1,T]$.
\end{enumerate}
We next set the boundary conditions for each $\mathbf{EP_{i}^{sg-j}}$
by
\begin{equation}
{\rm y}_{j}^{l}=y_{{\rm OFA}}(T_{j}^{c})\mbox{\ and\ }{\rm y}_{j}^{r}=y^{\ast}(T_{j+1}^{c}+1)
\end{equation}
It follows that
\begin{equation}
{\rm C_{EP_{i}}}(y_{j})-{\rm C_{EP_{i}}}(y_{j+1})={\rm C_{EP_{i}^{sg-j}}}(y^{\ast})-{\rm C_{EP_{i}^{sg-j}}}(y_{{\rm OFA}})
\end{equation}
By Lemma~\ref{lem:-is-theindep.with.bound.con}, we obtain ${\rm C_{EP_{i}^{sg-j}}}(y^{\ast})\ge{\rm C_{EP_{i}^{sg-j}}}(y_{{\rm OFA}})$
for all $j$. Hence,
\begin{equation}
{\rm C_{EP_{i}}}(y^{\ast})={\rm C_{EP_{i}}}(y_{0})\ge...\ge{\rm C_{EP_{i}}}(y_{k+1})={\rm C_{EP_{i}}}(y_{{\rm OFA}})
\end{equation}

This completes the proof of Theorem \ref{thm:OFA-optimal}.

Proof of Lemma~\ref{lem:-is-theindep.with.bound.con}: Consider given
any boundary condition $({\rm y}_{j}^{l},{\rm y}_{j}^{r})$ for $\mathbf{EP_{i}^{sg-j}}$.
Suppose $(\widehat{y}(t))_{t=T_{j}^{c}+1}^{T_{j+1}^{c}}$ is an optimal
solution for $\mathbf{EP_{i}^{sg-j}}$ w.r.t. $({\rm y}_{j}^{l},{\rm y}_{j}^{r})$,
and $\widehat{y}\ne y_{{\rm OFA}}$. We aim to show ${\rm C_{EP_{i}^{sg-j}}}(\widehat{y})\ge{\rm C_{EP_{i}^{sg-j}}}(y_{{\rm OFA}})$,
by considering the types of critical segment.

(\textbf{type-1}): First, suppose that critical segment $[T_{j}^{c}+1,T_{j+1}^{c}]$
is type-1. Hence, $y_{{\rm OFA}}(t)=1$ for all $t\in[T_{j}^{c}+1,T_{j+1}^{c}]$.
Hence,
\begin{equation}
{\rm C_{EP_{i}^{sg-j}}}(y_{{\rm OFA}})=\beta_{g}\cdot(1-{\rm y}_{j}^{l})+\sum_{t=T_{j}^{c}+1}^{T_{j+1}^{c}}\psi\big(1,p(t),e_{i}(t)\big)
\end{equation}

\textbf{Case 1}: Suppose $\widehat{y}(t)=0$ for all $t\in[T_{j}^{c}+1,T_{j+1}^{c}]$.
Hence,
\begin{equation}
{\rm C_{EP_{i}^{sg-j}}}(\widehat{y})=\beta_{g}\cdot{\rm y}_{j}^{r}+\sum_{t=T_{j}^{c}+1}^{T_{j+1}^{c}}\psi\big(0,p(t),e_{i}(t)\big)
\end{equation}
We obtain:
\begin{eqnarray}
 &  & {\rm C_{EP_{i}^{sg-j}}}(\widehat{y})-{\rm C_{EP_{i}^{sg-j}}}(y_{{\rm OFA}})\notag\\
 & = & \beta_{g}\cdot{\rm y}_{j}^{r}+\sum_{t=T_{j}^{c}+1}^{T_{j+1}^{c}}r_{i}(t)-\beta_{g}(1-{\rm y}_{j}^{l})\label{eqn:delta-def1}\\
 & \geq & \beta_{g}\cdot{\rm y}_{j}^{r}+R_{i}(T_{j+1}^{c})-R_{i}(T_{j}^{c})-\beta_{g}(1-{\rm y}_{j}^{l})\label{eqn:Delta-func1}\\
 & = & \beta_{g}\cdot{\rm y}_{j}^{r}+\beta_{g}-\beta_{g}+\beta_{g}{\rm y}_{j}^{l}\geq0
\end{eqnarray}
where Eqn. (\ref{eqn:delta-def1}) follows from the definition of
$r_{i}(t)$ (see Eqn. (\ref{eq:Regret.Definition})) and Eqn. (\ref{eqn:Delta-func1})
follows from Lemma~\ref{lem:Delta-function}. This completes the
proof for Case 1.

(\textbf{Case 2}): Suppose $\widehat{y}(t)=1$ for some $t\in[T_{j}^{c}+1,T_{j+1}^{c}]$.
This implies that ${\rm C_{EP_{i}^{sg-j}}}(\widehat{y})$ has to involve
the startup cost $\beta_{g}$.

Next, we denote the minimal set of segments within $[T_{j}^{c}+1,T_{j+1}^{c}]$
by
\[
[\tau_{1}^{b},\tau_{1}^{e}],[\tau_{2}^{b},\tau_{2}^{e}],[\tau_{3}^{b},\tau_{3}^{e}],...,[\tau_{p}^{b},\tau_{p}^{e}]
\]
such that $\widehat{y}(t)\ne y_{{\rm OFA}}(t)$ for all $t\in[\tau_{l}^{b},\tau_{l}^{e}]$,
$l\in\{1,...,p\}$, where $\tau_{l}^{e}<\tau_{l+1}^{b}$.

Since $\widehat{y}\ne y_{{\rm OFA}}$, then there exists at least
one $t\in[T_{j}^{c}+1,T_{j+1}^{c}]$ such that $\widehat{y}(t)=0$.
Hence, $\tau_{1}^{b}$ is well-defined.

Note that upon exiting each segment $[\tau_{l}^{b},\tau_{l}^{e}]$,
$\widehat{y}$ switches from 0 to 1. Hence, it incurs the startup
cost $\beta_{g}$. However, when $\tau_{p}^{e}=T_{j+1}^{c}$ and ${\rm y}_{j}^{r}=0$,
the startup cost is not for critical segment $[T_{j}^{c}+1,T_{j+1}^{c}]$.

Therefore, we obtain:
\begin{eqnarray}
\hspace{-20pt} &  & {\rm C_{EP_{i}^{sg-j}}}(\widehat{y})-{\rm C_{EP_{i}^{sg-j}}}(y_{{\rm OFA}})\label{eqn:delta-all}\\
 & = & \sum_{t=\tau_{1}^{b}}^{\tau_{1}^{e}}r_{i}(t)+\beta_{g}\cdot\boldsymbol{1}[\tau_{1}^{b}\ne T_{j}^{c}+1]\label{eqn:delta-def21}\\
 &  & +\sum_{l=2}^{p-1}\Big(\sum_{t=\tau_{l}^{b}}^{\tau_{l}^{e}}r_{i}(t)+\beta_{g}\Big)\label{eqn:delta-def22}\\
 &  & +\sum_{t=\tau_{p}^{b}}^{\tau_{p}^{e}}r_{i}(t)+\beta_{g}{\rm y}_{j}^{r}\cdot\boldsymbol{1}[\tau_{p}^{e}=T_{j+1}^{c}]+\beta_{g}\cdot\boldsymbol{1}[\tau_{p}^{e}\ne T_{j+1}^{c}].\label{eqn:delta-def23}
\end{eqnarray}

Now we prove the terms (\ref{eqn:delta-def21}) (\ref{eqn:delta-def22})
and (\ref{eqn:delta-def23}) are all no less than $0.$

\textbf{First}, if $\tau_{1}^{b}=T_{j}^{c}+1,$ then
\begin{eqnarray*}
\sum_{t=\tau_{1}^{b}}^{\tau_{1}^{e}}r_{i}(t)+\beta_{g}\cdot\boldsymbol{1}[\tau_{1}^{b}\ne T_{j}^{c}+1] & = & \sum_{t=T_{j}^{c}+1}^{\tau_{1}^{e}}r_{i}(t)\\
 & \geq & R_{i}(\tau_{1}^{e})-R_{i}(T_{j}^{c})\\
 & \geq & R_{i}(\tau_{1}^{e})+\beta_{g} \geq 0.\\
\end{eqnarray*}
else then
\begin{eqnarray*}
\sum_{t=\tau_{1}^{b}}^{\tau_{1}^{e}}r_{i}(t)+\beta_{g}\cdot\boldsymbol{1}[\tau_{1}^{b}\ne T_{j}^{c}+1] & = & \sum_{t=\tau_{1}^{b}}^{\tau_{1}^{e}}r_{i}(t)+\beta_{g}\\
 & \geq & R_{i}(\tau_{1}^{e})-R_{i}(\tau_{1}^{b}-1)+\beta_{g}\\
 & \geq & R_{i}(\tau_{1}^{e})+\beta_{g} \geq 0.\\
\end{eqnarray*}

Thus, we proved (\ref{eqn:delta-def21})$\geq0.$

\textbf{Second},
\begin{eqnarray*}
\sum_{t=\tau_{l}^{b}}^{\tau_{l}^{e}}r_{i}(t)+\beta_{g} & \ge & R_{i}(\tau_{l}^{e})-R_{i}(\tau_{l}^{b}-1)+\beta_{g}\\
 & \ge & R_{i}(\tau_{l}^{e})+\beta_{g} \geq 0.\\
\end{eqnarray*}

Thus, we proved (\ref{eqn:delta-def22})$\geq0.$

\textbf{Last}, if $\tau_{p}^{e}=T_{j+1}^{c},$ then
\begin{eqnarray*}
 &  & \sum_{t=\tau_{p}^{b}}^{\tau_{p}^{e}}r_{i}(t)+\beta_{g}{\rm y}_{i}^{r}\cdot\boldsymbol{1}[\tau_{p}^{e}=T_{j+1}^{c}]+\beta_{g}\cdot\boldsymbol{1}[\tau_{p}^{e}\ne T_{j+1}^{c}]\\
 & \ge & \sum_{t=\tau_{p}^{b}}^{T_{j+1}^{c}}r_{i}(t)\geq R_{i}(T_{j+1}^{c})-R_{i}(\tau_{p}^{b}-1)\\
 & = & -R_{i}(\tau_{p}^{b}-1)\geq0.
\end{eqnarray*}
else then
\begin{eqnarray*}
 &  & \sum_{t=\tau_{p}^{b}}^{\tau_{p}^{e}}r_{i}(t)+\beta_{g}{\rm y}_{i}^{r}\cdot\boldsymbol{1}[\tau_{p}^{e}=T_{j+1}^{c}]+\beta_{g}\cdot\boldsymbol{1}[\tau_{p}^{e}\ne T_{j+1}^{c}]\\
 & = & \sum_{t=\tau_{p}^{b}}^{\tau_{p}^{e}}r_{i}(t)+\beta_{g}\geq R_{i}(\tau_{p}^{e})-R_{i}(\tau_{p}^{b}-1)+\beta_{g}\\
 & \ge & 0.
\end{eqnarray*}
Thus, we proved (\ref{eqn:delta-def23})$\geq0.$

So we obtain
\[
{\rm C_{EP_{i}^{sg-j}}}(\widehat{y})-{\rm C_{EP_{i}^{sg-j}}}(y_{{\rm OFA}})\geq0.
\]

(\textbf{type-2}): Next, suppose that critical segment $[T_{j}^{c}+1,T_{j+1}^{c}]$
is type-2. Hence, $y_{{\rm OFA}}(t)=0$ for all $t\in[T_{j}^{c}+1,T_{j+1}^{c}]$.
Note that the above argument applies similarly to type-2 setting,
when we consider (Case 1): $\widehat{y}(t)=1$ for all $t\in[T_{j}^{c}+1,T_{j+1}^{c}]$
and (Case 2): $\widehat{y}(t)=0$ for some $t\in[T_{j}^{c}+1,T_{j+1}^{c}]$.

(\textbf{type-start} and \textbf{type-end}): We note that the argument
of type-2 applies similarly to type-start and type-end settings.

Therefore, we complete the proof by showing ${\rm C_{EP_{i}^{sg-j}}}(\widehat{y})\ge{\rm C_{EP_{i}^{sg-j}}}(y_{{\rm OFA}})$
for all $j\in[0,k]$.
\begin{lem}
\label{lem:Delta-function} Suppose $\tau_{1},\tau_{2}\in[T_{j}^{c}+1,T_{j+1}^{c}]$
and $\tau_{1}<\tau_{2}$. Then,
\begin{equation}
R_{i}(\tau_{2})-R_{i}(\tau_{1})\begin{cases}
\le\sum_{t=\tau_{1}+1}^{\tau_{2}}r_{i}(t), & \mbox{if\ }[T_{j}^{c}+1,T_{j+1}^{c}]\mbox{\ is type-1}\\
\ge\sum_{t=\tau_{1}+1}^{\tau_{2}}r_{i}(t), & \mbox{if\ }[T_{j}^{c}+1,T_{j+1}^{c}]\mbox{\ is type-2}
\end{cases}
\end{equation}
\end{lem}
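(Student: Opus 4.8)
\textbf{Proof proposal for Lemma~\ref{lem:Delta-function}.}

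The plan is to analyze how the recursion defining $R_i(t)$ in \eqref{eq:Regret.Definition} behaves \emph{inside} a critical segment, where by Definition~\ref{def:critical-seg} we have $-\beta_g < R_i(\tau) < 0$ for all $\tau$ strictly between the segment endpoints. The key observation is that the clipping operators $\min\{0,\cdot\}$ and $\max\{-\beta_g,\cdot\}$ in \eqref{eq:Regret.Definition} are only active at the segment boundaries, so within the open interior the recursion would simply read $R_i(t) = R_i(t-1) + r_i(t)$ if no clipping occurred; each clipping event can only \emph{decrease} $R_i$ (the $\min\{0,\cdot\}$ cap) or \emph{increase} it (the $\max\{-\beta_g,\cdot\}$ floor). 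I would make this precise by a straightforward induction on $\tau_2 - \tau_1$.

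First I would establish the telescoping identity for the unclipped process: $\sum_{t=\tau_1+1}^{\tau_2} r_i(t)$ equals the total change $R_i(\tau_2) - R_i(\tau_1)$ plus the net correction contributed by the clipping at each intermediate step. Concretely, for each $t$ with $\tau_1 < t \le \tau_2$, write $R_i(t) = \mathrm{clip}(R_i(t-1)+r_i(t))$ where $\mathrm{clip}(z) = \min\{0,\max\{-\beta_g,z\}\}$; then $R_i(t) - R_i(t-1) = r_i(t) + \delta_t$ where $\delta_t \le 0$ whenever the upper cap $\min\{0,\cdot\}$ bites and $\delta_t \ge 0$ whenever the lower floor $\max\{-\beta_g,\cdot\}$ bites. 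Summing over $t$ gives $R_i(\tau_2) - R_i(\tau_1) = \sum_{t=\tau_1+1}^{\tau_2} r_i(t) + \sum_t \delta_t$, so the lemma reduces to showing the sign of $\sum_t \delta_t$.

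The main point is then to argue that in a \emph{type-1} segment only the upper cap can be active (giving $\delta_t \le 0$, hence $R_i(\tau_2)-R_i(\tau_1) \le \sum r_i(t)$), while in a \emph{type-2} segment only the lower floor can be active (giving $\delta_t \ge 0$, hence the reverse inequality). This is where Definition~\ref{def:critical-type} and the interior condition do the work: in a type-1 segment $R_i(T_j^c) = -\beta_g$ and $R_i(T_{j+1}^c) = 0$, and for any $\tau_1 < \tau_2$ both lying in $[T_j^c+1, T_{j+1}^c]$, the floor $-\beta_g$ is reached only at the left endpoint $T_j^c$ (outside the summation range, which starts at $\tau_1+1 \ge T_j^c+1$), so for $t$ in range the value $R_i(t-1)$ with $t-1 \ge T_j^c$ satisfies $R_i(t-1) > -\beta_g$ unless $t-1 = T_j^c$; one checks the single possible boundary case $t-1 = T_j^c$ separately (there $R_i(t-1)+r_i(t) \ge -\beta_g$ cannot trigger the floor in a way that helps the adversary, or more simply $R_i(\tau_1) \ge -\beta_g$ is used directly). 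The symmetric argument handles type-2. I expect the main obstacle to be the careful bookkeeping at the segment endpoints — making sure the summation range $[\tau_1+1,\tau_2]$ never forces us to invoke a clipping event of the ``wrong'' sign, and in particular handling the edge case $\tau_1 = T_j^c$ or $\tau_2 = T_{j+1}^c$ where one of the extreme values is attained. Once that case analysis is pinned down, the inequality follows immediately from $\sum_t \delta_t$ having a definite sign.
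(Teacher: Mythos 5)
Your proposal is essentially the paper's own argument: within a type-1 (resp.\ type-2) critical segment only the cap $\min\{0,\cdot\}$ (resp.\ the floor $\max\{-\beta_g,\cdot\}$) can be active, so the per-step inequality $R_i(t)\le R_i(t-1)+r_i(t)$ (resp.\ $\ge$) holds and telescopes; your $\delta_t$ bookkeeping is just an explicit rewriting of that. One small fix: the reason the floor cannot bite at a step $t\in[\tau_1+1,\tau_2]$ is not that $R_i(t-1)>-\beta_g$ (a very negative $r_i(t)$ could still push $R_i(t-1)+r_i(t)$ below $-\beta_g$), but that the \emph{post-update} value satisfies $R_i(t)>-\beta_g$ for all $t\in[T_j^c+1,T_{j+1}^c]$ in a type-1 segment, which directly rules out the floor having been the binding clip at step $t$. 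Also, the boundary case $t-1=T_j^c$ you set aside never occurs, since $t-1\ge\tau_1\ge T_j^c+1$; with these two points tightened, your proof coincides with the paper's.
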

\begin{proof}
We recall that
\begin{equation}
R_{i}(t)\triangleq\min\Big\{0,\max\{-\beta_{g},R_{i}(t-1)+r_{i}(t)\}\Big\}
\end{equation}
First, we consider $[T_{j}^{c}+1,T_{j+1}^{c}]$ as type-1. This implies
that only $R_{i}(T_{j}^{c})=-\beta_{g}$, whereas $R_{i}(t)>-\beta_{g}$
for $t\in[T_{j}^{c}+1,T_{j+1}^{c}]$. Hence,
\begin{equation}
R_{i}(t)=\min\{0,R_{i}(t-1)+r_{i}(t)\}\le R_{i}(t-1)+r_{i}(t)
\end{equation}
Iteratively, we obtain
\begin{equation}
R_{i}(\tau_{2})\le R_{i}(\tau_{1})+\sum_{t=\tau_{1}+1}^{\tau_{2}}r_{i}(t)
\end{equation}

When $[T_{j}^{c}+1,T_{j+1}^{c}]$ is type-2, we proceed with a similar
proof, except
\begin{equation}
R_{i}(t)=\max\{-\beta_{g},R_{i}(t-1)+r_{i}(t)\}\ge R_{i}(t-1)+r_{i}(t)
\end{equation}
Therefore,
\begin{equation}
R_{i}(\tau_{2})\ge R_{i}(\tau_{1})+\sum_{t=\tau_{1}+1}^{\tau_{2}}r_{i}(t).
\end{equation}
\end{proof}

\section{Proof of Theorem 7}\label{sub:Proof-of-Theorem 7}
First, we denote the set of indexes of critical segments for type-$h$
by ${\cal T}_{h}\subseteq\{0,..,k\}$. Note that we also refer to
type-start and type-end by type-0 and type-3 respectively.

Define the sub-cost for type-$h$ by
\begin{eqnarray*}
{\rm C_{EP_{i}}^{{\rm ty\mbox{-}}h}}(y) & \triangleq & \sum_{j\in{\cal T}_{h}}\sum_{t=T_{j}^{c}+1}^{T_{j+1}^{c}}\psi\left(y(t),p(t),e_{i}(t)\right)\\
 &  & +\beta_{g}\cdot[y(t)-y(t-1)]^{+}.
\end{eqnarray*}
Hence, ${\rm {\rm C_{EP_{i}}}}(y)=\sum_{h=0}^{3}{\rm C_{EP_{i}}^{{\rm ty\mbox{-}}h}}(y)$.
We prove by comparing the sub-cost for each type-$h$. We denote the
outcome of $\mathbf{CHASE_{s}^{(w)}}$ by $\big(y_{{\rm CHASE(w)}}(t)\big)_{t=1}^{T}.$

(\textbf{type-0}): Note that both $y_{{\rm OFA}}(t)=y_{{\rm CHASE(w)}}(t)=0$
for all $t\in[1,T_{1}^{c}]$. Hence,
\[
{\rm C_{EP_{i}}^{{\rm ty\mbox{-}}0}}(y_{{\rm OFA}})={\rm C_{EP_{i}}^{{\rm ty\mbox{-}}0}}(y_{{\rm CHASE(w)}}).
\]

(\textbf{type-1}): Based on the definition of critical segment (Definition~\ref{def:critical-seg}),
we recall that there is an auxiliary point $\tilde{T_{j}^{c}}$, such
that either \big($R_{i}(T_{j}^{c})=0$ and $R_{i}(\tilde{T_{j}^{c}})=-\beta_{g}$\big)
or \big($R_{i}(T_{j}^{c})=-\beta_{g}$ and $R_{i}(\tilde{T_{j}^{c}})=0$\big).
We focus on the segment $T_{j}^{c}+1+w<\tilde{T}_{j}^{c}$. We observe
\[
y_{{\rm CHASE(w)}}(t)=\begin{cases}
0, & \mbox{for all\ }t\in[T_{j}^{c}+1,\tilde{T}_{j}^{c}-w),\\
1, & \mbox{for all\ }t\in[\tilde{T}_{j}^{c}-w,T_{j+1}^{c}].
\end{cases}
\]
We consider a particular type-1 critical segment, i.e., $k$-th type-1
critical segment: $[T_{j}^{c}+1,T_{j+1}^{c}]$. Note that by the definition
of type-1, $y_{{\rm OFA}}(T_{j}^{c})=y_{{\rm CHASE(w)}}(T_{j}^{c})=0$.
$y_{{\rm OFA}}(t)$ switches from $0$ to $1$ at time $T_{j}^{c}+1$,
while $y_{{\rm CHASE(w)}}$ switches at time $\tilde{T}_{j}^{c}-w$,
both incurring startup cost $\beta_{g}$. The cost difference between
$y_{{\rm CHASE(w)}}$ and $y_{{\rm OFA}}$ within $[T_{j}^{c}+1,T_{j+1}^{c}]$
is
\begin{eqnarray*}
 &  & \sum_{t=T_{j}^{c}+1}^{\tilde{T}_{j}^{c}-w-1}\Big(\psi\left(0,p(t),e_{i}(t)\right)-\psi\Big(1,\sigma(t),e_{i}(t)\Big)\Big)+\beta_{g}-\beta_{g}\\
 & = & \sum_{t=T_{j}^{c}+1}^{\tilde{T}_{j}^{c}-w-1}r_{i}(t)=R_{i}(\tilde{T}_{j}^{c}-w-1)-R_{i}(T_{j}^{c})=q_{k}^{1}+\beta_{g},
\end{eqnarray*}
where $q_{k}^{1}\triangleq R_{i}(\tilde{T}_{j}^{c}-w-1)$.

Recall the number of type-$h$ critical segments $m_{h}\triangleq|{\cal T}_{h}|$.
\[
{\rm C_{EP_{i}}^{{\rm ty\mbox{-}}1}}(y_{{\rm CHASE(w)}})\le{\rm C_{EP_{i}}^{{\rm ty\mbox{-}}1}}(y_{{\rm OFA}})+m_{1}\cdot\beta_{g}+\sum_{k=1}^{m_{1}}q_{k}^{1}.
\]

(\textbf{type-2}) and (\textbf{type-3}): We derive similarly for $h=2$
or $3$ as
\begin{eqnarray*}
{\rm C_{EP_{i}}^{{\rm ty\mbox{-}}h}}(y_{{\rm CHASE(w)}}) & \le & {\rm C_{EP_{i}}^{{\rm ty\mbox{-}}h}}(y_{{\rm OFA}})-\sum_{k=1}^{m_{h}}q_{k}^{h}\\
 & \leq & {\rm C_{EP_{i}}^{{\rm ty\mbox{-}}h}}(y_{{\rm OFA}})+\beta_{g}m_{h}.
\end{eqnarray*}
The last inequality comes from that $q_{k}^{h}\geq-\beta_{g}$ for
all $h,k$.

Furthermore, we note $m_{1}=m_{2}+m_{3}$. Overall, we obtain
\[
\begin{array}{@{}r@{}l@{\ }l}
 &  & {\displaystyle \frac{{\rm C_{EP_{i}}}(y_{{\rm CHASE(w)}})}{{\rm C_{EP_{i}}}(y_{{\rm OFA}})}}={\displaystyle \frac{\sum_{h=0}^{3}{\rm C_{EP_{i}}^{{\rm ty\mbox{-}}h}}(y_{{\rm CHASE(w)}})}{\sum_{h=0}^{3}{\rm C_{EP_{i}}^{{\rm ty\mbox{-}}h}}(y_{{\rm OFA}})}}\\
 & \leq & {\displaystyle \frac{m_{1}\beta_{g}+\sum_{k=1}^{m_{1}}q_{k}^{1}+(m_{2}+m_{3})\beta_{g}+\sum_{h=0}^{3}{\rm C_{EP_{i}}^{{\rm ty\mbox{-}}h}}(y_{{\rm OFA}})}{\sum_{h=0}^{3}{\rm C_{EP_{i}}^{{\rm ty\mbox{-}}h}}(y_{{\rm OFA}})}}\\
 & = & {\displaystyle 1+\frac{2m_{1}\beta_{g}+\sum_{k=1}^{m_{1}}q_{k}^{1}}{\sum_{h=0}^{3}{\rm C_{EP_{i}}^{{\rm ty\mbox{-}}h}}(y_{{\rm OFA}})}}\\
 & \leq & 1+\begin{cases}
0 & \mbox{if\ }m_{1}=0,\\
{\displaystyle \frac{2m_{1}\beta_{g}+\sum_{k=1}^{m_{1}}q_{k}^{1}}{{\rm C_{EP_{i}}^{{\rm ty\mbox{-}}1}}(y_{{\rm OFA}})}} & \mbox{otherwise}.
\end{cases}
\end{array}
\]

By Lemma \ref{lem:w lower bound type-1} and simplifications, we obtain
\begin{eqnarray}
 &  & {\displaystyle \frac{{\rm C_{EP_{i}}}(y_{{\rm CHASE(w)}})}{{\rm C_{EP_{i}}}(y_{{\rm OFA}})}}\nonumber \\
 & \leq & 1+\frac{2\beta_{g}\big(LP_{\max}-Lc_{o}-c_{m}\big)}{\beta_{g}LP_{\max}+w\cdot c_{m}P_{\max}\big(L-\frac{c_{m}}{P_{\max}-c_{o}}\big)\big)}\nonumber \\
 & \leq & 1+\frac{2\big(P_{\max}-c_{o}\big)}{P_{\max}(1+wc_{m}/\beta_{g})}.\label{eq:thm7-another-bound}
\end{eqnarray}

\begin{lem}
\label{lem:w lower bound type-1}
\begin{eqnarray*}
{\rm C_{EP_{i}}^{{\rm ty\mbox{-}}1}}(y_{{\rm OFA}}) & \geq & m_{1}\beta_{g}+\sum_{k=1}^{m_{1}}\Big(\frac{(q_{k}^{1}+\beta_{g})(Lc_{o}+c_{m})}{LP_{\max}-Lc_{o}-c_{m}}\\
 &  & \quad+w\cdot c_{m}+\frac{c_{o}(-q_{k}^{1}+w\cdot c_{m})}{P_{\max}-c_{o}}\Big)\\
 & \geq & \frac{m_{1}P_{\max}(\beta_{g}+wc_{m})}{P_{\max}-c_{o}}.
\end{eqnarray*}
\end{lem}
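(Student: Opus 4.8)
The plan is to bound the offline-optimal cost restricted to type-1 critical segments from below, segment by segment. Fix the $k$-th type-1 critical segment $[T_j^c+1,T_{j+1}^c]$. Recall that on such a segment $y_{\rm OFA}\equiv 1$, so its contribution to ${\rm C_{EP_i}^{ty\text{-}1}}(y_{\rm OFA})$ is exactly $\beta_g$ (the one startup at the left boundary, since $R_i(T_j^c)=-\beta_g$ forces $y_{\rm OFA}(T_j^c)=0$) plus $\sum_{t=T_j^c+1}^{T_{j+1}^c}\psi(1,p(t),e_i(t))$. The strategy is to split this sum at the point $\tilde T_j^c - w - 1$ that defines $q_k^1 = R_i(\tilde T_j^c - w - 1)$ (as in the proof of Theorem~7), and lower-bound each of the two pieces.

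First I would handle the ``early'' part $t\in[T_j^c+1,\tilde T_j^c-w-1]$. Here one uses the definitional identity $\sum r_i(t) = R_i(\tilde T_j^c-w-1) - R_i(T_j^c) = q_k^1 + \beta_g$ together with $r_i(t) = \psi(0,p(t),e_i(t)) - \psi(1,p(t),e_i(t))$, so that $\sum \psi(1,p(t),e_i(t)) = \sum\psi(0,p(t),e_i(t)) - (q_k^1+\beta_g)$. The key quantitative step is then to bound $\psi(0,\cdot)$ from below in terms of $\psi(1,\cdot)$: since the per-slot cost gap $r_i(t)$ between not using and using the generator is at most the worst-case savings $LP_{\max} - Lc_o - c_m$ per slot, one obtains $\psi(0,p(t),e_i(t)) \ge \psi(1,p(t),e_i(t)) \cdot \frac{?}{?}$, or more precisely $\psi(1,\cdot)\cdot(LP_{\max}-Lc_o-c_m) \ge (q_k^1+\beta_g)(Lc_o+c_m)$-type inequalities after summing. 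This yields the first summand $\frac{(q_k^1+\beta_g)(Lc_o+c_m)}{LP_{\max}-Lc_o-c_m}$ in the claimed bound. Second I would handle the ``late'' part $t\in[\tilde T_j^c-w,T_{j+1}^c]$, which has length at least $w$ (using $T_j^c+1+w<\tilde T_j^c$, i.e. we are in the nontrivial regime), contributing at least $w$ slots each of cost at least $c_m$ (the sunk cost of an active generator) plus an on-site generation cost term; bounding the incremental on-site versus grid gap by $c_o/(P_{\max}-c_o)$ scaled against $(-q_k^1 + wc_m)$ gives the remaining two summands $wc_m$ and $\frac{c_o(-q_k^1+wc_m)}{P_{\max}-c_o}$. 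Summing over the $m_1$ type-1 segments and adding the $m_1\beta_g$ startup total gives the first displayed inequality.

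For the second displayed inequality, I would simply combine the four terms inside the sum and minimize over $q_k^1\in[-\beta_g,0]$. The coefficients of $q_k^1$ from the first term ($+\frac{Lc_o+c_m}{LP_{\max}-Lc_o-c_m}$) and from the third ($-\frac{c_o}{P_{\max}-c_o}$) should combine so that the expression is monotone in $q_k^1$ over $[-\beta_g,0]$, and plugging in the extremal value (and using $Lc_o+c_m < LP_{\max}$ from assumption (ii)) collapses everything to $\frac{P_{\max}(\beta_g+wc_m)}{P_{\max}-c_o}$ per segment, hence $\frac{m_1 P_{\max}(\beta_g+wc_m)}{P_{\max}-c_o}$ in total. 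I expect the main obstacle to be the first bound: correctly tracking how the per-slot quantities $\psi(0,\cdot)$, $\psi(1,\cdot)$, and $r_i(t)$ relate to the extremal parameters $c_o, c_m, L, P_{\max}$ --- in particular showing $r_i(t) \le LP_{\max}-Lc_o-c_m$ slot-wise and that the ``idling/generation'' cost on the active stretch is at least $\frac{P_{\max}}{P_{\max}-c_o}$ times the bookkeeping quantities --- since these inequalities must be tight enough that the final minimization over $q_k^1$ yields exactly the clean closed form, and a loose intermediate bound would break the match with the competitive ratio stated in Theorem~\ref{thm:cr.CHASE(w)}.
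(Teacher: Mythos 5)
Your skeleton is the same as the paper's: you split each type-1 segment at $\tilde{T}_j^c-w-1$, charge one startup $\beta_g$ per segment, telescope $R_i$ to get the quantities $q_k^1+\beta_g$ (early stretch) and $wc_m-q_k^1$ (late stretch, via $R_i(T_{j+1}^c)=0$), invoke the steepest-descent bound $r_i(t)\le LP_{\max}-Lc_o-c_m$, use the ratio $c_o/(P_{\max}-c_o)$ on the late stretch, and then simplify over $q_k^1\in[-\beta_g,0]$. The final simplification you propose is sound: the coefficient of $q_k^1$ in the per-segment expression is
\[
\frac{Lc_o+c_m}{LP_{\max}-Lc_o-c_m}-\frac{c_o}{P_{\max}-c_o}=\frac{c_mP_{\max}}{(P_{\max}-c_o)\left(LP_{\max}-Lc_o-c_m\right)}\ge 0,
\]
so the minimum is at $q_k^1=-\beta_g$ and evaluates to $(\beta_g+wc_m)\bigl(1+\tfrac{c_o}{P_{\max}-c_o}\bigr)=\tfrac{P_{\max}(\beta_g+wc_m)}{P_{\max}-c_o}$; the paper instead first weakens $\tfrac{Lc_o+c_m}{LP_{\max}-Lc_o-c_m}$ to $\tfrac{c_o}{P_{\max}-c_o}$ so that $q_k^1$ cancels, which is the same computation.

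The genuine gap is the step you yourself flag and leave as ``$\tfrac{?}{?}$'': the per-slot cost-to-gain inequality that produces both the term $\tfrac{(q_k^1+\beta_g)(Lc_o+c_m)}{LP_{\max}-Lc_o-c_m}$ and the term $\tfrac{c_o(wc_m-q_k^1)}{P_{\max}-c_o}$. In the paper this is Lemma~\ref{lem:segment-1 minimum cost}: for every slot,
\[
\frac{\psi\left(1,p(t),e_i(t)\right)-c_m}{\psi\left(0,p(t),e_i(t)\right)-\psi\left(1,p(t),e_i(t)\right)+c_m}\;\ge\;\frac{c_o}{P_{\max}-c_o},
\]
proved by expanding $\psi$ in the two cases $p(t)\le c_o$ and $p(t)>c_o$ and using $e_i(t)\le L$. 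On the early stretch the paper combines this ratio with $\sum_t\bigl(r_i(t)+c_m\bigr)\ge q_k^1+\beta_g+(\tilde{T}_j^c-w-1-T_j^c)c_m$ (your equality claim only needs ``$\ge$'', via Lemma~\ref{lem:Delta-function}) and with the length bound $\tilde{T}_j^c-w-1-T_j^c\ge\frac{q_k^1+\beta_g}{LP_{\max}-Lc_o-c_m}$; an algebraic identity then yields exactly $\beta_g+\frac{(q_k^1+\beta_g)(Lc_o+c_m)}{LP_{\max}-Lc_o-c_m}$. (Equivalently, one can prove the single per-slot inequality $\psi\left(1,p(t),e_i(t)\right)\left(LP_{\max}-Lc_o-c_m\right)\ge\left(Lc_o+c_m\right)r_i(t)$ and sum it, which is closer to what you sketch.) Without this lemma, ``after summing'' does not go through, and your tentative factor $\tfrac{P_{\max}}{P_{\max}-c_o}$ for the active stretch is not the right per-slot ratio --- it only emerges at the end after the $wc_m$ and $\tfrac{c_o\,wc_m}{P_{\max}-c_o}$ terms are combined. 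So: right decomposition and right final algebra, but the central quantitative ingredient of the first displayed inequality is missing and must be stated and proved (by the case analysis on $p(t)$ versus $c_o$) for the argument to be complete.
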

\begin{proof}
Consider a particular type-1 segment $[T_{j}^{c}+1,T_{j+1}^{c}]$.
Denote the costs of $\mathrm{y_{OFA}}$ during $[T_{j}^{c}+1,\tilde{T}_{j}^{c}-w-1]$
and $[\tilde{T}_{j}^{c}-w,T_{j+1}^{c}]$ by $\mathrm{Cost^{{\rm up}}}$
and $\mathrm{Cost^{{\rm pt}}}$ respectively.

\textbf{Step 1:} We bound $\mathrm{Cost^{{\rm up}}}$ as follows:
\begin{eqnarray}
 &  & \mathrm{Cost^{{\rm up}}}\nonumber \\
 & = & \beta_{g}+\sum_{t=T_{j}^{c}+1}^{\tilde{T}_{j}^{c}-w-1}\psi\left(1,p(t),e_{i}(t)\right)\nonumber \\
 & = & \beta_{g}+(\tilde{T}_{j}^{c}-w-1-T_{j}^{c})c_{m}+\sum_{t=T_{j}^{c}+1}^{\tilde{T}_{j}^{c}-w-1}\big(\psi\left(1,p(t),e_{i}(t)\right)-c_{m}\big).\label{eq:type-1 cost eq1-1}
\end{eqnarray}

On the other hand, we obtain
\begin{eqnarray}
 &  & \sum_{t=T_{j}^{c}+1}^{\tilde{T}_{j}^{c}-w-1}\big(\psi\left(1,p(t),e_{i}(t)\right)-c_{m}\big)\nonumber \\
 & = & \frac{\sum_{t=T_{j}^{c}+1}^{\tilde{T}_{j}^{c}-w-1}\big(\psi\left(1,p(t),e_{i}(t)\right)-c_{m}\big)}{\sum_{t=T_{j}^{c}+1}^{\tilde{T}_{j}^{c}-w-1}\big(\psi\left(0,p(t),e_{i}(t)\right)-\psi\left(1,p(t),e_{i}(t)\right)+c_{m}\big)}\nonumber \\
 &  & \times\sum_{t=T_{j}^{c}+1}^{\tilde{T}_{j}^{c}-w-1}\big(\psi\left(0,p(t),e_{i}(t)\right)-\psi\left(1,p(t),e_{i}(t)\right)+c_{m}\big)\nonumber \\
 & \geq & \min_{\tau\in[T_{j}^{c}+1,\tilde{T}_{j}^{c}-w-1]}\frac{\psi\left(1,p(\tau),e_{i}(\tau)\right)-c_{m}}{\psi\left(0,p(\tau),e_{i}(\tau)\right)-\psi\left(1,p(\tau),e_{i}(\tau)\right)+c_{m}}\nonumber \\
 &  & \times\sum_{t=T_{j}^{c}+1}^{\tilde{T}_{j}^{c}-w-1}\big(\psi\left(0,p(t),e_{i}(t)\right)-\psi\left(1,p(t),e_{i}(t)\right)+c_{m}\big)\nonumber \\
 & \geq & \frac{c_{o}}{P_{\max}-c_{o}}\label{eq:w type-1 cost eq2}\\
 &  & \times\sum_{t=T_{j}^{c}+1}^{\tilde{T}_{j}^{c}-w-1}\big(\psi\left(0,p(t),e_{i}(t)\right)-\psi\left(1,p(t),e_{i}(t)\right)+c_{m}\big).\nonumber
\end{eqnarray}
The last inequality follows from Lemma~\ref{lem:segment-1 minimum cost}.

Next, we bound the second term by
\begin{eqnarray*}
 &  & \sum_{t=T_{j}^{c}+1}^{\tilde{T}_{j}^{c}-w-1}\big(\psi\left(0,p(t),e_{i}(t)\right)-\psi\left(1,p(t),e_{i}(t)\right)+c_{m}\big)\\
 & \geq & \sum_{t=T_{j}^{c}+1}^{\tilde{T}_{j}^{c}-w-1}\big(r_{i}(t)+c_{m}\big)\\
 & \geq & R_{i}\big(\tilde{T}_{j}^{c}-w-1\big)-R_{i}\big(T_{j}^{c}\big)+(\tilde{T}_{j}^{c}-w-1-T_{j}^{c})c_{m}\\
 & = & q_{k}^{1}+\beta_{g}+(\tilde{T}_{j}^{c}-w-1-T_{j}^{c})c_{m}.
\end{eqnarray*}

Together, we obtain
\begin{eqnarray}
\mathrm{} &  & \mathrm{Cost^{{\rm up}}}\nonumber \\
 & \geq & \beta_{g}+(\tilde{T}_{j}^{c}-w-1-T_{j}^{c})c_{m}+\nonumber \\
 &  & \frac{c_{o}}{P_{\max}-c_{o}}\Big(q_{k}^{1}+\beta_{g}+(\tilde{T}_{j}^{c}-w-1-T_{j}^{c})c_{m}\Big)\nonumber \\
 & = & \beta_{g}+\frac{(q_{k}^{1}+\beta_{g})c_{o}+(\tilde{T}_{j}^{c}-w-1-T_{j}^{c})P_{\max}c_{m}}{P_{\max}-c_{o}}.\label{eq:w type-1 cost eq6}
\end{eqnarray}

Furthermore, we note that $\big(\tilde{T}_{j}^{c}-w-1-T_{j}^{c}\big)$
is lower bounded by the steepest descend when $p(t)=P_{\max}$ and
$e_{i}(t)=L$,
\begin{equation}
\tilde{T}_{j}^{c}-w-1-T_{j}^{c}\geq\frac{q_{k}^{1}+\beta_{g}}{L\big(P_{\max}-c_{o}\big)-c_{m}}\label{eq:w type-1 cost eq7}
\end{equation}

By Eqns. (\ref{eq:w type-1 cost eq6})-(\ref{eq:w type-1 cost eq7}),
we obtain
\begin{eqnarray}
 &  & \mathrm{Cost^{{\rm up}}}\notag\nonumber \\
 & \geq & \beta_{g}+\frac{(q_{k}^{1}+\beta_{g})c_{o}+(\tilde{T}_{j}^{c}-w-1-T_{j}^{c})P_{\max}c_{m}}{P_{\max}-c_{o}}\nonumber \\
 & \geq & \beta_{g}+\frac{(q_{k}^{1}+\beta_{g})(Lc_{o}+c_{m})}{L\big(P_{\max}-c_{o}\big)-c_{m}}.\label{eq:w type-1 cost eq8}
\end{eqnarray}

\textbf{Step 2}: We bound $\mathrm{Cost^{{\rm pt}}}$ as follows.
\begin{eqnarray*}
 &  & \mathrm{Cost^{{\rm pt}}}=\sum_{t=\tilde{T}_{j}^{c}-w}^{T_{j+1}^{c}}\psi\left(1,p(t),e_{i}(t)\right)\\
 & = & (T_{j+1}^{c}-\tilde{T}_{j}^{c}+w+1)c_{m}+\sum_{t=\tilde{T}_{j}^{c}-w}^{T_{j+1}^{c}}\big(\psi\left(1,p(t),e_{i}(t)\right)-c_{m}\big)\\
 & \geq & w\cdot c_{m}+\\
 &  & \frac{c_{o}}{P_{\max}-c_{o}}\sum_{t=\tilde{T}_{j}^{c}-w}^{T_{j+1}^{c}}\big(\psi\left(0,p(t),e_{i}(t)\right)-\psi\left(1,p(t),e_{i}(t)\right)+c_{m}\big).
\end{eqnarray*}
On the other hand, we obtain
\begin{eqnarray*}
 &  & \sum_{t=\tilde{T}_{j}^{c}-w}^{T_{j+1}^{c}}\big(\psi\left(0,p(t),e_{i}(t)\right)-\psi\left(1,p(t),e_{i}(t)\right)+c_{m}\\
 & = & \sum_{t=\tilde{T}_{j}^{c}-w}^{T_{j+1}^{c}}r_{i}(t)+(T_{j+1}^{c}-\tilde{T}_{j}^{c}+w+1)c_{m}\\
 & \geq & R_{i}(T_{j+1}^{c})-R_{i}(\tilde{T}_{j}^{c}-w-1)+w\cdot c_{m}=w\cdot c_{m}-q_{k}^{1}.
\end{eqnarray*}

Therefore,
\begin{eqnarray}
\mathrm{Cost^{{\rm pt}}} & \geq & w\cdot c_{m}+\frac{c_{o}(w\cdot c_{m}-q_{k}^{1})}{P_{\max}-c_{o}}.\label{eq:w type-1 cost roof-1}
\end{eqnarray}

Since there are $m_{1}$ type-1 critical segments, according to Eqns. (\ref{eq:w type-1 cost eq8})-(\ref{eq:w type-1 cost roof-1}),
we obtain
\begin{eqnarray*}
 &  & {\rm Cost}^{{\rm ty\mbox{-}}1}(y_{{\rm OFA}})\\
 & \geq & m_{1}\beta_{g}+\sum_{k=1}^{m_{1}}\Big(\frac{(q_{k}^{1}+\beta_{g})(Lc_{o}+c_{m})}{L\big(P_{\max}-c_{o}\big)-c_{m}}\\
 &  & \qquad+w\cdot c_{m}+\frac{c_{o}(-q_{k}^{1}+w\cdot c_{m})}{P_{\max}-c_{o}}\Big)\\
 & \geq & m_{1}\beta_{g}+\sum_{k=1}^{m_{1}}\Big(\frac{(q_{k}^{1}+\beta_{g})c_{o}}{\big(P_{\max}-c_{o}\big)}\\
 &  & \qquad+w\cdot c_{m}+\frac{c_{o}(-q_{k}^{1}+w\cdot c_{m})}{P_{\max}-c_{o}}\Big)\\
 & = & m_{1}\beta_{g}+\frac{m_{1}(\beta_{g}c_{o}+P_{\max}wc_{m})}{P_{\max}-c_{o}}\\
 & = & \frac{m_{1}P_{\max}(\beta_{g}+wc_{m})}{P_{\max}-c_{o}}.
\end{eqnarray*}
\end{proof}
\begin{lem}
\label{lem:segment-1 minimum cost}
\[
\frac{\psi\left(1,p(\tau),e_{i}(\tau)\right)-c_{m}}{\psi\left(0,p(\tau),e_{i}(\tau)\right)-\psi\left(1,p(\tau),e_{i}(\tau)\right)+c_{m}}\geq\frac{c_{o}}{P_{\max}-c_{o}}.
\]
\end{lem}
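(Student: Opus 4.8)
The plan is to prove the inequality by a short case analysis on the sign of $p(\tau)-c_o$, using the closed form of $\psi$ in \eqref{eq:psi-rewritten} together with the fact that in problem $\mathbf{EP_i}$ we have $y_i(\tau)\in\{0,1\}$ and $e_i(\tau)\in[0,L]$, so that $\min\{e_i(\tau),Ly_i(\tau)\}$ equals $e_i(\tau)$ when $y_i(\tau)=1$ and equals $0$ when $y_i(\tau)=0$. First I would observe that if $p(\tau)\le c_o$ then the first branch of \eqref{eq:psi-rewritten} gives $\psi\left(1,p(\tau),e_i(\tau)\right)=c_m+p(\tau)e_i(\tau)$ and $\psi\left(0,p(\tau),e_i(\tau)\right)=p(\tau)e_i(\tau)$, so the denominator $\psi\left(0,p(\tau),e_i(\tau)\right)-\psi\left(1,p(\tau),e_i(\tau)\right)+c_m$ vanishes; such a slot contributes nothing to the sum $\sum_t\big(\psi(0,p(t),e_i(t))-\psi(1,p(t),e_i(t))+c_m\big)$ that this bound is multiplied against in Lemma~\ref{lem:w lower bound type-1}, while its numerator $\psi(1,p(t),e_i(t))-c_m=p(t)e_i(t)\ge 0$, so it may simply be dropped without affecting that argument. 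Hence it suffices to treat $p(\tau)>c_o$.

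When $p(\tau)>c_o$, substituting $y_i(\tau)=1$ and $y_i(\tau)=0$ into the second branch of \eqref{eq:psi-rewritten} gives $\psi\left(1,p(\tau),e_i(\tau)\right)=c_m+p(\tau)e_i(\tau)+\big(c_o-p(\tau)\big)e_i(\tau)=c_m+c_oe_i(\tau)$ and $\psi\left(0,p(\tau),e_i(\tau)\right)=p(\tau)e_i(\tau)$. Therefore the numerator is $\psi\left(1,p(\tau),e_i(\tau)\right)-c_m=c_oe_i(\tau)$ and the denominator is $\psi\left(0,p(\tau),e_i(\tau)\right)-\psi\left(1,p(\tau),e_i(\tau)\right)+c_m=\big(p(\tau)-c_o\big)e_i(\tau)$, both nonnegative. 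If $e_i(\tau)=0$ the term again drops out (numerator and denominator both zero); otherwise the quotient equals $c_o/\big(p(\tau)-c_o\big)$, and since $c_o<p(\tau)\le P_{\max}$ we have $0<p(\tau)-c_o\le P_{\max}-c_o$, whence $c_o/\big(p(\tau)-c_o\big)\ge c_o/\big(P_{\max}-c_o\big)$, which is the claim. Note that $P_{\max}-c_o>c_m/L>0$ by the standing assumption $c_o+c_m/L<P_{\max}$, so the right-hand side is well defined and positive.

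There is essentially no serious obstacle here: the statement follows from substituting the two branch formulas for $\psi$ and a single use of $p(\tau)\le P_{\max}$. The only point needing care is the bookkeeping around the degenerate slots where $p(\tau)\le c_o$ or $e_i(\tau)=0$, for which the displayed quotient is of the form $0/0$; I would dispose of this once and for all by the remark above, namely that such slots contribute zero to the denominator sum occurring in Lemma~\ref{lem:w lower bound type-1} and a nonnegative quantity to its numerator sum, so the weighted-mediant step $\frac{\sum_t a_t}{\sum_t b_t}\ge\min_{\tau:\,b_\tau>0}\frac{a_\tau}{b_\tau}$ invoked there remains valid.
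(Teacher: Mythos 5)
Your proposal is correct and follows essentially the same route as the paper's own proof: a case split on $p(\tau)$ versus $c_{o}$, substitution of the closed form of $\psi$, and a single use of $p(\tau)\leq P_{\max}$. Your handling of the degenerate slots (where the denominator is $0$ or $e_{i}(\tau)=0$) is in fact slightly more careful than the paper, which simply declares the ratio infinite in the first case, but the substance of the argument is identical.
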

\begin{proof}
We expand $\psi\left(y(\tau),p(\tau),e_{i}(\tau)\right)$ for each
case:

{\em Case 1}: $c_{o}\geq p(\tau)$. By Eqn. (\ref{eq:the original cost function}) and $e_{i}(\tau)\leq L,\forall i,\tau$,
\begin{eqnarray*}
\psi\left(1,p(\tau),e_{i}(\tau)\right) & = & p(\tau)e_{i}(\tau)+c_{m},\\
\psi\left(0,p(\tau),e_{i}(\tau)\right) & = & p(\tau)e_{i}(\tau).
\end{eqnarray*}
Therefore,
\[
\frac{\psi\left(1,p(t),e_{i}(t)\right)-c_{m}}{\psi\left(0,p(t),e_{i}(t)\right)-\psi\left(1,p(t),e_{i}(t)\right)+c_{m}}=\infty.
\]

{\em Case 2}: $c_{o}<p(\tau)$. By Eqn. (\ref{eq:the original cost function}) and $e_{i}(\tau)\leq L,\forall i,\tau$,

Thus,
\begin{eqnarray*}
\psi\left(1,p(\tau),e_{i}(\tau)\right) & = & c_{o}e_{i}(\tau)+c_{m},\\
\psi\left(0,p(\tau),e_{i}(\tau)\right) & = & p(\tau)e_{i}(\tau).
\end{eqnarray*}
Therefore,
\begin{eqnarray*}
 &  & \frac{\psi\left(1,p(\tau),e_{i}(\tau)\right)-c_{m}}{\psi\left(0,p(\tau),e_{i}(\tau)\right)-\psi\left(1,p(\tau),e_{i}(\tau)\right)+c_{m}}\\
 & \geq & \frac{c_{o}e_{i}(\tau)}{p(\tau)e_{i}(\tau)-c_{o}e_{i}(\tau)}\\
 & \geq & \frac{c_{o}}{P_{\max}-c_{o}}.
\end{eqnarray*}

Combining both cases, we complete the proof of this lemma.\end{proof}

\section{Proof of Theorem 2}\label{sub:Proof-of-Theorem 2}
First, we prove that the factor loss in optimality is at most $LP_{\max}/\left(Lc_{o}+c_{m}\right)$.

Then, we prove that the factor loss is tight.

Let $(\boldsymbol{\bar{x}},\boldsymbol{\bar{y}})$ be the solution
obtained by solving \textbf{CP} and \textbf{EP} separately in sequence
and $\left(\boldsymbol{x}^{*},\boldsymbol{y}^{*}\right)$ be the solution
obtained by solving the joint-optimization \textbf{DCM}. Denote ${\rm C_{DCM}}(\boldsymbol{x},\boldsymbol{y})$
to be cost of \textbf{DCM }of solution $(\boldsymbol{x},\boldsymbol{y})$
and ${\rm C_{CP}}(\boldsymbol{x})$ to be cost of \textbf{CP }of solution
$\boldsymbol{x}$.

It is straightforward that
\begin{equation}
{\rm C_{DCM}}(\boldsymbol{\bar{x}},\boldsymbol{\bar{y}})\leq{\rm C_{DCM}}(\boldsymbol{\bar{x}},\boldsymbol{0}).\label{eq:thm3-eq1}
\end{equation}

Because ${\rm C_{DCM}}(\boldsymbol{x},0)={\rm C_{CP}}(\boldsymbol{x}),$
we have
\begin{equation}
{\rm C_{DCM}}(\boldsymbol{\bar{x}},\boldsymbol{0})={\rm C_{CP}}(\boldsymbol{\bar{x}})\leq{\rm C_{CP}}(\boldsymbol{x}^{*})={\rm C_{DCM}}(\boldsymbol{x}^{*},\boldsymbol{0}).\label{eq:thm3-eq2}
\end{equation}

By Eqns. (\ref{eq:thm3-eq1}) and (\ref{eq:thm3-eq2}), we obtain
\begin{equation}
\frac{{\rm C_{DCM}}(\boldsymbol{\bar{x}},\boldsymbol{\bar{y}})}{{\rm C_{DCM}}(\boldsymbol{x}^{*},\boldsymbol{y}^{*})}\leq\frac{{\rm C_{DCM}}(\boldsymbol{x}^{*},\boldsymbol{0})}{{\rm C_{DCM}}(\boldsymbol{x}^{*},\boldsymbol{y}^{*})}.\label{eq:thm3-eq3}
\end{equation}

Then, according to the following lemma, we get
\[
\rho=\frac{{\rm C_{DCM}}(\boldsymbol{\bar{x}},\boldsymbol{\bar{y}})}{{\rm C_{DCM}}(\boldsymbol{x}^{*},\boldsymbol{y}^{*})}\leq\frac{LP_{\max}}{Lc_{o}+c_{m}}.
\]

\begin{lem}
\label{lem:factor loss bound}${\rm C_{DCM}}(\boldsymbol{x}^{*},\boldsymbol{0})/{\rm C_{DCM}}(\boldsymbol{x}^{*},\boldsymbol{y}^{*})\leq LP_{\max}/\left(Lc_{o}+c_{m}\right).$\end{lem}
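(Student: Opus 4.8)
The plan is to prove the pointwise bound
\[
{\rm C_{DCM}}(\boldsymbol{x}^{*},\boldsymbol{0})\leq\frac{LP_{\max}}{Lc_{o}+c_{m}}\,{\rm C_{DCM}}(\boldsymbol{x}^{*},\boldsymbol{y}^{*})
\]
by comparing, time slot by time slot, the cost of serving the fixed demand $d_t(x^*(t))$ using only the grid against the cost of serving it with the optimal generator schedule $\boldsymbol{y}^{*}$. Since $\boldsymbol{x}^{*}$ is held fixed on both sides, the server-switching terms $\beta_s[x^*(t)-x^*(t-1)]^+$ are identical and cancel, so it suffices to bound the \emph{energy-provisioning plus generator-switching} part of the cost. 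The key inequality I would establish is that for \emph{every} slot $t$,
\[
p(t)\,d_t(x^*(t))\ \leq\ \frac{LP_{\max}}{Lc_o+c_m}\Bigl(\psi\bigl(y^*(t),p(t),d_t(x^*(t))\bigr)\Bigr),
\]
and moreover that the same factor dominates the generator startup cost contributions, so that summing over $t$ (and noting the grid-only solution incurs no generator costs at all) yields the claim.

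The main work is the per-slot inequality, which I would prove by case analysis on the structure of $\psi$ as defined in \eqref{eq:the original cost function}. First, if $y^*(t)=0$, then $\psi(0,p(t),d_t(x^*(t)))=p(t)d_t(x^*(t))$ exactly when $p(t)>c_o$ (and equals $c_o d_t(x^*(t))\le p(t)d_t(x^*(t))$ otherwise is impossible since $y=0$ forces the grid branch), so the ratio is at most $1\le LP_{\max}/(Lc_o+c_m)$ using assumption (ii) that $c_o+c_m/L<P_{\max}$. Second, if $y^*(t)=k\ge1$ generators are on, the grid-only cost is $p(t)d_t(x^*(t))\le P_{\max}\,d_t(x^*(t))$, while $\psi(k,p(t),d_t(x^*(t)))\ge c_m k+c_o\min(Lk,d_t(x^*(t)))$; when $d_t(x^*(t))\ge Lk$ this lower bound is $c_m k+c_o Lk=k(Lc_o+c_m)$ and the demand-to-capacity ratio $d_t(x^*(t))/(Lk)$ can be absorbed, giving ratio $\le P_{\max}\, d_t(x^*(t))/(k(Lc_o+c_m))$ which, combined with $d_t(x^*(t))$ being served by $k$ generators of capacity $L$ plus possibly grid, I bound by $LP_{\max}/(Lc_o+c_m)$ after checking the grid-overflow portion is also charged at rate $\le P_{\max}$; when $d_t(x^*(t))<Lk$ the bound is even more favorable since $\psi\ge c_m k+c_o d_t(x^*(t))\ge (c_o+c_m/L)d_t(x^*(t))$, giving ratio $\le P_{\max}/(c_o+c_m/L)=LP_{\max}/(Lc_o+c_m)$ directly.

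Finally I would handle the generator startup terms $\beta_g[y^*(t)-y^*(t-1)]^+$ appearing in ${\rm C_{DCM}}(\boldsymbol{x}^{*},\boldsymbol{y}^{*})$: these only \emph{increase} the denominator, and since the numerator ${\rm C_{DCM}}(\boldsymbol{x}^{*},\boldsymbol{0})$ has no such terms, they can simply be dropped (or kept) without harming the inequality — formally, $a/b \le a/b'$ when $b\ge b'>0$, so bounding ${\rm C_{DCM}}(\boldsymbol{x}^*,\boldsymbol{y}^*)$ below by just its $\psi$-sum plus the common server terms is enough. Assembling: the common server-switching costs cancel from the ratio, each $\psi$ term satisfies the per-slot inequality above, so summation gives the lemma.

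The step I expect to be the main obstacle is the case $y^*(t)\ge 1$ with $d_t(x^*(t))>Ly^*(t)$, where part of the demand is served by the grid at price $p(t)$ and part by generators: I need to verify that the \emph{grid} portion's cost on the right-hand side (namely $p(t)(d_t(x^*(t))-Ly^*(t))$) together with the generator portion $k(Lc_o+c_m)$ still dominates $p(t)d_t(x^*(t))/\bigl(LP_{\max}/(Lc_o+c_m)\bigr)$. This reduces to checking $p(t)(d_t(x^*(t))-Lk)+k(Lc_o+c_m)\ge \frac{Lc_o+c_m}{LP_{\max}}\,p(t)d_t(x^*(t))$, i.e. that replacing the first $Lk$ units of the grid-only bill (charged at $p(t)\le P_{\max}$) by the cheaper generator bill $k(Lc_o+c_m)$ — times the appropriate $L/(Lc_o+c_m)$-vs-$1/P_{\max}$ bookkeeping — goes the right way; this follows because per unit, generator cost $c_o+c_m/L$ is at most $P_{\max}$ by assumption (ii), but the algebra must be written out carefully to confirm the factor is exactly $LP_{\max}/(Lc_o+c_m)$ and not something larger.
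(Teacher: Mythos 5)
Your proposal is correct and follows essentially the same route as the paper: lower-bound ${\rm C_{DCM}}(\boldsymbol{x}^{*},\boldsymbol{y}^{*})$ by dropping the $\beta_{g}$ startup terms, observe the common $\beta_{s}$ server-switching terms on both sides, and bound the per-slot ratio $p(t)d_{t}(x^{*}(t))/\psi\left(y^{*}(t),p(t),d_{t}(x^{*}(t))\right)$ by $LP_{\max}/(Lc_{o}+c_{m})$ via case analysis on $\psi$ using the assumption $c_{o}+c_{m}/L<P_{\max}$, exactly as in the paper's chain of inequalities. One cosmetic caveat: your stated lower bound $\psi\left(k,p(t),d_{t}\right)\geq c_{m}k+c_{o}\min\{Lk,d_{t}\}$ for $k\geq1$ is only valid when $p(t)>c_{o}$; when $p(t)\leq c_{o}$ one has $\psi=c_{m}k+p(t)d_{t}\geq p(t)d_{t}$, so that slot's ratio is at most $1$ and the claimed bound still follows.
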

\begin{proof}
By plugging solutions $(\boldsymbol{x}^{*},\boldsymbol{0})$ and $(\boldsymbol{x}^{*},\boldsymbol{y}^{*})$
into \textbf{DCM} separately, we have
\begin{eqnarray}
{\rm C_{DCM}}(\boldsymbol{x}^{*},\boldsymbol{0}) & = & \sum_{t=1}^{T}\left\{ p(t)d_{t}\left(x^{*}(t)\right)\right.\nonumber \\
 &  & \left.+\beta_{s}[x^{*}(t)-x^{*}(t-1)]^{+}\right\} \label{eq:thm3-eq4}
\end{eqnarray}
and
\begin{eqnarray}
{\rm C_{DCM}}(\boldsymbol{x}^{*},\boldsymbol{y}^{*}) & = & \sum_{t=1}^{T}\left\{ \psi\left(y^{*}(t),p(t),d_{t}\left(x^{*}(t)\right)\right)\right.\nonumber \\
 &  & +\beta_{s}[x^{*}(t)-x^{*}(t-1)]^{+}\nonumber \\
 &  & \left.+\beta_{g}[y^{*}(t)-y^{*}(t-1)]^{+}\right\} \nonumber \\
 & \geq & \sum_{t=1}^{T}\left\{ \psi\left(y^{*}(t),p(t),d_{t}\left(x^{*}(t)\right)\right)\right.\nonumber \\
 &  & \left.+\beta_{s}[x^{*}(t)-x^{*}(t-1)]^{+}\right\} .\label{eq:thm3-eq5}
\end{eqnarray}

By Eqns. (\ref{eq:thm3-eq4}), (\ref{eq:thm3-eq5}) and (\ref{eq:the original cost function}),
we obtain
\begin{eqnarray*}
 &  & \frac{{\rm C_{DCM}}(\boldsymbol{x}^{*},\boldsymbol{0})}{{\rm C_{DCM}}(\boldsymbol{x}^{*},\boldsymbol{y}^{*})}\\
 & \leq & \frac{\sum_{t=1}^{T}p(t)d_{t}\left(x^{*}(t)\right)}{\sum_{t=1}^{T}\psi\left(y^{*}(t),p(t),d_{t}\left(x^{*}(t)\right)\right)}\\
 & \leq & \max_{t\in\{1,..,T\}}\frac{p(t)d_{t}\left(x^{*}(t)\right)}{\psi\left(y^{*}(t),p(t),d_{t}\left(x^{*}(t)\right)\right)}\\
 & \leq & \begin{cases}
1, & \mbox{if }p(t)\leq c_{o},\\
\frac{P_{\max}d_{t}\left(x^{*}(t)\right)}{c_{o}d_{t}\left(x^{*}(t)\right)+c_{m}\left\lceil d_{t}\left(x^{*}(t)\right)/L\right\rceil }, & \mbox{otherwise}
\end{cases}\\
 & \leq & \frac{P_{\max}d_{t}\left(x^{*}(t)\right)}{c_{o}d_{t}\left(x^{*}(t)\right)+c_{m}d_{t}\left(x^{*}(t)\right)/L}\\
 & = & \frac{P_{\max}}{c_{o}+c_{m}/L}.
\end{eqnarray*}\end{proof}

Next, we prove that the factor loss is tight.
\begin{lem}
\label{lem:factor loss tight}There exist an input such that \\${\rm C_{DCM}}(\boldsymbol{\bar{x}},\boldsymbol{\bar{y}})/{\rm C_{DCM}}(\boldsymbol{x}^{*},\boldsymbol{y}^{*})=LP_{\max}/\left(Lc_{o}+c_{m}\right).$ \end{lem}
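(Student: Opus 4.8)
The plan is to exhibit a single worst-case instance on which the separate-optimization ratio attains exactly $LP_{\max}/(Lc_{o}+c_{m})$, thereby matching the upper bound from Lemma~\ref{lem:factor loss bound}. The intuition behind the construction: the loss in the upper-bound proof came from two sources that we want to make simultaneously negligible, namely the switching costs $\beta_{s}[x(t)-x(t-1)]^{+}$ (which appear identically in numerator and denominator of the ratio ${\rm C_{DCM}}(\boldsymbol{x}^{*},\boldsymbol{0})/{\rm C_{DCM}}(\boldsymbol{x}^{*},\boldsymbol{y}^{*})$) and the rounding slack in $c_{m}\lceil d_{t}(x^{*}(t))/L\rceil$ versus $c_{m}d_{t}(x^{*}(t))/L$. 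So I would pick an instance where the demand is constant and an exact integer multiple of $L$, the grid price is always at its maximum $P_{\max}$ (so that on-site generation is always worthwhile, using assumption (ii) that $c_{o}+c_{m}/L<P_{\max}$), and where the time horizon $T$ is taken large so that the fixed startup and switching costs are amortized away.

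Concretely, I would take $g_{t}(x,a)=x$ (the identity is increasing and convex, and the statement of Theorem~\ref{thm:offline decomposition} explicitly allows any such $g_{t}$), set $a(t)\equiv kL$ for some fixed positive integer $k$ (so $d_{t}(x(t))=x(t)$ and the feasibility constraint forces $x(t)\ge kL$; the optimal choice is $x(t)=kL$ at every slot), and set $p(t)\equiv P_{\max}$ for all $t\in[1,T]$. Solving \textbf{CP} first gives $\bar{\boldsymbol{x}}$ with $\bar x(t)=kL$, incurring cost ${\rm C_{CP}}(\bar{\boldsymbol{x}})=TP_{\max}kL+\beta_{s}kL$ (one startup at $t=1$). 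Now solving \textbf{EP} on the resulting constant demand $e(t)=kL$: since $P_{\max}>c_{o}+c_{m}/L\ge c_{o}$, running $k$ generators is cheaper per slot than the grid, so the \textbf{EP} optimum turns on $y(t)=k$ generators for all $t$, giving per-slot cost $c_{m}k+c_{o}kL$ plus a one-time startup $\beta_{g}k$. Hence ${\rm C_{DCM}}(\bar{\boldsymbol{x}},\bar{\boldsymbol{y}})=T(c_{m}k+c_{o}kL)+\beta_{s}kL+\beta_{g}k$. The joint optimum $(\boldsymbol{x}^{*},\boldsymbol{y}^{*})$ makes exactly the same choices here — there is nothing better to do — so ${\rm C_{DCM}}(\boldsymbol{x}^{*},\boldsymbol{y}^{*})$ equals the same expression; that is \emph{not} what we want. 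The point instead is to compare the separate solution's cost against a \emph{different} accounting, or rather to realize that the ratio in Theorem~\ref{thm:offline decomposition} is $\rho=\max\frac{{\rm C_{DCM}}(\bar{\boldsymbol{x}},\bar{\boldsymbol{y}})}{{\rm C_{DCM}}(\boldsymbol{x}^{*},\boldsymbol{y}^{*})}$ — so I need an instance where the \emph{separately-computed} $\bar{\boldsymbol{x}}$ is a poor choice for the joint problem. The resolution is that \textbf{CP} is solved using the \emph{grid} price $p(t)$, which overstates the true marginal energy cost when generators are available; so I should make \textbf{CP} provision \emph{fewer} servers than is jointly optimal, or provision the same servers but at a point where the grid-price-based objective of \textbf{CP} is misleading.

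Therefore the sharper construction is: keep $a(t)\equiv kL$, $p(t)\equiv P_{\max}$, and $g_{t}(x,a)=x$, so that \textbf{CP} returns $\bar x(t)=kL$ (forced by feasibility, so no freedom here), and then the separate and joint solutions coincide and the ratio is $1$ — meaning this particular instance does \emph{not} give tightness through the $x$-channel, and I must instead realize tightness entirely through the cost-accounting inside $\psi$. Re-examining Lemma~\ref{lem:factor loss bound}: the bound $\rho\le\frac{{\rm C_{DCM}}(\boldsymbol{x}^{*},\boldsymbol{0})}{{\rm C_{DCM}}(\boldsymbol{x}^{*},\boldsymbol{y}^{*})}$ is tight exactly when $\bar{\boldsymbol{y}}=\boldsymbol{0}$, i.e. when solving \textbf{EP} on $d_{t}(\bar x(t))$ yields no generators; but that happens only if $p(t)\le c_{o}$ or the demand is tiny, which contradicts wanting $p(t)=P_{\max}$. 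So the correct reading is that the two inequalities \eqref{eq:thm3-eq1} and \eqref{eq:thm3-eq2} are simultaneously tight when \textbf{CP} chooses an $\bar{\boldsymbol{x}}$ that, fed into \textbf{EP}, produces a demand curve on which \textbf{EP} \emph{cannot} afford to start generators (because any idle interval is too short relative to $\beta_{g}$), while the joint optimum \emph{would} have smoothed $\boldsymbol{x}$ to keep a generator on profitably. I would build such an instance with a rapidly oscillating workload: $a(t)$ alternating between $kL$ and a slightly smaller value so that the grid-price-driven \textbf{CP} keeps toggling servers, the induced demand $d_{t}(\bar x(t))$ oscillates with period short enough that $R_{i}(t)$ never reaches $0$, forcing $\bar{\boldsymbol{y}}=\boldsymbol{0}$; meanwhile $(\boldsymbol{x}^{*},\boldsymbol{y}^{*})$ holds $x^{*}(t)=kL$ constant and runs $k$ generators throughout, with the $c_{m}\lceil\cdot/L\rceil$ term collapsing to $c_{m}\cdot k$ exactly because demand is a clean multiple of $L$. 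Taking $T\to\infty$ amortizes $\beta_{s}$ and $\beta_{g}$, and sending the oscillation amplitude to $0$ makes ${\rm C_{DCM}}(\boldsymbol{x}^{*},\boldsymbol{0})\to TP_{\max}kL$ and ${\rm C_{DCM}}(\boldsymbol{x}^{*},\boldsymbol{y}^{*})\to T(c_{o}kL+c_{m}k)$, whose ratio is precisely $\frac{P_{\max}kL}{c_{o}kL+c_{m}k}=\frac{LP_{\max}}{Lc_{o}+c_{m}}$.

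The main obstacle I anticipate is the one I just wrestled with: correctly engineering the instance so that the \emph{separate} algorithm is genuinely forced into the bad solution $\bar{\boldsymbol{y}}=\boldsymbol{0}$ (or close to it) rather than fortuitously matching the joint optimum. This requires a careful choice of the oscillation period of $a(t)$ relative to $\beta_{g}$, $c_{m}$, $L$, $c_{o}$, $P_{\max}$ so that (a) \textbf{CP}, optimizing only against $p(t)$ and $\beta_{s}$, does not itself smooth the workload, and (b) the resulting demand makes every candidate idle/active interval in \textbf{EP} fail the break-even test of Theorem~\ref{thm:EP-offline optimal }. Once the instance is pinned down, the remaining work is a routine limit computation: write the exact costs as functions of $T$ and the amplitude $\epsilon$, take $T\to\infty$ then $\epsilon\to0$, and check the ratio converges to $LP_{\max}/(Lc_{o}+c_{m})$; combined with the upper bound of Lemma~\ref{lem:factor loss bound} this proves $\rho=LP_{\max}/(Lc_{o}+c_{m})$ and that it is tight, completing Theorem~\ref{thm:offline decomposition}.
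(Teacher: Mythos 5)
Your high-level strategy is the same as the paper's: build a spiky workload so that \textbf{CP} turns servers off between spikes, the resulting intermittent demand never lets $R_{i}(t)$ reach $0$ (forcing $\bar{\boldsymbol{y}}=\boldsymbol{0}$), while the joint optimum keeps $x$ constant and runs generators throughout. However, the concrete instance you sketch does not deliver this, and the part you defer as ``routine'' is precisely where it breaks. If $a(t)$ alternates between $kL$ and a ``slightly smaller value'' and you send the amplitude to $0$, then (i) \textbf{CP} never toggles: with integer servers $x(t)\geq\lceil a(t)\rceil=kL$, and even ignoring rounding, turning a server off is only (weakly) optimal if the dip lasts at least the break-even length $\beta_{s}/(d\,P_{\max})$ and removes the full idling cost, so a small, short dip is smoothed away and $\bar{\boldsymbol{x}}$ is constant; (ii) even granting an oscillating induced demand, $R_{i}(t)$ can fail to reach $0$ only if there are slots in which running a generator is a per-slot loss, i.e.\ $r_{i}(t)=(P_{\max}-c_{o})e_{i}(t)-c_{m}<0$, which requires the per-generator demand to fall below $c_{m}/(P_{\max}-c_{o})$ --- essentially to zero --- not to $kL-\epsilon$. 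With near-constant demand every $r_{i}(t)>0$, $R_{i}$ climbs to $0$, \textbf{EP} turns the generators on, $\bar{\boldsymbol{y}}\neq\boldsymbol{0}$, and the ratio collapses to $1$; in particular your limit ${\rm C_{DCM}}(\boldsymbol{x}^{*},\boldsymbol{0})\to TP_{\max}kL$ is not a valid surrogate for ${\rm C_{DCM}}(\bar{\boldsymbol{x}},\bar{\boldsymbol{y}})$.

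The paper's construction shows what the instance must look like: take $d_{t}(x)=e_{m}x$, $p(t)\equiv P_{\max}$, and a workload that is a one-slot spike of height $L/e_{m}$ followed by exactly $\beta_{s}/(e_{m}P_{\max})$ slots of \emph{zero} workload, so that turning off is break-even-optimal for \textbf{CP} (hence $\bar{\boldsymbol{x}}$ is spiky) while $x^{*}\equiv L/e_{m}$ is also optimal for the joint problem; then impose explicit parameter inequalities, $L(P_{\max}-c_{o})-c_{m}<\beta_{g}$ and $L(P_{\max}-c_{o})-c_{m}-\frac{\beta_{s}}{e_{m}P_{\max}}c_{m}<0$, so that under the spiky demand $R_{i}$ never reaches $0$ and $\bar{y}\equiv0$, together with the opposite sign for the constant demand so that $y^{*}\equiv1$. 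The per-period costs are then $LP_{\max}+\beta_{s}L/e_{m}$ versus $(Lc_{o}+c_{m})\bigl(1+\beta_{s}/(e_{m}P_{\max})\bigr)$, and because the idle period is chosen exactly at the \textbf{CP} break-even, the switching cost is proportional to the energy cost and the ratio equals $LP_{\max}/(Lc_{o}+c_{m})$ exactly (up to amortizing the one-time $\beta_{g}$ over a long horizon), rather than only in an amplitude limit. So the missing content is the correct choice of dials --- zero-workload gaps of break-even length and the generator-side inequalities --- without which the claimed tight instance does not exist as described.
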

\begin{proof}
Consider the following input:
\[
d_{t}(x(t))=e_{m}x(t),\ p(t)=P_{\max},\ \forall t,
\]
and
\[
a(t)=\begin{cases}
\frac{L}{e_{m}}, & \mbox{if }t=1+k(1+\frac{\beta_{s}}{e_{m}P_{\max}}),\ k\in\mathbb{N}^{0},\\
0, & \mbox{otherwise,}
\end{cases}
\]
where $e_{m}>0$ is a constant such that $L/e_{m}$ is an integer.

Then for the above input, according to algorithm \ref{alg:CPOFF},
it is easy to see that
\[
\bar{x}(t)=\begin{cases}
\frac{L}{e_{m}}, & \mbox{if }t=1+k(1+\frac{\beta_{s}}{e_{m}P_{\max}}),\ k\in\mathbb{N}^{0},\\
0, & \mbox{otherwise.}
\end{cases}
\]

Besides, according to algorithm \ref{alg:CPOFF}, the following $\boldsymbol{x}^{*}$
must be an optimal solution whatever $y^{*}$ is.
\[
x^{*}(t)=\frac{L}{e_{m}},\ \forall t.
\]

Without loss of generality, consider the following parameter setting:
\[
L(P_{\max}-c_{o})-c_{m}<\beta_{g},
\]
\[
L(P_{\max}-c_{o})-c_{m}-\frac{\beta_{s}}{e_{m}P_{\max}}c_{m}<0,
\]
and
\[
L(P_{\max}-c_{o})-c_{m}-\frac{\beta_{s}}{e_{m}P_{\max}}c_{m}+\frac{\beta_{s}L}{e_{m}P_{\max}}(P_{\max}-c_{o})>0.
\]

Since $\bar{\boldsymbol{x}}$ and $\boldsymbol{x}^{*}$ have been
determined by us, we can apply Theorem \ref{thm:OFA-optimal} to obtain
the corresponding $\boldsymbol{\bar{y}}$ and $\boldsymbol{y}^{*}$.
According to Eqn. (\ref{eq:Regret.Definition}) and the above
parameter setting, given $\bar{\boldsymbol{x}}$ and $\boldsymbol{a}$,
the corresponding $R_{i}(t)$ never reaches 0. However, given $\boldsymbol{x}^{*}$
and $\boldsymbol{a}$, the corresponding $R_{i}(t)$ will soon reach
0 and never fall back to $-\beta_{g}$. So we have
\[
\bar{y}(t)=0,\ \forall t
\]
 and
\[
y^{*}(t)=1,\ \forall t.
\]

\begin{figure}
\begin{centering}
\includegraphics[width=0.7\columnwidth]{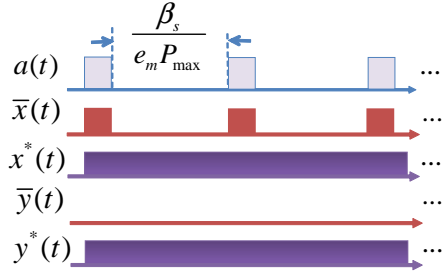}
\par\end{centering}

\caption{\label{fig:thm2-example}Example of $a(t)$, $\bar{x}(t)$, $x^{*}(t)$,
$\bar{y}(t)$ and $y^{*}(t)$. }

\end{figure}

See Fig. \ref{fig:thm2-example} as an example. By plugging the above $(\boldsymbol{\bar{x}},\boldsymbol{\bar{y}})$
and $\left(\boldsymbol{x}^{*},\boldsymbol{y}^{*}\right)$ into \textbf{DCM},
we have
\begin{eqnarray*}
\frac{{\rm C_{DCM}}(\boldsymbol{\bar{x}},\boldsymbol{\bar{y}})}{{\rm C_{DCM}}(\boldsymbol{x}^{*},\boldsymbol{y}^{*})} & = & \frac{LP_{\max}+\beta_{s}L/e_{m}}{Lc_{o}+c_{m}+(Lc_{o}+c_{m})\beta_{s}/(e_{m}P_{\max})}\\
 & = & \frac{LP_{\max}\left[1+\beta_{s}/(e_{m}P_{\max})\right]}{(Lc_{o}+c_{m})\left[1+\beta_{s}/(e_{m}P_{\max})\right]}\\
 & = & \frac{LP_{\max}}{Lc_{o}+c_{m}}.
\end{eqnarray*}
\end{proof}
Theorem \ref{thm:offline decomposition} follows from Eqn. (\ref{eq:thm3-eq3}),
lemmas \ref{lem:factor loss bound} and \ref{lem:factor loss tight}.

\section{Proof of Theorem 8}\label{sub:Proof-of-Theorem 8}
Let $(\boldsymbol{\bar{x}},\boldsymbol{\bar{y}})$ be an offline optimal
solution obtained by solving \textbf{CP} and \textbf{EP} separately
in sequence and $\left(\boldsymbol{x}^{*},\boldsymbol{y}^{*}\right)$
be an offline optimal solution obtained by solving the joint-optimization
\textbf{DCM}. Let $\boldsymbol{x^{on}}$ be the solution obtained
by $\mathbf{GCSR^{(w)}}$ and $\boldsymbol{y^{off}}$ be an offline optimal solution of \textbf{EP} given input $\boldsymbol{x^{on}}$.
Let $\left(\boldsymbol{x^{on}},\boldsymbol{y^{on}}\right)$ be the
solution obtained by $\mathbf{DCMON^{(w)}}$. Denote ${\rm C_{DCM}}(\boldsymbol{x},\boldsymbol{y})$
to be cost of \textbf{DCM }of solution $(\boldsymbol{x},\boldsymbol{y})$
and ${\rm C_{CP}}(\boldsymbol{x})$ to be cost of \textbf{CP }of solution
$\boldsymbol{x}$.

According to Theorem \ref{thm:cr.CHASE(w)}, equation (\ref{eq:thm7-another-bound})
and the fact that the available look-ahead window size is only $\left[w-\Delta_{s}\right]^{+}$
for $\mathbf{DCMON^{(w)}}$ to solve \textbf{EP} (discussed in Sec. \ref{sub:DCMON}), we have
\begin{eqnarray}
 &  & \frac{{\rm C_{DCM}}(\boldsymbol{x^{on}},\boldsymbol{y^{on}})}{{\rm C_{DCM}}(\boldsymbol{x^{on}},\boldsymbol{y^{off}})}\nonumber \\
 & \leq & 1+\frac{2\beta_{g}\left(LP{}_{\max}-Lc_{o}-c_{m}\right)}{\beta_{g}LP{}_{\max}+\left[w-\Delta_{s}\right]^{+}c_{m}P{}_{\max}\left(L-\frac{c_{m}}{P{}_{\max}-c_{o}}\right)}\nonumber \\
 & \leq & 1+\frac{2\left(LP{}_{\max}-Lc_{o}-c_{m}\right)}{LP{}_{\max}+\alpha_{g}P{}_{\max}\left(L-\frac{c_{m}}{P{}_{\max}-c_{o}}\right)}\nonumber \\
 & \leq & 1+2\frac{P_{\max}-c_{o}}{P_{\max}}\cdot\frac{1}{1+\alpha_{g}},\label{eq:thm9-eq1}
\end{eqnarray}
where $\triangle_{s}\triangleq\beta_{s}/(d_{\min}P_{\min})$ and $\alpha_{g}\triangleq\frac{c_{m}}{\beta_{g}}\left[w-\Delta_{s}\right]^{+}$
is a {}``normalized'' look-ahead window size that takes values in
$[0,+\infty)$.

According to Theorem \ref{thm:offline decomposition}, we have
\begin{equation}
\frac{{\rm C_{DCM}}(\boldsymbol{\bar{x}},\boldsymbol{\bar{y}})}{{\rm C_{DCM}}\left(\boldsymbol{x}^{*},\boldsymbol{y}^{*}\right)}\leq\frac{LP_{\max}}{Lc_{o}+c_{m}}.\label{eq:thm9-eq2}
\end{equation}

Then if we can bound ${\rm C_{DCM}}(\boldsymbol{x^{on}},\boldsymbol{y^{off}})/{\rm C_{DCM}}(\boldsymbol{\bar{x}},\boldsymbol{\bar{y}})$,
we obtain the competitive ratio upper bound of $\mathbf{DCMON^{(w)}}$\emph{.
}The following lemma gives us such a bound.
\begin{lem}
\label{lem:thm9-lem}${\rm C_{DCM}}(\boldsymbol{x^{on}},\boldsymbol{y^{off}})/{\rm C_{DCM}}(\boldsymbol{\bar{x}},\boldsymbol{\bar{y}})\leq2-\alpha_{s}$,
where $\alpha_{s}\triangleq\min\left(1,w/\triangle_{s}\right)$ and
\textup{$\triangle_{s}\triangleq\beta_{s}/(d_{\min}P_{\min})$}. \end{lem}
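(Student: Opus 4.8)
The plan is to relate the cost of the pair $(\boldsymbol{x^{on}},\boldsymbol{y^{off}})$ to the cost of the offline-decomposition solution $(\boldsymbol{\bar{x}},\boldsymbol{\bar{y}})$ by exploiting that both second components are \emph{offline optimal} for \textbf{EP} given their respective first components. Since $\boldsymbol{y^{off}}$ is an offline optimal solution of \textbf{EP} on input $\boldsymbol{x^{on}}$, and $\boldsymbol{\bar{y}}$ is an offline optimal solution of \textbf{EP} on input $\boldsymbol{\bar{x}}$, we have ${\rm C_{DCM}}(\boldsymbol{x^{on}},\boldsymbol{y^{off}}) = {\rm C_{CP}}(\boldsymbol{x^{on}}) + \big({\rm C_{EP}}\text{-part with generators}\big)$, and likewise for the barred solution. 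The key structural fact I would use is that the generator-related cost is only \emph{savings} relative to buying everything from the grid: by Proposition~1 (or equivalently the definition of $\psi$), turning on generators never increases cost beyond the grid-only cost $\sum_t p(t) d_t(x(t))$. Hence ${\rm C_{DCM}}(\boldsymbol{x^{on}},\boldsymbol{y^{off}}) \le {\rm C_{DCM}}(\boldsymbol{x^{on}},\boldsymbol{0}) = {\rm C_{CP}}(\boldsymbol{x^{on}})$.

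Next I would invoke Theorem~\ref{thm:cr.GCSR(w)}: $\mathbf{GCSR^{(w)}}$ is $(2-\alpha_s)$-competitive for \textbf{CP}, so ${\rm C_{CP}}(\boldsymbol{x^{on}}) \le (2-\alpha_s)\,{\rm C_{CP}}(\boldsymbol{\bar{x}})$, where $\boldsymbol{\bar{x}}$ is the offline optimal solution of \textbf{CP}. Finally I must lower-bound ${\rm C_{DCM}}(\boldsymbol{\bar{x}},\boldsymbol{\bar{y}})$ by ${\rm C_{CP}}(\boldsymbol{\bar{x}})$ up to the same factor, or more precisely show ${\rm C_{DCM}}(\boldsymbol{\bar{x}},\boldsymbol{\bar{y}}) \ge \tfrac{1}{2-\alpha_s}{\rm C_{CP}}(\boldsymbol{x^{on}})$ is the wrong direction — what I actually need is ${\rm C_{CP}}(\boldsymbol{x^{on}}) \le (2-\alpha_s)\,{\rm C_{DCM}}(\boldsymbol{\bar{x}},\boldsymbol{\bar{y}})$. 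For this, observe that $\boldsymbol{\bar{x}}$ is obtained by solving \textbf{CP} alone, so ${\rm C_{CP}}(\boldsymbol{\bar{x}}) \le {\rm C_{CP}}(\boldsymbol{x})$ for any feasible $\boldsymbol{x}$; but also the $\psi$-term in ${\rm C_{DCM}}$ dominates the corresponding server-and-grid term whenever generators are off, and more carefully, $\psi(y,p,d) \ge \psi(y,p,d) \ge$ (server switching is common to both). The cleanest route: ${\rm C_{DCM}}(\boldsymbol{\bar{x}},\boldsymbol{\bar{y}}) \ge {\rm C_{CP}}(\boldsymbol{\bar{x}}) - \sum_t p(t) d_t(\bar x(t)) + \sum_t \psi(\bar y(t), p(t), d_t(\bar x(t))) \ge$ well, I must be careful; the simplest valid bound is that the \textbf{CP}-cost of $\boldsymbol{\bar x}$ is a lower bound on ${\rm C_{DCM}}(\boldsymbol{\bar x}, \boldsymbol{\bar y})$ only if $c_m y + c_o u \ge p\cdot(\text{same energy})$, which is false.

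So the correct chain is: ${\rm C_{DCM}}(\boldsymbol{x^{on}},\boldsymbol{y^{off}}) \le {\rm C_{CP}}(\boldsymbol{x^{on}}) \le (2-\alpha_s)\,{\rm C_{CP}}(\boldsymbol{\bar{x}})$ by Theorem~\ref{thm:cr.GCSR(w)}, and then I claim ${\rm C_{CP}}(\boldsymbol{\bar{x}}) \le {\rm C_{DCM}}(\boldsymbol{\bar{x}},\boldsymbol{\bar{y}})$ \emph{does} hold because $\boldsymbol{\bar x}$ is the \textbf{CP}-optimum and $\boldsymbol{\bar y}$ is derived from it; indeed, ${\rm C_{DCM}}(\boldsymbol{\bar x}, \boldsymbol{\bar y})$ is the \textbf{CP}-cost of $\boldsymbol{\bar x}$ with the grid term $p(t)d_t(\bar x(t))$ replaced by $\psi(\bar y(t), p(t), d_t(\bar x(t))) + c_m\bar y(t)$-bookkeeping plus the $\beta_g$-switching term; but $\psi$ could be \emph{smaller} than $p(t)d_t(\bar x(t))$. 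Hence the honest statement must route through comparing at the level of \textbf{CP}-costs directly: since $\boldsymbol{y^{off}}$ and $\boldsymbol{\bar y}$ are each \textbf{EP}-optimal for their inputs, and since $\boldsymbol{x^{on}} \ge \boldsymbol{\bar x}$ pointwise is \emph{not} generally true either — I would instead argue that ${\rm C_{DCM}}(\boldsymbol{x^{on}}, \boldsymbol{y^{off}})$ and ${\rm C_{DCM}}(\boldsymbol{\bar x}, \boldsymbol{\bar y})$ can each be split as (server switching cost) $+$ (\textbf{EP}-cost on their demand), and that the competitive guarantee of $\mathbf{GCSR}$ actually gives the stronger per-component statement via the observations in the proof of Lemma~\ref{lem:GCSR ratio} (equations \eqref{eq:cp-ob1}--\eqref{eq:cp-ob4}): $\boldsymbol{x^{on}} \ge \boldsymbol{\bar x}$ pointwise, they incur \emph{equal} server switching cost, and the extra idling energy of $\boldsymbol{x^{on}}$ is at most $(1-\alpha_s)$ times the switching cost. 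The main obstacle is precisely this: showing that monotonicity of $\psi$ in its demand argument, combined with $\boldsymbol{x^{on}}\ge\boldsymbol{\bar x}$ and equality of switching costs, lets the $(2-\alpha_s)$ factor pass through the generator layer without loss — i.e. that ${\rm C_{DCM}}(\boldsymbol{x^{on}},\boldsymbol{y^{off}}) \le (2-\alpha_s)\,{\rm C_{DCM}}(\boldsymbol{\bar x},\boldsymbol{\bar y})$ — which will require bounding the $\psi$-difference on the larger demand $d_t(x^{on}(t))$ versus $d_t(\bar x(t))$ by the corresponding $\psi$ on $\bar x$ plus the idling-energy slack, using convexity of $d_t$ and the structure of $\psi$. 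I expect this generator-layer passthrough to be the crux; the rest is assembling Theorems~\ref{thm:offline decomposition}, \ref{thm:cr.GCSR(w)}, \ref{thm:cr.CHASE(w)} multiplicatively via \eqref{eq:thm9-eq1}, \eqref{eq:thm9-eq2}, and the present lemma.
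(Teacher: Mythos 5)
You converge, after discarding two inadmissible chains (and you are right that ${\rm C_{CP}}(\boldsymbol{\bar{x}})\leq{\rm C_{DCM}}(\boldsymbol{\bar{x}},\boldsymbol{\bar{y}})$ fails, since generators can make $\psi$ smaller than the grid-only term), on the paper's actual route: reduce to ${\rm C_{DCM}}(\boldsymbol{x^{on}},\boldsymbol{\bar{y}})$ via EP-optimality of $\boldsymbol{y^{off}}$, then compare with ${\rm C_{DCM}}(\boldsymbol{\bar{x}},\boldsymbol{\bar{y}})$ using the facts recorded in the proof of Lemma~\ref{lem:GCSR ratio}, namely equal server switching cost \eqref{eq:cp-ob1}, pointwise dominance $\boldsymbol{x}_{i}^{on}\geq\bar{\boldsymbol{x}}_{i}$ \eqref{eq:cp-ob2}, and idling slack at most $(1-\alpha_{s})$ times the switching cost \eqref{eq:cp-ob4}. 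However, the proposal stops exactly at the step you yourself call ``the crux'' and never proves it, so there is a genuine gap: you do not establish the generator-layer passthrough. That step is the inequality $\psi\left(\bar{y}(t),p(t),d_{t}(x^{on}(t))\right)-\psi\left(\bar{y}(t),p(t),d_{t}(\bar{x}(t))\right)\leq p(t)\left(d_{t}(x^{on}(t))-d_{t}(\bar{x}(t))\right)$, i.e.\ Eqn.~\eqref{eq:thm9-eq8} in the paper. It follows not from convexity of $d_{t}$ (which plays no role here) but from the explicit structure of $\psi$ in \eqref{eq:the original cost function}: for fixed $y$ and $p$, the map $d\mapsto\psi(y,p,d)$ is piecewise linear with slopes equal to $p(t)$ or $c_{o}$ only, hence any increase in demand raises $\psi$ by at most $p(t)$ per unit. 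With this, the extra cost of running $\boldsymbol{x^{on}}$ instead of $\bar{\boldsymbol{x}}$ under the \emph{same} generator schedule $\boldsymbol{\bar{y}}$ is at most the idling cost $CI(\boldsymbol{x^{on}},\bar{\boldsymbol{x}})\leq(1-\alpha_{s})CW(\bar{\boldsymbol{x}})$ plus $CW(\boldsymbol{x^{on}})=CW(\bar{\boldsymbol{x}})$, and since ${\rm C_{DCM}}(\boldsymbol{\bar{x}},\boldsymbol{\bar{y}})\geq CW(\bar{\boldsymbol{x}})$ the ratio is at most $2-\alpha_{s}$.

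A second, smaller omission: the facts \eqref{eq:cp-ob1}--\eqref{eq:cp-ob4} are proved per subproblem $\textbf{CP}_{{\rm i}}$, and lifting them to the aggregate solutions $\boldsymbol{x^{on}}=\sum_{i}\boldsymbol{x}_{i}^{on}$ and $\bar{\boldsymbol{x}}=\sum_{i}\bar{\boldsymbol{x}}_{i}$ is not automatic. You need the monotone ordering $x_{1}^{on}(t)\geq\cdots\geq x_{M}^{on}(t)$ and $\bar{x}_{1}(t)\geq\cdots\geq\bar{x}_{M}(t)$ (Lemma~\ref{lem:thm9-lem2}) so that $\sum_{i}[x_{i}(t)-x_{i}(t-1)]^{+}=\big[\sum_{i}x_{i}(t)-\sum_{i}x_{i}(t-1)\big]^{+}$ and $d_{t}(x^{on}(t))-d_{t}(\bar{x}(t))=\sum_{i}d_{t}^{i}\left(x_{i}^{on}(t)-\bar{x}_{i}(t)\right)$; your proposal asserts the aggregate versions without addressing this. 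So the plan is the right one, but as written it is an outline whose two load-bearing steps (the $\psi$-Lipschitz bound and the aggregation argument) are missing.
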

\begin{proof}
It is straightforward that
\begin{equation}
{\rm C_{DCM}}(\boldsymbol{x^{on}},\boldsymbol{y^{off}})\leq{\rm C_{DCM}}(\boldsymbol{x^{on}},\boldsymbol{\bar{y}}).\label{eq:thm9-eq2-1}
\end{equation}

So we seeks to bound ${\rm C_{DCM}}(\boldsymbol{x^{on}},\boldsymbol{\bar{y}})/{\rm C_{DCM}}(\boldsymbol{\bar{x}},\boldsymbol{\bar{y}})$.

For solution $\boldsymbol{x^{on}}$ and $\boldsymbol{\bar{x}}$, denote
\begin{equation}
CW(\boldsymbol{x})=\beta_{s}\sum_{t=1}^{T}\left[x(t)-x(t-1)\right]^{+},\label{eq:thm9-eq3}
\end{equation}
and
\begin{equation}
CI(\boldsymbol{x^{on}},\boldsymbol{\bar{x}})=\sum_{t=1}^{T}p(t)\left(d_{t}(x^{on}(t))-d_{t}(\bar{x}(t))\right).\label{eq:thm9-eq4}
\end{equation}

According to Eqn. (\ref{eq:cp-ob1}), we have
\begin{equation}
CW(\boldsymbol{x_{i}^{on}})=CW(\boldsymbol{\bar{x}}_{i}).\label{eq:thm9-eq5}
\end{equation}

According to lemma \ref{lem:thm9-lem2} and the fact that $x_{i}^{on}(t),\bar{x}_{i}(t)\in\{0,1\},\ \forall t,i$,
we have
\begin{eqnarray}
CW(\boldsymbol{x^{on}}) & = & CW(\sum_{i=1}^{M}\boldsymbol{x_{i}^{on}})=\sum_{i=1}^{M}CW(\boldsymbol{x_{i}^{on}})\nonumber \\
 & = & \sum_{i=1}^{M}CW(\boldsymbol{\bar{x}}_{i})=CW(\sum_{i=1}^{M}\boldsymbol{\bar{x}}_{i})\nonumber \\
 & = & CW(\boldsymbol{\bar{x}}),\label{eq:thm9-eq6}
\end{eqnarray}
and
\begin{eqnarray}
CI(\boldsymbol{x^{on}},\boldsymbol{\bar{x}}) & = & \sum_{t=1}^{T}p(t)\left(d_{t}(x^{on}(t))-d_{t}(\bar{x}(t))\right)\nonumber \\
 & = & \sum_{t=1}^{T}p(t)\left(\sum_{i=1}^{x^{on}(t)}d_{t}^{i}-\sum_{i=1}^{\bar{x}(t)}d_{t}^{i}\right)\nonumber \\
 & = & \sum_{i=1}^{M}\sum_{t=1}^{T}p(t)d_{t}^{i}(x_{i}^{on}(t)-\bar{x}_{i}(t))\nonumber \\
 & \leq & (1-\alpha_{s})\sum_{i=1}^{M}CW(\boldsymbol{\bar{x}_{i}})\nonumber \\
 & = & (1-\alpha_{s})CW(\boldsymbol{\bar{x}}),\label{eq:thm9-eq7}
\end{eqnarray}
where the last and second last inequalities come from Eqns. (\ref{eq:thm9-eq6})
and (\ref{eq:cp-ob4}), respectively.

According to Eqn. (\ref{eq:the original cost function}), we have
$\forall b\in[0,x^{on}(t)]$,
\begin{eqnarray}
 &  & \psi\left(\bar{y}(t),p(t),d_{t}\left(x^{on}(t)\right)\right)-\psi\left(\bar{y}(t),p(t),d_{t}\left(b\right)\right)\nonumber \\
 & \leq & p(t)\left(d_{t}\left(x^{on}(t)\right)-d_{t}\left(b\right)\right).\label{eq:thm9-eq8}
\end{eqnarray}

By the definition of \textbf{DCM}, Eqns. (\ref{eq:cp-ob2}), (\ref{eq:thm9-eq3}),
(\ref{eq:thm9-eq4}) and (\ref{eq:thm9-eq8}),
\begin{eqnarray}
 &  & {\rm C_{DCM}}(\boldsymbol{x^{on}},\boldsymbol{\bar{y}})\nonumber \\
 & = & \sum_{t=1}^{T}\left\{ \psi\left(\bar{y}(t),p(t),d_{t}\left(x^{on}(t)\right)\right)\right.\nonumber \\
 &  & \left.+\beta_{s}[x^{on}(t)-x^{on}(t-1)]^{+}+\beta_{g}[\bar{y}(t)-\bar{y}(t-1)]^{+}\right\} \nonumber \\
 & \leq & \sum_{t=1}^{T}\left\{ \psi\left(\bar{y}(t),p(t),d_{t}\left(\bar{x}(t)\right)\right)+p(t)\left(d_{t}\left(x^{on}(t)\right)-d_{t}\left(\bar{x}(t)\right)\right)\right.\nonumber \\
 &  & \left.+\beta_{s}[x^{on}(t)-x^{on}(t-1)]^{+}+\beta_{g}[\bar{y}(t)-\bar{y}(t-1)]^{+}\right\} \nonumber \\
 & = & \sum_{t=1}^{T}\left\{ \psi\left(\bar{y}(t),p(t),d_{t}\left(\bar{x}(t)\right)\right)+\beta_{g}[\bar{y}(t)-\bar{y}(t-1)]^{+}\right\} \nonumber \\
 &  & +CW(\boldsymbol{x_{on}})+CI(\boldsymbol{x_{on}},\boldsymbol{\bar{x}}).\label{eq:thm9-eq9}
\end{eqnarray}

Then, by Eqns. (\ref{eq:thm9-eq6}), (\ref{eq:thm9-eq7}) and
(\ref{eq:thm9-eq9}), we have
\begin{eqnarray}
 &  & \frac{{\rm C_{DCM}}(\boldsymbol{x_{on}},\boldsymbol{\bar{y}})}{{\rm C_{DCM}}\left(\boldsymbol{\bar{x}},\boldsymbol{\bar{y}}\right)}\nonumber \\
 & \leq & \frac{\sum_{t=1}^{T}\psi\left(\bar{y}(t),p(t),d_{t}\left(\bar{x}(t)\right)\right)+CI(\boldsymbol{x_{on}},\boldsymbol{\bar{x}})+CW(\boldsymbol{x_{on}})}{\sum_{t=1}^{T}\psi\left(\bar{y}(t),p(t),d_{t}\left(\bar{x}(t)\right)\right)+CW(\boldsymbol{\bar{x}})}\nonumber \\
 & \leq & \frac{(1-\alpha_{s})CW(\boldsymbol{\bar{x}})+CW(\boldsymbol{x_{on}})}{CW(\boldsymbol{\bar{x}})}\nonumber \\
 & = & \frac{(1-\alpha_{s})CW(\boldsymbol{\bar{x}})+CW(\boldsymbol{\bar{x}})}{CW(\boldsymbol{\bar{x}})}\nonumber \\
 & = & 2-\alpha_{s}.\label{eq:thm9-eq10}
\end{eqnarray}

This lemma follows from Eqns. (\ref{eq:thm9-eq2-1}) and (\ref{eq:thm9-eq10}).
\end{proof}
Theorem \ref{thm:cr.DCMON(w)} follows from Eqns. (\ref{eq:thm9-eq1}),
(\ref{eq:thm9-eq2}) and lemma \ref{lem:thm9-lem}.
\begin{lem}
\label{lem:thm9-lem2}$\boldsymbol{\bar{x}_{1}},\boldsymbol{\bar{x}_{2}},\ldots\boldsymbol{\bar{x}_{M}}$
and $\boldsymbol{x_{1}^{on}},\boldsymbol{x_{2}^{on}},\ldots\boldsymbol{x_{M}^{on}}$
are decreasing sequences, i.e., $\forall t,\ \bar{x}_{1}(t)\ge...\ge\bar{x}_{M}(t)\ \mbox{and}\ x_{1}^{on}(t)\ge...\ge x_{M}^{on}(t).$\end{lem}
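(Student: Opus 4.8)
The plan is to reduce the lemma to two elementary pointwise monotonicities and to observe that both $\mathbf{CPOFF_{s}}$ (which produces $\bar{\boldsymbol{x}}_i$, as in the proof of Theorem~\ref{thm:sub-ocp optimal}) and $\mathbf{GCSR_{s}^{(w)}}$ (which produces $\boldsymbol{x}_i^{on}$) depend on the index $i$ only through the sub-demand $\boldsymbol{a}_i$ and the per-slot cost rate $p(t)d_t^i$. First I would record: (i) $a_i(t)\ge a_{i+1}(t)$ for all $i,t$, which is immediate from $a_i(t)=\min\{1,\max\{0,a(t)-(i-1)\}\}$ and the monotonicity of $x\mapsto\min\{1,\max\{0,x\}\}$; and (ii) $d_t^i\le d_t^{i+1}$ for all $i,t$, which is the convexity of $d_t$ already used in the excerpt. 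It then suffices to prove, for each consecutive pair $i,i+1$ and each $t$, that $\bar{x}_i(t)\ge \bar{x}_{i+1}(t)$ and $x_i^{on}(t)\ge x_{i+1}^{on}(t)$.

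For the offline statement I would use the explicit form of $\mathbf{CPOFF_{s}}$: $\bar{x}_i(t)=1$ exactly when $a_i(t)>0$, or $t$ lies in an idle interval $I_1$ of $\boldsymbol{a}_i$ with $\sum_{\tau\in I_1}p(\tau)d_\tau^i<\beta_s$; otherwise $\bar{x}_i(t)=0$. I prove the contrapositive. If $\bar{x}_i(t)=0$ then $a_i(t)=0$, hence $a_{i+1}(t)=0$; and since $a_{i+1}\le a_i$ pointwise, the maximal block on which $a_i\equiv 0$ containing $t$ (a starting, ending, or idle interval of $\boldsymbol{a}_i$) is contained in the maximal block $[t_1',t_2']$ on which $a_{i+1}\equiv 0$ containing $t$. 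If $[t_1',t_2']$ is a starting or ending interval, then $\bar{x}_{i+1}(t)=0$. If it is an idle interval and the $\boldsymbol{a}_i$-block $[t_1,t_2]\subseteq[t_1',t_2']$ satisfied $\sum_{\tau\in[t_1,t_2]}p(\tau)d_\tau^i\ge\beta_s$, then $\sum_{\tau\in[t_1',t_2']}p(\tau)d_\tau^{i+1}\ge\sum_{\tau\in[t_1,t_2]}p(\tau)d_\tau^{i+1}\ge\sum_{\tau\in[t_1,t_2]}p(\tau)d_\tau^{i}\ge\beta_s$, so $\bar{x}_{i+1}(t)=0$.

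For the online statement I would induct on $t$ with the invariant: (a) $x_i^{on}(t)\ge x_{i+1}^{on}(t)$, and (b) if $x_{i+1}^{on}(t)=1$ then $C_i(t)\le C_{i+1}(t)$, where $C_i(t)$ denotes the accumulator of $\mathbf{GCSR_{s}^{(w)}}$ after slot $t$. The base case $t=0$ is trivial. For the step, assume $x_{i+1}^{on}(t)=1$ (otherwise both clauses are vacuous). If $a_{i+1}(t)>0$ then $a_i(t)\ge a_{i+1}(t)>0$, so $x_i^{on}(t)=1$ and $C_i(t)=C_{i+1}(t)=0$. If $a_{i+1}(t)=0$, then $x_{i+1}^{on}(t)=1$ arose from the ``ELSIF'' branch, so $x_{i+1}^{on}(t-1)=1$ and $C_{i+1}(t)=C_{i+1}(t-1)+p(t)d_t^{i+1}$; by the invariant at $t-1$, $x_i^{on}(t-1)=1$ and $C_i(t-1)\le C_{i+1}(t-1)$. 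If $a_i(t)>0$ then $x_i^{on}(t)=1$ and $C_i(t)=0\le C_{i+1}(t)$. If $a_i(t)=0$, I compare the look-ahead thresholds $\tau_i'$ and $\tau_{i+1}'$ of line~3: since $C_i(t-1)+\sum_{\tau=t}^{t'}p(\tau)d_\tau^i\le C_{i+1}(t-1)+\sum_{\tau=t}^{t'}p(\tau)d_\tau^{i+1}$ for every $t'$, $\tau_{i+1}'=\mathrm{NULL}$ forces $\tau_i'=\mathrm{NULL}$, and otherwise $\tau_i'\ge\tau_{i+1}'$; combined with $a_i\ge a_{i+1}$, the ``ELSIF'' condition holding for $i+1$ forces it for $i$, so $x_i^{on}(t)=x_i^{on}(t-1)=1$ and $C_i(t)=C_i(t-1)+p(t)d_t^i\le C_{i+1}(t-1)+p(t)d_t^{i+1}=C_{i+1}(t)$. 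This closes the induction, and clause (a) is the claimed monotonicity.

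The routine parts (the two pointwise monotonicities and the offline case) are easy; the delicate point is the online case. Because $\mathbf{GCSR_{s}^{(w)}}$ is stateful, the per-slot decision for index $i$ depends on the entire history through $C_i$ and on the look-ahead window, so a naive ``$x_i^{on}\ge x_{i+1}^{on}$'' induction does not close. The fix is the conditional accumulator comparison (b): it can only be asserted while server $i+1$ is active --- once it is off, $C_{i+1}$ is reset to $0$ and may legitimately drop below $C_i$, which is harmless --- and one must check that this conditional invariant interacts correctly with both the threshold comparison $\tau_i'\ge\tau_{i+1}'$ and the workload comparison $a_i\ge a_{i+1}$ in each branch of the algorithm. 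That bookkeeping is the crux of the argument.
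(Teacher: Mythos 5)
Your proof is correct, and it is in fact substantially more rigorous than the paper's own argument, which at this point only gives a heuristic ("for larger $i$ the workload is sparser and $d_t^i$ is larger, so both algorithms tend to idle less") without turning it into a proof. Your two preliminary monotonicities $a_i(t)\ge a_{i+1}(t)$ and $d_t^i\le d_t^{i+1}$ are exactly the facts the paper appeals to; the difference is what you do with them. The offline half, via the contrapositive and the containment of the zero-block of $\boldsymbol{a}_i$ in that of $\boldsymbol{a}_{i+1}$ together with the comparison $\sum p(\tau)d_\tau^{i+1}\ge\sum p(\tau)d_\tau^{i}\ge\beta_s$, is the natural formalization of the paper's sketch (one small gloss: in your idle-interval case you should note that when the $\boldsymbol{a}_{i+1}$-block is an idle interval, the contained $\boldsymbol{a}_i$-block cannot be a starting or ending interval, so $\bar{x}_i(t)=0$ indeed forces the $\ge\beta_s$ condition on it; this is implicit in your phrasing and easy to add). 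The online half is where your proposal genuinely adds content: because $\mathbf{GCSR_{s}^{(w)}}$ is stateful and look-ahead-dependent, a bare induction on $x_i^{on}(t)\ge x_{i+1}^{on}(t)$ does not close, and your conditional accumulator invariant --- $C_i(t)\le C_{i+1}(t)$ whenever server $i+1$ is still on, combined with the threshold comparison $\tau_i'\ge\tau_{i+1}'$ (or both NULL) and $a_i\ge a_{i+1}$ to show the ELSIF branch for $i+1$ forces it for $i$ --- is precisely the missing coupling argument; restricting (b) to slots where $x_{i+1}^{on}=1$ correctly handles the resets of $C_{i+1}$ to zero. All branch cases check out against Algorithm 1, so the induction closes and the lemma follows.
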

\begin{proof}
Recall that $\boldsymbol{\bar{x}_{i}}$ and $\boldsymbol{x_{i}^{on}}$
are offline and online solutions obtained by $\mathbf{CPOFF_{s}}$
and $\mathbf{GCSR_{s}^{(w)}}$ for problem $\textbf{CP}_{{\rm i}}$,
respectively. According to the definition of $\textbf{CP}_{{\rm i}}$,
$a_{1}(t)\geq a_{2}(t)\ge...\ge a_{M}(t)$ is a decreasing sequence
and $d_{t}^{1}\leq d_{t}^{2}\leq...\leq d_{t}^{M}$ is an increasing
sequence. Thus, for problem $\textbf{CP}_{{\rm i}}$, the larger the
index $i$ is, the more sparse workload tends to be and the higher
power consumption tends to be. Hence, for a larger index $i$, there
are more ``idling intervals'', meanwhile both $\mathbf{CPOFF_{s}}$
and $\mathbf{GCSR_{s}^{(w)}}$ tends to keep servers idling less during
idling intervals (because idling cost is higher). So, $\boldsymbol{\bar{x}_{1}},\boldsymbol{\bar{x}_{2}},\ldots\boldsymbol{\bar{x}_{M}}$
and $\boldsymbol{x_{1}^{on}},\boldsymbol{x_{2}^{on}},\ldots\boldsymbol{x_{M}^{on}}$
are decreasing sequences, i.e., $\forall t,\ \bar{x}_{1}(t)\ge...\ge\bar{x}_{M}(t)\ \mbox{and}\ x_{1}^{on}(t)\\\ge...\ge x_{M}^{on}(t).$\end{proof}

\end{document}